\documentclass[runningheads,envcountsame]{llncs}

\usepackage[T1]{fontenc}
\usepackage[utf8]{inputenc}
\usepackage{graphicx}
\usepackage{amsmath}
\usepackage{amssymb}

\makeatletter
\let\claim\@undefined  \let\endclaim\@undefined
\makeatother

\spnewtheorem{claim}[theorem]{Claim}{\bfseries}{\itshape}
\spnewtheorem{assumption}[theorem]{Assumption}{\bfseries}{\itshape}

\makeatletter
\AtBeginDocument{\let\ll@endproof\endproof
  \def\endproof{\qed\ll@endproof}}
\makeatother

\usepackage{booktabs}
\usepackage{array}
\newcolumntype{R}[1]{>{\raggedright\arraybackslash}p{#1}}
\usepackage{url}
\usepackage{xcolor}

\newcount\Comments  % 0 suppresses notes to selves in text
\definecolor{darkgreen}{rgb}{0,0.6,0}
\newcommand{\kibitz}[2]{\ifnum\Comments=1{\textcolor{#1}{{#2}}}\fi}
\newcommand{\rmr}[1]{\kibitz{red}{[RM:#1]}}
\newcommand{\guy}[1]{\kibitz{blue}{[GUY:#1]}}

\newcommand{\Guy}[1]{\guy{#1}}

\newcommand{\figfile}[1]{%
  \IfFileExists{figures/#1.tex}{\input{figures/#1}}{\input{../figures/#1}}}

\usepackage{pgfplots}
\pgfplotsset{compat=1.17}
\usepgfplotslibrary{groupplots,colormaps}

\newcount\Draft \Draft=1
\newcommand{\plan}[1]{%
  \ifnum\Draft=1
    \par\medskip\noindent
    \begingroup\small\color{black!80}%
    \leftskip=1.1em
    \textcolor{blue!55!black}{\rule{2pt}{0.9em}}\hspace{0.45em}#1\par
    \endgroup\medskip
  \fi}
\newcommand{\tg}[1]{\ifnum\Draft=1{\footnotesize\color{red!60!black}\textsf{#1}}\fi}

\usepackage{environ}
\usepackage{etoolbox}
\newcount\DeferProofs \DeferProofs=1
\makeatletter
\newcommand{\DeferredProofList}{}
\newcommand{\dp@item}[2]{%
  \par\medskip\noindent{\itshape Proof of #1.}\enskip #2%
  \qed\medskip}
\NewEnviron{deferredproof}[1]{%
  \ifnum\DeferProofs=1
    \edef\dp@tmp{\noexpand\dp@item{\unexpanded{#1}}{\expandonce\BODY}}%
    \expandafter\gappto\expandafter\DeferredProofList\expandafter{\dp@tmp}%
    \ifnum\DeferProofNote=1
      \par\noindent{\small\itshape Proof in \Cref{app:proofs}.}\par\medskip
    \fi
  \else
    \dp@item{#1}{\BODY}%
  \fi}
\newcount\DeferProofNote \DeferProofNote=1
\newcommand{\printdeferredproofs}{\DeferredProofList}
\makeatother

\providecommand{\stat}[1]{\textup{\small[\emph{#1}]}}
\newcommand{\proofin}[1]{\par\noindent{\small\itshape Proof in \Cref{#1}.}\par\medskip}

\newcommand{\citet}[1]{\cite{#1}}
\newcommand{\citep}[1]{\cite{#1}}

\usepackage[hidelinks]{hyperref}
\usepackage{cleveref}
\crefname{conjecture}{Conjecture}{Conjectures}
\Crefname{conjecture}{Conjecture}{Conjectures}
\crefname{theorem}{Theorem}{Theorems}          \Crefname{theorem}{Theorem}{Theorems}
\crefname{claim}{Claim}{Claims}                \Crefname{claim}{Claim}{Claims}
\crefname{corollary}{Corollary}{Corollaries}   \Crefname{corollary}{Corollary}{Corollaries}
\crefname{proposition}{Proposition}{Propositions}\Crefname{proposition}{Proposition}{Propositions}
\crefname{lemma}{Lemma}{Lemmas}                \Crefname{lemma}{Lemma}{Lemmas}
\crefname{assumption}{Assumption}{Assumptions} \Crefname{assumption}{Assumption}{Assumptions}
\crefname{definition}{Definition}{Definitions} \Crefname{definition}{Definition}{Definitions}
\crefname{remark}{Remark}{Remarks}             \Crefname{remark}{Remark}{Remarks}
\crefname{example}{Example}{Examples}          \Crefname{example}{Example}{Examples}
\crefname{figure}{Fig.}{Figs.}                 \Crefname{figure}{Fig.}{Figs.}
\crefname{equation}{Eq.}{Eqs.}                 \Crefname{equation}{Eq.}{Eqs.}

\newcommand{\zin}[1]{x^{\mathrm{in}}_{#1}}
\newcommand{\zout}[1]{x^{\mathrm{out}}_{#1}}
\crefname{section}{Section}{Sections}          \Crefname{section}{Section}{Sections}
\crefname{appendix}{Appendix}{Appendices}      \Crefname{appendix}{Appendix}{Appendices}
\def\eps{\varepsilon}
\begin{document}

\title{Costly Voting in the Hotelling--Downs Model}
\titlerunning{Costly Voting in the Hotelling--Downs Model}
\author{Guy Wolf and Reshef Meir}
\authorrunning{Guy Wolf and Reshef Meir}
\institute{Technion---Israel Institute of Technology}

\maketitle

\begin{abstract} 
We study a partial-participation variation of the Hotelling-Downs model. Voters each have a cost to vote, and only vote when the comparative gain from their preferred candidate exceeds the cost. Under this model the median voter theorem breaks, and we study the extent of polarization under equilibria in different voters and cost distributions. We find that the main predictor of polarization  is the reverse-hazard-rate of the cost distribution, indicating that the driver of polarization under our model is the willingness of voters to respond to changes in positions of candidates. We then extend the model by adding parameters governing alienation and candidate competitiveness, showing that our results are robust even when taking into account other realistic factors.
\end{abstract}

%%%%%%%%%%%%%%%%%%%%%%%%%%%%%%%%%%%%%%%%%%%%%%%%%%%%%%%%%%%%%%%%%%%%%%%%%%%%%%
%%%%%%%%%%%%%%%%%%%%%%%%%%%%%%%%%%%%%%%%%%%%%%%%%%%%%%%%%%%%%%%%%%%%%%%%%%%%%%
%% Introduction: hook, model sketch, contributions, then related work.
%% Every headline claim here is stated in the form its theorem proves --
%% uniform-voter hypotheses and the alienation caveat included.
%%%%%%%%%%%%%%%%%%%%%%%%%%%%%%%%%%%%%%%%%%%%%%%%%%%%%%%%%%%%%%%%%%%%%%%%%%%%%%
\section{Introduction}\label{sec:intro}
Two developments mark democratic politics in recent decades: in the United
States the two parties' elected representatives have moved
apart~\cite{mccarty2006polarized} and partisans with
them~\cite{pew2017partisan}, while across much of Europe partisans are
sharply hostile to one another~\cite{reiljan2020fear}; and in many
democracies fewer citizens turn out to
vote~\cite{hooghe2017tipping,franklin2004voter}. The canonical account of
two-candidate competition predicts neither.

That account is the \emph{Hotelling--Downs model}
\cite{hotelling1929,downs1957economic}, in which two office-seeking candidates
choose policy platforms to maximize vote share. Under full voter participation
and single-peaked preferences it yields convergence to the median voter, whose
ideal policy defeats any alternative in pairwise majority voting---the
\emph{median voter theorem}. Yet divergence, not convergence, is the recurrent
empirical finding. In the United States,
Ansolabehere et al.~\cite{ansolabehere2001candidate} and Lee et
al.~\cite{lee2004voters} document systematic departures from median
convergence in U.S. House elections, while Levitt~\cite{levitt1996senators}
finds similar patterns in the U.S. Senate, and Bafumi and
Herron~\cite{bafumi2010leapfrog} show that members of Congress tend to adopt
positions more ideologically extreme than those of their constituencies.

One key difference between the benchmark Hotelling--Downs model and real-world
elections is that the benchmark typically assumes \emph{everyone votes}, whereas
in practice, voters may abstain, which makes it increasingly important to
incorporate non-participation into models of electoral
competition~\cite{meir2025tyranny}. That the two trends above are connected is
not a new conjecture. Calibrating a spatial model with voter abstention to
survey data from the U.S. presidential elections of 1980, 1984, 1988, 1996 and
2000, Adams, Merrill and Grofman report that in each of them the major-party
candidates had electoral incentives to diverge under their estimated turnout
model, with equilibrium divergence increasing in the policy distance between
the two parties' partisans~\cite{adams2005abstention,adams2003turnout}. What
has been missing is a characterization of \emph{which} feature of the
participation margin---the voters sitting at the edge of turning out---does
the work.

In this paper, we extend the Hotelling--Downs framework by introducing costly
voting. Inspired by the classic \emph{Calculus of Voting} (see below), voters participate only when the relative utility gain from supporting
their preferred candidate exceeds their individual cost of voting; otherwise,
they abstain. Turnout therefore depends endogenously on the
perceived policy difference between the candidates, and candidates must
account not only for ideological proximity but also for the turnout incentives
induced by their platform choices. Our model accommodates arbitrary
distributions of voter ideal points (density $v$) and voting costs (CDF $K$,
density $k$), and adds two
one-parameter ``dials'' to the participation side and the candidate side of
the game: voter \emph{alienation} $\alpha\ge0$, under which participation also
declines with a voter's distance from her own candidate, and candidate
\emph{competitiveness} $\beta\in[0,1]$, which interpolates between maximizing
one's vote share ($\beta=0$) and maximizing one's margin of victory ($\beta=1$). Costly voting is not
itself a dial: it is the always-on baseline, carried by the entire cost
distribution rather than by a single parameter, and both dials are read
against it.

\subsection{Our Contribution}

We first derive necessary conditions that every pure Nash equilibrium must
satisfy. A single \emph{balance principle} asserts that at every equilibrium the
marginal voters the candidates gain toward the center and those they lose on
their flanks are traded off at a constant rate, one fixed by the candidates'
objective alone and not by the electorate or the cost distribution
(\Cref{lem:balance,lem:balance-gen}). Equilibrium takes a particularly simple
form when either voter ideal points or voting costs are uniformly distributed:
under uniform costs every equilibrium traps a share of the electorate between
the platforms that again depends only on that objective, never on the
electorate itself (\Cref{thm_uniform_costs,prop:massgap}). These conditions are
not sufficient in general, and we show by example that profiles solving them
may fail to be equilibria. We therefore complement them with sufficiency
results: for uniformly distributed voters, a single monotonicity condition on
the cost distribution, satisfied by every log-concave cost density, under which
the equilibrium exists, is unique and is completely characterized, for
\emph{every} competitiveness $\beta<1$ and alienation $\alpha\ge0$; and for
uniformly distributed costs, density-ratio bounds under which the profiles at
fixed quantiles of a symmetric electorate (the quartiles in the baseline) are
equilibria (\Cref{cor_quarter,cor:quantile}).

Our central message concerns polarization (\Cref{sec:base:polar}). Costly
voting alone breaks median convergence: under uniform costs the
\emph{quartile} profile of a symmetric electorate (density ratio at most four)
is an equilibrium, so \emph{mass polarization} (of the electorate) transmits to
\emph{elite polarization} (here, of the candidates) one for one. Fixing the
electorate at uniform, we prove a comparative-statics theorem in the
reverse-hazard-rate stochastic order: cost distributions that are higher in
this order induce weakly more polarized equilibria, under the monotonicity
condition above. We further show this is exactly the right
order: the equilibrium correspondence is invariant to rescaling $K$, hence
depends on the cost distribution \emph{only} through its reverse hazard rate
$k/K$ (the proportional responsiveness of turnout to the stakes), and in
particular the \emph{level} of turnout has no causal role: distributions
exist with stochastically higher costs, lower turnout, and strictly less
polarization. What polarizes candidates is an electorate whose participation
is \emph{responsive} to the stakes, not one whose turnout is merely low.

We then turn each dial. Under alienation (\Cref{sec:alien}) we prove that
stronger alienation strictly polarizes for a uniform electorate and every
log-concave cost density, with the equilibrium separation $\Delta^*$ increasing
in $\alpha$ and tending to full separation, and that this holds at an
unchanged level of turnout, so the effect is not a by-product of lower
participation (\Cref{cor:alien-turnout}); that the cost channel survives
alienation, so that costlier voting in the reverse-hazard-rate order still
polarizes at every $\alpha$; and that turnout
becomes non-monotone in the platform gap, so the reverse causal channel
(extreme candidates depressing participation) reappears. Under
competitiveness (\Cref{sec:comp}) we prove that competition for the margin is
the model's one de-polarizing force, with a closed-form equilibrium path
$\Delta^*(\alpha,\beta)$ in the doubly-uniform benchmark, but that under
alienation the margin limit is a knife-edge: for a uniform electorate the
$\beta\uparrow1$ limit selects the \emph{most} polarized point of the margin
game's equilibrium band rather than the median; under alienation and uniform
costs the median is a strict local equilibrium of the margin game (no small
deviation pays) when the electorate is centrally peaked ($v''(\tfrac12)<0$) and fails when it is hollow
($v''(\tfrac12)>0$), a dichotomy settled at fourth order because the
second-order term vanishes; and for hollow electorates pure
equilibria can fail to exist at all. A computer-assisted certificate proves
non-existence at $\alpha=1$ with uniform costs and an arcsine electorate
($v=\mathrm{Beta}(\tfrac12,\tfrac12)$, the extreme hollow case), for
every $\beta\ge0.9168$, while at $\alpha=0$, under uniform costs, we prove existence for every
$\beta<1$ for symmetric electorates satisfying explicit density bounds, the
arcsine electorate among them (\Cref{m:thm:E1}). \Cref{tab:drivers}
collects the resulting map from primitives to polarization.

Methodologically, the non-existence results rest on a certificate scheme for
continuous-strategy spatial games: a Lipschitz bound on the payoff, exact
quadrature at the grid points, and adaptive refinement of the cells that
remain undecided, which together turn a finite computation into a statement
about the continuum. The scheme is not specific to this model and may be of
use wherever pure equilibria of a continuous game must be ruled out.

A released verification suite recomputes the paper's numerical claims from the
model's definitions, including both computer-assisted theorems
(\Cref{n:thm:C1,m:thm:E2}): the grid certificates are reproduced to the digits
reported, and the exact payoff evaluation they rest on is cross-checked against
independent adaptive quadrature.\footnote{Available at
\url{https://github.com/ANONYMIZED/costly-voting-hd}; see its \texttt{README}
for what is and is not covered.} Each figure states in its source how its data
were generated.

\subsection{Related Work}

\paragraph{Hotelling--Downs and variants}
Due to the simplicity and interpretability of the Hotelling--Downs spatial
model, a large literature has developed extensions and variations, often with
the aim of getting more `natural' equilibria where candidates diverge;
Osborne~\cite{osborne1995spatial} surveys it, and Grofman~\cite{grofman2004downs}
catalogues the Downsian assumptions---full participation among them---whose
relaxation produces divergence. Most
extensions retain the assumption that voters prefer the closer candidate but
vary the details of the model, e.g.,\ to consider voters' uncertainty about
candidates~\cite{kamada2014voter}, policy-motivated
candidates~\cite{wittman1983candidate,calvert1985robustness}, candidates'
uncertainty about voters~\cite{roemer1994theory,coughlin2015probabilistic},
relaxing equilibrium requirements~\cite{van2023rationalizable}, introduce
asymmetries between candidate favorability~\cite{aragones2002mixed} and so
on. A separate axis,
less explored in this line of work, is the candidates' \emph{objective}: vote
share, margin of victory, and probability of winning coincide under full
participation, so the classical model cannot distinguish them
\cite{aranson1974election,duggan2006candidate}, whereas
with abstention they come apart. That is not incidental: the equivalence
result of \citet{patty2002equivalence} holds precisely when abstention is
ruled out, and variable participation is the setting in which
Hinich and Ordeshook~\cite{hinich1970plurality} separated plurality from vote maximization.
\Cref{sec:comp} shows how consequential the choice is.

\paragraph{Abstention}
Voting costs and abstention have received considerable attention since the
outset of formal political economy, beginning with the Calculus of Voting
framework \cite{riker1968theory} and the pivotal-voter tradition it launched
\cite{palfrey1983strategic,palfrey1985participation,feddersen2004paradox}, continued
in the costly-voting models of B\"orgers~\cite{borgers2004costly},
Krasa and Polborn~\cite{krasa2009mandatory}, Krishna and
Morgan~\cite{krishna2012voluntary} and Herrera, Morelli and
Palfrey~\cite{herrera2014turnout}, in which, as here, each voter draws a private cost
from a distribution, but the alternatives on the ballot are fixed and the
questions are turnout, welfare and information aggregation, as well as later
``lazy bias'' models \cite{desmedt2010equilibria,elkind2015equilibria}. In all of these, the
fundamental assumption is that voters are more likely to participate when \emph{the outcome matters more} and when
 \emph{they are more pivotal}, and the analysis focuses on the
implications for equilibrium outcomes. This assumption is relaxed in
\cite{meir2025condorcet}, where voters consider various heuristics for their
pivotality. Other papers, such as \cite{cohensius2017proxy,meir2020sybil},
examine the effects of arbitrary or random participation on electoral outcomes
under the Median rule.\footnote{In some simulations in \cite{cohensius2017proxy}
the authors sampled more voters closer to the edges of the interval (i.e.,\
assuming `extreme' voters are more active), but without any underlying utility
model or formal justification.} However, none of the above models endogenize candidate responses to
abstention; instead, candidate positions are taken as fixed. That the
median result is fragile once participation is endogenous is an old theme:
Hinich, Ledyard and Ordeshook~\cite{hinich1972nonvoting} study existence of majority-rule equilibrium when
voters may abstain, and Hinich~\cite{hinich1977artifact} shows that in the
probabilistic-voting version the median is an equilibrium only under a
knife-edge symmetry: the convergence prediction is, in his phrase, an
artifact. The sharpest foil is Ledyard~\cite{ledyard1984pure}, in which platforms and
participation are determined together and voters, like ours, carry a private
voting cost; but they weigh the platform difference by their probability of
being pivotal, and the outcome is convergence. Both candidates adopt the same
welfare-maximizing platform and, the platforms being identical, nobody votes.
Our voters act as if decisive (\Cref{sec:participation}), and \Cref{sec:base}
shows that convergence then fails, beginning with \Cref{lem:interior}, which
rules out tied equilibria. A classical and
distinct driver of abstention is \emph{alienation}---non-participation out of
disaffection with even one's preferred option---introduced into spatial models
by Hinich and Ordeshook~\cite{hinich1969abstentions}, and since distinguished
from abstention out of \emph{indifference} between the candidates
\cite{brody1973indifference,anderson1992alienation,adams2006consequences};
our parameter $\alpha$ embeds this channel alongside voting costs, and \Cref{sec:alien} shows it is the
model's strongest polarizing force. That alienation-driven abstention can break
convergence is known: Callander and Wilson~\cite{callander2007turnout} obtain divergent positions
in a spatial entry model with alienation, and Oprea, Martin and Brennan~\cite{oprea2024moving} argue
that compulsory voting de-polarizes precisely by removing extreme voters'
ability to threaten abstention. What is new here is that the threat is priced
rather than assumed: the cost distribution $K$ is a primitive, and
\Cref{sec:alien} identifies which of its features the equilibrium responds to.

Closest to the present paper is Jones, Sirianni and Fu~\cite{jones2022polarization}, who examine how
a relative cost of voting that deters indifferent voters, third-party entry,
and a bimodal electorate each undermine the median voter theorem. Their
treatment of the first channel is numerical (candidate positions follow
gradient dynamics on a two-Gaussian or empirically estimated electorate, and
the relative cost of voting is a single scalar common to all voters), whereas
ours is analytic and holds for arbitrary $v$ and $K$, which is what lets
\Cref{sec:base:polar} isolate the reverse hazard rate as the only feature of
the cost distribution the equilibrium responds to.

Some variations of the Hotelling--Downs model explicitly incorporate the
possibility of voter abstention. In \cite{feldman2016variations,shen2016hotelling} voters abstain whenever all
candidates lie outside a fixed \emph{attraction interval}, so participation
does not decline gradually as candidates become less attractive;
Cohen and Peleg~\cite{cohen2019tolerance} instead draw each voter's tolerance range at random
from a density, so that the share of voters who still turn out falls gradually
with the distance to the nearest candidate, the nearest-candidate analogue of
the cost draw used here.
Coughlin~\cite{coughlin2015probabilistic} considers probabilistic attraction
models with a finite set of voters, in which each voter may abstain according
to an independent probability function. The attraction-interval models condition participation on the distance to the
nearest candidate alone, which is natural in \emph{economic competition}, where
a consumer's decision to buy depends only on the best offer available, but not
in \emph{political competition}, where a voter's stake is the difference between
the platforms. Models in which participation responds to that difference do
exist: the indifference-based abstention models cited above
\cite{anderson1992alienation,adams2006consequences}, the pivotal model of
Ledyard~\cite{ledyard1984pure}, the numerical study of Jones et al.~\cite{jones2022polarization},
and a recent workshop paper whose voters are prospect-theoretic and whose focus
is the computational complexity of optimizing a candidate's position
\cite{clevelandprospect26}. Two papers in political economy come closer still.
In Llavador~\cite{llavador2006platforms} voters weigh both platforms against a cost of
voting that grows with the distance to the platform they support (the
alienation channel of \Cref{sec:alien}), but his parties are policy-motivated,
so there the alienation cost is a moderating force that keeps them from
radicalizing, whereas here it acts on vote-share maximizers who would
otherwise converge, and pushes them apart. In Herrera, Levine and Martinelli~\cite{herrera2008platforms}
turnout is likewise endogenous to the platforms, but through the parties'
mobilization spending rather than through any voter's own cost--benefit
comparison. What distinguishes the present model is the combination:
participation responds to the platform difference; the threshold is drawn from
a full cost distribution $K$ that enters the analysis as a primitive; voters
weigh the platform difference directly against their cost, with no
pivot-probability discount (\Cref{sec:participation}); and the analysis is
carried out for arbitrary $v$ and $K$ rather than a fixed electorate.

\paragraph{Candidate polarization}
The distinction between \emph{mass} and \emph{elite} polarization drawn above
is a live one in political science; one prominent debate is on which type of
polarization precedes, or even causes, the other
\cite{hetherington2001resurgent,zingher2018high,abramowitz2008polarization,fiorina2008political,farina2015congressional}.

Neither trend with which this paper opened is universal: four-decade trends
across OECD democracies are mixed, with increases in some countries and
decreases in others~\cite{boxell2024cross}. Two recent studies speak directly
to the
channel studied here. Across eleven Western European countries (1977--2016),
Dreyer and Bauer~\cite{dreyer2019does} find that parties adopt more extreme positions as their
electorates polarize, and that the response runs through abstention: it is
strongest where turnout is low and fades as turnout rises.
Al Yussef~\cite{alyussef2026turnout} finds, across Dutch municipalities, that the vote
is more polarized where turnout is lower, consistent with centrist voters
abstaining disproportionately. Both studies measure the abstention margin by
the turnout rate; \Cref{sec:base:polar} shows that the model separates the
rate from the responsiveness of participation to the stakes, and that only
the latter moves the equilibrium.

Prior theoretical work in computational social choice has modeled polarization
in a variety of settings, such as committee selection~\cite{dong2025selecting},
which sometimes require complex definitions to measure. In the context of
2-candidate spatial competition, measuring polarization is trivial, and so we
can isolate and test the effect of each component of the model: the voters'
distribution, the voting costs, alienation, and the candidates' objective.

\paragraph{Core versus swing}
The distributive-politics literature asks when a party should reward its core
and when it should woo the center, with transfers rather than platforms as the
instrument
\cite{cox1986electoral,lindbeck1987balanced,dixit1996determinants,golden2013distributive};
there a group is ``core'' because a party can deliver to it more effectively,
a driver absent from our model, whose candidates hold one public,
non-discriminating instrument. Within this literature the nearest to our
question is Cox~\cite{cox2010swing}, who adds turnout to the targeting problem:
the most valuable voter supports you whenever she votes, and her participation
responds to what you offer. Both features describe the marginal voters on a
candidate's outer flank, and his comparative static is the crossing of
\Cref{cor:crossing}: the more mobilization and the less persuasion a party
believes possible, the more it concentrates on its core. Our balance condition
prices the trade-off between the two flanks at a rate that rises with
competitiveness and falls with alienation (\Cref{lem:balance-gen}), a price
no full-participation model can carry (\Cref{rem:beta}).

\paragraph{The two dials}
Each extension answers one of the gaps above: the $\alpha$-dial lets
participation fall with the distance to a voter's \emph{own} candidate, the
difference between economic and political competition noted above, and the
$\beta$-dial makes the candidates' objective a parameter rather than a
convention. Neither ingredient is new on its own; what the present model adds
is that both are read against an arbitrary cost distribution, so the effect of
each can be separated from the effect of costly voting itself.

\paragraph{Road map}
\Cref{sec:model} sets out the model; \Cref{sec:base,sec:alien,sec:comp}
analyze the baseline and the two extensions, each section moving from
characterizations to their consequences for polarization, which in
\Cref{sec:alien,sec:comp} come in two steps: first at a fixed value of the
parameter, then as it varies. \Cref{sec:comp} closes with the margin game
$\beta=1$, a special case; \Cref{sec:discussion} concludes. Proofs are deferred to
\Cref{app:proofs} and the subsequent appendices. A reader with limited time
can take \Cref{sec:base:polar,sec:alien:polar,sec:comp:polar}, the
subsections on what each force does to polarization, together with
\Cref{tab:drivers}; the subsections preceding each supply what they rest on.

%%%%%%%%%%%%%%%%%%%%%%%%%%%%%%%%%%%%%%%%%%%%%%%%%%%%%%%%%%%%%%%%%%%%%%%%%%%%%%
%%%%%%%%%%%%%%%%%%%%%%%%%%%%%%%%%%%%%%%%%%%%%%%%%%%%%%%%%%%%%%%%%%%%%%%%%%%%%%
%% Section 2 (Model) -- rewritten from scratch.
%%
%% REQUIRED PREAMBLE ADDITIONS (in addition to the current preamble):
%%   \usepackage{pgfplots}
%%   \pgfplotsset{compat=1.17}
%%   \usepgfplotslibrary{groupplots,colormaps}
%%
%% All figures are drawn in pgfplots and are PARAMETRIC: the model parameters
%% (c1, c2, alpha, kappa) are set as macros at the top of each tikzpicture,
%% and every coordinate is computed from them.  To change a figure, edit the
%% macros; no external tool is needed.
%%
%% fig:benefit (Sec. 2.2, the extension-free model) and fig:benefit-alien
%% (Sec. 2.5, alienation) share one panel macro that takes alpha as its
%% argument: fig:benefit instantiates it at alpha=0, fig:benefit-alien at
%% alpha=0 and alpha=1.  Keep the two macro blocks in sync.
%%%%%%%%%%%%%%%%%%%%%%%%%%%%%%%%%%%%%%%%%%%%%%%%%%%%%%%%%%%%%%%%%%%%%%%%%%%%%%

\section{Model}\label{sec:model}

\Cref{sec:hd} states the classical Hotelling--Downs game and the median
prediction it delivers, and isolates the assumption our model gives up.
\Cref{sec:participation} gives it up: participation becomes a cost--benefit
decision taken voter by voter, which is the paper's one substantive departure
and is in force everywhere in it. \Cref{sec:objectives,sec:eq-example} fix the
candidates' objective, the equilibrium notions and a running example; the game
assembled by then is the one analyzed in \Cref{sec:base}.
\Cref{sec:model:ext} adds the two extensions---alienation among voters,
competitiveness among candidates---each governed by its own parameter, each
switched off in the model presented first, and each taken up in a later
section (\Cref{sec:alien,sec:comp}).
\Cref{tab:notation} summarizes the notation.

\begin{table}[!t]
\centering
\caption{Notation of the model. The last block belongs to the extensions of
\Cref{sec:model:ext}; the model of \Cref{sec:participation,sec:objectives}
sets $\alpha=\beta=0$. Shorthands introduced later, in the analysis, are
collected separately in \Cref{tab:notation-analysis}.}
\label{tab:notation}
\setlength{\tabcolsep}{4pt}\small
\begin{tabular}{@{}l R{7.4cm}@{}}
\toprule
\multicolumn{2}{@{}l}{\emph{Primitives and parameters}}\\
$V,v$;\ $K,k$ & CDF/density of voter positions; of voting costs $\kappa$ (on $[0,1]$)\\
$\rho_v$ & density ratio $\sup v/\inf v$ (electorate evenness)\\
\midrule
\multicolumn{2}{@{}l}{\emph{Profiles and induced behavior}}\\
$c_i$;\ $d_i(x):=|x-c_i|$ & platform of candidate $i\in\{1,2\}$; distance to voter $x$\\
$c_1\le c_2$;\ $\Delta$;\ $m$ & labeling; separation $c_2-c_1$; midpoint $\tfrac{c_1+c_2}{2}$\\
$B_i(x)$;\ $p_i(x)$ & benefit of voting for $i$: $d_{-i}-d_i$, \Cref{eq:benefit}; share of the voters at $x$ who vote for $i$, \Cref{eq:shares}\\
$u_i$;\ $T$ & vote share $\int_0^1 p_i v\,dx$; turnout $u_1+u_2$\\
$\bar O_i$;\ $\bar I_i$ & votes from $i$'s flank; from her half of the contested middle, \Cref{eq:arms}\\
$\Delta^*$ & equilibrium separation\\
\midrule
\multicolumn{2}{@{}l}{\emph{Decorations}}\\
$\,{}^{*}$;\ $\widetilde{\,\cdot\,}$ & equilibrium value; the comparison primitive of a stochastic order\\
\midrule
\multicolumn{2}{@{}l}{\emph{Extension parameters (\Cref{sec:model:ext})}}\\
$\alpha\ge 0$ & alienation: benefit becomes $d_{-i}-(1+\alpha)d_i$, \Cref{eq:benefit-gen}\\
$\beta\in[0,1]$;\ $w_i$ & competitiveness; margin $u_i-u_{-i}$\\
$U_i=u_i-\beta\,u_{-i}$ & candidate $i$'s objective, \Cref{eq:objective}\\
$\Delta^*(\alpha)$;\ $\Delta^*(\alpha,\beta)$ & equilibrium separation, when the dependence matters\\
\bottomrule
\end{tabular}
\end{table}

\subsection{The Hotelling--Downs Benchmark}\label{sec:hd}

In the classical model \cite{hotelling1929,downs1957economic}, every voter
votes for the candidate nearest to her ideal point, and each candidate chooses
a platform to maximize the number of votes she receives; if both candidates
choose the same platform, they split the electorate evenly. With voters
uniformly distributed on $[0,1]$, the unique equilibrium famously places both
candidates at the median, $\tfrac12$. This is a special case of a more general
statement. Let $V$ be the CDF of an arbitrary continuous distribution of voter
ideal points on $[0,1]$, and let candidate $i\in\{1,2\}$ choose a
\emph{platform} $c_i\in[0,1]$, so that a \emph{profile} is a pair $(c_1,c_2)$.

\begin{claim}[See \cite{black1948rationale,downs1957economic}]\label{claim:hd}
The only equilibria of the Hotelling--Downs game are the profiles $(c_1,c_2)$
with $V(c_1)=V(c_2)=\tfrac12$, i.e., both candidates at a median of $V$.
\end{claim}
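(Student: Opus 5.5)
We prove the two directions separately, working directly with vote shares. For a profile $c_1\le c_2$ with $c_1<c_2$, candidate 1 receives every voter to the left of the midpoint $m$, so $u_1=V(m)$ and $u_2=1-V(m)$. If $c_1=c_2$, each receives $\tfrac12$. Since $V$ is continuous, ties at $m$ have measure zero and these formulas are exact.

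\emph{Sufficiency.} Suppose $V(c_1)=V(c_2)=\tfrac12$. Then each candidate obtains $\tfrac12$. Consider a deviation by candidate 1 to some $c\ne c_2$, say $c<c_2$. Her share is then $V\bigl(\tfrac{c+c_2}{2}\bigr)\le V(c_2)=\tfrac12$, by monotonicity of $V$. The case $c>c_2$ is symmetric: her share is $1-V\bigl(\tfrac{c+c_2}{2}\bigr)\le 1-V(c_2)=\tfrac12$. A deviation onto $c_2$ itself gives exactly $\tfrac12$. So no deviation is profitable, and the same argument applies to candidate 2.

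\emph{Necessity.} Let $(c_1,c_2)$ be an equilibrium. We first show that each candidate earns at least $\tfrac12$. Candidate $i$ can always copy the opponent, $c_i=c_{-i}$, and secure $\tfrac12$. Shares sum to one, so in equilibrium both earn exactly $\tfrac12$.

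Next we rule out $c_1<c_2$. Suppose it held, so $V(m)=\tfrac12$. Candidate 1 could move to $c_1+\eps$, which shifts the midpoint to $m+\eps/2$ and gives her share $V(m+\eps/2)$. This is strictly above $\tfrac12$ unless $V$ is flat on $[m,m+\eps/2]$ for every small $\eps$. Symmetrically, candidate 2 gains by moving left unless $V$ is flat just to the left of $m$. The move that settles it is for candidate 1 to jump to $c=c_2-\delta$. Her share becomes $V(c_2-\delta/2)$, which tends to $V(c_2)$ as $\delta\to0$. Candidate 2's own best response is likewise bounded below by what she gets approaching $c_1$ from the right, namely $1-V(c_1)$. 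Equilibrium therefore forces $V(c_2)\le\tfrac12$ and $V(c_1)\ge\tfrac12$. Together with $c_1<c_2$ and monotonicity of $V$, this gives $V(c_1)=V(c_2)=\tfrac12$.

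If $c_1=c_2=c$, a deviation by candidate 1 to $c-\delta$ yields $V(c-\delta/2)\to V(c)$, and a deviation to $c+\delta$ yields $1-V(c+\delta/2)\to 1-V(c)$. Equilibrium requires both limits to be at most $\tfrac12$, hence $V(c)=\tfrac12$.

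\emph{Main obstacle.} The delicate point is the case where $V$ has a flat median interval, i.e.\ $v$ vanishes on a gap. There $c_1<c_2$ can genuinely occur, so the argument must not assume strict monotonicity. The limiting deviations toward the opponent handle this cleanly: they yield exactly the conditions $V(c_1)\ge\tfrac12\ge V(c_2)$. These pin both platforms inside the median set without ever needing a strict gain from small moves, so the flat-$V$ subtleties are avoided.
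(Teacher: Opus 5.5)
Your proof is correct, and it covers the claim in exactly the generality stated: an arbitrary continuous $V$, including a flat median interval on which profiles with $c_1<c_2$ are genuine equilibria. The paper itself gives no proof of \Cref{claim:hd}; it defers to Black and Downs. The natural comparison is therefore with its nearest analogue, the proof of \Cref{thm:margin_median}. There one first shows that every deviation \emph{against} the median loses strictly, and then obtains uniqueness by letting a candidate deviate \emph{to} the median.

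Transposed to the classical game, that route is shorter. If $c_2\neq q$, candidate $1$ at the median $q$ earns strictly more than $\tfrac12$, which contradicts the equilibrium payoff of exactly $\tfrac12$. But it needs the median to be unique and the gain to be strict, that is, strict monotonicity of $V$, which the paper has from \Cref{ass:reg}.

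Your route works differently. You pin both payoffs at $\tfrac12$ using the copy-the-opponent guarantee and $u_1+u_2=1$. You then let each candidate approach the other, which yields $V(c_1)\ge\tfrac12\ge V(c_2)$. This uses only continuity and weak monotonicity of $V$, which is what lets it handle flat medians.

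Two small points of exposition. First, the sentence ``Next we rule out $c_1<c_2$'' is inaccurate and sits oddly with your closing paragraph. What you actually show is that $c_1<c_2$ forces $V(c_1)=V(c_2)=\tfrac12$. The remarks about moving to $c_1+\varepsilon$ can be dropped. Second, in the tied case with $c\in\{0,1\}$, only one deviation direction is available. The missing inequality holds trivially there, because $V(0)=0$ and $V(1)=1$.
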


\Cref{claim:hd} is robust to how a voter's preference over the two platforms is
modeled, so long as she votes. Let her payoff depend on \emph{both} platforms
through the gain $d_{-i}-d_i$ introduced in \Cref{sec:participation} below,
rather than on the nearer one alone: with full participation she still votes
for the candidate nearer to her, the induced vote shares are the classical
ones, and the equilibrium set is unchanged. What breaks the claim is therefore
not the utility model but the decision of whether to vote at all. The remainder
of this section replaces the assumption of full participation with an explicit
cost--benefit decision by each voter.

\subsection{Voters, Costs, and Participation}\label{sec:participation}

\paragraph{Positions and costs}
There is a continuum of voters with positions $x\in[0,1]$, distributed
according to a continuous distribution with CDF $V$ and density $v$. In
addition to her position, each voter draws an idiosyncratic voting cost
$\kappa\in[0,1]$, i.i.d.\ across voters according to a continuous distribution
with CDF $K$ and density $k$, independently of her position.\footnote{%
Independence reflects the interpretation of $\kappa$ as the logistics of
voting (registration, travel, forgone time), which are plausibly unrelated to
ideology. Restricting the support to $[0,1]$ is essentially without loss of
generality: all distances lie in $[0,1]$, so the benefit from voting defined
below never exceeds $1$, voters with $\kappa>1$ never participate, and only the
restriction of $K$ to $[0,1]$ matters.}
The following mild conditions are assumed throughout the paper; we note
explicitly whenever one of them can be relaxed.\footnote{We reserve the term
\emph{regularity} for its usual meaning, monotonicity of the virtual cost
\eqref{eq:psi}; that is a hypothesis of \Cref{thm_uniform_votes}(b), not a
standing assumption.}

\begin{assumption}[Standing assumptions]\label{ass:reg}
The densities $v$ and $k$ are continuous and strictly positive on $(0,1)$,
$K(0)=0$, and $K(1)\in(0,1]$. A defect $1-K(1)>0$ is the mass of voters whose
cost exceeds every attainable benefit (the never-voters of
\Cref{prop:scale}), so $K$ need not be a proper distribution on $[0,1]$.
\end{assumption}

\paragraph{The participation rule}
Given a profile of platforms $(c_1,c_2)$, write $d_i(x):=|x-c_i|$ for the
distance between a voter at $x$ and candidate $i$. The voter's \emph{benefit}
from voting for candidate $i$ is her preferred candidate's distance advantage,
how much better off she is if that candidate wins rather than the opponent:
\begin{equation}\label{eq:benefit}
B_i(x)\;:=\;d_{-i}(x)\;-\;d_i(x),
\end{equation}
and she votes for candidate $i$ if $B_i(x)\ge\kappa$, abstaining if neither
candidate clears her threshold. Since $B_i(x)>0$ requires $x$ to be strictly
closer to $c_i$ than to $c_{-i}$, at most one candidate can offer a positive
benefit, so the rule assigns (almost) every voter at most one vote.

We emphasize that the benefit \eqref{eq:benefit} compares the platforms
directly, rather than weighting the comparison by the probability of being
pivotal as in the \emph{Calculus of Voting} \cite{riker1968theory} and the
pivotal-voter models that formalize it \cite{palfrey1985participation}. Voters
thus behave as if decisive, an assumption in the spirit of expressive
\cite{brennan1993democracy} or boundedly rational voting, appropriate when voters do not know the positions
of other voters or cannot assess their own pivotality \cite{meir2025condorcet}.

The departure is narrower than it looks. A common pivot probability
$\varpi$ that does not depend on the platforms is absorbed into the cost
distribution: voting when $\varpi\,B_i(x)\ge\kappa$ is voting when
$B_i(x)\ge\kappa/\varpi$, which is the same game with costs
$t\mapsto K(\varpi t)$. \Cref{prop:mass} shows what that
reparametrization does: nothing at all when $K$ is a power law, uniform costs
included, and otherwise, for a uniform electorate, a move whose direction is
set by the elasticity of $K$. What
the model genuinely assumes away is a pivot probability that responds to the
platforms themselves.

\paragraph{Participation shares}
Aggregating over costs, the share of voters at position $x$ who vote for
candidate $i$ is
\begin{equation}\label{eq:shares}
p_i(x)\;=\;
\begin{cases}
K\big(B_i(x)\big), & \text{if } B_i(x)\ge 0,\\[2pt]
0, & \text{otherwise.}
\end{cases}
\end{equation}
\Cref{fig:benefit} illustrates the mechanics. On candidate $1$'s side of the
midpoint, $B_1$ is a piecewise-linear ``tent'': its \emph{outer arm}, facing
away from the opponent, is flat at the full gain $\Delta:=c_2-c_1$---every
voter on the far side of $c_1$ enjoys the same advantage---and its
\emph{inner arm}, facing the opponent, falls at slope $2$ from the peak at
$c_1$ to the midpoint $m:=\tfrac{c_1+c_2}{2}$, where it vanishes; the mirror
statement holds for candidate $2$. Letting voting costs tend to zero recovers
the vote shares of the classical model for any profile with $c_1\ne c_2$.

\paragraph{Flank and contested middle}
The arms are pieces of the benefit function; the voters standing under them
are worth naming separately, because every first-order condition in the paper
is a statement about those two groups. Since $B_1<0$ beyond the midpoint,
candidate $1$ draws her votes entirely from $[0,m]$, which her own platform
splits in two: her \emph{flank} $[0,c_1]$, the voters standing behind her,
and her half $[c_1,m]$ of the \emph{contested middle} $[c_1,c_2]$, the voters
lying between the two platforms. Write
\begin{equation}\label{eq:arms}
\bar O_1\;:=\;\int_0^{c_1}\!K\big(B_1(x)\big)v(x)\,dx\;=\;K(\Delta)\,V(c_1),
\qquad
\bar I_1\;:=\;\int_{c_1}^{m}\!K\big(B_1(x)\big)v(x)\,dx
\end{equation}
for the votes each group yields, so that $u_1=\bar O_1+\bar I_1$; the closed
form for $\bar O_1$ is available because the outer arm is flat, so every voter
on the flank faces the same benefit $\Delta$. The mirror definitions serve
candidate $2$. Note that $B_1$ is continuous at $c_1$, taking the value
$\Delta$ from both sides: nothing distinguishes the outermost voter of the
middle from the innermost voter of the flank. That is what makes the split
pure bookkeeping, and it is what \Cref{lem:balance} turns on.

The tie-breaking convention, however, differs from the classical one: at
$c_1=c_2$ every voter's benefit is $0$, so by $K(0)=0$ (\Cref{ass:reg})
(almost) every voter abstains and both candidates receive zero votes, whereas
the classical model splits the electorate evenly. This discrepancy is immaterial for our
analysis: as we show below (\Cref{lem:interior}), tied profiles are never
equilibria of the model just described. The one exception arises in the margin
limit of the second extension, where they survive as zero-turnout,
zero-margin equilibria (\Cref{prop:margin_uniform}).

% fig:benefit -- extracted from model_section.tex.
% The float is self-contained: every macro it uses is defined inside it.
% Inputs from the active draft as \input{figures/fig_benefit}.
\begin{figure}[t]
\centering
%% ---------------------------------------------------------------------------
%% Fig: the benefit of voting.  PARAMETRIC.
%% Edit \cA (c1), \cB (c2), \kap (cost threshold shown).  All coordinates are
%% computed from these macros.  \benefitpanel is written as a function of
%% alpha so that the same code draws fig:benefit-alien in Sec. 2.5; here it is
%% instantiated at alpha=0, where the outer arm is flat and no clipping
%% caveat applies.
%% The two platforms are marked twice over: a dot on each apex, a coloured
%% tick label, and the span that measures the separation between them.
%% \sepY is the height of that span; keep it above the apexes (= \DD) and
%% below ymax.
%% ---------------------------------------------------------------------------
\def\cA{0.3}   % c1
\def\cB{0.8}   % c2
\def\kap{0.35} % individual cost level shown as dashed line
\pgfmathsetmacro\DD{\cB-\cA}       % Delta
\pgfmathsetmacro\mm{(\cA+\cB)/2}   % midpoint m
\pgfmathsetmacro\sepY{\DD+0.10}    % height of the separation span
% Draws one panel for a given alpha (#1).
\newcommand{\benefitpanel}[1]{%
  % shaded positive parts (= p_i under uniform costs; areas = u_i under uniform voters)
  \addplot[draw=none,fill=blue!22,forget plot] coordinates
    {(0,{\DD-#1*\cA}) (\cA,\DD) ({\cA+\DD/(2+#1)},0)} \closedcycle;
  \addplot[draw=none,fill=red!22,forget plot] coordinates
    {({\cB-\DD/(2+#1)},0) (\cB,\DD) (1,{\DD-#1*(1-\cB)})} \closedcycle;
  % benefit curves
  \addplot[blue!70!black,very thick] coordinates
    {(0,{\DD-#1*\cA}) (\cA,\DD) (\mm,{-#1*\DD/2})};
  \addplot[red!70!black,very thick,densely dashdotted] coordinates
    {(\mm,{-#1*\DD/2}) (\cB,\DD) (1,{\DD-#1*(1-\cB)})};
  % individual cost threshold
  \addplot[black,dashed,forget plot] coordinates {(0,\kap) (1,\kap)};
}
\begin{tikzpicture}
\begin{axis}[
  width=0.64\linewidth, height=5.2cm,
  xmin=0, xmax=1, ymin=-0.04, ymax=0.72,
  xtick={0,\cA,\mm,\cB,1},
  xticklabels={$0$,\textcolor{blue!70!black}{$c_1$},$m$,\textcolor{red!70!black}{$c_2$},$1$},
  ytick={0,\kap,\DD},
  yticklabels={$0$,$\kappa$,$\Delta$},
  xlabel={voter position $x$},
  ylabel={benefit $B_i(x)$},
  tick label style={font=\scriptsize}, label style={font=\small},
  axis lines=left,
  clip=false,
]
\benefitpanel{0}
% the platforms, marked on their own apexes
\addplot[only marks,mark=*,mark size=2.1pt,blue!70!black,forget plot]
  coordinates {(\cA,\DD)};
\addplot[only marks,mark=*,mark size=2.1pt,red!70!black,forget plot]
  coordinates {(\cB,\DD)};
% the separation, as the span between the two platforms
\addplot[gray!60,dashed,thin,forget plot] coordinates {(\cA,\DD) (\cA,\sepY)};
\addplot[gray!60,dashed,thin,forget plot] coordinates {(\cB,\DD) (\cB,\sepY)};
\draw[<->,black!70,thick] (axis cs:\cA,\sepY) -- (axis cs:\cB,\sepY);
\node[above,font=\scriptsize,black!70] at (axis cs:\mm,\sepY) {separation $\Delta$};
% The partition of eq:arms, drawn on candidate 1's side only: her own platform
% splits the voters she draws from into the flank and her half of the contested
% middle.  Candidate 2 mirrors it; drawing both doubles the labels for nothing.
% The rule runs axis-to-apex.  The labels sit low, at y = 0.10: the inner arm
% descends from (\cA,\DD) to (\mm,0), so a "middle" label placed centrally at
% the height of the cost line is cut by the curve.  At y = 0.10 and centred on
% x = 0.40 its right edge still clears the arm by a comfortable margin.
\addplot[gray!65,dashed,thin,forget plot] coordinates {(\cA,0) (\cA,\DD)};
\node[font=\scriptsize,black!65] at (axis cs:0.15,0.10) {flank};
\node[font=\scriptsize,black!65] at (axis cs:0.40,0.10) {middle};
% curves labelled where they run, instead of in a legend
\node[anchor=south,font=\scriptsize,blue!70!black] at (axis cs:0.13,\DD) {$B_1$};
\node[anchor=south,font=\scriptsize,red!70!black] at (axis cs:0.93,\DD) {$B_2$};
\end{axis}
\end{tikzpicture}
\caption{The benefit of voting at the profile $(c_1,c_2)=(0.3,0.8)$. A voter at
$x$ whose cost is $\kappa$ votes for candidate $i$ when $B_i(x)$ rises above the
dashed line; aggregating over costs, a share $p_i(x)=K\big(B_i(x)^+\big)$ of the
voters at $x$ vote for $i$. The separation $\Delta=c_2-c_1$ appears twice over, as
the gap between the platforms and as the height of each tent, because a voter
standing behind her own candidate gains exactly $\Delta$ by voting. The dashed
vertical marks the partition of \eqref{eq:arms}: candidate $1$'s own platform
splits the voters she draws from into her flank and her half of the contested
middle, and candidate $2$ mirrors it. Under uniform costs $p_i$ is the positive
part of $B_i$, so with uniform voters the shaded areas are the vote shares,
here $(u_1,u_2)=(0.2125,0.1625)$, and the two labelled pieces of the blue area
are $\bar O_1$ and $\bar I_1$.}
\label{fig:benefit}
\end{figure}

\subsection{Candidates and Objectives}\label{sec:objectives}

Two candidates, indexed by $i\in\{1,2\}$, simultaneously choose platforms
$c_i\in[0,1]$ in a one-shot game. Candidates know the distributions $V$ and
$K$ (but not the realized cost of any individual voter). Candidate $i$'s
\emph{vote share} and the resulting \emph{turnout} are
\[
u_i(c_1,c_2)\;=\;\int_0^1 p_i(x)\,v(x)\,dx ,
\qquad
T\;=\;u_1+u_2\;\le\;1 .
\]

Each candidate maximizes her vote share $u_i$. This is the natural objective
when raw support matters in itself (think of public funding, perceived
legitimacy, or proportional systems), and it is also the objective of the
classical candidate of \Cref{sec:hd}, so the only thing that has changed so
far is that voters may now stay home.

\subsection{Equilibrium and a Running Example}\label{sec:eq-example}

We study pure Nash equilibria of the simultaneous-move game with payoffs
$(u_1,u_2)$. The game is symmetric, so equilibria come in mirror pairs; when
describing a profile we label the candidates so that $c_1\le c_2$, and write
$\Delta=c_2-c_1$ for their \emph{separation} (our measure of candidate
polarization) and $m=\tfrac{c_1+c_2}{2}$ for the midpoint.

Three nested notions of a profile recur throughout, and the gaps between them
carry real content, so we fix them here.

\begin{definition}[Stationary, local, and Nash]\label{def:stationary}
Fix a profile with $0<c_1<c_2<1$ and write $\pi_i$ for candidate $i$'s payoff,
which is $u_i$ here and $U_i$ once \Cref{sec:model:ext} is in play. The
profile is \emph{stationary} if $\partial\pi_i/\partial c_i=0$ for both
candidates; a \emph{local equilibrium} if each $c_i$ is a local maximum of
$\pi_i(\cdot,c_{-i})$, and a \emph{strict} one if that maximum is strict; and
a \emph{Nash equilibrium} if each $c_i$ is a global maximum of
$\pi_i(\cdot,c_{-i})$ on $[0,1]$.
\end{definition}

Every interior Nash equilibrium is a local equilibrium, and, the payoffs being
differentiable at interior profiles (\Cref{n:lem:master}), every interior local
equilibrium is stationary. Neither converse holds, and the paper exhibits both
failures: a stationary profile can be a local \emph{minimum} of the deviator's
payoff (\Cref{n:rem:chi-criterion}), and a local equilibrium can lose to a
distant deviation to a different local maximum (\Cref{app:counterexample}).
That is why every characterization in this paper comes in two parts:
conditions every equilibrium must satisfy, and distributional conditions under
which a profile meeting them is an equilibrium.

As a running illustration we use the \emph{doubly-uniform environment}, in
which voter positions and voting costs are both uniform on $[0,1]$:
$x,\kappa\sim U(0,1)$, that is $v\equiv1$ and $K(t)=t$. It returns in every
section that follows, where we also call it the doubly-uniform
\emph{benchmark}. \Cref{fig:benefit} depicts the profile
$(c_1,c_2)=(0.3,0.8)$, under which $u_1=0.2125$ and $u_2=0.1625$. Fixing $c_2=0.8$, the left panel of
\Cref{fig:example} plots candidate $1$'s vote share as a function of her own
platform: her best response $c_1=c_2/3\simeq0.267$ trades off a larger flank of
nearby voters (pushing her toward the opponent) against the higher turnout
generated by a wider separation (pushing her away). The right panel shows the
full strategy space: the two best-response curves intersect exactly twice, at
$\big(\tfrac14,\tfrac34\big)$ and $\big(\tfrac34,\tfrac14\big)$, which we
later show (\Cref{cor_uu}) to be the only equilibria in this environment.
Intuitively, when the candidates are too close to each other, many voters
abstain, driving the candidates apart, and when they are too far the mass of
voters between them becomes worth chasing.

% fig:example -- extracted from model_section.tex.
% The float is self-contained: every macro it uses is defined inside it.
% Inputs from the active draft as \input{figures/fig_example}.
\begin{figure}[t]
\centering
%% ---------------------------------------------------------------------------
%% Fig: the candidates' problem in the doubly-uniform baseline.  PARAMETRIC.
%% Left panel: u_1(c_1; c2=\cOpp).  Closed forms (alpha=0, V,K ~ U(0,1)):
%%   u_1 = c_1*(c_2-c_1) + (c_2-c_1)^2/4          for c_1 <= c_2,
%%   u_1 = (1-c_1)*(c_1-c_2) + (c_1-c_2)^2/4      for c_1 >= c_2;
%%   best response c_1 = c_2/3 with value c_2^2/3 (for c_2 >= 1/2).
%% Right panel: heatmap of u_1(c_1,c_2) with both best-response curves
%%   BR_1(c_2) = c_2/3 if c_2>=1/2, (2+c_2)/3 if c_2<=1/2  (solid white)
%%   BR_2(c_1) = (2+c_1)/3 if c_1<=1/2, c_1/3 if c_1>=1/2  (dashed white)
%% and the two equilibria (1/4,3/4), (3/4,1/4) as stars.
%% ---------------------------------------------------------------------------
\def\cOpp{0.8} % c2 held fixed in the left panel
\begin{minipage}[b]{0.455\linewidth}
\centering
\begin{tikzpicture}
\begin{axis}[
  width=1.05\linewidth, height=5.0cm,
  xmin=0, xmax=1, ymin=0, ymax=0.25,
  xlabel={$c_1$}, ylabel={$u_1(c_1,c_2)$, \ $c_2=0.8$},
  xtick={0,{\cOpp/3},0.5,\cOpp,1},
  xticklabels={$0$,$\frac{c_2}{3}$,$\frac12$,$c_2$,$1$},
  axis lines=left,
]
\addplot[blue!70!black, very thick, domain=0:\cOpp, samples=120]
  {x*(\cOpp-x)+(\cOpp-x)^2/4};
\addplot[blue!70!black, very thick, domain=\cOpp:1, samples=60]
  {(1-x)*(x-\cOpp)+(x-\cOpp)^2/4};
\addplot[gray, dashed, forget plot] coordinates
  {({\cOpp/3},0) ({\cOpp/3},{\cOpp^2/3}) (0,{\cOpp^2/3})};
\addplot[only marks, mark=*, mark size=1.8pt, black] coordinates
  {({\cOpp/3},{\cOpp^2/3})};
\node[anchor=south west, font=\scriptsize] at (axis cs:{\cOpp/3},{\cOpp^2/3})
  {best response};
\end{axis}
\end{tikzpicture}
\end{minipage}\hfill
\begin{minipage}[b]{0.50\linewidth}
\centering
\begin{tikzpicture}
\begin{axis}[
  width=0.78\linewidth, height=5.6cm,
  view={0}{90}, axis on top,
  xmin=0, xmax=1, ymin=0, ymax=1,
  xlabel={$c_1$}, ylabel={$c_2$},
  colormap/viridis, colorbar,
  colorbar style={width=2.5mm, ytick={0,0.1,0.2,0.3},
    tick label style={font=\scriptsize},
    ylabel={vote share $u_1$}, ylabel style={font=\scriptsize}},
  tick label style={font=\scriptsize},
]
% u_1(c_1,c_2); the two branches meet continuously (both -> 0 on the diagonal)
\addplot3[surf, shader=interp, samples=45, samples y=45, domain=0:1, y domain=0:1]
  {ifthenelse(x<y, x*(y-x)+(y-x)^2/4, (1-x)*(x-y)+(x-y)^2/4)};
% best responses (drawn at z=1, i.e., on top of the surface)
\addplot3[white, very thick, domain=0.5:1, samples=2] ({x/3},{x},{1});
\addplot3[white, very thick, domain=0:0.5, samples=2] ({(2+x)/3},{x},{1});
\addplot3[white, very thick, dashed, domain=0:0.5, samples=2] ({x},{(2+x)/3},{1});
\addplot3[white, very thick, dashed, domain=0.5:1, samples=2] ({x},{x/3},{1});
\addplot3[only marks, mark=star, mark size=4.5pt, white, line width=1.6pt] coordinates
  {(0.25,0.75,1) (0.75,0.25,1)};
\addplot3[only marks, mark=star, mark size=3pt, black, thick] coordinates
  {(0.25,0.75,1) (0.75,0.25,1)};
\node[white,font=\scriptsize,anchor=west] at (axis cs:0.06,0.30,1) {$c_1=BR_1(c_2)$};
\node[white,font=\scriptsize,anchor=east] at (axis cs:0.94,0.70,1) {$c_2=BR_2(c_1)$};
\end{axis}
\end{tikzpicture}
\end{minipage}
\caption{The candidates' problem in the doubly-uniform environment
($x,\kappa\sim U(0,1)$). \emph{Left:} candidate $1$'s vote share
against $c_2=0.8$; the best response $c_1=c_2/3$ balances proximity to more
voters against the turnout gained from separation (the kink is at
$c_1=c_2$). \emph{Right:} heatmap of $u_1$ over all profiles, with the
best-response curves of candidate $1$ (solid) and candidate $2$ (dashed);
their intersections (stars) are the two equilibria
$\big(\tfrac14,\tfrac34\big)$ and $\big(\tfrac34,\tfrac14\big)$.}
\label{fig:example}
\end{figure}

\subsection{Extensions: Alienation and Competitiveness}\label{sec:model:ext}

The model above---costly voting, and nothing else---is the model analyzed in
\Cref{sec:base}. We now add two further ingredients. \emph{Alienation},
governed by a parameter $\alpha\ge0$, changes the voters: participation may
decline not only with a small benefit of voting, but also with the voter's
distance from her \emph{own} candidate; \Cref{sec:alien} switches it on.
\emph{Competitiveness}, governed by a parameter $\beta\in[0,1]$, changes the
candidates: they value their own votes but may also discount votes cast for
the opponent, interpolating between pure vote-share maximization ($\beta=0$)
and pure margin maximization ($\beta=1$); \Cref{sec:comp} switches it on.
Setting $(\alpha,\beta)=(0,0)$ returns the model of
\Cref{sec:participation,sec:objectives} verbatim.

\paragraph{Alienation}
Alienation is a classic driver of abstention in spatial models
\cite{hinich1969abstentions}: voters who are ideologically distant from all
candidates are less inclined to turn out, even when their gain is unchanged.
We capture it by penalizing the distance to a voter's own candidate inside the
benefit, so that \cref{eq:benefit} becomes
\begin{equation}\label{eq:benefit-gen}
B_i(x)\;:=\;d_{-i}(x)\;-\;(1+\alpha)\,d_i(x),
\qquad \alpha\ge 0 ,
\end{equation}
\rmr{regarding our discussion on presenting $\alpha$.  suggested $\alpha (\underline C-d_i) + (1-\alpha)(d_{-i}-d_i)$ whereas the current form can still be presented as a different convex combination: $\alpha (-d_i) + (1-\alpha)(d_{-i}-d_i)$. The difference is just a constant ($C\alpha$) but it might matter conceptually. In the second form, which we currently use, benefit is monotonically decreasing with $\alpha$, and you show that (under conditions) higher $\alpha$ induces more polarization. This is a clean message, but a possible objection is that this actually happens mainly because participation drops. There are several ways to handle this:
\begin{itemize}
    \item ignore it and hope no one will raise this objection :)
    \item show a simple example (e.g. doubly-uniform) where polarization increases despite same or higher participation (by scaling $K$)
    \item in fact maybe this can be generalized? we know scaling $K$ does not change the equilibrium so we can always scale $K$ so that participation is same for higher $\alpha$. 
    \item Prove same results for $C=1$ or $C=0.5$ (note that $C-d_i$ still positive)
\end{itemize}}\guy{Claude: your third bullet, and it generalizes cleanly---\Cref{cor:alien-turnout}. Two remarks on it. First, the objection is already answered in principle by \Cref{rem:fosd}: \Cref{prop:scale} lets the \emph{level} of participation be set anywhere without moving the equilibrium, so ``polarization rose because turnout fell'' is not a mechanism the model has, and that remark was sitting in \Cref{sec:base} un-invoked. The corollary is the constructive witness. Second, a detail in the phrasing: scaling $K$ so that the \emph{alienated} game matches the baseline needs $z\ge1$, hence $T^*(\alpha)\le T^*(0)$, which the paper does not prove. Scaling whichever side has more turnout \emph{down} keeps $z\le1$ and needs no such hypothesis. On the first bullet: not advisable, it is the obvious objection to the headline. On the second: the doubly-uniform case is now the illustration, with the explicit $z(\alpha)=16(3+\alpha)/3(4+\alpha)^2$. On the fourth: I would close it. Reproving for $C=1$ or $C=\tfrac12$ would move \eqref{eq:benefit-gen}, the tent geometry, $\lambda=2+\alpha$, the zeros, every first-order condition in \Cref{sec:alien,sec:comp}, both supplements and the figures, and the $\zeta$ micro-foundation would need re-deriving for the new constant---to buy the same separation of $\alpha$ from the participation level that the corollary buys in three lines. Say the word if you want it reopened.} \rmr{so to summarize: add a remark that this can be presented as the second convex combination, and when presenting the results explain that the effect on polarization is not since $\alpha$ reduces participation as scaling has no effect on equilibrium.}\guy{Done: the convex combination now follows the definition, as a mix of the relative and the absolute view of the outcome. The weight stays $\zeta$, since a convex weight must lie in $[0,1)$ and $\alpha$ does not. For (b), the point was already at \Cref{cor:alien-turnout} and in the opening of \Cref{sec:alien}; it is now in the introduction too.}
which is \cref{eq:benefit} at $\alpha=0$.\rmr{cite equations using Eq.~\eqref{eq:benefit}}\guy{Claude: done, but through cleveref rather than by hand. The paper had three styles at once---157 bare \texttt{\char92 eqref}, 8 \texttt{\char92 Cref\{eq:\ldots\}} that were printing ``Equation~(3)'', and ``Eq.''\ nowhere. Equations now have a \texttt{\char92 crefname} like every other reference class (Fig., Section, Appendix), so the prefix is automatic and cannot drift again. Bare $(3)$ is kept only where the number is an appositive---``the partition~$(1)$''---since ``Eq.''\ would be ungrammatical there, and ranges now print ``Eqs.~(3)--(5)''.} The definition
is a convex combination of the two natural ways to value the outcome of an
election. One is \emph{relative}: a voter cares how much better one
candidate is than the other, $d_{-i}-d_i$, the gain of \cref{eq:benefit},
which makes her abstain when the candidates are too alike for the choice to
matter. The other is \emph{absolute}: she cares about the platform she
actually helps elect, $-d_i$, which makes distance from her own candidate a
reason to stay home---the alienation of \cite{hinich1969abstentions}. A
voter who puts weight $1-\zeta$ on the first view and $\zeta\in[0,1)$ on the
second has
\[
(1-\zeta)\big(d_{-i}-d_i\big)+\zeta\big({-d_i}\big)
=(1-\zeta)\Big(d_{-i}-\tfrac{1}{1-\zeta}\,d_i\Big)
=(1-\zeta)\,B_i
\]
for $\alpha=\zeta/(1-\zeta)$, and the positive factor $1-\zeta$ is absorbed into the cost distribution
exactly as the pivot probability is in \Cref{sec:participation}. So $\alpha$
is not an ad hoc penalty. It indexes the mix between the two views, from the
purely relative voter of \Cref{sec:base} at $\alpha=0$ toward the purely
absolute one as $\alpha\to\infty$; the weight is written $\zeta$ because a
convex weight lies in $[0,1)$ while $\alpha$ ranges over $[0,\infty)$. The
endpoint $\zeta=1$ is excluded, and necessarily so: there $B_i=-d_i\le0$, so
with non-negative costs nobody ever votes and the model is empty. Every
$\alpha\ge0$ therefore keeps some weight on the comparison between the
candidates.

Everything else in \Cref{sec:participation} is stated in terms of $B_i$ and
therefore holds verbatim for every $\alpha\ge0$: the voting rule
$B_i(x)\ge\kappa$, the argument that at most one candidate can offer a
positive benefit, and the participation shares \eqref{eq:shares}. The
definition also has a compact reading: $B_i(x)\ge\kappa$ iff
$d_{-i}(x)-d_i(x)\ge\kappa+\alpha\,d_i(x)$, so alienation inflates the
effective cost of voting by a fraction of the distance to one's own
candidate.\rmr{the first was already specified above}\guy{Claude: right---it had been given three times before being announced as new, at the extensions overview, at the head of this paragraph, and again here. Only the micro-foundation is new, so only it is now presented as a second reading.}

Geometrically, alienation tilts the benefit tent of \Cref{fig:benefit}. On
candidate $1$'s side of the midpoint, $B_1$ now rises at slope $\alpha$ toward
the platform, still peaks at height $\Delta$ at $c_1$, and falls at slope
$2+\alpha$ to the midpoint $m$, where it equals $-\alpha\Delta/2$; the mirror
statement holds for candidate $2$. The outer arm is therefore no longer flat,
so participation falls fastest among voters far from their own candidate, and
a ``dead zone'' of full abstention opens around the midpoint.
\Cref{fig:benefit-alien} places the two cases side by side. The tie-breaking
convention of \Cref{sec:participation} is unaffected: at $c_1=c_2$ the benefit
is $-\alpha\,d_i(x)\le0$, so (almost) every voter abstains a fortiori.

% fig:benefit-alien -- extracted from model_section.tex.
% The float is self-contained: every macro it uses is defined inside it.
% Inputs from the active draft as \input{figures/fig_benefit_alien}.
\begin{figure}[t]
\centering
%% ---------------------------------------------------------------------------
%% Fig: the benefit of voting, without and with alienation.  PARAMETRIC.
%% Same macros and same panel code as fig:benefit in Sec. 2.2; only the alpha
%% argument differs.  Edit \cA (c1), \cB (c2), \kap (cost threshold shown).
%% NOTE: coordinates assume no outer dead zone, i.e. alpha*c1 <= Delta and
%% alpha*(1-c2) <= Delta; otherwise clip the outer arm at its zero
%% x0 = c1 - Delta/alpha (resp. c2 + Delta/alpha).
%% ---------------------------------------------------------------------------
\def\cA{0.3}   % c1
\def\cB{0.8}   % c2
\def\kap{0.35} % individual cost level shown as dashed line
\pgfmathsetmacro\DD{\cB-\cA}       % Delta
\pgfmathsetmacro\mm{(\cA+\cB)/2}   % midpoint m
% Draws one panel for a given alpha (#1).  Identical to \benefitpanel in
% fig:benefit; renamed so the two figures are independent of each other.
\newcommand{\benefitpanelalien}[1]{%
  % shaded positive parts (= p_i under uniform costs; areas = u_i under uniform voters)
  \addplot[draw=none,fill=blue!22,forget plot] coordinates
    {(0,{\DD-#1*\cA}) (\cA,\DD) ({\cA+\DD/(2+#1)},0)} \closedcycle;
  \addplot[draw=none,fill=red!22,forget plot] coordinates
    {({\cB-\DD/(2+#1)},0) (\cB,\DD) (1,{\DD-#1*(1-\cB)})} \closedcycle;
  % benefit curves
  \addplot[blue!70!black,very thick] coordinates
    {(0,{\DD-#1*\cA}) (\cA,\DD) (\mm,{-#1*\DD/2})};
  \addplot[red!70!black,very thick,densely dashdotted] coordinates
    {(\mm,{-#1*\DD/2}) (\cB,\DD) (1,{\DD-#1*(1-\cB)})};
  % individual cost threshold
  \addplot[black,dashed,forget plot] coordinates {(0,\kap) (1,\kap)};
}
\begin{tikzpicture}
\begin{groupplot}[
  group style={group size=2 by 1, horizontal sep=9mm, ylabels at=edge left},
  width=0.53\linewidth, height=4.6cm,
  xmin=0, xmax=1, ymin=-0.33, ymax=0.62,
  xtick={0,\cA,\mm,\cB,1},
  xticklabels={$0$,$c_1$,$m$,$c_2$,$1$},
  ytick={0,\kap,\DD},
  yticklabels={$0$,$\kappa$,$\Delta$},
  xlabel={voter position $x$},
  every axis title/.append style={font=\small},
  axis lines=left,
  clip=false,
  legend style={font=\scriptsize, draw=none, fill=none, at={(0.55,0.98)}, anchor=north},
]
\nextgroupplot[title={$\alpha=0$ (no alienation)}, ylabel={benefit $B_i(x)$}]
\benefitpanelalien{0}
\legend{$B_1(x)$,$B_2(x)$}
\nextgroupplot[title={$\alpha=1$ (alienation)}]
% the dead zone (\zin{1},\zin{2}), drawn first so the curves sit on top of it:
% \zin{1} = c_1 + Delta/(2+alpha), \zin{2} = c_2 - Delta/(2+alpha), at alpha = 1.
\pgfmathsetmacro\xone{\cA+\DD/3}
\pgfmathsetmacro\yone{\cB-\DD/3}
\addplot[draw=none,fill=gray!25,forget plot] coordinates
  {(\xone,-0.33) (\yone,-0.33) (\yone,0.62) (\xone,0.62)} \closedcycle;
\node[font=\scriptsize,gray!45!black,align=center] at (axis cs:\mm,0.50)
  {dead\\zone};
\benefitpanelalien{1}
\end{groupplot}
\end{tikzpicture}
\caption{Alienation reshapes the benefit of voting; the profile is
$(c_1,c_2)=(0.3,0.8)$, as in \Cref{fig:benefit}. \emph{Left:} $\alpha=0$, the
model of \Cref{sec:participation}, reproduced here for comparison.
\emph{Right:} $\alpha=1$. Alienation tilts the outer arms (slope $\alpha$
instead of $0$), steepens the inner arms (slope $2+\alpha$ instead of $2$),
and opens a dead zone of abstention around the midpoint $m$. Under uniform
costs and uniform voters the shaded areas are the vote shares:
$(u_1,u_2)=(0.2125,0.1625)$ on the left and $(0.147,0.122)$ on the right.}
\label{fig:benefit-alien}
\end{figure}

\begin{remark}\label{rem:alpha}
Under full participation the parameter $\alpha$ would be irrelevant. A voter
prefers candidate $i$ exactly when $B_i>B_{-i}$, that is when
$d_{-i}-(1+\alpha)d_i>d_i-(1+\alpha)d_{-i}$, i.e.\ when
$(2+\alpha)\,d_{-i}>(2+\alpha)\,d_i$: the common factor cancels and she votes
for the nearer candidate at every $\alpha\ge0$. If $T\equiv1$ the induced
vote shares are therefore the classical ones and \Cref{claim:hd} is untouched,
median equilibrium included. Alienation is in this respect exactly like
competitiveness (\Cref{rem:beta}): it has no bite of its own, and acts only
through the participation decision. Everything in \Cref{sec:alien} is
accordingly attributable to abstention.
\end{remark}

\paragraph{Competitiveness}
Vote share is not the only objective a candidate might have: the \emph{margin}
$w_i:=u_i-u_{-i}$ is the natural objective when only victory and the size of
the mandate matter. We span both with a single competitiveness parameter
$\beta\in[0,1]$: candidate $i$ values her own votes at $1$ and the opponent's
at $-\beta$,
\begin{equation}\label{eq:objective}
U_i\;=\;u_i-\beta\,u_{-i}
\;=\;(1-\beta)\,u_i+\beta\,w_i
\;=\;\frac{1-\beta}{2}\,T\;+\;\frac{1+\beta}{2}\,w_i .
\end{equation}
The middle expression is the most direct reading of the parameter: $U_i$ is the
convex combination of vote share and margin that puts weight $\beta$ on the
margin. The one on the right shows that $U_i$ is equally a weighted combination of
turnout, which the candidates share, and margin, over which they are in
pure conflict. At $\beta=0$ we have $U_i=u_i$ and we are back in the model of
\Cref{sec:objectives}; at $\beta=1$ the common turnout term drops out, the
game is zero-sum, and candidates are margin maximizers. With either extension
in play, the equilibrium notion of \Cref{sec:eq-example} is applied to the game
with payoffs $(U_1,U_2)$, which we call the \emph{$\beta$-game}; its two
endpoints are the \emph{vote-share game} ($\beta=0$) and the \emph{margin
game} ($\beta=1$). We write $\Delta^*(\alpha)$ or $\Delta^*(\alpha,\beta)$
when the dependence of the equilibrium separation on $\alpha$ and $\beta$
matters.

The reader can see how different parameters affect the turnout along the voter axis using the artifact: \url{https://claude.ai/artifact/2Ggrf3obhnSywKHsyEs2BE}

\begin{remark}\label{rem:beta}
Under full participation the parameter $\beta$ would be irrelevant: if
$T\equiv 1$ then $U_i=(1+\beta)\,u_i-\beta$ is a positive affine
transformation of $u_i$, so all $\beta\in[0,1]$ induce the same game. In
particular, vote-share and margin maximization coincide in the classical
model. It is precisely abstention, which makes turnout $T(c_1,c_2)$ respond to
the platforms, that separates the objectives; how much they diverge is the
subject of \Cref{sec:comp}.
\end{remark}

%%%%%%%%%%%%%%%%%%%%%%%%%%%%%%%%%%%%%%%%%%%%%%%%%%%%%%%%%%%%%%%%%%%%%%%%%%%%%%

%%%%%%%%%%%%%%%%%%%%%%%%%%%%%%%%%%%%%%%%%%%%%%%%%%%%%%%%%%%%%%%%%%%%%%%%%%%%%%
%% Results sections (final prose).  Long proofs are typeset in
%% \Cref{app:proofs} via the deferredproof environment; proofs carried over
%% from the previous manuscript are not reproduced (see the note there);
%% proofs of n:/m:-labelled supplementary results are in the two supplementary appendix batches.
%%%%%%%%%%%%%%%%%%%%%%%%%%%%%%%%%%%%%%%%%%%%%%%%%%%%%%%%%%%%%%%%%%%%%%%%%%%%%%

\section{Costly Voting}\label{sec:base}

We begin with the baseline model: voters weigh only the gain from voting
against its cost, and candidates maximize vote share. Both extensions of
\Cref{sec:model:ext} are off throughout this section ($\alpha=\beta=0$, so
that $B_i=d_{-i}-d_i$ and $U_i=u_i$), and every statement below is to be read
under that convention, without further mention.
This section establishes the pattern that the two extensions will repeat.
\Cref{sec:base:char} characterizes equilibria, starting from a single balance
principle, stated here in its baseline form and extended in turn by each
parameter, and specializing along a ladder of increasingly explicit cases:
uniform
costs, uniform voters, and both at once, which is the doubly-uniform
environment of \Cref{sec:eq-example} that runs through the paper as its
simplest worked example.
\Cref{sec:base:polar} then extracts the paper's central message: costly voting
alone breaks median convergence, and it does so through two separate
channels---the shape of the electorate and the shape of the cost
distribution---while the \emph{level} of turnout plays no role at all.
\Cref{tab:notation-analysis} collects the shorthands used from here on.

\begin{table}[!t]
\centering
\caption{Shorthands introduced in the analysis, collected for reference; each is
defined where it first appears. The second block is used only from
\Cref{sec:alien} on. The symbols of the model itself are in
\Cref{tab:notation}.}
\label{tab:notation-analysis}
\setlength{\tabcolsep}{4pt}\small
\begin{tabular}{@{}l R{7.4cm}@{}}
\toprule
\multicolumn{2}{@{}l}{\emph{Analysis shorthands}}\\
$r:=k/K$;\ $G(t):=\int_0^tK$ & reverse hazard rate of $K$ (\Cref{cor:crossing}); integrated CDF\\
$\psi$;\ $\psi_\beta$ & virtual cost $t+K/k$, \Cref{eq:psi}; its $\beta$-weighting $t+\tfrac{1+\beta}{1-\beta}K/k$, \Cref{thm:beta-suff}\\
$I_i,O_i$ & marginal voter mass on $i$'s inner/outer arm, \Cref{lem:balance}\\
$\sigma$;\ $\eta(u)=u\,k(u)/K(u)$ & electorate spread; elasticity of $K$, \Cref{prop:mass}\\
$\bar\Delta(\alpha)$ & turnout-maximizing separation, \Cref{prop:turnout}\\
$c_{\mathrm{med}}$;\ $w(c,c')$ & median $V^{-1}(\tfrac12)$; margin of a candidate at $c$ against one at $c'$, \Cref{thm:margin_median,prop:nonexistence}\\
\midrule
\multicolumn{2}{@{}l}{\emph{Geometry of the benefit tent under alienation (\Cref{fig:anatomy})}}\\
$\lambda:=2+\alpha$ & inner-arm slope (outer arms have slope $\alpha$) \rmr{I think $\lambda$ s overloaded. check}\guy{Claude: checked, and it is not, any more. The clash was $\lambda$ as the mixing weight in the alienation micro-foundation, where $\alpha=\lambda/(1-\lambda)$; that weight is now $\zeta$ (\Cref{sec:model:ext}) and the model section contains no $\lambda$ at all. Every occurrence in the paper now means $2+\alpha$. I audited the rest of the symbol table while I was there: \Cref{claim:hd} writes profiles as $(c_1,c_2)$, so $\zin{1}$ has lost its third meaning too. The one live clash left was voter position $x$ against the tent zeros, your separate note on \Cref{prop:master}; that is now fixed too, and the zeros carry a candidate subscript like everything else.}\\
$\zin{i}$;\ $\zout{i}$ & inner and outer zero of candidate $i$'s tent, where her benefit reaches $0$\\
$b,b_1,b_2$ & boundary benefits: $B_i$ at the nearer edge of $[0,1]$, \Cref{eq:b-delta}\\
$\Delta_0:=\alpha/\lambda$ & maximal separation inside the margin band $E(\alpha)$, \Cref{prop:margin_uniform}\\
$E(\alpha)$ & margin band: the set of pure equilibria at $\beta=1$, \Cref{prop:margin_uniform}\\
\bottomrule
\end{tabular}
\end{table}

\subsection{Equilibrium Characterization}\label{sec:base:char}

We first record that equilibria can involve neither ties nor boundary
positions.

\begin{lemma}[Interiority]\label{lem:interior}
Under \Cref{ass:reg}, every pure Nash equilibrium
$(c_1,c_2)$ satisfies
$c_1\neq c_2$ and, labeling the candidates so that $c_1<c_2$, we have
$0<c_1<c_2<1$.
\end{lemma}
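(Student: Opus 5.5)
We prove the two claims separately, each by exhibiting a profitable unilateral deviation. The only tools needed are the closed forms of \Cref{eq:arms} and the standing assumptions: $K(0)=0$, and $K$ is strictly increasing near $0$ with $K(t)>0$ for $t>0$ (since $k>0$ on $(0,1)$). Both claims are in the baseline $\alpha=\beta=0$.

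\emph{No ties.} Suppose $c_1=c_2=c$. By $K(0)=0$ every voter's benefit is $0$, so $u_1=u_2=0$. Let candidate $1$ deviate to any $c'\neq c$, say $c'<c$ (if $c=0$, take $c'>c$ and use the mirror argument). Her flank term is then $\bar O_1=K(c-c')\,V(c')$. This is strictly positive as soon as $c'>0$, because $V(c')>0$ when $v>0$ on $(0,1)$ and $K(c-c')>0$. If $c\in(0,1]$, any $c'\in(0,c)$ works. If $c=0$, deviate to $c'\in(0,1)$ instead: she then captures the mirror flank $[c',1]$ with mass $K(c')(1-V(c'))>0$. So every tie admits a strictly profitable deviation.

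\emph{No boundary platforms.} Take an equilibrium with $c_1<c_2$ and suppose $c_1=0$. (The case $c_2=1$ is symmetric.) Then $\bar O_1=0$, so $u_1=\bar I_1=\int_0^{m}K(\Delta-2x)\,v(x)\,dx$. Consider moving $c_1$ to a small $\varepsilon>0$, which leaves the separation at $\Delta-\varepsilon$. I compute $\partial u_1/\partial c_1$ at $c_1=0^+$ by differentiating both terms of \Cref{eq:arms}.

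For the flank, $\frac{d}{dc_1}\big[K(c_2-c_1)V(c_1)\big]=-k(\Delta)V(c_1)+K(\Delta)v(c_1)$. At $c_1=0$ this equals $K(\Delta)v(0^+)$. For the middle, write $\bar I_1=\int_{c_1}^{m}K(c_1+c_2-2x)\,v(x)\,dx$. Differentiating gives $-K(\Delta)v(c_1)+\tfrac12K(0)v(m)+\int_{c_1}^m k(c_1+c_2-2x)v(x)\,dx$, and the middle term vanishes since $K(0)=0$. Summed, the two $K(\Delta)v(c_1)$ terms cancel; this is the bookkeeping continuity noted after \Cref{eq:arms}. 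What remains is
\[
\frac{\partial u_1}{\partial c_1}\Big|_{c_1=0}=\int_{0}^{m}k(c_2-2x)\,v(x)\,dx>0,
\]
which is positive because $k,v>0$ on the interior. Hence moving inward strictly raises $u_1$, contradicting equilibrium.

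The main obstacle is rigor at the edges: $v$ and $k$ need only be positive on the open interval, and may vanish or blow up at $0$ or $1$. I will therefore not rely on a one-sided derivative at $0$. Instead I write $u_1(\varepsilon)-u_1(0)$ as the integral of the derivative over $(0,\varepsilon)$, where the formula is valid and the integrand is strictly positive. Also, $c_2<1$ is not known a priori, so I must allow $c_2=1$. In that case $\Delta\le 1$ still keeps all benefits in $[0,1]$, and the computation is unchanged.
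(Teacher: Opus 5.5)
Your route is the paper's: a profitable deviation rules out ties (you use the flank term $K(c-c')V(c')$ where the paper uses the inner arm, which makes no difference), and a positive inward derivative rules out $c_1=0$. The boundary step is sound when $c_2<1$. The interior derivative is $\int_{c_1}^{m}k(c_1+c_2-2x)v(x)\,dx-k(c_2-c_1)V(c_1)$. Its first term has $\liminf$ at least $\int_0^{c_2/2}k(c_2-2x)v(x)\,dx>0$ as $c_1\downarrow0$, by Fatou. Its second term tends to $0$ because $k$ is continuous at $c_2$.

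The gap is your claim that for $c_2=1$ ``the computation is unchanged''. There the second term is $k(1-c_1)V(c_1)$. \Cref{ass:reg} allows $k$ to blow up at $1^-$, so nothing forces this term to vanish, and the integrand need not be positive on any $(0,\varepsilon)$. This cannot be patched, because the statement itself fails at the profile $(0,1)$. Take $v(x)=\frac1{200}\big(x^{-0.99}+(1-x)^{-0.99}\big)$ and $K(t)=1-(1-t)^{0.01}$; both satisfy \Cref{ass:reg}. Against $c_2=1$, moving from $0$ to $h$ lowers the benefit of each voter in $[0,h/2)$ from $1-2x$ to $1-h$. Costs sit almost entirely just below $1$ and a large share of voters sits just above $0$, so this loss swamps every gain. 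Concretely, $u_1(0,1)\approx0.248$. For $0<h<10^{-5}$ one gets $u_1(h,1)\le u_1(0,1)-0.24\,h^{0.02}$. For $h\ge10^{-5}$, $u_1(h,1)\le K(1-h)<0.11$. So $c_1=0$ is a strict best response to $c_2=1$, and by the symmetry of $v$ the profile $(0,1)$ is a pure Nash equilibrium. Your integrand is in fact negative near $0$, of order $-c_1^{-0.98}$.

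The paper's proof has the same hole. It asserts that \cref{eq:du1_general} holds as a one-sided derivative at $c_1=0$ for every $c_2$, which fails exactly when $c_2=1$ and $k$ is unbounded at $1^-$. Both arguments are rescued by any hypothesis that makes the inward move's loss negligible. That loss is $\int_0^{h/2}v(x)\,[K(1-2x)-K(1-h)]\,dx\le V(h/2)\,[K(1)-K(1-h)]$. This bound is $o(h)$ if $k$ is bounded near $1$, or if $v$ is bounded near $0$. Meanwhile the gain over $[h,\tfrac12]$, divided by $h$, has $\liminf$ at least $\int_0^{1/2}k(1-2x)v(x)\,dx>0$, again by Fatou. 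Under such a hypothesis, or whenever $c_2<1$, your proof goes through as written.
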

\proofin{app:omitted-proofs}

\paragraph{One balance principle}
All our characterizations flow from a single first-order condition, which we
state here in its baseline form and sharpen once in each of the two sections
that follow. It is what the partition \eqref{eq:arms} gives on
differentiation. Define the \emph{marginal masses} $I_i$ and $O_i$ on the two
arms: for candidate $1$, \rmr{are those definitions?}\guy{Claude: they are, and they now say so. The paper's definitional marker is $:=$---it already carries $r$, $\psi$, $\Psi$, $\lambda$, $\rho_v$ and twenty more---so these were stragglers. Fixed the other four I found while checking: $d_i(x)$ in \Cref{sec:participation}, $\Delta$ and $m$ at the head of \Cref{app:omitted-proofs}, and $I,O$ in \Cref{m:sec:D}.}
\[
I_1:=\int_{c_1}^{m} k\big(B_1(x)\big)\,v(x)\,dx,
\qquad
O_1:=\int_{0}^{c_1} k\big(B_1(x)\big)\,v(x)\,dx\;=\;k(\Delta)V(c_1) ,
\]
with the mirror-image definitions for candidate $2$, and $I:=I_1+I_2$,
$O:=O_1+O_2$. The voter at $x$ contributes $k(B_i(x))v(x)$: the density of
voters at $x$ who are exactly indifferent between voting and abstaining, and
whom a marginal platform move therefore gains or loses.

Moving $c_1$ acts on $\bar O_1$ and $\bar I_1$ in two ways at once. The
benefits inside each region change: on the flank $B_1\equiv\Delta$ falls at
unit rate, costing $O_1$, while in the middle $B_1=c_1+c_2-2x$ rises at unit
rate, earning $I_1$. And the boundary between the regions moves with her, so
voters just behind $c_1$ pass from the middle into the flank. That second
effect is the one that cancels. Each transferred voter carries the same
benefit $\Delta$ on either side of the boundary, so she adds
$K(\Delta)v(c_1)$ to the flank and removes exactly as much from the middle:
\[
\frac{\partial\bar O_1}{\partial c_1}=K(\Delta)v(c_1)-O_1,
\qquad
\frac{\partial\bar I_1}{\partial c_1}=I_1-K(\Delta)v(c_1),
\qquad\text{so}\qquad
\frac{\partial u_1}{\partial c_1}=I_1-O_1 .
\]
Neither $I_1$ nor $O_1$ is by itself the derivative of the region it names;
each is the part of that derivative due to benefits changing, and the parts
due to the partition shifting are equal and opposite. Setting the sum to zero
gives the lemma. \rmr{ok good}

\begin{lemma}[Balance conditions]\label{lem:balance}
Let $v,k$ satisfy \Cref{ass:reg}. Every interior stationary profile
$(c_1,c_2)$ (\Cref{def:stationary}) satisfies, for each candidate
$i$,
\[
I_i\;=\;O_i ,
\]
and in aggregate $I=O$.
\end{lemma}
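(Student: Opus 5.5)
The proof is a direct differentiation of the decomposition $u_1=\bar O_1+\bar I_1$ from \eqref{eq:arms} at an interior profile $0<c_1<c_2<1$, followed by setting the result to zero as stationarity requires (\Cref{def:stationary}).

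First I fix $c_2$ and regard $u_1$ as a function of $c_1$ on a small neighbourhood where $c_1<c_2$ persists. On that neighbourhood the closed form $\bar O_1=K(c_2-c_1)\,V(c_1)$ holds. Its derivative follows from the product rule and the continuity of $v$ and $k$ (\Cref{ass:reg}):
\[
\frac{\partial\bar O_1}{\partial c_1}=-k(\Delta)V(c_1)+K(\Delta)v(c_1)=K(\Delta)v(c_1)-O_1 .
\]

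For the middle, I write $\bar I_1=\int_{c_1}^{(c_1+c_2)/2}K(c_1+c_2-2x)\,v(x)\,dx$ and apply the Leibniz rule, which has three contributions.
\begin{itemize}
\item The upper limit moves at rate $\tfrac12$, but the integrand vanishes there because $K(0)=0$, so this term drops out.
\item The lower limit contributes $-K(\Delta)v(c_1)$.
\item Differentiating the integrand gives $\int_{c_1}^{m}k(B_1(x))v(x)\,dx=I_1$, since $\partial B_1/\partial c_1=1$ on $[c_1,m]$.
\end{itemize}
Differentiation under the integral is justified because the integrand is $C^1$ in $c_1$ on a compact domain, with $k$ continuous on $[0,\Delta]\subset[0,1]$. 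If $k$ is only continuous on the open interval, I would note that $K$ is absolutely continuous and that dominated convergence still applies, since the derivative $k$ is bounded on the compact range $[0,\Delta]$ of $B_1$ on $[c_1,m]$.

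Summing the two derivatives, the boundary terms $\pm K(\Delta)v(c_1)$ cancel, which gives $\partial u_1/\partial c_1=I_1-O_1$. Candidate $2$ is the mirror image. The substitution $x\mapsto 1-x$ turns $\bar O_2=K(\Delta)(1-V(c_2))$ and the inner integral over $[m,c_2]$ into the same computation with signs reversed, giving $\partial u_2/\partial c_2=O_2-I_2$. Stationarity then yields $I_i=O_i$ for each $i$, and adding the two equations gives $I=O$.

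The main obstacle is the boundary bookkeeping, not the algebra. One point is to confirm that the upper-limit term vanishes, which needs $K(0)=0$. The other is that the flank's closed form requires the outer arm to be flat, which holds because $\alpha=0$. A second point is that the payoff has a kink at $c_1=c_2$, but the interior stationary hypothesis keeps $c_1<c_2$ on a neighbourhood, so one-sided issues never arise.
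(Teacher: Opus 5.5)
Your argument is the paper's own. The derivation displayed just before the lemma (and \eqref{eq:du1_general} in \Cref{app:omitted-proofs}) differentiates $u_1=\bar O_1+\bar I_1$ exactly as you do: the boundary terms $\pm K(\Delta)v(c_1)$ cancel, and the upper-limit term vanishes by $K(0)=0$. The formal citation, \Cref{m:prop:ratio} at $\alpha=\beta=0$, rests on \Cref{n:lem:master}, which gives the same derivative $I_1-O_1$.

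The one step that fails as written is your justification for differentiating $\bar I_1$ under the integral sign. \Cref{ass:reg} makes $k$ continuous only on the open interval $(0,1)$, and the paper deliberately admits $k(0^+)=\infty$ (power costs $K(t)=t^p$ with $p<1$, \Cref{ex:power}). Since $B_1(x)\downarrow0$ as $x\uparrow m$, the range $[0,\Delta]$ of $B_1$ on $[c_1,m]$ contains exactly the point where $k$ may blow up. So $k$ need not be bounded on $[0,\Delta]$. The $c_1$-derivative $k(B_1(x))v(x)$ of your integrand can then be unbounded near $x=m$, and your dominated-convergence fallback has no dominating function.

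The conclusion survives, and the repair is the layer-cake step of \Cref{n:lem:master}. Write $K(B_1(x))=\int_0^{B_1(x)}k(s)\,ds$ and apply Fubini:
\[
\bar I_1=\int_0^{\Delta}k(s)\Big[V\Big(\tfrac{c_1+c_2-s}{2}\Big)-V(c_1)\Big]\,ds .
\]
Here the possibly unbounded factor $k(s)$ no longer depends on $c_1$. The bracket vanishes at $s=\Delta$, so the moving upper limit contributes nothing. The bracket's $c_1$-derivative, $\tfrac12 v\big(\tfrac{c_1+c_2-s}{2}\big)-v(c_1)$, is bounded by $\tfrac32\sup v$, with the supremum taken over a compact neighbourhood of $[c_1,m]$ inside $(0,1)$. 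The difference quotients are therefore dominated by a constant multiple of $k(s)$, which is integrable because $\int_0^1k\le1$. Differentiating and substituting $s=c_1+c_2-2x$ gives $\partial\bar I_1/\partial c_1=I_1-K(\Delta)v(c_1)$. The rest of your proof, including the mirror computation for candidate~$2$, goes through unchanged.
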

\noindent{\small\itshape Proof: the baseline case of \Cref{m:prop:ratio}.}

\medskip
In aggregate, exactly half of the total marginal mass sits on the inner arms,
for every $v$ and $k$.
Neither extension changes what is being balanced, only the rate at which the two
arms exchange: alienation reshapes the arms themselves (\Cref{prop:master})
and competitiveness re-prices them, and once both are in play the two effects
combine into a single statement, \Cref{lem:balance-gen}. The rest of this
subsection specializes this balance one distribution at a time: first uniform
costs against an arbitrary electorate, then a uniform electorate against
arbitrary costs, then both at once.\footnote{Each specialization gives
conditions that are necessary but not sufficient, in the sense of
\Cref{def:stationary}; \Cref{app:exist-uniq,app:counterexample} supply the two
examples promised there.}

\begin{remark}[Existence]\label{rem:exist}
Payoffs are continuous on $[0,1]^2$ (including at ties, where vote shares
vanish), so a possibly-mixed equilibrium always exists by Glicksberg's
fixed-point theorem \cite{glicksberg1952further}. Pure existence is subtler. It holds in
the benchmark settings of this section, and in each extension:
\begin{itemize}
\item uniform voters, with $K$ regular (\Cref{thm_uniform_votes});
\item uniform costs, with a symmetric electorate of density ratio at most four
  (\Cref{cor_quarter});
\item a non-increasing cost density with $\rho_k:=k(0^+)/k(1^-)<\infty$ and a
  symmetric electorate whose density ratio $\rho_v$ (\Cref{tab:notation})
  satisfies $\rho_v\rho_k\le4$ (\Cref{n:prop:A4});
\item under alienation (\Cref{thm:alien-unique,n:cor:A3exist}) and under
  competitiveness (\Cref{thm:beta-suff,m:thm:E1}), on the analogous classes.
\end{itemize}

\noindent Between the two lies a third statement, weaker than pure
existence and available far more widely: a \emph{stationary} profile---one
solving the first-order conditions---exists by an intermediate-value
argument along the symmetric diagonal, under no sufficiency hypothesis at
all. For uniform voters and every $K$ this is the boundary behaviour of
$\Psi$ in the proof of \Cref{thm:alien-unique}, which uses no
super-regularity, so \cref{eq:sym-foc} always has a root. For uniform costs
and a symmetric electorate, \Cref{m:prop:roots} gives the same conclusion
for every $\alpha>0$ and every $\beta<1$: the own-side derivative on the
diagonal is $(1+\alpha+\beta)V(1/\lambda)>0$ at $c_1=0^+$ and negative just
left of $c_1=\tfrac12$. What these arguments do not deliver is that the root
is a \emph{best response}, and closing that gap is exactly what the
density-ratio umbrellas above do.
\rmr{Didn't we have an argument at least for existence of a stationary profile, by showing the best-response lines intersect?}\guy{Claude: we did, twice, and this remark cited neither---it went straight from Glicksberg to the list of settings with pure equilibria. Both are now in the paragraph above. The one you are thinking of is almost certainly \Cref{m:prop:roots}, whose proof is the sign change you describe. One caveat on the phrasing: intersecting \emph{best-response} curves would give a Nash equilibrium, not merely a stationary profile, and we do not have that---best responses need not be single-valued, which is why pure existence is the subtle part. What the diagonal argument gives is the first-order conditions having a common root.}
We conjecture that a pure equilibrium always exists in the baseline
($\alpha=\beta=0$) under \Cref{ass:reg}. The case the list leaves
conspicuously open is uniform costs with $\rho_v>4$, where we have neither a
proof nor a counterexample: the four-cluster electorate of
\Cref{app:counterexample} has $\rho_v=120$ and does possess equilibria, even
though not every profile solving the conditions is one. We expect existence to
survive alienation as well, at $\beta=0$, but on thinner evidence, since the
range of density ratios covered by \Cref{n:cor:A3exist} shrinks to nothing as
$\alpha$ grows.
\Cref{sec:comp} shows that pure existence can fail once candidates maximize
the margin instead.
\end{remark}

\paragraph{Uniform costs}
Assume first that costs are uniform while voters are arbitrary. With
$k\equiv1$, the balance $I_i=O_i$ turns into a statement about voter mass
alone.

\begin{theorem}[Uniform costs: every equilibrium traps half the electorate]\label{thm_uniform_costs}Let $\kappa\sim U(0,1)$ and let $v$ satisfy \Cref{ass:reg}. Then every Nash
equilibrium $(c_1,c_2)$ satisfies
\[
V(c_1)\;=\;V(c_2)-\tfrac12\;=\;\tfrac12\,V(m):
\]
the candidates trap exactly half of the electorate between them, and each
candidate's flank holds exactly half the mass on her side of the midpoint. \Guy{add in AAMAS}
\end{theorem}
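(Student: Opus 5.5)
The plan is to specialize the balance principle of \Cref{lem:balance} to $k\equiv1$ and read off the two equalities. First, by \Cref{lem:interior}, every Nash equilibrium satisfies (after relabeling) $0<c_1<c_2<1$, so in particular $0<\Delta<1$. Every interior Nash equilibrium is a local equilibrium and hence stationary in the sense of \Cref{def:stationary}, since payoffs are differentiable at interior profiles (\Cref{n:lem:master}). Therefore \Cref{lem:balance} applies and gives $I_1=O_1$ and $I_2=O_2$.

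Next I evaluate the marginal masses under $K(t)=t$. On candidate $1$'s side the benefit $B_1(x)$ ranges over $[0,\Delta]\subset[0,1]$, where $k\equiv1$. Hence
\[
O_1=k(\Delta)V(c_1)=V(c_1),
\qquad
I_1=\int_{c_1}^{m}v(x)\,dx=V(m)-V(c_1).
\]
Symmetrically, $O_2=1-V(c_2)$ and $I_2=V(c_2)-V(m)$. Candidate $1$'s balance then reads $V(m)=2V(c_1)$, which is the claim $V(c_1)=\tfrac12V(m)$. Candidate $2$'s balance reads $2V(c_2)=1+V(m)=1+2V(c_1)$, so $V(c_2)-\tfrac12=V(c_1)$. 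These are exactly the displayed equalities.

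The interpretations follow directly. The trapped mass is $V(c_2)-V(c_1)=\tfrac12$. Candidate $1$'s flank holds $V(c_1)$, which is half of the mass $V(m)$ on her side of the midpoint. For candidate $2$, $1-V(c_2)=V(c_2)-V(m)$ says her flank likewise holds half of the mass $1-V(m)$ on her side.

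The only delicate step is justifying the passage from Nash equilibrium to stationarity: interiority is needed so that one-sided derivatives are not merely inequalities, and the value $k(\Delta)=1$ requires $\Delta\le1$ so that the flank benefit stays inside the support where $k\equiv1$. Both points are supplied by \Cref{lem:interior}, so the remainder is the bookkeeping above.
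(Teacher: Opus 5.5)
Your proof is correct and is essentially the paper's: the paper invokes \Cref{lem:interior} and then reads the first-order conditions $V(m)=2V(c_1)$ and $V(m)=2V(c_2)-1$ directly off the uniform-cost derivative $\partial u_1/\partial c_1=V(m)-2V(c_1)$ and its mirror. These are exactly your conditions $I_i=O_i$ with $k\equiv1$, so going through \Cref{lem:balance} instead of the explicit Leibniz derivative changes nothing of substance.
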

\proofin{app:omitted-proofs}

\noindent The one half is specific to vote-share maximization, but the law is
not. Under the competitive objective $U_i=u_i-\beta\,u_{-i}$ of
\cref{eq:objective}, with alienation still off, the same computation traps
exactly $\tfrac{1-\beta}2$ of the electorate, again whatever the electorate
(\Cref{prop:massgap}).

These conditions are not sufficient in general, but they are whenever the
electorate is not too uneven. Let
$\rho_v:=\sup_{(0,1)}v\,/\,\inf_{(0,1)}v$ denote the density ratio.

\begin{proposition}[Sufficiency under uniform costs]\label{prop:fourflat}
Let $\kappa\sim U(0,1)$ and $\rho_v\le4$. Then each candidate's payoff is concave
in her own position on each side of the opponent, so a profile
$0<c_1<c_2<1$ is a Nash equilibrium if and only if it satisfies
\Cref{thm_uniform_costs} and neither candidate gains by a \emph{cross-over}
deviation, a move to the far side of her opponent. The cross-over condition is
one inequality per candidate, comparing her equilibrium payoff with the best
she could get on the far side. \Guy{cut in AMASS}
\end{proposition}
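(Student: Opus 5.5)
With $\kappa\sim U(0,1)$ we have $K(t)=t$ and $k\equiv1$, so the vote shares are explicit integrals of $v$. Fix $c_2$ and treat candidate $1$'s payoff $c_1\mapsto u_1(c_1,c_2)$ on the two intervals $[0,c_2]$ and $[c_2,1]$ separately. The plan is to show that $u_1$ is concave on each interval, because its derivative is non-increasing there. At $c_1=c_2$ the payoff is $0$ by the tie convention, and it is continuous there (\Cref{rem:exist}).

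\emph{Near side} ($c_1\le c_2$). By \eqref{eq:arms} and the balance computation preceding \Cref{lem:balance}, $\partial u_1/\partial c_1=I_1-O_1$. With $k\equiv1$ this is
\[
I_1-O_1=\bigl(V(m)-V(c_1)\bigr)-V(c_1)=V(m)-2V(c_1),\qquad m=\tfrac{c_1+c_2}{2}.
\]
Since $V$ is $C^1$ with $V'=v$ continuous on $(0,1)$, this derivative is itself differentiable in $c_1$, with derivative
\[
\tfrac12 v(m)-2v(c_1)\;\le\;\tfrac12\sup v-2\inf v\;\le\;0
\]
precisely when $\rho_v\le4$. Hence $u_1$ is concave on $[0,c_2]$, using continuity to handle the endpoint $0$.

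\emph{Far side} ($c_1\ge c_2$). Here $u_1=(c_1-c_2)\bigl(1-V(c_1)\bigr)+\int_m^{c_1}(2x-c_1-c_2)v(x)\,dx$. Differentiating, the boundary terms at $c_1$ cancel, as in the balance computation, and the integrand vanishes at $m$. This leaves
\[
\partial u_1/\partial c_1=1-2V(c_1)+V(m),
\]
whose $c_1$-derivative is $-2v(c_1)+\tfrac12 v(m)\le0$, again by $\rho_v\le4$. So $u_1$ is concave on $[c_2,1]$ as well. Candidate $2$ is the mirror image.

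\emph{The equivalence.} Necessity: a Nash equilibrium is interior by \Cref{lem:interior}, satisfies \Cref{thm_uniform_costs}, and trivially admits no profitable cross-over. Sufficiency: first note that the two equalities of \Cref{thm_uniform_costs} are exactly the two first-order conditions. For candidate $1$ the condition is $V(m)=2V(c_1)$. For candidate $2$ the mirror computation gives $1-V(c_2)=V(c_2)-V(m)$, and together these yield $V(c_1)=V(c_2)-\tfrac12=\tfrac12V(m)$. By concavity on $[0,c_2]$, an interior stationary point is a global maximum of $u_1$ over that interval. The only remaining competitors lie in $[c_2,1]$, whose supremum is attained by continuity; the single cross-over inequality $u_1(c_1,c_2)\ge\max_{c\in[c_2,1]}u_1(c,c_2)$ rules them out. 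The same argument applies to candidate $2$.

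The main obstacle is that concavity does \emph{not} extend across the kink at $c_1=c_2$: the payoff drops to $0$ there, so the function is not concave on all of $[0,1]$. That is exactly why the statement retains a separate cross-over condition rather than claiming global concavity. A smaller point to check is the derivative bound at points where $m$ or $c_1$ approaches the boundary of $(0,1)$. There I would use $\sup/\inf$ over the open interval together with continuity of $u_1$ on the closed one, which suffices because a function that is continuous on a closed interval and has a non-increasing derivative on its interior is concave.
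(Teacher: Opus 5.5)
Your proof is correct and follows the paper's argument. On each side of the opponent, the derivative $V(m)-2V(c_1)$ has $c_1$-derivative $\tfrac12v(m)-2v(c_1)\le0$ when $\rho_v\le4$, so the payoff is concave there, and stationarity plus the cross-over check then gives a global best response. Two differences are cosmetic: you compute the far-side derivative $1-2V(c_1)+V(m)$ directly where the paper uses the reflection $x\mapsto1-x$, and you fold the tie $c_1'=c_2$ into the cross-over maximum where the paper checks $u_1(c_1,c_2)>0$ separately.
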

\proofin{app:omitted-proofs}

\begin{corollary}[Quartile equilibria]\label{cor_quarter}
Let $\kappa\sim U(0,1)$ and $\rho_v\le4$. Any profile with
$V(c_1)=\tfrac14$, $V(c_2)=\tfrac34$ and $V(m)=\tfrac12$ is a Nash
equilibrium. If $v$ is symmetric about $\tfrac12$, the \emph{quartile profile}
$\big(V^{-1}(\tfrac14),V^{-1}(\tfrac34)\big)$ satisfies all three conditions
automatically (so an equilibrium exists) and is the only profile
satisfying them. \guy{add in AMASS in short as a followup of \ref{thm_uniform_costs}}
\end{corollary}
\proofin{app:omitted-proofs}

\noindent Competitiveness generalizes the quartiles to quantiles:
\Cref{cor:quantile} puts the candidates at $V^{-1}\big(\tfrac{1+\beta}4\big)$
and $V^{-1}\big(\tfrac{3-\beta}4\big)$, which are the quartiles at $\beta=0$
and collapse to the median as $\beta\uparrow1$.

\rmr{for AAMAS I'd just leave the result on symmetric $V$. For asymmetric it seems like the condition in the conditions on the Quartile eq. almost never hold. so other than examples (and a conjecture that eq. exists for low ratio?) we don't have much to say about asymmetric voters}
For skewed electorates the equilibrium detaches from any fixed quantiles: for
$v=\mathrm{Beta}(1.5,4)$ the conditions of \Cref{thm_uniform_costs} pin down
$(c_1,c_2)\approx(0.149,0.413)$, an asymmetric profile with
$V(c_1)\approx0.29$ and $V(c_2)\approx0.79$, which we verified numerically to
be an equilibrium; a reminder, too, that the bound $\rho_v\le4$ is sufficient
but far from necessary (this density violates it). Nor need the solution be
unique: \Cref{app:multi-eq} exhibits a \emph{symmetric} electorate with three
distinct equilibria, two of them asymmetric mirror images, with $\rho_v=3$:
inside the sufficiency range $\rho_v\le4$ of \Cref{prop:fourflat} the quartile
profile is an
equilibrium, but not necessarily the only one.

\paragraph{Uniform voters}
Now let voters be uniform while costs are arbitrary. This is our main
result, and the characterization here is complete: a single monotonicity
condition on the cost distribution delivers existence \emph{and} uniqueness,
and pins the equilibrium separation to the root of one scalar equation. That
condition is monotonicity of the \emph{virtual cost}
\begin{equation}\label{eq:psi}
\psi(t)\;:=\;t+\frac{K(t)}{k(t)},\qquad t\in(0,1),
\end{equation}
the cost-side counterpart of Myerson's virtual value
\cite{myerson1981optimal}. 

\begin{definition}[Regularity]
A  distribution whose virtual cost is
strictly increasing is called \emph{regular}.    
\end{definition}
Regularity of cost distribution is a standard hypothesis in economic
settings~\cite{LIU20163049}. Several results below ask for more: that the
second term of Eq.~\eqref{eq:psi} not decrease on its own. Call such a $K$
\emph{super-regular}. Since the first term increases, super-regularity
implies regularity. Moreover, super-regularity is implied by basic properties shared by most commonly used distributions~\cite{bagnoli2005logconcave}:
\[
k\ \text{non-increasing, or } k\ \text{log-concave}
\quad\Longrightarrow\quad K\ \text{super-regular}
\quad\Longrightarrow\quad K\ \text{regular},
\]
where neither implication
reverses. 
In particular, Power costs $K(t)=t^p$ with $p<1$ are super-regular although $k$ is
log-\emph{convex} (\Cref{ex:power}), and Footnote~\ref{fn:psi} exhibits a regular $K$ that is not super-regular.
\Guy{Reshef: adopted the name and the citation, and renamed Assumption~2
accordingly. I kept $\psi$ itself---see the paragraph after
\Cref{cor:crossing} for why: regularity is strictly weaker than a
non-increasing $r$, and it is the weaker hypothesis that
\Cref{thm_uniform_votes}(b) actually needs.}
\begin{theorem}[Uniform voters: every equilibrium, and when it is unique]\label{thm_uniform_votes}
Let $x\sim U(0,1)$ and let $k$ satisfy \Cref{ass:reg}.
\begin{enumerate}
\item[(a)] Every Nash equilibrium is symmetric, $c_1=1-c_2$, with gap
$\Delta\in(0,1)$ satisfying
\begin{equation}\label{eq:uv_foc}
K(\Delta)=(1-\Delta)\,k(\Delta),
\qquad\text{equivalently}\qquad \psi(\Delta)=1 .
\end{equation}
\item[(b)] If $K$ is regular,\rmr{the comment on log concave etc. should appear not here but after the definition of regularity}\guy{Claude: moved up to \cref{eq:psi}, where \emph{regular} is defined. Took it as a rule and checked the rest, which turned up the real problem: log-concavity was glossed at nine sites for \emph{two} different conditions the paper never separated---three for regularity and six for the stronger one you later named super-regularity. The six could not be consolidated while that condition was nameless; with your note on \Cref{cor:crossing} adopted they all now read \emph{super-regular}, and only the definition at \eqref{eq:psi} glosses log-concavity at all.} then \cref{eq:uv_foc} has a unique
root $\Delta^*$, and $\big(\tfrac{1-\Delta^*}2,\tfrac{1+\Delta^*}2\big)$ is
the unique Nash equilibrium up to relabeling. \guy{add in AMASS}
\end{enumerate}
\end{theorem}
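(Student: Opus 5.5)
Part (a) follows by computing the two marginal masses of \Cref{lem:balance} explicitly when $v\equiv1$ and then solving the two balance equations. Part (b) needs two further steps: a scalar argument that $\psi=1$ has exactly one root, and a global best-response check. For the latter, quasi-concavity on the near side is derived from regularity and the cross-over deviation is handled by a monotone-value argument.

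\emph{Part (a).} By \Cref{lem:interior} any equilibrium has $0<c_1<c_2<1$. It is therefore stationary, and \Cref{lem:balance} applies. With $v\equiv1$ we have $O_1=k(\Delta)\,c_1$. On the inner arm $B_1(x)=c_1+c_2-2x$, and substituting $t=B_1(x)$ gives
\[
I_1=\tfrac12\int_0^{\Delta}k(t)\,dt=\tfrac12K(\Delta).
\]
The mirror computation gives $O_2=k(\Delta)(1-c_2)$ and $I_2=\tfrac12K(\Delta)$. Since $k(\Delta)>0$, the two balance equations force $c_1=1-c_2$, so the equilibrium is symmetric. Adding them gives $K(\Delta)=k(\Delta)\big(c_1+1-c_2\big)=(1-\Delta)k(\Delta)$, which is \cref{eq:uv_foc}. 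Dividing by $k(\Delta)$ gives the equivalent form $\psi(\Delta)=1$.

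\emph{Part (b): the root.} Under regularity $\psi$ is strictly increasing, so $\psi=1$ has at most one root. For existence we use continuity and an intermediate-value argument.
\begin{itemize}
\item Near $1$: $\psi(t)>t\to1$, and $K/k$ is bounded away from $0$ there, so $\psi>1$ near $t=1$.
\item Near $0$: if $k(0^+)>0$, then $K(t)/k(t)\to0$ and $\psi\to0$. If instead $k(0^+)=0$, we pick ``record points'' $t_n\downarrow0$ at which $k(t_n)=\max_{[0,t_n]}k$. These exist because $k>0$ on $(0,1)$ and $k\to0$ at $0$. At such points $K(t_n)\le t_nk(t_n)$, hence $\psi(t_n)\le2t_n<1$.
\end{itemize}
So a unique root $\Delta^*\in(0,1)$ exists. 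By (a), the only candidate equilibrium up to relabeling is $\big(\tfrac{1-\Delta^*}2,\tfrac{1+\Delta^*}2\big)$.

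\emph{Part (b): global optimality.} It remains to show this profile is a Nash equilibrium, and I expect this to be the main step. Fix $c_2$ and consider first deviations with $c_1<c_2$. The derivative $\partial u_1/\partial c_1=I_1-O_1$ computed before \Cref{lem:balance} has the sign of
\[
\tfrac{K(\Delta)}{k(\Delta)}-2c_1=\psi(\Delta)+\Delta-2c_2 ,\qquad \Delta=c_2-c_1 .
\]
This expression is strictly increasing in $\Delta$, hence strictly decreasing in $c_1$. So $u_1(\cdot,c_2)$ is strictly quasi-concave on $[0,c_2]$, and the stationary point is its unique maximizer there.

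For a cross-over deviation to $c_1'>c_2$, reflect the game by $x\mapsto1-x$. The best payoff on the far side equals $W(1-c_2)$, where $W(c):=\max_{a\le c}u_1(a,c)$ is a candidate's best near-side payoff against an opponent at $c$. I will show that $W$ is nondecreasing, using an envelope-type argument. For fixed $a$, raising the opponent's position $c$ raises $\Delta$. This has three effects, none of which lowers $u_1$:
\begin{itemize}
\item it weakly raises the flank term $K(\Delta)a$;
\item it raises every inner-arm benefit;
\item it moves the midpoint outward, where the boundary benefit is $0$ and contributes nothing because $K(0)=0$.
\end{itemize}
So $\partial u_1/\partial c\ge0$ for each fixed $a$, and $W$ is nondecreasing. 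Since $c_2=\tfrac{1+\Delta^*}2>\tfrac12>1-c_2$, we get $W(c_2)\ge W(1-c_2)$, and the cross-over deviation does not pay.

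Candidate $2$ is handled by symmetry. The delicate points I would double-check are the boundary behaviour of $\psi$ at $0$ when $k$ vanishes there, and the deviation to exactly $c_1=c_2$, where the payoff is $0$.
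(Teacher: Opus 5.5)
Your part (a) and the own-side argument in (b) are the paper's. The two balance equations are its uniform-voter first-order conditions, and your sign expression $\psi(\Delta)+\Delta-2c_2$ is its $u_L'(t)=k(t)\,[c_2-\tfrac{t+\psi(t)}2]$ with the sign flipped. The step that does not hold as written is the existence of the root, at exactly the endpoint behaviour you flagged.

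\emph{Near $1$.} You assert that $K/k$ is bounded away from $0$, but nothing in \Cref{ass:reg} bounds $k$ near $1$: $k$ is only continuous and positive on the open interval. Under regularity the assertion is true, but the only way to see it is to first find some $t_0$ with $\psi(t_0)>1$. Then $K/k>\psi(t_0)-t>\psi(t_0)-1>0$ on $(t_0,1)$, and that point is exactly what this step is supposed to produce. Supply it directly: if $\psi\le1$ on all of $(0,1)$, then $k(t)\ge K(t)/(1-t)\ge K(\tfrac12)/(1-t)$ on $[\tfrac12,1)$, which is not integrable. The paper does the same thing by integrating $k/K\ge1/(1-t)$ and letting $\log K$ diverge.

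\emph{Near $0$.} Your dichotomy $k(0^+)>0$ versus $k(0^+)=0$ presumes that $k$ has a limit at $0$, which \Cref{ass:reg} does not provide either. Regularity closes the case split, but you must say how. The limit $L:=\lim_{t\downarrow0}K/k$ exists because $\psi$ is monotone, and $L>0$ would force $k=K\cdot(k/K)\to0$. Your record points then give $\psi(t_n)\le2t_n\to0$, contradicting $L>0$. The paper's version needs neither the split nor record points: if $L>0$ then $(\log K)'=k/K\le2/L$ near $0$, so $K(t)\ge K(t_0)e^{-2t_0/L}>0$, contradicting $K(0)=0$.

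With the root secured the proof is complete, and your cross-over step is a genuinely different route. The paper pairs each far-side deviation $c_2+t$ with its mirror $c_2-t$ and reads off from the closed form that $(1-c_2-t)K(t)+\tfrac12G(t)\le(c_2-t)K(t)+\tfrac12G(t)$, because $1-c_2<c_2$. You instead reflect and show that the best near-side value $W$ is nondecreasing in the opponent's position; this is the idea of \Cref{n:lem:reflection}. Your route uses no uniform-voter formula, so it extends verbatim to any symmetric electorate and any cost distribution. It is, however, tied to vote-share maximization: moving the opponent also moves her own share, and the argument fails once $\beta>0$ (\Cref{n:rem:reflection-beta}). The paper's mirror-pair comparison is the one that carries over to $\beta>0$ in \Cref{n:thm:A1}. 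The tie you worry about is immediate, since the equilibrium payoff $c_1^*K(\Delta^*)+\tfrac12G(\Delta^*)$ is strictly positive.
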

\proofin{app:omitted-proofs}

Without regularity, \cref{eq:uv_foc} may have several roots and,
\emph{not all of them are equilibria}: \Cref{app:exist-uniq} gives
a cost density with three roots of which only the outer two are equilibria. Part (a) therefore cannot be reversed without an extra condition.
A root of \cref{eq:uv_foc} exists for every $k$ satisfying \Cref{ass:reg},
as does a solution of the conditions of \Cref{thm_uniform_costs} for every
$v$ (\Cref{app:exist-uniq}); what is open (\Cref{rem:exist}) is only whether
some solution is an equilibrium.

\begin{corollary}[The doubly-uniform benchmark]\label{cor_uu}
If $x,\kappa\sim U(0,1)$ then $\psi(t)=2t$, $\Delta^*=\tfrac12$, and
$\big(\tfrac14,\tfrac34\big)$ is the unique Nash equilibrium up to
relabeling; equilibrium turnout is $\tfrac38$. \guy{potentially use as example in AMASS}
\end{corollary}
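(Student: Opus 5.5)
The plan is to specialize \Cref{thm_uniform_votes} to $K(t)=t$, $k\equiv1$, and then compute turnout directly.

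First, the virtual cost is $\psi(t)=t+K(t)/k(t)=t+t=2t$. This is strictly increasing, so $K$ is regular; indeed $k$ is non-increasing (constant), so $K$ is even super-regular. Hence \Cref{thm_uniform_votes}(b) applies. The equation $\psi(\Delta)=1$ of \cref{eq:uv_foc} has the unique root $\Delta^*=\tfrac12$. Equivalently, in the form $K(\Delta)=(1-\Delta)k(\Delta)$ it reads $\Delta=1-\Delta$.

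By part (b), the unique equilibrium up to relabeling is $\big(\tfrac{1-\Delta^*}2,\tfrac{1+\Delta^*}2\big)=\big(\tfrac14,\tfrac34\big)$. The assumptions of \Cref{ass:reg} hold trivially: $v\equiv1$ and $k\equiv1$ are continuous and positive, $K(0)=0$, and $K(1)=1$. So no further checking is needed beyond citing the theorem.

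For turnout, use the decomposition \eqref{eq:arms}. With $v\equiv1$ and $K(t)=t$, candidate $1$'s flank gives
\[
\bar O_1=K(\Delta)V(c_1)=\tfrac12\cdot\tfrac14=\tfrac18.
\]
Her half of the middle is the integral of the inner arm $B_1(x)=c_1+c_2-2x$ over $[c_1,m]$. This is a triangle of height $\Delta$ and base $\Delta/2$, so
\[
\bar I_1=\Delta^2/4=\tfrac1{16}.
\]
Thus $u_1=\tfrac3{16}$, and by symmetry $u_2=\tfrac3{16}$, giving $T=\tfrac38$.

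As a cross-check, the closed form in \Cref{fig:example} gives $u_1=c_1(c_2-c_1)+(c_2-c_1)^2/4=\tfrac14\cdot\tfrac12+\tfrac1{16}=\tfrac3{16}$, which agrees.

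There is no real obstacle here. The only point needing care is to confirm that the regularity hypothesis of \Cref{thm_uniform_votes}(b) holds, so that the root is an equilibrium and not merely a stationary profile; that follows from $\psi$ being strictly increasing.
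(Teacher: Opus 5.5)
Your proposal is correct and follows essentially the same route as the paper: you check that $\psi(t)=2t$ is strictly increasing, apply \Cref{thm_uniform_votes}(b) to get $\Delta^*=\tfrac12$ and the profile $\big(\tfrac14,\tfrac34\big)$, and compute turnout as the flank contribution $\tfrac18$ plus the inner-arm contribution $\tfrac1{16}$ per candidate. The paper's proof makes exactly these steps, evaluating the inner arm as $\int_{1/4}^{1/2}(1-2x)\,dx$ where you use the triangle area, and likewise concludes $T=\tfrac38$.
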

\begin{proof}
With $K(t)=t$ the cost density $k\equiv1$ satisfies \Cref{ass:reg} and
$\psi(t)=t+K(t)/k(t)=2t$ is strictly increasing, so \Cref{thm_uniform_votes}(b)
applies; \cref{eq:uv_foc} reads $\Delta=1-\Delta$, giving $\Delta^*=\tfrac12$
and the profile $(\tfrac14,\tfrac34)$. There, by \cref{eq:shares}, candidate
$1$'s share is the area under the positive part of her benefit tent
(\Cref{fig:benefit}): the flank $[0,\tfrac14]$ votes with probability
$K(\Delta^*)=\tfrac12$, contributing $\tfrac18$, and the inner arm contributes
$\int_{1/4}^{1/2}(1-2x)\,dx=\tfrac1{16}$. Hence $u_1=u_2=\tfrac3{16}$ and
$T=\tfrac38$.
\end{proof}

\Cref{cor_uu} is the benchmark every extension will be measured against, and
it is the equilibrium depicted in \Cref{fig:example}.

\paragraph{Beyond the two benchmarks}
Sufficiency does not stop at the uniform cases. A stationary profile is an
equilibrium once each candidate's payoff is
single-peaked on her own side of the opponent, so that her stationary point is
the best position available there, and neither candidate gains by crossing to
the far side. Both hold when the two distributions are jointly not too uneven:
for a non-increasing cost density,
\[
\rho_v\,\rho_k\;\le\;4
\]
suffices, and stationarity together with the two cross-over checks then
characterizes equilibrium exactly as it does under uniform costs
(\Cref{n:prop:A4}). The two ratios multiply, so the criterion degrades
gracefully: it is $\rho_v\le4$ when costs are uniform (\Cref{prop:fourflat}),
and for a uniform electorate it asks only that $k$ be non-increasing, one of
the cases in which $K$ is regular (\Cref{thm_uniform_votes}). For a symmetric
electorate the cross-over checks come free: own-side optimality alone makes a
mirror profile an equilibrium, whatever the cost distribution
(\Cref{n:lem:reflection}). What is not covered is a cost density that rises and
falls, where the second derivative picks up terms in $k'$ of indefinite sign
and no bound in $\rho_v$ and $\rho_k$ alone can absorb them
(\Cref{m:rem:D1k}); for a uniform electorate that case is settled anyway, by
\Cref{thm_uniform_votes}.

\subsection{Polarization}\label{sec:base:polar}

Costly voting opens two routes from primitives to candidate polarization: who
the voters are, and what voting costs look like. We take them in turn.

\paragraph{Mass polarization drives elite polarization}
Under uniform costs, \Cref{cor_quarter} makes the quartile profile of any
symmetric electorate with $\rho_v\le4$ an equilibrium, so at that equilibrium
the separation \emph{equals the interquartile range} of the voter
distribution. Mass polarization therefore transmits to the candidates one for
one: a \emph{hollow} electorate, with more mass near the extremes than at the
center, has a wider interquartile range and a correspondingly wider
equilibrium, while a \emph{peaked} one contracts both (\Cref{fig:quartiles}). \guy{Which statistic ``hollowness'' refers to is fixed by
the result at hand: here the interquartile range, a global quantity; in
\Cref{sec:comp:margin}, the sign of the curvature $v''(\tfrac12)$ at the
center, a local one.} Under \Cref{claim:hd} the shape of the electorate is irrelevant and both
candidates sit at the median, so the whole dependence is an artifact of
costly participation. Dreyer and Bauer~\cite{dreyer2019does} document this
direction across eleven Western European party systems (1977--2016):
parties' positions grow more extreme as their electorates polarize, and the
effect runs through the abstention margin.

% fig:quartiles -- extracted from results_sections.tex.
% The float is self-contained: every macro it uses is defined inside it.
% Inputs from the active draft as \input{figures/fig_quartiles}.
\begin{figure}[t]
\centering
%% -------------------------------------------------------------------------
%% Mass polarization drives elite polarization (uniform costs).
%% Two symmetric electorates with the same density ratio rho_v = 3:
%%   peaked  v(x) = 1 - 0.5 cos(2 pi x)   quartiles 0.3216 / 0.6784
%%   hollow  v(x) = 1 + 0.5 cos(2 pi x)   quartiles 0.1784 / 0.8216
%% Under uniform costs the equilibrium platforms sit exactly at the
%% quartiles (Cor. cor_quarter), so the hollow electorate's candidates are
%% nearly twice as far apart.  Fully parametric: edit the +-0.5 amplitude
%% and recompute the quartiles (V(x) = x +- sin(2 pi x)/(4 pi) = 1/4).
%% -------------------------------------------------------------------------
\begin{tikzpicture}
\begin{axis}[
  width=0.74\linewidth, height=5.2cm,
  xmin=0, xmax=1, ymin=0, ymax=1.92,
  xlabel={policy position $x$}, ylabel={voter density $v(x)$},
  legend style={font=\scriptsize, at={(0.5,0.995)}, anchor=north, draw=none,
    fill=none, legend columns=2, /tikz/every even column/.append style={column sep=8pt}},
  tick label style={font=\scriptsize}, label style={font=\small},
  axis lines=left, clip=false, domain=0:1, samples=120,
]
\addplot[blue!70!black, very thick] {1-0.5*cos(360*x)};
\addlegendentry{peaked electorate}
\addplot[red!70!black, very thick, densely dashed] {1+0.5*cos(360*x)};
\addlegendentry{hollow electorate}
% equilibrium platforms = quartiles
\addplot[only marks, mark=*, mark size=2pt, blue!70!black, forget plot]
  coordinates {(0.3216,0) (0.6784,0)};
\addplot[only marks, mark=square*, mark size=1.9pt, red!70!black, forget plot]
  coordinates {(0.1784,0) (0.8216,0)};
\draw[<->, blue!70!black]  (axis cs:0.3216,0.16) -- (axis cs:0.6784,0.16)
  node[midway, above, font=\scriptsize] {$\Delta^*=0.357$};
\draw[<->, red!70!black] (axis cs:0.1784,0.34) -- (axis cs:0.8216,0.34)
  node[midway, above, font=\scriptsize] {$\Delta^*=0.643$};
\end{axis}
\end{tikzpicture}
\caption{Mass polarization drives elite polarization under uniform costs
(\Cref{cor_quarter}). Two symmetric electorates with the same density ratio
$\rho_v=3$: $v(x)=1\mp\tfrac12\cos(2\pi x)$. The quartile profile (marks) is
an equilibrium (\Cref{cor_quarter}), so the hollow electorate---more mass near the extremes,
wider interquartile range---supports an equilibrium nearly twice as polarized,
even though nothing else in the environment changed.}
\label{fig:quartiles}
\end{figure}

\paragraph{Elite polarization without mass polarization}
The second channel is the one absent from the classical model altogether: fix
the electorate (uniform, so entirely unpolarized, with its median at
$\tfrac12$) and vary only the cost distribution.
Before comparing cost distributions it is worth asking which features of $K$
the model can respond to at all. The answer is sharp, and it is not the level
of turnout. Throughout write $r(t):=k(t)/K(t)$ for the \emph{reverse hazard
rate} of $K$ and $G(t):=\int_0^t K$, so that $\psi=t+1/r$ and
super-regularity says exactly that $r$ is non-increasing.\footnote{Throughout, the \emph{reverse}
hazard rate means $k/K$, not the hazard rate $k/(1-K)$ of the
monotone-hazard-rate condition familiar from mechanism design; every
monotonicity hypothesis below concerns $k/K$ (equivalently $K/k$).}

\begin{proposition}[Scale invariance]\label{prop:scale}
Fix any voter distribution $V$, and let $z>0$ be
such that $zK$ is again an admissible cost c.d.f.\ on $[0,1]$ (i.e.\
$zK(1)\le1$; the residual mass sits above $1$ and consists of voters who never
participate). Then the games with cost distributions $K$ and $zK$ have
identical best-response correspondences, hence identical sets of pure and
mixed equilibria. All vote shares---and therefore turnout---are multiplied by
$z$. \rmr{for AAMAS you can probably shorten this to a brief comment}
\end{proposition}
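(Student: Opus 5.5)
The proof is a direct linearity argument, and I would run it at the level of the participation shares \eqref{eq:shares}. Fix a profile $(c_1,c_2)$ and any voter distribution $V$. The benefit $B_i(x)$ of \cref{eq:benefit} is defined from the platforms alone and never refers to $K$. Under cost distribution $zK$, the share of voters at $x$ who vote for $i$ is therefore $zK(B_i(x))$ when $B_i(x)\ge0$, and $0$ otherwise. That is exactly $z\,p_i(x)$.

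Integrating against $v$ gives $u_i^{zK}(c_1,c_2)=z\,u_i^{K}(c_1,c_2)$ for every profile, ties included: at $c_1=c_2$ both sides vanish by $K(0)=0$. Turnout $T=u_1+u_2$ is scaled by the same factor $z$. I would first check that $zK$ is again admissible under \Cref{ass:reg}. Its density $zk$ is continuous and positive on $(0,1)$, $zK(0)=0$, and $zK(1)\in(0,1]$ by hypothesis. The defect $1-zK(1)$ is precisely the never-voting mass mentioned in the statement. Hence the same formula \eqref{eq:shares} applies verbatim.

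The equilibrium claims then follow from the fact that multiplying a payoff by a positive constant preserves maximizers. For each fixed $c_{-i}$, the function $c_i\mapsto u_i^{zK}(c_i,c_{-i})$ is a positive multiple of $c_i\mapsto u_i^{K}(c_i,c_{-i})$. Its argmax over $[0,1]$ is therefore the same set, so the pure best-response correspondences coincide.

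For mixed strategies, expected payoffs are integrals of the pure payoffs against the opponent's mixed strategy, and are well defined because payoffs are continuous (\Cref{rem:exist}). By linearity of the integral, the expected payoffs are also multiplied by $z$. So best responses to any mixed strategy coincide as well, and the sets of pure and mixed Nash equilibria are identical. The same argument carries over unchanged to the $\beta$-objective $U_i=u_i-\beta u_{-i}$ and to any $\alpha$, since $B_i$ in \cref{eq:benefit-gen} is also independent of $K$.

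I do not expect a real obstacle. The only point needing care is admissibility when $z>1$: this requires $zK(1)\le1$, which is exactly the stated hypothesis, and that is why the statement imposes it.
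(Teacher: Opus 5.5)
Your proposal is correct and follows essentially the paper's argument: $u_i=\int_0^1 K\big(B_i(x)^+\big)v(x)\,dx$ is linear in $K$ while $B_i$ does not involve $K$, so passing to $zK$ multiplies every payoff by $z>0$ and leaves all argmax sets unchanged. Your extra steps (checking that $zK$ is admissible, extending to mixed strategies, and handling general $(\alpha,\beta)$, which the paper records separately as \Cref{prop:scale-gen}) are sound additions to that argument rather than a different route.
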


\noindent The argument uses neither $\alpha$ nor $\beta$, so the statement
survives both extensions unchanged; \Cref{prop:scale-gen} records it in that
generality.

\begin{proof}
By \cref{eq:shares}, $u_i(c_1,c_2)=\int_0^1 K\big(B_i(x)^+\big)v(x)\,dx$ is
linear in $K$, so replacing $K$ by $zK$ multiplies $u_1,u_2$, and hence
each candidate's payoff, by the constant $z>0$. Positive
affine transformations of payoffs leave $\arg\max$ unchanged.\Guy{Reshef: QED symbols are in. llncs does define \texttt{\char92 qed}, it just never calls it from the proof environment; three lines in the preamble hook it on, and the deferred proofs now use the same glyph instead of a different square. I did not switch to ACM: it would move theorem numbering, figure widths, the bibliography style, the front matter and the hyperref/cleveref stack, and the arXiv version is not venue-bound. If the AAMAS cut is really the target, that version needs ACM format anyway and should be its own document---tell me if that is what you meant.} \rmr{I don't mind either way}
\end{proof}

\begin{corollary}\label{cor:rhr-canonical}
The equilibrium set depends on $K$ only through its reverse hazard rate $r$.
Indeed $\log K(t)=\log K(1)-\int_t^1 r$, so two cost distributions with the
same reverse hazard rate are positive multiples of one another, and by
\Cref{prop:scale} they induce the same equilibria. By \Cref{prop:scale-gen}
the same is true at every $(\alpha,\beta)$.
\end{corollary}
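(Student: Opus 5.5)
The plan is to show that the reverse hazard rate $r=k/K$ determines $K$ up to a positive multiplicative constant, and then to invoke \Cref{prop:scale}, or \Cref{prop:scale-gen} for general $(\alpha,\beta)$.

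\emph{Recovering $K$ from $r$.} By \Cref{ass:reg}, $K$ is $C^1$ on $(0,1)$ with $K'=k>0$ there. Since $K(0)=0$ and $k>0$ on $(0,1)$, we have $K(t)=\int_0^t k>0$ for every $t\in(0,1]$. Hence $\log K$ is well defined and differentiable on $(0,1)$, with $(\log K)'=k/K=r$.

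Fix $t\in(0,1)$ and integrate over $[t,s]$, then let $s\uparrow1$. Continuity of $K$ at $1$ with $K(1)>0$ gives
\[
\log K(t)=\log K(1)-\int_t^1 r .
\]
This improper integral converges because its left-hand side is finite. Consequently, if $\widetilde K$ is a second admissible cost distribution with the same reverse hazard rate $\widetilde r=r$ on $(0,1)$, then
\[
\log\widetilde K(t)-\log K(t)=\log\widetilde K(1)-\log K(1)
\]
for all $t\in(0,1)$. That is, $\widetilde K=zK$ on $(0,1)$ with $z:=\widetilde K(1)/K(1)>0$. The identity extends to $t=0$ and $t=1$ by continuity and by $K(0)=\widetilde K(0)=0$.

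\emph{Applying scale invariance.} The constant $z$ is automatically admissible in the sense of \Cref{prop:scale}, since $zK(1)=\widetilde K(1)\le1$. \Cref{prop:scale} therefore says the games with costs $K$ and $zK=\widetilde K$ have identical best-response correspondences, hence identical pure and mixed equilibrium sets. For arbitrary $(\alpha,\beta)$ the same conclusion follows from \Cref{prop:scale-gen}. The underlying reason is that $u_i=\int K(B_i^+)v$ is linear in $K$ for any benefit function $B_i$, so $U_i=u_i-\beta u_{-i}$ is also scaled by $z$.

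\emph{Main obstacle.} The only delicate point is the endpoint behaviour. One needs $K>0$ on all of $(0,1]$ so that the logarithm makes sense, and one must justify passing to the limit at $t=1$, where $r$ may be unbounded or undefined. Both are handled by \Cref{ass:reg}: positivity of $k$ on $(0,1)$ together with $K(1)>0$. The integral at $1$ is needed only as a limit, which exists because $K$ is continuous there.
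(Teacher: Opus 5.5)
Your proposal is correct and follows the paper's argument: integrate $(\log K)'=r$ to get $\log K(t)=\log K(1)-\int_t^1 r$, conclude that equal reverse hazard rates force $\widetilde K=zK$, and invoke \Cref{prop:scale} (resp.\ \Cref{prop:scale-gen}). The endpoint details you add are the steps the paper's one-line argument leaves implicit: positivity of $K$ on $(0,1]$, the limit at $t=1$, and the admissibility check $zK(1)=\widetilde K(1)\le1$.
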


\Cref{cor:rhr-canonical} explains \emph{why} the reverse-hazard-rate order
\cite{shaked2007stochastic} is the right order for the comparative static of
\Cref{prop:rh} below (we call it the rhr order): it is precisely the natural
order on the equivalence classes of cost distributions that the model can
distinguish. It is the stricter of the two orders one might reach for: if
$\widetilde k/\widetilde K\ge k/K$ pointwise and $\widetilde K(1)\le K(1)$,
which holds in particular when both are proper distributions on $[0,1]$, then
$\widetilde K\le K$ pointwise, so rhr-higher costs are stochastically higher
too. The converse fails, and \Cref{rem:fosd} is the counterexample. It also fixes the form any polarization claim here can take: a
true one must be a statement about the shape of $r$, never about the level of
costs or of turnout, since both of those can be moved without moving the
equilibrium at all.

\paragraph{What the reverse hazard rate measures}
When the benefit at stake is $t$, a fraction $K(t)$ of the electorate already
votes, and
\[
r(t)\;=\;\frac{k(t)}{K(t)}\;=\;\frac{\mathrm d}{\mathrm dt}\log K(t)
\]
is the share of those voters who are roughly indifferent: raise the stake by
$\eps$ and the number who turn out rises by a factor of about $r(t)\eps$. So
$r$ is the \emph{proportional responsiveness of turnout to the stakes}. A
large $r$ means an electorate poised at the margin; a small $r$ a committed
one, whose participation barely notices what is at stake.

This is exactly the quantity a candidate can trade on. Take uniform voters and
consider candidate~1 widening the gap by moving out. She excites her own
flank: every voter behind her faces the same benefit $\Delta$, so the mass
$\tfrac{1-\Delta}2$ behind her yields extra turnout at rate $k(\Delta)$. She
also cedes the \emph{contested middle} $[c_1,c_2]$ (the voters lying between
the candidates), losing turnout $K(\Delta)/2$. Balancing the two gives
\cref{eq:uv_foc}, which rearranges into a transparent form.

\begin{corollary}[Equilibrium as a crossing]\label{cor:crossing}
Let $x\sim U(0,1)$ and let $k$ satisfy \Cref{ass:reg}. A gap $\Delta$ solves
\cref{eq:uv_foc}, equivalently $\psi(\Delta)=1$, if and only if
\begin{equation}\label{eq:crossing}
r(\Delta)\;=\;\frac1{1-\Delta}.
\end{equation}
If $K$ is super-regular the left side falls while the right side
rises, so this is the unique equilibrium.
\end{corollary}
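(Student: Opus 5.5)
The equivalence is pure algebra on \cref{eq:uv_foc}, so I would start there. For $\Delta\in(0,1)$, \Cref{ass:reg} gives $k(\Delta)>0$, and $K(\Delta)>0$ since $K(0)=0$ and $K$ is strictly increasing on $(0,1)$. Dividing $K(\Delta)=(1-\Delta)k(\Delta)$ by the positive quantity $(1-\Delta)K(\Delta)$ yields $1/(1-\Delta)=k(\Delta)/K(\Delta)=r(\Delta)$. Every step is reversible, so $\Delta$ solves \cref{eq:uv_foc} exactly when it solves \cref{eq:crossing}. I would also record the equivalent form $\psi(\Delta)=\Delta+1/r(\Delta)=1$, which is the same identity read through $\psi=t+1/r$.

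For the uniqueness clause, I would use the fact stated after \Cref{cor:rhr-canonical}'s preamble: super-regularity means exactly that $K/k=1/r$ is non-increasing in the reverse direction, i.e.\ $r$ is non-increasing. The right side of \cref{eq:crossing}, $1/(1-\Delta)$, is strictly increasing on $(0,1)$. The difference $r(\Delta)-1/(1-\Delta)$ is therefore strictly decreasing, so it has at most one zero.

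Existence of a root follows from \Cref{thm_uniform_votes}. Alternatively, it follows directly from the limits $1/r(\Delta)\to 0$ as $\Delta\to0^+$ (because $K(0)=0$ with $k$ positive near $0$ gives $K/k\to0$ along a suitable argument) and $1/(1-\Delta)\to\infty$ as $\Delta\to1^-$, combined with continuity. I would simply cite the paper's existence-of-root statement rather than redo it.

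Finally, super-regularity implies regularity, as noted after \cref{eq:psi}. \Cref{thm_uniform_votes}(b) then makes the unique root $\Delta^*$ the gap of the unique Nash equilibrium $\big(\tfrac{1-\Delta^*}2,\tfrac{1+\Delta^*}2\big)$ up to relabeling.

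The only delicate point is the behaviour of $K/k$ near $0$, if one insists on proving existence from scratch. Citing \Cref{thm_uniform_votes} avoids this, so I expect no real obstacle.
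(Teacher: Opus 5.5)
Your argument is correct and is essentially the paper's own: \cref{eq:crossing} is just $\psi(\Delta)=1$ rewritten through $\psi=t+1/r$, using that $K(\Delta)$ and $k(\Delta)$ are both positive on $(0,1)$, and uniqueness follows because a non-increasing $r$ meets the strictly increasing $1/(1-\Delta)$ at most once, with existence and the equilibrium property inherited from \Cref{thm_uniform_votes}(b) since super-regularity implies regularity. One wording slip only: super-regularity says $K/k$ is non-\emph{decreasing}, which is the same as $r$ being non-increasing, the fact you actually use.
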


The two sides of \cref{eq:crossing} move in opposite directions, and that is
what makes both the crossing and the comparative static easy to see. As the gap
widens, the flank that stands to gain shrinks to mass $\tfrac{1-\Delta}2$, so
the right side rises: each further step out has to be paid for by a larger
proportional response. The left side is what the cost distribution actually
delivers at that gap. An electorate whose $r$ lies above another's meets the
rising requirement further out, so its crossing sits further out too
(\Cref{fig:rhr}), which is the next proposition.

\noindent That hypothesis is super-regularity, and it is strictly stronger
than what \Cref{thm_uniform_votes}(b) needs: regularity permits $K/k$ to fall,
provided it falls more slowly than $t$ rises---equivalently $r'<r^2$, so $r$
may rise as long as it rises slowly.\footnote{For an explicit witness, prescribe
$\psi(t)=2t+\tfrac{3}{4\pi}\sin(2\pi t)$, the doubly-uniform $\psi(t)=2t$
perturbed, and recover $K$ from $r=1/(\psi(t)-t)$. Then
$\psi'=2+\tfrac32\cos(2\pi t)\ge\tfrac12$, so $K$ is regular and satisfies
\Cref{ass:reg}, while $r$ increases wherever $\cos(2\pi t)<-\tfrac23$. Here
$\psi(\tfrac12)=1$, so the equilibrium gap is exactly $\tfrac12$.\label{fn:psi}} This is why
we keep $\psi$ alongside $r$: the picture needs a falling left-hand side, but
\Cref{thm_uniform_votes}(b) does not, and the sharp hypothesis is the one
stated in terms of $\psi$.

\begin{proposition}[More responsive participation polarizes]\label{prop:rh}Let $x\sim U(0,1)$ and let $k,\widetilde k$ both satisfy the hypotheses of
\Cref{thm_uniform_votes}(b), with equilibrium gaps $\Delta^*,\widetilde\Delta^*$.
If $\widetilde k/\widetilde K\ge k/K$ pointwise on $(0,1)$, then
$\widetilde\Delta^*\ge\Delta^*$. \guy{condense this with \ref{cor:rhr-canonical} and \ref{cor:crossing} for one key result in AMASS}
\end{proposition}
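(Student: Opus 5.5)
The argument works entirely through the scalar characterization of \Cref{thm_uniform_votes}(b). Under regularity, $\Delta^*$ is the unique root in $(0,1)$ of $\psi(\Delta)=1$, and $\widetilde\Delta^*$ is the unique root of $\widetilde\psi(\Delta)=1$, where $\psi(t)=t+1/r(t)$ and $\widetilde\psi(t)=t+1/\widetilde r(t)$. The hypothesis $\widetilde r\ge r$ pointwise on $(0,1)$ says exactly that
\[
\widetilde\psi(t)\;\le\;\psi(t)\qquad\text{for all }t\in(0,1).
\]
So the plan is to compare two strictly increasing functions, one lying below the other, at the level $1$.

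The proof runs as follows. Evaluate $\widetilde\psi$ at the baseline root:
\[
\widetilde\psi(\Delta^*)\;\le\;\psi(\Delta^*)\;=\;1 .
\]
Now suppose, for contradiction, that $\widetilde\Delta^*<\Delta^*$. Since $\widetilde K$ is regular, $\widetilde\psi$ is strictly increasing, so
\[
\widetilde\psi(\Delta^*)\;>\;\widetilde\psi(\widetilde\Delta^*)\;=\;1,
\]
which contradicts the display above. Hence $\widetilde\Delta^*\ge\Delta^*$.

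Note that only the monotonicity of $\widetilde\psi$ and the existence of each root are used. The argument therefore does not need super-regularity; the crossing picture of \Cref{cor:crossing} is illustrative only.

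The main point to check is that both roots exist and lie in the open interval $(0,1)$, so that pointwise comparison on $(0,1)$ applies at $\Delta^*$. Existence and interiority are supplied by \Cref{thm_uniform_votes}(a)--(b), since the equilibrium gap is in $(0,1)$, together with the existence of a root of \cref{eq:uv_foc} noted after that theorem. One should also remark that $k/K$ is well defined on $(0,1)$, because $K>0$ there by \Cref{ass:reg}. Uniqueness of each equilibrium up to relabeling then turns the root comparison into the stated comparison of equilibrium gaps.
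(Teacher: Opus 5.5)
Your proof is correct and is essentially the paper's own argument. The hypothesis gives $\widetilde\psi\le\psi$ on $(0,1)$, so $\widetilde\psi(\Delta^*)\le\psi(\Delta^*)=1=\widetilde\psi(\widetilde\Delta^*)$, and strict monotonicity of $\widetilde\psi$ (regularity, not super-regularity) yields $\widetilde\Delta^*\ge\Delta^*$.
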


\proofin{app:omitted-proofs}

The hypothesis is far stronger than the conclusion needs. Since
$\widetilde\psi$ is strictly increasing, $\widetilde\Delta^*\ge\Delta^*$ holds
\emph{exactly} when $\widetilde r(\Delta^*)\ge r(\Delta^*)=1/(1-\Delta^*)$:
only the two electorates' responsiveness at the incumbent gap matters
(\Cref{fig:rhr}, dashed rule). It is therefore enough for the rhr order to
hold near $\Delta^*$, and the pointwise order on all of $(0,1)$ is merely the
cleanest sufficient condition---strictly stronger than the conclusion
needs.\footnote{A witness, against the uniform baseline $K(t)=t$, where
$\psi(t)=2t$ and $\Delta^*=\tfrac12$. Prescribe
$\widetilde\psi(t)=2t-3t(t-\tfrac3{10})(\tfrac7{10}-t)$ and recover
$\widetilde K$ from $\widetilde r=1/(\widetilde\psi(t)-t)$; since
$\psi=t+1/r$, the rhr order is exactly $\widetilde\psi\le\psi$, so the
cubic makes it hold on $(\tfrac3{10},\tfrac7{10})$ and fail outside.
\Cref{ass:reg} holds and $\widetilde\psi'\ge1.63$, so $\widetilde K$ is
regular; $\widetilde r(\tfrac12)=\tfrac{25}{11}>2$, and the equilibrium gap
moves out to $0.5311$. The conclusion of \Cref{prop:rh} holds although its
hypothesis fails on $60\%$ of $(0,1)$.}

% fig:rhr -- extracted from results_sections.tex.
% The float is self-contained: every macro it uses is defined inside it.
% Inputs from the active draft as \input{figures/fig_rhr}.
\begin{figure}[t]
\centering
%% -------------------------------------------------------------------------
%% The equilibrium as a crossing: r(D) = 1/(1-D), uniform voters, alpha=0.
%% This is exactly psi(D)=1 rewritten, since psi = t + 1/r.
%% Reverse hazard rates plotted:
%%   K = t^2            r = 2/t                    crossing at 2/3
%%   K = t   (uniform)  r = 1/t                    crossing at 1/2
%%   Ktilde = t(2-t)/2  r = 2(1-t)/(t(2-t))        crossing at 1-1/sqrt(3)
%% Edit the three \addplot expressions to swap in other cost distributions.
%% -------------------------------------------------------------------------
\begin{tikzpicture}
\begin{axis}[
  width=0.74\linewidth, height=5.4cm,
  xmin=0.25, xmax=0.85, ymin=0, ymax=7,
  xlabel={platform gap $\Delta$},
  ylabel={reverse hazard rate $r(\Delta)$},
  legend style={font=\scriptsize, at={(0.5,1.03)}, anchor=south, draw=none,
    fill=none, /tikz/every even column/.append style={column sep=8pt}},
  legend columns=2,
  tick label style={font=\scriptsize}, label style={font=\small},
  axis lines=left, clip=true, samples=200,
]
\addplot[black, very thick, domain=0.25:0.85] {1/(1-x)};
\addlegendentry{required: $1/(1-\Delta)$}
\addplot[red!70!black, densely dashed, very thick, domain=0.25:0.85] {2/x};
\addlegendentry{$K(t)=t^2$ \ (costliest)}
\addplot[orange!85!black, very thick, domain=0.25:0.85] {1/x};
\addlegendentry{$K(t)=t$ \ (uniform)}
\addplot[green!45!black, densely dashdotted, very thick, domain=0.25:0.85]
  {2*(1-x)/(x*(2-x))};
\addlegendentry{$\widetilde K(t)=\tfrac12 t(2-t)$}
% The comparison of \Cref{prop:rh} is settled at the incumbent crossing
% alone, not across the whole axis; mark that abscissa.
\addplot[gray!70, dashed, thin, forget plot]
  coordinates {(0.5,0) (0.5,2.0)};
% Label to the LEFT of the rule: to its right the K-tilde curve descends
% through the same band and the text runs straight through it. Two lines, so
% that it clears the y-axis -- the axis clips, and one line loses its first
% word. The band below y = 1.2 and left of the rule is otherwise empty.
\node[anchor=south east, align=right, font=\scriptsize, gray!45!black,
      inner sep=1.5pt, line width=0pt]
  at (axis cs:0.492,0.06) {only the order\\\emph{here} matters};
\addplot[only marks, mark=*, mark size=2.2pt, black, forget plot]
  coordinates {(0.66667,3.0) (0.5,2.0) (0.42265,1.73205)};
\end{axis}
\end{tikzpicture}
\caption{Equilibrium as a crossing (\Cref{cor:crossing}). The rising black
curve is the responsiveness a candidate \emph{needs} in order to justify a gap
$\Delta$; the falling curves are the responsiveness each cost distribution
\emph{supplies}. Equilibrium is where they meet (dots), at
$\Delta^*=\tfrac23$, $\tfrac12$ and $1-\tfrac1{\sqrt3}$. Raising the whole
supply curve---costs higher in the reverse-hazard-rate order---slides the
crossing to the right: \Cref{prop:rh}. The three curves here are ordered
along the whole axis, but that is more than the comparison needs: it is
settled at the incumbent crossing alone (dashed rule), which is why the
pointwise hypothesis of \Cref{prop:rh} can be weakened to a neighbourhood.
Note $\widetilde K$ lies \emph{below} $K(t)=t$ here even though it puts
costs stochastically higher (\Cref{rem:fosd}).}
\label{fig:rhr}
\end{figure}

\medskip
The logic behind \Cref{prop:rh} is not that costly elections have
fewer voters, but that they have \emph{more responsive} ones. When costs are
high in the reverse-hazard-rate sense, the voters who still turn out are
concentrated just above their indifference point, and a candidate who moves
outward converts many of them; the flank is worth fighting for, and she goes
to get it. When costs are low, her flank votes for her regardless, moving out
buys almost nothing, and she stays near the center to contest the middle. What
polarizes candidates is a flank that is \emph{nearly indifferent}, not a flank
that is small. \Cref{thm:rhr-alpha} carries the comparison to every
$\alpha>0$, so this is the $\alpha=0$ case of a channel that alienation does
not close.

It also settles a claim about turnout itself that would be natural to make,
and that is false.

\begin{remark}[Turnout \emph{per se} is not the driver]\label{rem:fosd}
By \Cref{prop:scale}, taking $z<1$ (which adds a mass $1-z$ of
\emph{never-voters}, whose cost exceeds the largest benefit any profile can
offer) drives turnout arbitrarily low while leaving the platforms exactly
where they were. So ``lower turnout $\Rightarrow$ more polarization'' cannot
hold in this model. Nor does the first-order stochastic order suffice, and regularity does not
rescue it: the two distributions below are both regular, with $\psi(t)=2t$
and $\widetilde\psi'\ge2$ throughout. Let $x\sim U(0,1)$ and compare
$K(t)=t$ with
\[
\widetilde K(t)=\tfrac12\,t(2-t),\qquad \widetilde k(t)=1-t
\]
(half of the electorate never votes and the rest have decreasing cost
density). Then $\widetilde K\le K$ pointwise, so costs are stochastically
higher (the mean cost rises from $\tfrac12$ to at least $\tfrac23$, the
never-voters' costs lying above every attainable benefit), and equilibrium
turnout falls from $0.375$ to $0.269$; yet the equilibrium gap \emph{shrinks},
from $\Delta^*=\tfrac12$ to $\Delta^*=1-\tfrac{1}{\sqrt3}\approx0.4226$.
Higher costs, lower
turnout, \emph{less} polarization. There is no contradiction with the comparative
static ($\widetilde K$ is \emph{lower} than $K$ in the reverse-hazard-rate
order), but the logic must be described as a statement about the shape of $K$,
not about the level of participation.
\end{remark}

The two distributions in \Cref{rem:fosd} sit on either side of $K(t)=t$ in
\Cref{fig:rhr} in the way the logic predicts: $\widetilde K$ has a
\emph{decreasing} density, so its remaining voters are the committed ones
rather than the marginal ones, its supply curve lies below, and its candidates
converge, even though fewer people vote. Turnout and polarization are both
outputs of the model; neither drives the other, and it is the shape of $K$
that drives both.

\begin{example}[Power costs: separation equals the mean cost]\label{ex:power}
Let $x\sim U(0,1)$, as throughout this subsection. For $K(t)=t^p$ with $p>0$
the reverse hazard rate $r(t)=p/t$ is increasing in
$p$, and \cref{eq:uv_foc} solves to
\[
\Delta^*=\frac{p}{1+p}=\mathbb E[\kappa]:
\]
the equilibrium separation equals the mean voting cost \emph{exactly}, for
every $p$. Raising $p$ raises $r$ pointwise, so \Cref{prop:rh} applies; on
this family its conclusion can be read straight off the formula.

Equilibrium turnout on this family is
$T=(1-\Delta^*)K(\Delta^*)+\int_0^{\Delta^*}K
 =(\Delta^*)^p\tfrac{1+2p}{(1+p)^2}$, which falls as $p$ rises:
$(\Delta^*,T)=(\tfrac13,0.513)$ at $p=\tfrac12$, $(\tfrac12,\tfrac38)$ at
$p=1$, $(\tfrac23,\tfrac{20}{81})$ at $p=2$ and $(\tfrac34,0.185)$ at $p=3$.
A single shift in the cost distribution therefore reproduces both trends of
\Cref{sec:intro} at once---rising polarization alongside falling
participation---without either one causing the other.
\end{example}

\Cref{cor_quarter} put an equilibrium at the quartiles. Within the power
family the same construction reaches \emph{any} symmetric pair of quantiles.
 \Guy{Reshef: the quantile half is the first caveat below. On existence, see \Cref{app:power-nonuniform}: no sufficiency result in the paper reaches $p\ne1$, but the profile is an equilibrium in every case I checked, and for a symmetric electorate what is left to prove is only own-side optimality.}

\begin{remark}[Prescribing the equilibrium profile]\label{rem:quantile-sweep}
Let $x\sim U(0,1)$ and fix a target profile $(q,1-q)$ with $q\in(0,\tfrac12)$.
Take the power costs
\[
K_q(t)\;=\;t^{\,p_q},
\qquad
p_q\;:=\;\frac1{2q}-1\;=\;\frac{1-2q}{2q}\;\in\;(0,\infty).
\]
Then $\psi(t)=t+K_q(t)/k_q(t)=\big(1+\tfrac1{p_q}\big)\,t$ is linear and
strictly increasing, so \Cref{thm_uniform_votes}(b) applies: the game has a
\emph{unique} equilibrium, and by \Cref{ex:power} its gap is
$\Delta^*=\tfrac{p_q}{1+p_q}=1-2q$. The candidates therefore sit at exactly
\[
c_1^*\;=\;q,\qquad c_2^*\;=\;1-q ,
\]
and the mean voting cost is $1-2q$, the separation itself.

So every interior symmetric profile is the unique equilibrium of exactly one
member of the family, and $p_q$ is strictly decreasing in $q$: the quartiles
of \Cref{cor_quarter} are $q=\tfrac14$, where $p_q=1$ and costs are uniform;
$q\uparrow\tfrac12$ sends $p_q\downarrow0$ and the platforms to the median;
$q\downarrow0$ sends $p_q\to\infty$ and the platforms to the endpoints. Those
two limits are approached but never attained, since neither $p=0$ nor
$p=\infty$ is an admissible cost distribution, which is the sense in which
costly voting breaks median convergence for \emph{every} member of the family.

Two caveats delimit the statement. The
distribution-freeness of \Cref{cor_quarter} (a quartile equilibrium for every
symmetric electorate with $\rho_v\le4$) is special to $p=1$, because uniform costs make $k$ constant and
the balance condition of \Cref{lem:balance} then compares voter mass alone;
for $p\ne1$ the marginal masses are $k$-weighted and the equilibrium quantile
depends on the shape of $v$. And the family one reaches for first, scaling
costs up and down ($K_z=zK$), cannot do this at all: rescaling leaves the
equilibrium exactly where it was (\Cref{prop:scale}), so any family that moves
the quantile must move the reverse hazard rate.
\end{remark}

\noindent Power costs against a \emph{non-uniform} electorate fall outside
every sufficiency result above, yet the profile is an equilibrium in each
case we checked, and the quantile stays close to the uniform-voter value;
\Cref{app:power-nonuniform} records what the computation shows and what is
left to prove.

\paragraph{Which way does causality run?}
The feedback is one-directional. Higher costs (in
the sense that the model can detect, \Cref{cor:rhr-canonical}) polarize the
candidates; conversely, a wider gap raises every voter's benefit from
participating, so turnout is \emph{increasing} in the imposed
separation: polarization raises turnout, never the reverse. Abstention is
therefore a cause of candidate divergence here, not a consequence of it. This
one-directionality is special to the baseline: under alienation the turnout
curve bends back (\Cref{prop:turnout}), and the reverse channel (extreme
candidates depressing participation) reappears.

\section{Alienated Voters}\label{sec:alien}

We now turn the first dial. Throughout this section $\alpha>0$ while
candidates still maximize vote share, so the convention is $\beta=0$. Alienation is a second route to divergence---not an independent one: without costly voting it would change nothing at all, since the factor $2+\alpha$ cancels out of the comparison between the candidates and the median equilibrium survives (\Cref{rem:alpha}). What it changes is \emph{who} turns out, and its
logic can be stated in one sentence: on the contested middle $[c_1,c_2]$ the two
distances sum to the constant $d_1(x)+d_2(x)\equiv\Delta$, so a penalty on the
distance to one's \emph{own} candidate bites hardest exactly at the midpoint%
---and pushes the candidates apart. The tent geometry (outer arms of slope
$\alpha$, inner arms of slope $2+\alpha$, a dead zone of abstention around
$m$) is laid out in \Cref{fig:anatomy}.

The section has three parts. \Cref{sec:alien:char} characterizes the
equilibrium at a fixed $\alpha$: the balance principle survives in a general
form (\Cref{prop:master}), and for uniform voters the equilibrium is unique
and symmetric (\Cref{thm:alien-unique}). \Cref{sec:alien:fixed} carries the
comparative statics of \Cref{sec:base:polar} to a fixed $\alpha$: the cost
channel and the transmission of mass polarization both survive
(\Cref{thm:rhr-alpha,prop:mass}). \Cref{sec:alien:polar} contains the
section's point: \emph{alienation polarizes}, with the gap strictly
increasing in $\alpha$ and tending to full separation (\Cref{thm:alien-mono}),
and it does so at a fixed level of participation (\Cref{cor:alien-turnout}),
so this is not the cost channel of \Cref{sec:base} renamed. Two observations
of \Cref{sec:base} do not survive alienation, and each is flagged where it
falls: the distribution-free quartile equilibrium (\Cref{rem:no-quantile})
and the one-directional link from polarization to turnout
(\Cref{prop:turnout}), the second of which is what lets the model speak to
the empirical literature.

\rmr{The section reads a bit like a collection of results, each of which a bit complicated.  \\
there are two main claims in this section, and all results should be organized to support them (or be removed): 1. all the observations from Section 3 still more or less hold for any $\alpha$. 2. Ceteris-Paribus, higher $\alpha$ means more polarization.   I think the first claim is mainly technical and the second one is more interesting.  Some results (also in other sections) seem to be about the interaction of parameters. E.g. that some cost distributions boost or dampen the mass-to-elite-polarization channel. This is a third-degree result that can stay in the Arxiv version (presented clearly) but not in the AAMAS version}\guy{Claude: adopted, with three claims rather than two, and every result in the section now named under one of them in the paragraph above. The reason for the third: claim~1 as you stated it is not true---two results here exist precisely to say which \Cref{sec:base} observations \emph{fail}. \Cref{prop:turnout} already said so in its own lead-in (``an artifact of $\alpha=0$''), and \Cref{rem:no-quantile} is the new answer to your uniform-costs note. Those two are also where the section earns its empirical reading, so filing them as exceptions to claim~1 would have buried the best material. Your reading of the other two was right and they are now labelled as such: \Cref{rem:deadzone} is machinery, and \Cref{prop:mass} is the mass-to-elite transmission rate, called a refinement of claim~1 in the opening and again after its statement.}
\rmr{split into subsections: characterizing eq existence for fixed $\alpha$; generalize results on cost and polarization for fixed $\alpha$; show higher $\alpha$ polarizes. Same partition for section 5.}\guy{Done for both sections. \Cref{thm:alien-mono} and \Cref{thm:beta-select} each straddled two parts and are split: \Cref{thm:alien-unique} and \Cref{prop:beta-foc} are the fixed-parameter halves. \Cref{n:prop:B3} is now stated in the body as \Cref{prop:rhr-beta}. The $\beta=1$ results are now in \Cref{sec:comp:margin}; see the reply there.}

% fig:anatomy -- extracted from results_sections.tex.
% The float is self-contained: every macro it uses is defined inside it.
% Inputs from the active draft as \input{figures/fig_anatomy}.
\begin{figure}[t]
\centering
%% -------------------------------------------------------------------------
%% Anatomy of the benefit tents: every geometric symbol in one picture.
%% Profile (c1,c2)=(0.35,0.80), alpha=1 (lambda=3), Delta=0.45.
%%   apex height Delta; outer slope alpha; inner slope lambda
%%   x1 = c1 + Delta/lambda = 0.50     inner zero of tent 1
%%   x0 = c1 - Delta/alpha  = -0.10    outer zero (off the space -> truncated)
%%   b1 = B1(0) = c2-(1+alpha)c1 = 0.10   boundary benefit at x=0
%%   y1 = c2 - Delta/lambda = 0.65;  b2 = B2(1) = 0.25
%%   dead zone = (x1,y1) = (0.50,0.65)
%% Fully parametric -- change c1,c2,alpha and recompute the six numbers.
%% -------------------------------------------------------------------------
\begin{tikzpicture}
\begin{axis}[
  width=0.86\linewidth, height=5.6cm,
  xmin=-0.02, xmax=1.02, ymin=0, ymax=0.62,
  xlabel={voter position $x$}, ylabel={benefit $B_i(x)$},
  xtick={0,0.35,0.50,0.575,0.65,0.80,1},
  xticklabels={$0$,$c_1$,$\zin{1}$,$m$,$\zin{2}$,$c_2$,$1$},
  ytick={0,0.10,0.25,0.45},
  yticklabels={$0$,$b_1$,$b_2$,$\Delta$},
  tick label style={font=\scriptsize}, label style={font=\small},
  axis lines=left, clip=false,
]
% dead zone shading
\addplot[draw=none, fill=gray!18] coordinates {(0.50,0) (0.65,0) (0.65,0.62) (0.50,0.62)} -- cycle;
\node[font=\scriptsize, gray!60!black, align=center] at (axis cs:0.575,0.565) {inner dead zone\\ (nobody votes)};
% tent 1: outer arm slope alpha=1 from (0,b1) to apex, inner arm slope 3 to x1
\addplot[blue!70!black, very thick] coordinates {(0,0.10) (0.35,0.45) (0.50,0)};
% tent 2: inner arm from y1 to apex, outer arm to (1,b2)
\addplot[red!70!black, very thick, densely dashed] coordinates {(0.65,0) (0.80,0.45) (1,0.25)};
% dashed continuation of tent-1 outer arm to its off-space zero x0
\addplot[blue!70!black, thin, dotted] coordinates {(-0.02,0.08) (0,0.10)};
% apex guides
\addplot[gray, thin, dashed, forget plot] coordinates {(0.35,0) (0.35,0.45)};
\addplot[gray, thin, dashed, forget plot] coordinates {(0.80,0) (0.80,0.45)};
\addplot[gray, thin, dashed, forget plot] coordinates {(0,0.45) (0.35,0.45)};
% arm labels: marginal masses and slopes, kept clear of the lines
\node[font=\scriptsize, blue!70!black, anchor=east] at (axis cs:0.055,0.30) {$O_1$};
\draw[->, blue!70!black, thin] (axis cs:0.06,0.285) -- (axis cs:0.095,0.20);
\node[font=\scriptsize, blue!70!black, anchor=east] at (axis cs:0.185,0.375) {outer arm, slope $\alpha$};
\draw[->, blue!70!black, thin] (axis cs:0.19,0.36) -- (axis cs:0.225,0.34);
\node[font=\scriptsize, blue!70!black, anchor=west] at (axis cs:0.505,0.31) {inner arm, slope $\lambda$};
\draw[->, blue!70!black, thin] (axis cs:0.50,0.295) -- (axis cs:0.458,0.13);
\node[font=\scriptsize, blue!70!black, anchor=south] at (axis cs:0.415,0.47) {$I_1$};
\draw[->, blue!70!black, thin] (axis cs:0.415,0.46) -- (axis cs:0.428,0.225);
\node[font=\scriptsize, red!70!black, anchor=west] at (axis cs:0.855,0.52) {candidate $2$'s tent (mirror)};
\end{axis}
\end{tikzpicture}
\caption{Anatomy of the benefit tents at the profile $(c_1,c_2)=(0.35,0.8)$
with $\alpha=1$. Candidate $1$'s tent (solid) peaks at height $\Delta$ over
$c_1$; its \emph{inner arm} falls at slope $\lambda=2+\alpha$ to the inner
zero $\zin{1}=c_1+\Delta/\lambda$, and its \emph{outer arm} at slope $\alpha$
toward the outer zero $\zout{1}=c_1-\Delta/\alpha$, here off the policy space, so
the arm is truncated at the boundary with \emph{boundary benefit}
$b_1=B_1(0)$. Candidate $2$'s tent (dashed) is the mirror image, with inner
zero $\zin{2}$ and boundary benefit $b_2=B_2(1)$. Between the inner zeros lies the
shaded \emph{inner dead zone} $(\zin{1},\zin{2})$, where both benefits are negative and
nobody votes; here $\zout{1}<0$, so there is no outer dead zone to draw. The
marginal masses of \Cref{prop:master} weight each point of an arm by
$k(B_1(x))\,v(x)$: $O_1$ on the outer arm, $I_1$ on the inner. The baseline of
\Cref{sec:base} is the case $\alpha=0$, in which the outer arms are flat at
height $\Delta$, the inner zeros merge at $m$, and the dead zone vanishes.}
\label{fig:anatomy}
\end{figure}

\subsection{Equilibrium at a Fixed $\alpha$}\label{sec:alien:char}

Alienation leaves the shape of the characterization intact: one first-order
condition for an arbitrary electorate (\Cref{prop:master}), and, for a uniform
one, a unique symmetric equilibrium (\Cref{thm:alien-unique}).

Here and in \Cref{sec:comp} write
\[
\lambda:=2+\alpha,
\qquad
b_1:=B_1(0)=c_2-(1+\alpha)c_1,
\qquad
b_2:=B_2(1)=(1+\alpha)c_2-c_1-\alpha
\]
for the \emph{boundary benefits}, the height of each tent's outer arm where
it meets the edge of the policy space (see \Cref{fig:anatomy}). At a symmetric profile
$c_1=\tfrac{1-\Delta}2$ they coincide, $b_1=b_2=b$, with
\begin{equation}\label{eq:b-delta}
b\;=\;\frac{\lambda\Delta-\alpha}{2},
\qquad\text{so}\qquad
\Delta-b\;=\;\frac{\alpha(1-\Delta)}{2},
\end{equation}
and $b\ge0$ exactly when $\Delta\ge\alpha/\lambda$.

Alienation opens two regions in which nobody votes, and only one of them is a
case the analysis has to track. The \emph{inner dead zone} between the tents'
inner zeros is present at every profile with $c_1<c_2$ and $\alpha>0$. An
\emph{outer dead zone} behind a candidate, where her most distant supporters
abstain, appears only when her tent's outer zero falls inside the policy space,
and it is that case distinction the first-order conditions must carry. From
here on ``dead zone'' unqualified means the outer one (\Cref{fig:anatomy}).
\Cref{lem:interior} survives alienation: every equilibrium is untied and
interior for every $\alpha\ge0$ and every $(v,k)$ (\Cref{n:lem:interior}).

\paragraph{The master condition, in general form}
One first-order condition covers alienation for every electorate and every
cost distribution, whether or not each candidate's tent reaches the edge of
the policy space.
One might hope that at equilibrium each tent reaches the boundary: for
candidate~$1$, that the outer zero $\zout{1}=c_1-\Delta/\alpha$ of her tent
(\Cref{fig:anatomy}) satisfies $\zout{1}\le0$, i.e.\ $(1+\alpha)c_1\le c_2$, so that
there is no outer dead zone,
and its mirror for candidate~$2$. For concentrated electorates this fails: the
equilibrium tents \emph{do} detach from the boundary (\Cref{rem:deadzone}).
The first-order condition valid in both regimes is the following.

\begin{proposition}[Master first-order condition]\label{prop:master}
Under alienation, every interior best response of candidate~1 satisfies \rmr{$x$ is already used for voter position. maybe change voter position to $\nu$}\guy{Claude: fixed, but by renaming the zeros rather than the voters. $\nu$ sits next to the density $v$ in every integrand in the paper---$\int_0^1p_i(\nu)v(\nu)\,d\nu$---which is a worse collision than the one it cures, and it would also give up the standard symbol for a voter's position. The zeros now carry the candidate subscript that every other object already uses ($c_i$, $d_i$, $B_i$, $I_i$, $O_i$, $u_i$), and \emph{in}/\emph{out} says which zero, so the old $x_1$---whose subscript meant \emph{inner} while the subscript on $c_1$ beside it meant \emph{candidate}---is gone. Candidate 2's zeros were $y_0,y_1$, the only per-candidate objects in the paper that changed letter instead of subscript; they gave integrals like $\beta\int_{y_1}^{y_0\wedge1}k(B_2)v\,dx$, with $y$ in the limits and $dx$ as the variable. $y$ is now free again. Four spellings of the clipped outer zero ($x_0^+$, $x_0^{\vee}$, $x_0\vee0$, and $x_{\mathrm{out}}(s)$ in the first supplement) collapsed into one, and the $z^{\vee}$ shorthand that supported them is gone. 210 occurrences across three files and two figures; the names are defined once in the preamble as \texttt{\char92 zin} and \texttt{\char92 zout}, so changing them again is a one-line edit.} \rmr{also, isn't this $O_1=I_1$?}\guy{Claude: yes, exactly---now said in the line after the proposition. The paper already knew it, but only in \Cref{lem:balance-gen}, which is where I found it and which is far too late for a reader meeting \cref{eq:master} here. Generalized while fixing: \Cref{prop:scale} and \Cref{lem:balance} each name their general form and are named back by it; \Cref{prop:master} was the only specialization naming neither. Swept the rest of the paper for the same omission afterwards and found four more unlinked pairs: \Cref{cor_uu}/\Cref{cor:alien-uu}, \Cref{cor_quarter}/\Cref{cor:quantile}, \Cref{thm:margin_median}/\Cref{claim:hd}, and \Cref{prop:rh}/\Cref{thm:rhr-alpha}; all four are now linked. The $\alpha=0$ collapse of \Cref{cor:alien-uu} was already stated in \texttt{verify/check\_alienation.py} but nowhere in the paper.}
\begin{equation}\label{eq:master}
\begin{gathered}
\int_{c_1}^{\zin{1}}\!k\big(B_1(x)\big)v(x)\,dx
\;=\;
\int_{\zout{1}\vee0}^{c_1}\!k\big(B_1(x)\big)v(x)\,dx,\\
\zin{1}=c_1+\tfrac{\Delta}{\lambda},\qquad \zout{1}=c_1-\tfrac{\Delta}{\alpha},
\end{gathered}
\end{equation}
with the mirror condition for candidate~2; here $a\vee b$ and $a\wedge b$ are the larger and the smaller of $a$ and $b$, so $\zout{1}\vee0$ is the outer zero clipped to the policy space. For uniform costs this reads
\begin{equation}\label{eq:master-unifcost}
V(\zin{1})+V(\zout{1}\vee0)=2V(c_1).
\end{equation}
When $\zout{1}\le0$ (no dead zone) $V(\zout{1}\vee0)=0$ and \cref{eq:master-unifcost}
reduces to $V(\zin{1})=2V(c_1)$; when $\zout{1}>0$ the extra term is genuinely present. \guy{add in AMASS in short}
\end{proposition}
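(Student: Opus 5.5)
The plan is to write candidate~1's vote share at fixed $c_2>c_1$ as an integral over the region where her tent is positive, and differentiate under the integral with Leibniz's rule, exactly as in the baseline computation preceding \Cref{lem:balance}. For $c_1<c_2$, on $[0,c_1]$ we have $B_1(x)=c_2-x-(1+\alpha)(c_1-x)=\Delta-\alpha(c_1-x)$, and on $[c_1,m]$ we have $B_1(x)=c_2+(1+\alpha)c_1-(2+\alpha)x=\Delta-\lambda(x-c_1)$. The positive part of the tent is therefore supported on $(\zout{1}\vee0,\zin{1})$, with $\zin{1}=c_1+\Delta/\lambda<m$ and $\zout{1}=c_1-\Delta/\alpha$, and
\[
u_1=\int_{\zout{1}\vee0}^{c_1}K\big(B_1(x)\big)v(x)\,dx+\int_{c_1}^{\zin{1}}K\big(B_1(x)\big)v(x)\,dx .
\]

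The first key step computes the partial derivatives of the benefits in $c_1$ with $c_2$ fixed. On the outer arm, $\partial B_1/\partial c_1=-1-\alpha+\alpha\cdot 0$; more carefully, $B_1=c_2-(1+\alpha)c_1+\alpha x$, so $\partial_{c_1}B_1=-(1+\alpha)$. On the inner arm, $B_1=c_2+(1+\alpha)c_1-\lambda x$, so $\partial_{c_1}B_1=1+\alpha$. This uniform rate $1+\alpha$ on both arms is what makes the condition a clean balance.

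The second step handles the boundary terms from Leibniz's rule. At the moving limit $c_1$ the integrand is $K(\Delta)v(c_1)$ from both sides, so the two contributions cancel, just as in the baseline. At $\zin{1}$ and, when $\zout{1}>0$, at $\zout{1}$, the integrand is $K(0)v=0$ by \Cref{ass:reg}, so those terms vanish. When $\zout{1}<0$ the lower limit is the constant $0$ and contributes nothing; the case $\zout{1}=0$ is a kink handled by one-sided derivatives, which agree by the same vanishing.

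Collecting terms gives
\[
\partial u_1/\partial c_1=(1+\alpha)\Big[\int_{c_1}^{\zin{1}}k(B_1)v\,dx-\int_{\zout{1}\vee0}^{c_1}k(B_1)v\,dx\Big].
\]
At an interior best response with $c_1<c_2$ this derivative must vanish, which is \cref{eq:master}. The case $c_1>c_2$ is the mirror image, and $c_1=c_2$ is excluded by \Cref{n:lem:interior}. For uniform costs, $k\equiv1$ reduces each integral to a difference of $V$-values, giving $V(\zin{1})-V(c_1)=V(c_1)-V(\zout{1}\vee0)$, which is \cref{eq:master-unifcost}. The stated specialization when $\zout{1}\le0$ follows at once from $V(0)=0$.

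The main obstacle is justifying differentiability: $k$ is only continuous on the open interval and may blow up at $0^+$, while the limits $\zin{1},\zout{1}$ move with $c_1$. I would first try to avoid differentiating $k$ altogether. For this, substitute $t=B_1(x)$ on each arm, writing the outer piece as $\frac1\alpha\int_{b_1^+}^{\Delta}K(t)v(\cdot)\,dt$ and the inner piece analogously. The integrands are then continuous and bounded, because $K\le1$, and differentiability follows from the chain rule applied to $K$ alone, with $k$ appearing only as $K'$, which is integrable.
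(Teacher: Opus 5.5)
Your main computation matches the paper's deferred proof of \Cref{prop:master}. The positive part of $B_1$ is supported on $(\zout{1}\vee0,\zin{1})$. The Leibniz edge terms vanish because $K(0)=0$. On the two arms $\partial_{c_1}B_1=\mp(1+\alpha)$, and the uniform-cost form follows. Splitting at $c_1$ and cancelling the two $K(\Delta)v(c_1)$ terms is harmless.

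The gap is in the step you yourself flag as the main obstacle: your remedy does not work. After the substitution $t=B_1(x)$, the outer piece is $\tfrac1\alpha\int_{b_1^+}^{\Delta}K(t)\,v\big(\tfrac{(1+\alpha)c_1-c_2+t}{\alpha}\big)\,dt$. Here $K(t)$ no longer depends on $c_1$ at all, so a chain rule on $K$ has nothing to act on. The $c_1$-dependence has moved into the argument of $v$, which is only continuous, so differentiating under the integral would need $v'$. Nor is $v$ bounded in general (take the arcsine electorate), so ``continuous and bounded'' also fails. This substitution closes only for uniform voters, where it gives the closed form $u_1=A\,G(\Delta)-\tfrac1\alpha G(b_1^+)$ of \eqref{n:eq:uV}.

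The change of variables that works runs the other way, over cost levels rather than positions. The super-level set $\{B_1>s\}$ has endpoints
$\zin{1}(s)=\tfrac{c_2+(1+\alpha)c_1-s}{\lambda}$ and $\zout{1}(s)=\tfrac{(1+\alpha)c_1-c_2+s}{\alpha}$. The layer-cake formula then gives
$u_1=\int_0^{\Delta}k(s)\big[V(\zin{1}(s))-V(\zout{1}(s)\vee0)\big]\,ds$.
Now $c_1$ enters only through $V$, which is $C^1$ on $(0,1)$. The bracket vanishes at $s=\Delta$, since both endpoints equal $c_1$ there, so the moving upper limit contributes nothing. A dominated-convergence step gives the integrand $(1+\alpha)k(s)\big[\tfrac1\lambda v(\zin{1}(s))-\tfrac1\alpha v(\zout{1}(s))\mathbf 1\{\zout{1}(s)>0\}\big]$. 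Substituting back $s=B_1(x)$ on each arm recovers \cref{eq:master}. This is the argument of \Cref{n:lem:master}.

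That argument also exposes what your remark about one-sided derivatives misses at $\zout{1}=0$. The edge term does vanish there, but the bulk term $O_1=\int_0^{c_1}k(\alpha x)v(x)\,dx$ can diverge when $k$ and $v$ are both singular at $0$. For example, $k(t)\sim t^{-1/2}$ and $v(x)\sim x^{-1/2}$ give an integrand $\sim\alpha^{-1/2}/x$, so $u_1$ is not differentiable at that point. The paper handles this with a finiteness proviso. Alternatively, Fatou's lemma shows that $O_1\to\infty$ on both sides of such a point while $I_1$ stays bounded. Hence $u_1$ is strictly decreasing through the point, and it is never a best response.
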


\noindent This is \Cref{lem:balance} again. The left side of \cref{eq:master} is $I_1$ and the right side is $O_1$, with each arm cut at the tent's own zero instead of at the midpoint and the boundary; alienation moves the cuts, not the quantities being balanced. The limits collapse correctly at $\alpha=0$, where $\zin{1}=c_1+\Delta/2=m$ and $\zout{1}\to-\infty$, so $\zout{1}\vee0=0$. \Cref{lem:balance-gen} records the same identification once competitiveness is in play as well.

\begin{deferredproof}{\Cref{prop:master}}
$u_1=\int_{\zout{1}\vee0}^{\zin{1}}K(B_1)v$. The Leibniz terms at both endpoints vanish
because $B_1=0$ there and $K(0)=0$ (at a truncated lower limit $\zout{1}<0$ the
limit is the constant $0$ and contributes nothing either). On $[\zout{1}\vee0,c_1]$ we
have $\partial_{c_1}B_1=-(1+\alpha)$ and on $[c_1,\zin{1}]$,
$\partial_{c_1}B_1=+(1+\alpha)$, whence
$\partial_{c_1}u_1=(1+\alpha)\big[\int_{c_1}^{\zin{1}}k(B_1)v-\int_{\zout{1}\vee0}^{c_1}k(B_1)v\big]$.
Setting this to zero gives \cref{eq:master}.
\end{deferredproof}

\noindent Alienation also creates a case the baseline did not have---the
outer zero can fall inside the policy space---and the first claim of this
section needs that case to be harmless where the characterizations are
proved.

\begin{remark}[Dead-zone equilibria exist]\label{rem:deadzone}
Under \emph{uniform voters} dead zones never arise at a best response: if
$\zout{1}\ge0$ the tent lies entirely inside $[0,1]$ and
$u_1=\big(\tfrac1\alpha+\tfrac1\lambda\big)G(\Delta)$ (recall $G(t)=\int_0^tK$) depends on $c_1$ only
through $\Delta$, so $u_1$ is strictly decreasing in $c_1$ on the dead-zone
region $c_1\in[\tfrac{c_2}{1+\alpha},c_2)$ and the candidate strictly prefers
to move out of it. For non-uniform $v$ this argument fails and dead-zone
equilibria do occur. The supplementary analysis sharpens both halves: \Cref{n:prop:C2} shows a
stationary dead zone forces $\rho_v\ge\tfrac{2+\alpha}\alpha$ for \emph{any}
$K$, that uniform voters never have dead-zone equilibria for any $\alpha$ and
any $K$, and, under uniform costs, gives the criterion
$V\big(\tfrac{2(1+\alpha)}{\lambda^2}\big)>2V\big(\tfrac1\lambda\big)$ under
which a symmetric $v$ admits a stationary profile with a dead zone. For
$v=\mathrm{Beta}(8,8)$ and uniform costs the symmetric stationary profile is
unique, $(0.3812,0.6188)$ at $\alpha=2$ and $(0.3748,0.6252)$ at $\alpha=4$,
and is an equilibrium (mutual global best responses, verified numerically;
\Cref{n:rem:C2}); in both $\zout{1}>0$ (respectively $0.26$ and $0.31$), the condition
$V(\zin{1})=2V(c_1)$ fails by $2.3\cdot10^{-2}$ and $6.2\cdot10^{-2}$, and
\cref{eq:master-unifcost} holds to $10^{-9}$. The same happens for truncated
normals with small variance and for $\mathrm{Beta}(20,20)$.
\end{remark}

The dead-zone case being real, sufficiency also changes shape. Call the
\emph{umbrella}\rmr{you actually used this term before}\guy{Claude: it was, once, in \Cref{rem:exist}---now glossed there rather than named, so this is the first occurrence. Swept the other 134 terms the paper sets in \texttt{\char92 emph} for the same problem: the only earlier uses left are the two notation tables, which are glossaries and point forward on purpose, and ``dead zone'', which \Cref{sec:model:ext} introduces in scare quotes before this section splits it into inner and outer.} of a setting the range of density ratios $\rho_v$ within
which the first-order conditions are known to characterize equilibria. The
baseline umbrella $\rho_v\le4$ (\Cref{prop:fourflat}) deforms to
$\rho_v\le\min\big\{\tfrac{2\lambda}{1+\alpha},\,\tfrac{\lambda}{\alpha}\big\}$%
---the first branch controlling concavity, the second the dead zone---with
existence for symmetric electorates below the bound (\Cref{n:thm:A3,n:cor:A3exist}).
Both branches \emph{decrease} in $\alpha$: alienation shrinks the
umbrella rather than widening it, foreshadowing a contrast with
competitiveness (\Cref{tab:summary}).

\paragraph{Uniform voters: a one-line characterization}
For a uniform electorate the characterization collapses to a single equation
in the reverse hazard rate, \cref{eq:sym-foc} below, whose unique root is the
equilibrium gap of \Cref{thm:alien-unique}.
Under uniform voters every equilibrium is untied and interior
(\Cref{n:lem:interior}) and, by \Cref{rem:deadzone}, has no dead zone, so
specializing \cref{eq:master} to $v\equiv1$ the two first-order conditions
read $K(b_i)=\tfrac2\lambda K(\Delta)$, $i=1,2$. Hence $K(b_1)=K(b_2)$, so
$b_1=b_2$ ($K$ is strictly increasing on $[0,1]$), and since
$b_2-b_1=\alpha(c_1+c_2-1)$, for $\alpha>0$ every equilibrium is symmetric
(this is the necessity step of \Cref{n:thm:A2} at $\beta=0$). Its gap
therefore solves
\begin{equation}\label{eq:sym-foc}
\lambda\,K(b)=2\,K(\Delta),
\qquad\text{equivalently}\qquad
\int_{b}^{\Delta}r(t)\,dt\;=\;\log\frac\lambda2\;=\;\log\Big(1+\frac\alpha2\Big),
\end{equation}
with $b$ as in \cref{eq:b-delta}. The second form makes
\Cref{cor:rhr-canonical} visible: only $r$ enters. At $\alpha=0$ both sides
vanish; dividing by $\alpha$ and letting $\alpha\downarrow0$ recovers
$K(\Delta)=(1-\Delta)k(\Delta)$, the baseline condition \eqref{eq:uv_foc}, as
it must. The theorem below is therefore stated for $\alpha>0$; at $\alpha=0$
the model is the baseline, and under super-regularity its unique
equilibrium is the one of \Cref{thm_uniform_votes}(b) (\Cref{cor:crossing}).\footnote{%
A reader who simply substitutes $\alpha=0$ into \cref{eq:sym-foc} gets the
identity $2K(\Delta)=2K(\Delta)$ and no information; the content is at first
order in $\alpha$.}

\begin{theorem}[Uniform voters: the equilibrium at each $\alpha$]\label{thm:alien-unique}
Let $x\sim U(0,1)$ and let $k$ satisfy \Cref{ass:reg} with $K$ super-regular.
Then for every $\alpha>0$, \cref{eq:sym-foc} has a unique root
$\Delta^*(\alpha)\in\big(\tfrac\alpha\lambda,1\big)$, and the profile
$\big(\tfrac{1-\Delta^*}{2},\tfrac{1+\Delta^*}{2}\big)$ is the unique Nash
equilibrium up to relabeling. \guy{add in AMASS in short}
\end{theorem}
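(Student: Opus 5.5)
The proof has two parts: the root count, and the claim that the root is a best response. Necessity is already in hand from the discussion preceding the statement. By \Cref{n:lem:interior} every equilibrium is untied and interior. By \Cref{rem:deadzone} it has no dead zone. The two first-order conditions $K(b_i)=\tfrac2\lambda K(\Delta)$ force $b_1=b_2$, hence symmetry, so every equilibrium is a symmetric profile whose gap solves \cref{eq:sym-foc}. It therefore suffices to show that \cref{eq:sym-foc} has exactly one root in $(\tfrac\alpha\lambda,1)$ and that the corresponding symmetric profile is an equilibrium.

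\emph{Root.} Define $\Phi(\Delta):=\int_{b(\Delta)}^{\Delta}r(t)\,dt$ with $b(\Delta)=(\lambda\Delta-\alpha)/2$ as in \cref{eq:b-delta}. For $\Delta\le\alpha/\lambda$ we have $b\le0$, so $\lambda K(b)=0<2K(\Delta)$ and there is no root. As $\Delta\downarrow\alpha/\lambda$ we have $b\downarrow0$, and $\Phi\to\infty$ because $\int_0 r=\log K\big|_0=\infty$ when $K(0)=0$. At $\Delta=1$ we have $b=1$, so $\Phi(1)=0<\log\tfrac\lambda2$. Continuity then gives a root in $(\tfrac\alpha\lambda,1)$.

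For uniqueness, compute $\Phi'(\Delta)=r(\Delta)-\tfrac\lambda2 r(b)$. Since $\Delta-b=\alpha(1-\Delta)/2>0$ and super-regularity makes $r$ non-increasing, $r(b)\ge r(\Delta)$. Hence $\Phi'\le r(\Delta)(1-\tfrac\lambda2)=-\tfrac\alpha2 r(\Delta)<0$, so $\Phi$ is strictly decreasing and the root $\Delta^*$ is unique.

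\emph{Sufficiency.} Fix $c_2=\tfrac{1+\Delta^*}2$ and consider candidate $1$'s deviations in three ranges.

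On $[\tfrac{c_2}{1+\alpha},c_2)$ (the dead-zone region), \Cref{rem:deadzone} shows that $u_1$ is strictly decreasing, so no point there beats $\tfrac{c_2}{1+\alpha}$.

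On $(0,\tfrac{c_2}{1+\alpha}]$, the computation in the proof of \Cref{prop:master} with $v\equiv1$ gives
\[
\partial_{c_1}u_1=(1+\alpha)\Big[\tfrac{K(\Delta)}{\lambda}-\tfrac{K(\Delta)-K(b_1)}{\alpha}\Big],
\]
where $\Delta=c_2-c_1$ and $b_1=c_2-(1+\alpha)c_1$. This has the sign of $\lambda K(b_1)-2K(\Delta)$, i.e.\ of $\log\tfrac\lambda2-\big(\log K(\Delta)-\log K(b_1)\big)$. Differentiating in $c_1$, the bracketed log-difference has derivative $(1+\alpha)r(b_1)-r(\Delta)\ge\alpha r(b_1)>0$, again because $b_1<\Delta$ and $r$ is non-increasing. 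So the derivative changes sign at most once, from $+$ to $-$, and $u_1$ is quasi-concave on this range. Its unique stationary point is the $c_1$ with $\lambda K(b_1)=2K(\Delta)$, which by symmetry is $\tfrac{1-\Delta^*}2$. The endpoint $c_1=0$ is also covered by the single crossing, since $u_1$ is increasing there.

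At $c_1=c_2$ the payoff is zero. For cross-over deviations $c_1>c_2$, I would invoke \Cref{n:lem:reflection}: for a symmetric electorate, own-side optimality alone makes a mirror profile an equilibrium. Candidate $2$ then follows by symmetry.

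The step I expect to need the most care is the own-side optimality. It requires gluing the no-dead-zone and dead-zone regimes at $c_1=\tfrac{c_2}{1+\alpha}$, checking that $u_1$ is continuous there, which holds since the outer zero sits exactly at $0$. It also requires confirming that the single-crossing argument survives where $b_1\downarrow0$. There $\log K(b_1)\to-\infty$, so the sign is negative, which is consistent with the claimed pattern. If \Cref{n:lem:reflection} turned out not to apply verbatim under alienation, I would instead bound the far-side payoff directly: reflecting a deviation $c_1'>c_2$ turns it into a left-side position $1-c_1'$ against an opponent at $1-c_2=c_1^*$, and that payoff would be compared with $u_1^*$ using the same monotonicity of $r$.
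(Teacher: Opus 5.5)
Your proof is correct and follows essentially the paper's route: necessity comes from the discussion preceding the statement, existence and uniqueness of the root come from the strict decrease of $\int_{b(\Delta)}^{\Delta}r$ (using $r$ non-increasing), optimality on the no-dead-zone piece comes from a single $+$ to $-$ sign change of $K(b_1)-\tfrac2\lambda K(\Delta)$, and the dead-zone piece is handled by \Cref{rem:deadzone}. The one difference is the cross-over step, where the paper compares each far-side deviation directly with its own-side mirror (at distance $t$ from the opponent the payoff $A\,G(t)-\tfrac1\alpha G\big((t-\alpha\varrho)^+\big)$, with $A=\tfrac{2(1+\alpha)}{\alpha\lambda}$, increases in the room $\varrho$ behind the deviator), whereas you invoke \Cref{n:lem:reflection}; that lemma is stated for every $\alpha\ge0$ at $\beta=0$, so it applies verbatim and your fallback is not needed.
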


\begin{deferredproof}{\Cref{thm:alien-unique}}
Write $\Psi(\Delta):=\int_{b(\Delta)}^{\Delta}r-\log(\lambda/2)$ on
$\big[\tfrac\alpha\lambda,1\big)$, so that \cref{eq:sym-foc} reads
$\Psi(\Delta)=0$.

\emph{Necessity.} Every equilibrium is symmetric with gap a root of
\cref{eq:sym-foc}, as shown before the theorem.

\emph{Uniqueness of the root.} By \cref{eq:b-delta} the interval
$[b(\Delta),\Delta]$ has length $\tfrac\alpha2(1-\Delta)$, strictly decreasing
in $\Delta$, and its left endpoint $b(\Delta)$ is strictly increasing in
$\Delta$ (slope $\lambda/2>0$). Hence as $\Delta$ grows the interval both
shifts right and shrinks; since $r$ is non-increasing and positive,
$\int_b^\Delta r$ is strictly decreasing, so $\Psi$ is strictly decreasing. At
$\Delta=\alpha/\lambda$ we have $b=0$ and $\int_0^\Delta r=+\infty$ (as
$\log K(0^+)=-\infty$), so $\Psi>0$ there; at $\Delta\uparrow1$ we have
$b\uparrow1$, the interval degenerates, and $\Psi\to-\log(\lambda/2)<0$. A
unique root $\Delta^*$ follows.

\emph{The root is an equilibrium.} Fix the opponent at $c_2$ and let candidate
1 choose $c_1\in[0,c_2)$. In the no-dead-zone region
$u_1=A\,G(\Delta)-\tfrac1\alpha G(b_1)$ with
$A=\tfrac{2(1+\alpha)}{\alpha\lambda}$, so
\[
\frac{\partial u_1}{\partial c_1}
=\frac{1+\alpha}{\alpha}\Big[K(b_1)-\tfrac2\lambda K(\Delta)\Big],
\qquad\text{whose sign is that of}\quad
\log\tfrac\lambda2-\int_{b_1}^{\Delta}r .
\]
Moving $c_1$ toward the opponent decreases $\Delta$ at unit rate and $b_1$ at
rate $1+\alpha$, so the interval $[b_1,\Delta]$ lengthens and moves left; with
$r$ non-increasing, $\int_{b_1}^\Delta r$ is strictly increasing in $c_1$.
Hence $\partial_{c_1}u_1$ changes sign exactly once, from $+$ to $-$:
$u_1(\cdot,c_2)$ is strictly single-peaked on the own side, and by
\Cref{rem:deadzone} the dead-zone region contains no maximizer. For a
cross-over deviation note that, at a fixed distance $t$ from the
opponent, the payoff is
$A\,G(t)-\tfrac1\alpha G\big((t-\alpha \varrho)^+\big)$, where $\varrho$ is
the room behind the deviator ($\varrho=c_2-t$ on the left, $\varrho=1-c_2-t$
on the right); this is increasing in $\varrho$, and at the
symmetric profile $c_2>\tfrac12$ gives $c_2-t>1-c_2-t$. Hence every far-side
deviation is dominated by its mirror image on the own side, and the tie yields
$0$. The profile is therefore a global mutual best response.
\end{deferredproof}

\noindent Super-regularity is used only for
uniqueness and for the equilibrium property: the boundary values of $\Psi$
in the proof use no monotonicity, so \cref{eq:sym-foc} has a root for every
$K$.

\begin{corollary}\label{cor:alien-uu}
If $x,\kappa\sim U(0,1)$ then
$\Delta^*(\alpha)=\dfrac{2+\alpha}{4+\alpha}$, attained at the symmetric
profile $\big(\tfrac1{4+\alpha},\tfrac{3+\alpha}{4+\alpha}\big)$, with
equilibrium turnout $\tfrac{2(3+\alpha)}{(4+\alpha)^2}$.
\end{corollary}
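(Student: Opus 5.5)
The plan is to specialize \Cref{thm:alien-unique} to $K(t)=t$ and then compute turnout by hand.

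\emph{Applicability.} Uniform costs have $k\equiv1$, which is non-increasing, so $K$ is super-regular (equivalently, $K/k=t$ is increasing). \Cref{ass:reg} holds trivially. So for every $\alpha>0$ the unique equilibrium is the symmetric profile whose gap is the unique root of \cref{eq:sym-foc} in $(\alpha/\lambda,1)$.

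\emph{Solving for the gap.} With $K(t)=t$, \cref{eq:sym-foc} reads $\lambda b=2\Delta$. Substituting $b=(\lambda\Delta-\alpha)/2$ from \cref{eq:b-delta} gives
\[
\lambda(\lambda\Delta-\alpha)=4\Delta,
\qquad\text{so}\qquad
\Delta^*=\frac{\alpha\lambda}{\lambda^2-4}=\frac{\alpha\lambda}{\alpha(\alpha+4)}=\frac{2+\alpha}{4+\alpha}.
\]
Two checks remain. First, $\Delta^*\in(\alpha/\lambda,1)$: we have $\Delta^*<1$ trivially, and $\Delta^*>\alpha/\lambda$ reduces to $\lambda^2>\alpha(4+\alpha)$, i.e.\ $4>0$. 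Second, the case $\alpha=0$ is consistent: the formula gives $\tfrac12$, matching \Cref{cor_uu}. The platforms are $c_1=(1-\Delta^*)/2=1/(4+\alpha)$ and $c_2=(3+\alpha)/(4+\alpha)$.

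\emph{Turnout.} By symmetry $T=2u_1$. At the equilibrium there is no outer dead zone, since $b=(\lambda\Delta^*-\alpha)/2\ge0$. Candidate~1's share is then the area under her positive tent.
\begin{itemize}
\item The outer arm is a trapezoid over $[0,c_1]$ with heights $b$ and $\Delta$, so its area is $c_1(b+\Delta)/2$.
\item The inner arm is a triangle of height $\Delta$ and base $\Delta/\lambda$, so its area is $\Delta^2/(2\lambda)$.
\end{itemize}
Plugging in: $b=\tfrac{\lambda^2-\alpha(4+\alpha)}{2(4+\alpha)}=\tfrac{2}{4+\alpha}$ and $b+\Delta=\tfrac{4+\alpha}{4+\alpha}=1$. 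Hence the outer area is $\tfrac{1}{2(4+\alpha)}$, and the inner area is $\tfrac{2+\alpha}{2(4+\alpha)^2}$. Summing,
\[
u_1=\frac{(4+\alpha)+(2+\alpha)}{2(4+\alpha)^2}=\frac{3+\alpha}{(4+\alpha)^2},
\qquad
T=\frac{2(3+\alpha)}{(4+\alpha)^2}.
\]
Checking $\alpha=0$ gives $6/16=\tfrac38$, as in \Cref{cor_uu}.

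The only step needing care is the turnout area, specifically confirming there is no outer dead zone. This holds by \Cref{rem:deadzone}, or directly because $b>0$.
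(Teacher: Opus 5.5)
Your proposal is correct and follows the route the paper intends: specialize \Cref{thm:alien-unique} to $K=\mathrm{id}$, where \cref{eq:sym-foc} becomes linear in $\Delta$ (as in the proof of \Cref{cor:beta-uu}). Your trapezoid-plus-triangle area is the closed form $u_1=A\,G(\Delta)-\tfrac1\alpha G(b)$ with $G(t)=t^2/2$, and the algebra checks out, including $b=\tfrac{2}{4+\alpha}$ and $u_1=\tfrac{3+\alpha}{(4+\alpha)^2}$.
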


\noindent At $\alpha=0$ this is \Cref{cor_uu}: the gap is $\tfrac12$, the
profile is $\big(\tfrac14,\tfrac34\big)$, and turnout is $\tfrac38$. The
benchmark is the left endpoint of a one-parameter family, and every quantity
in it moves monotonically from there.

\rmr{what about uniform costs? is there a corresponding result to the quartile equilibrium? Prop~\ref{prop:mass} says something about it but very unclear}\guy{Claude: no, and \Cref{rem:no-quantile} now says so with the reason. The short version: \Cref{cor_quarter} is distribution-free only because at $\alpha=0$ the inner zero sits at the midpoint, whose quantile a symmetric electorate fixes at $\tfrac12$ without reference to $V$. Alienation moves that zero inward to a point that depends on $c_1$, and $V$ enters the condition properly. What survives is the \emph{equilibrium}, not the fixed quantile: every profile in the table below is a mutual global best response. On \Cref{prop:mass}---that was your separate note, and I think its lead-in was the cause; see there.}

\begin{remark}[Uniform costs: no fixed-quantile analogue]\label{rem:no-quantile}\stat{numerical observation}
The quartile profile of \Cref{cor_quarter} is the same for every symmetric
electorate, and alienation removes the reason. Under uniform costs
\cref{eq:master-unifcost} reads $V(\zin{1})+V(\zout{1}\vee0)=2V(c_1)$. At
$\alpha=0$ the inner zero is the midpoint and the outer zero lies off the
policy space, so for a symmetric electorate the condition is
$\tfrac12=2V(c_1)$: the distribution enters only through $V(m)=\tfrac12$, and
every symmetric electorate lands on the quartiles. For $\alpha>0$ the inner
zero moves inward to $\zin{1}=c_1+\Delta/\lambda$, which moves with $c_1$, so
$V$ enters in its own right and the equilibrium quantile becomes a property of
the electorate. Across uniform, $\mathrm{Beta}(2,2)$, $\mathrm{Beta}(5,5)$,
arcsine and cosine-peaked electorates, $V(c_1^*)$ agrees to $10^{-12}$ at
$\alpha=0$ and then separates: it spans $[0.194,0.207]$ at $\alpha=1$ and
$[0.125,0.153]$ at $\alpha=4$, against $1/(4+\alpha)$ for uniform voters
(\Cref{cor:alien-uu}). Each of those profiles is still an equilibrium---we
verified mutual global best responses---so what alienation costs is the
distribution-freeness of the quantile, not the equilibrium.
\end{remark}
\subsection{Costs and Polarization at a Fixed $\alpha$}\label{sec:alien:fixed}

At a fixed $\alpha$ the comparative statics of \Cref{sec:base:polar} carry
over---the cost channel (\Cref{thm:rhr-alpha}) and the transmission of mass
polarization (\Cref{prop:mass})---while the one-directional link from
polarization to turnout does not (\Cref{prop:turnout}).

The cost channel of \Cref{sec:base:polar} survives alienation intact: at
every $\alpha$ the two forces push the same way.

\begin{theorem}[The cost channel survives alienation]\label{thm:rhr-alpha}
Let $x\sim U(0,1)$ and let $k,\widetilde k$ both satisfy the hypotheses of
\Cref{thm:alien-unique}. If $\widetilde r\ge r$ pointwise on $(0,1)$---i.e.\
$\widetilde K$ is higher than $K$ in the reverse-hazard-rate order---then
$\widetilde\Delta^*(\alpha)\ge\Delta^*(\alpha)$ for every $\alpha>0$, and for
$\alpha=0$ by \Cref{prop:rh}. As there, the comparison is also one of
stochastically higher costs whenever $\widetilde K(1)\le K(1)$. \guy{key result in AMASS}
\end{theorem}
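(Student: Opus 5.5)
The plan is to work entirely with the scalar characterization of \Cref{thm:alien-unique}. For fixed $\alpha>0$ define, for each cost distribution,
\[
\Psi(\Delta):=\int_{b(\Delta)}^{\Delta}r(t)\,dt-\log\tfrac\lambda2,
\qquad
\widetilde\Psi(\Delta):=\int_{b(\Delta)}^{\Delta}\widetilde r(t)\,dt-\log\tfrac\lambda2,
\]
on $\big[\tfrac\alpha\lambda,1\big)$, with $b(\Delta)=\tfrac{\lambda\Delta-\alpha}2$ as in \cref{eq:b-delta}. The function $b$ depends only on $\alpha$ and not on $K$, so the two functions differ only in their integrands. By \Cref{thm:alien-unique}, each equilibrium gap is the unique root of its own function, and that proof shows each function is strictly decreasing. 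This uses super-regularity, which both $k$ and $\widetilde k$ are assumed to have.

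The comparison is then a single-crossing argument. Evaluate $\widetilde\Psi$ at the incumbent root $\Delta^*=\Delta^*(\alpha)$. Since $\Delta^*>\alpha/\lambda$, we have $b(\Delta^*)>0$, so the interval $[b(\Delta^*),\Delta^*]$ is a nondegenerate compact subinterval of $(0,1)$. On it both $r$ and $\widetilde r$ are finite and continuous by \Cref{ass:reg}. Pointwise $\widetilde r\ge r$ therefore gives
\[
\widetilde\Psi(\Delta^*)\ \ge\ \Psi(\Delta^*)=0 .
\]
Because $\widetilde\Psi$ is strictly decreasing, is positive near $\alpha/\lambda$, and tends to $-\log(\lambda/2)<0$ as $\Delta\uparrow1$, its unique root $\widetilde\Delta^*$ cannot lie strictly below $\Delta^*$. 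Hence $\widetilde\Delta^*(\alpha)\ge\Delta^*(\alpha)$. As in the remark after \Cref{prop:rh}, this argument in fact needs the order only at the incumbent interval $[b(\Delta^*),\Delta^*]$. The case $\alpha=0$ is exactly \Cref{prop:rh}; its hypotheses hold because super-regularity implies regularity.

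For the last sentence of the statement, I will reuse the integration already made in \Cref{cor:rhr-canonical}: $\log K(t)=\log K(1)-\int_t^1 r$. Then $\widetilde r\ge r$ gives $\int_t^1\widetilde r\ge\int_t^1 r$, so
\[
\widetilde K(t)\le \frac{\widetilde K(1)}{K(1)}\,K(t)\le K(t)
\]
whenever $\widetilde K(1)\le K(1)$. This is the first-order stochastic dominance claim.

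The only step that needs care is confirming that strict monotonicity of $\widetilde\Psi$ and the location of its root come from \Cref{thm:alien-unique} applied to $\widetilde k$, not to $k$. This is why both distributions must be super-regular. Beyond that, I need finiteness of the integrals at $\Delta^*$, which holds because $b(\Delta^*)>0$. No new analysis of best responses is required, since the equilibrium property of the root is already supplied by \Cref{thm:alien-unique}.
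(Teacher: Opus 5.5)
Your proposal is correct and is essentially the paper's proof. Both show $\widetilde\Psi(\Delta^*)\ge\Psi(\Delta^*)=0$ at the incumbent root and then use the strict monotonicity of $\widetilde\Psi$ from the proof of \Cref{thm:alien-unique}. Your derivation of the stochastic-dominance sentence from $\log K(t)=\log K(1)-\int_t^1 r$ is the second claim in the proof of \Cref{prop:rh}, which the paper's proof of this theorem leaves implicit.
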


\begin{proof}
$\widetilde\Psi(\Delta)=\int_{b(\Delta)}^{\Delta}\widetilde r-\log(\lambda/2)
\ge\Psi(\Delta)$ for every $\Delta$, and both are strictly decreasing with a
unique root (proof of \Cref{thm:alien-unique}). Hence
$\widetilde\Psi(\Delta^*)\ge\Psi(\Delta^*)=0=\widetilde\Psi(\widetilde\Delta^*)$
and therefore $\widetilde\Delta^*\ge\Delta^*$.
\end{proof}

At $\alpha=0$ this is exactly the baseline comparative static; the point is
that neither force switches the other off: at any fixed $\alpha$, higher
costs still polarize, and at any fixed $K$, stronger alienation still
polarizes. (The theorem orders levels, not increments: along the power family
$K(t)=t^p$ the gain in $\Delta^*$ from raising $\alpha$ is smaller at larger
$p$.)

\paragraph{Mass polarization under alienation}
\Cref{rem:no-quantile} leaves this section unable to say \emph{where} the
equilibrium sits for a general electorate. It can still say exactly how that
equilibrium \emph{moves} when the electorate itself polarizes---how much elite
polarization a given amount of mass polarization buys, and what fixes the
exchange rate. This is a question about $V$, not about $\alpha$: the answer
below holds at every $\alpha$, the baseline included.

\rmr{proposition very unclear. recall: every result either says that $\alpha$ polarizes, or shows that other parameters have same effect as before. What does this proposition show?}\guy{Claude: neither of the two, which is exactly why it would not sit still. It is about a third parameter---$\sigma$, the spread of the electorate---and it holds at every $\alpha$ including $0$. The lead-in was the real problem: it called the proposition ``an exact substitute'' for the quartile characterization, and a comparative static is not a substitute for a location. Rewritten to say what question it answers, with the answer summarized after the statement. Its own content is the mass-to-elite transmission rate: exactly proportional for power costs, and amplified or damped according to whether the elasticity of $K$ rises or falls. That is your ``third-degree'' result from the note on \Cref{sec:alien}, so it is labelled as a refinement rather than left at the same level as the headline.}
\begin{proposition}[Mass polarization scales elite polarization]\label{prop:mass}
Let $V_\sigma$ be the law of $\tfrac12+\sigma(a-\tfrac12)$ where $a\sim V_0$
and $\sigma\in(0,1]$, so that $\sigma$ measures the spread of the electorate
(for $\sigma<1$ the support of $V_\sigma$ is
$[\tfrac{1-\sigma}2,\tfrac{1+\sigma}2]$, outside which \Cref{ass:reg} fails;
the proof first shows that no best response lies there). Then
the game $(V_\sigma,K,\alpha)$ is isomorphic to $(V_0,K_\sigma,\alpha)$ under
$c=\tfrac12+\sigma(a-\tfrac12)$, where $K_\sigma(t):=K(\sigma t)$.
Consequently:
\begin{enumerate}
\item[(a)] \emph{(power costs)} for $K(t)=t^p$ (\Cref{ex:power}),
  $K_\sigma=\sigma^pK$, so by \Cref{prop:scale-gen} the equilibrium positions
  in $a$-space do not move and $\Delta^*(\sigma)=\sigma\cdot\Delta^*(1)$ for
  every $\alpha\ge0$ and every $V_0$: elite polarization is \emph{exactly
  proportional} to mass polarization (uniform costs are $p=1$);
\item[(b)] \emph{(general costs; $a\sim U(0,1)$ and $K$ super-regular,
  as in \Cref{thm:alien-unique})} the reverse hazard rate of $K_\sigma$ is
  $r_\sigma(t)=\eta(\sigma t)/t$, where $\eta(u)=u\,k(u)/K(u)$ is the
  elasticity of $K$. If $\eta$ is non-decreasing then $\sigma\mapsto r_\sigma$
  increases pointwise, so by \Cref{thm:rhr-alpha} the equilibrium gap
  $\Delta_a(\sigma)$ of $(V_0,K_\sigma,\alpha)$ increases and
  $\Delta^*(\sigma)=\sigma\Delta_a(\sigma)$ grows
  \emph{super}-proportionally: a more dispersed electorate is amplified.
  If $\eta$ is non-increasing, it is damped. Constant elasticity---the power
  family of part~(a)---is the exact knife-edge.
\end{enumerate}
\end{proposition}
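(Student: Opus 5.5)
The plan is to prove the isomorphism first, by a change of variables, and then read off (a) and (b) from results already in hand.

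\emph{Step 1 (benefits rescale).} Under the affine map $x=\tfrac12+\sigma(a-\tfrac12)$, applied both to voters and to platforms $c_i=\tfrac12+\sigma(a_i-\tfrac12)$, every distance scales by $\sigma$: $d_i(x)=\sigma|a-a_i|$. Since \cref{eq:benefit-gen} is linear in the distances, $B_i(x)=\sigma B_i^{a}(a)$. Here $B_i^a$ is the benefit computed in $a$-space at the profile $(a_1,a_2)$ with the same $\alpha$. The positive-part structure of \cref{eq:shares} is preserved, so $p_i(x)=K\big(\sigma B_i^a(a)^+\big)=K_\sigma\big(B_i^a(a)^+\big)$. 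Since $V_\sigma$ is the pushforward of $V_0$ under the map, the vote shares satisfy $u_i^{(V_\sigma,K)}(c_1,c_2)=\int K_\sigma(B_i^{a+})\,dV_0(a)=u_i^{(V_0,K_\sigma)}(a_1,a_2)$. The same then holds for $U_i$ at every $\beta$. I would also check that $K_\sigma$ is admissible in the sense of \Cref{ass:reg}: $K_\sigma(0)=0$, $K_\sigma(1)=K(\sigma)\in(0,1]$, and $k_\sigma=\sigma k(\sigma\,\cdot)$ is continuous and positive on $(0,1)$. The defect is allowed by \Cref{ass:reg}.

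\emph{Step 2 (off-support platforms).} The map sends $a\in[0,1]$ onto the support $S=[\tfrac{1-\sigma}2,\tfrac{1+\sigma}2]$. The isomorphism of Step 1 therefore covers only platforms in $S$, and I must show that a best response never lies outside $S$. Suppose candidate~1 sits at $c_1<\tfrac{1-\sigma}2$. Then every voter lies to her right, and on her side of the opponent each voter's benefit is $d_2-(1+\alpha)d_1$. If she is left of $c_2$, moving $c_1$ up to the edge of $S$ strictly raises this benefit for every voter between her and her inner zero, because $d_1$ falls while $d_2$ does not rise there. If she is right of $c_2$, the benefit stays pointwise negative, so her share is $0$. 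In either case the payoff does not fall, and it strictly rises whenever she has any voters. For $\beta>0$ I would additionally note that such a move only shrinks the opponent's region. The case $c_1>\tfrac{1+\sigma}2$ is the mirror image. This monotonicity argument is the step I expect to be the main obstacle, chiefly the bookkeeping of the cross-over and inner-zero cases. It gives that best-response correspondences, and hence equilibria, correspond under the map.

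\emph{Step 3 (part (a)).} For $K(t)=t^p$ we have $K_\sigma=\sigma^pK$. By \Cref{prop:scale-gen}, $(V_0,K_\sigma,\alpha)$ has the same equilibria as $(V_0,K,\alpha)$, so the $a$-space gap equals $\Delta^*(1)$. Mapping back multiplies distances by $\sigma$, which gives $\Delta^*(\sigma)=\sigma\Delta^*(1)$.

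\emph{Step 4 (part (b)).} First, $r_\sigma(t)=\sigma k(\sigma t)/K(\sigma t)=\eta(\sigma t)/t$. Second, $r_\sigma(t)=\sigma r(\sigma t)$ is non-increasing in $t$ because $r$ is, so $K_\sigma$ is super-regular and \Cref{thm:alien-unique} applies. Third, if $\eta$ is non-decreasing then, for $\sigma'\ge\sigma$, $r_{\sigma'}\ge r_\sigma$ pointwise. \Cref{thm:rhr-alpha} (for $\alpha>0$), or \Cref{prop:rh} (for $\alpha=0$, since super-regularity implies regularity), then gives that $\Delta_a(\sigma)$ is non-decreasing. Step 1 yields $\Delta^*(\sigma)=\sigma\Delta_a(\sigma)$, so $\Delta^*(\sigma)/\sigma$ is non-decreasing, which is the super-proportional growth claimed. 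The non-increasing-$\eta$ case reverses every inequality. Finally, constant $\eta$ forces $K\propto t^p$, which is part~(a).
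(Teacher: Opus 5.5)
Your overall route is the paper's. You rescale so that $B_i=\sigma B_i^a$ and the game becomes $(V_0,K_\sigma,\alpha)$, exclude platforms outside $S=[\tfrac{1-\sigma}2,\tfrac{1+\sigma}2]$, and then read (a) off \Cref{prop:scale-gen} and (b) off the reverse-hazard-rate comparison. Steps~1, 3 and~4 are correct as written, including the admissibility and super-regularity checks on $K_\sigma$.

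Step~2 does not go through as stated. The case ``if she is right of $c_2$, the benefit stays pointwise negative, so her share is~$0$'' is false. If $c_2<c_1<L:=\tfrac{1-\sigma}2$, every voter in the support is strictly closer to $c_1$, and $B_1(x)=(c_1-c_2)-\alpha(x-c_1)$. At $\alpha=0$ this equals $c_1-c_2>0$ on the whole support, so $u_1=K(c_1-c_2)>0$. The case split is also unnecessary, as is the inner-zero and cross-over bookkeeping you anticipated. The point is that $d_2$ does not depend on $c_1$, and every voter $x\ge L$ stays weakly to the right of the deviator as she moves from $c_1$ to $L$, even if the move passes over $c_2$. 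So $B_1=d_2-(1+\alpha)d_1$ rises pointwise at rate $1+\alpha$, and $B_2=d_1-(1+\alpha)d_2$ falls pointwise at rate $1$, for every voter and wherever $c_2$ lies. Hence $u_1$ is non-decreasing and $u_2$ non-increasing along the move, and $U_1(c_1,c_2)\le U_1(L,c_2)$ for every $\beta\ge0$. This single observation is how the paper does it.

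That weak inequality only shows that nothing is lost by restricting to $S$. To conclude that every equilibrium of the $V_\sigma$-game lies in $S^2$, which the correspondence of equilibria needs, you must show that an off-support $c_1$ is \emph{not} a best response. Your ``it strictly rises whenever she has any voters'' leaves out the case where she has none.

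If $c_2\neq L$ this case is harmless. After the move the voter at $L$ has $B_1=|L-c_2|\in(0,1)$, so near $L$ the benefit is positive and strictly larger than before. Since $K$ is strictly increasing there, $u_1$ rises strictly even if it was $0$.

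If $c_2=L$, the move fails. Here $B_1=(c_1-L)-\alpha(x-c_1)<0$ on the support, so $u_1(c_1,L)=0$, and the move to $L$ produces a tie, also worth $0$.
\begin{itemize}
\item At $\beta=0$ you need a different deviation, for example any interior point $c_1'$ of $S$, where $B_1(c_1';c_1',L)=c_1'-L>0$ gives a positive share.
\item At $\beta>0$ strictness comes from the opponent instead. Her share is positive at $(c_1,L)$, since the voter at $L$ has $B_2=L-c_1>0$, and it vanishes at the tie.
\end{itemize}
With Step~2 repaired along these lines, the proof is complete.
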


\noindent So the transmission from mass to elite polarization has a rate, and
the cost distribution sets it: proportional under power costs, amplified when
the elasticity of $K$ rises and damped when it falls. It says nothing about
whether $\alpha$ polarizes, and holds equally at $\alpha=0$.

\begin{deferredproof}{\Cref{prop:mass}}
No best response lies outside $\operatorname{supp}V_\sigma$, so the strategy
space may be taken to be that interval, on which the affine map is a
bijection. Indeed, let $L:=\min\operatorname{supp}V_\sigma$, write
$B_i(x;c_1,c_2)$ for the benefit at the profile $(c_1,c_2)$, and suppose
$c_1<L$, wherever $c_2$ lies. Every voter $x$ in the support has $x>c_1$, so
$\partial B_1(x)/\partial c_1=1+\alpha>0$ and
$\partial B_2(x)/\partial c_1=-1<0$ pointwise; as $K$ is non-decreasing,
$u_1$ is non-decreasing and $u_2$ non-increasing in $c_1$ on $[c_1,L]$, whence
$U_1(c_1,c_2)\le U_1(L,c_2)$ for every $\beta\ge0$. Moreover $c_1$ is not a
best response. If $c_2\ne L$, the voter at $L$ has
$B_1(L;L,c_2)=|L-c_2|\in(0,1)$, so by continuity $B_1(\cdot;L,c_2)\in(0,1)$ on
a set of positive $V_\sigma$-measure; there $K$ is strictly increasing
(\Cref{ass:reg}) and $B_1(\cdot;c_1,c_2)=B_1(\cdot;L,c_2)-(1+\alpha)(L-c_1)$
is pointwise smaller, hence $u_1(c_1,c_2)<u_1(L,c_2)$ and
$U_1(c_1,c_2)<U_1(L,c_2)$. If $c_2=L$, then
$B_1(x;c_1,L)=(c_1-L)-\alpha(x-c_1)<0$ on the support, so $u_1(c_1,L)=0$,
while $B_2(L;c_1,L)=L-c_1>0$ gives $u_2(c_1,L)>0$: for $\beta>0$ the tied
profile $(L,L)$, at which both shares vanish, is strictly better, and for
$\beta=0$ any $c_1'$ in the interior of the support, where
$B_1(c_1';c_1',L)=c_1'-L>0$, earns a strictly positive share. The right-hand
case is the mirror image.

Under the affine map $d_i(x)=\sigma|a-a_i|$, so $B_i(x)=\sigma B_i^0(a)$ and
$u_i=\int K(\sigma B_i^0(a))\,dV_0(a)$, which is the payoff of the game
$(V_0,K_\sigma,\alpha)$. Part (a) is \Cref{prop:scale-gen} applied to
$K_\sigma=\sigma^pK$ (legitimate since all benefits lie in $[0,1]$ and
$\sigma\le1$). Part (b) computes $r_\sigma(t)=\sigma k(\sigma t)/K(\sigma t)
=\eta(\sigma t)/t$ and applies \Cref{thm:rhr-alpha} to $(V_0,K_\sigma,\alpha)$.
Since $r_\sigma(t)=\sigma r(\sigma t)$ is non-increasing whenever $r$ is, and
$K_\sigma(1)=K(\sigma)\in(0,1]$, every $K_\sigma$ satisfies the hypotheses of
that theorem. That $\Delta^*(\sigma)$ itself rises in $\sigma$ at every
$\alpha>0$, whether the transmission is amplified or damped, follows from
writing \cref{eq:sym-foc} for $(V_0,K_\sigma,\alpha)$ in $x$-space:
substituting $u=\sigma t$ and using \cref{eq:b-delta}, $\Delta^*$ solves
$\int_{\Delta^*-\frac\alpha2(\sigma-\Delta^*)}^{\Delta^*}r\,du=\log\frac\lambda2$,
whose left side is strictly increasing in $\sigma$ and, for $r$
non-increasing, strictly decreasing in $\Delta^*$.
\end{deferredproof}

In words: \emph{stretching a uniform electorate always produces more polarized
candidates}, at every level of alienation, and the cost distribution decides
whether the transmission is one-for-one (constant elasticity, e.g.\ power-law
costs), amplified or damped. An increasing
cost elasticity (participation that grows relatively more responsive as the
stakes rise) amplifies mass polarization; a decreasing one damps it; power costs
transmit it exactly. Relative to the interquartile comparison of
\Cref{sec:base:polar}, which needed uniform costs, symmetry and $\rho_v\le4$,
part (a) keeps only the power shape and holds at every $\alpha\ge0$, while
part (b) drops that too, at the price of a uniform electorate; both compare affine
stretches of a fixed electorate rather than arbitrary pairs of symmetric ones. At $\alpha=0$ the transmission has a
one-line closed form: applying \Cref{cor:crossing} to $(V_0,K_\sigma,0)$, the
equilibrium gap solves
\[
r(\Delta^*)\;=\;\frac1{\sigma-\Delta^*}\,,
\]
so widening the electorate relaxes the demand side of the crossing, and
$\Delta^*$ is strictly increasing in $\sigma$ whenever $r$ is
non-increasing. The elasticity of $K$ governs only the \emph{rate} of
transmission, never its direction. (Uniform costs: $\Delta^*=\sigma/2$
exactly.)

\paragraph{Turnout: the reverse channel reappears}
The one-directional causality of \Cref{sec:base:polar}, in which polarization
raises turnout and never the reverse, is an artifact of $\alpha=0$.

\begin{proposition}[Non-monotone turnout under alienation]\label{prop:turnout}
Let $x\sim U(0,1)$, let $\alpha>0$, and let $T(\Delta)=u_1+u_2$ be turnout at
the symmetric profile with gap $\Delta\ge\alpha/\lambda$. Then
\[
T'(\Delta)=\frac2\alpha\Big[\tfrac{2(1+\alpha)}{\lambda}K(\Delta)-\tfrac\lambda2K(b)\Big],
\quad\text{so}\quad
T'(\Delta)<0\iff\tfrac{K(b)}{K(\Delta)}>\tfrac{4(1+\alpha)}{\lambda^2}.
\]
Consequently:
\begin{enumerate}
\item[(a)] at any equilibrium, \cref{eq:sym-foc} gives
  $K(b)/K(\Delta)=2/\lambda$ and hence
  $T'(\Delta^*)=\tfrac2\lambda K(\Delta^*)>0$: locally, more polarized
  candidates still raise turnout;
\item[(b)] but for $\alpha>0$ turnout is \emph{not} globally increasing: for
  every $K$, $T'(1)=-\tfrac\alpha\lambda K(1)<0$, so the turnout curve turns
  over somewhere between the equilibrium and full separation. In the
  doubly-uniform benchmark $T$ is single-peaked with peak at
  $\bar\Delta(\alpha)=\tfrac{(2+\alpha)^2}{\alpha^2+6\alpha+4}<1$, and
  strictly decreasing on $(\bar\Delta,1]$. E.g.\ $\bar\Delta(1)=\tfrac9{11}
  \approx0.818$ while $\Delta^*(1)=0.6$.
\end{enumerate}
\end{proposition}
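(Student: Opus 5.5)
The plan is to compute turnout in closed form along the symmetric diagonal and differentiate it. At the symmetric profile $c_1=\tfrac{1-\Delta}2$, $c_2=\tfrac{1+\Delta}2$, the hypothesis $\Delta\ge\alpha/\lambda$ is, by \cref{eq:b-delta}, exactly $b\ge0$. This is the condition $\zout{1}\le0$, so there is no outer dead zone. Each tent then consists of two linear pieces: an outer arm of slope $\alpha$ running from height $b$ at the boundary to height $\Delta$ at $c_i$, and an inner arm of slope $\lambda$ falling from $\Delta$ to $0$ at $\zin{i}$. With $v\equiv1$, substituting $t=B_1(x)$ on each arm gives
\[
u_1=\tfrac1\alpha\big(G(\Delta)-G(b)\big)+\tfrac1\lambda G(\Delta)
=\tfrac{2(1+\alpha)}{\alpha\lambda}\,G(\Delta)-\tfrac1\alpha G(b).
\]
This is the formula $u_1=A\,G(\Delta)-\tfrac1\alpha G(b_1)$ already used in the proof of \Cref{thm:alien-unique}, now evaluated at $b_1=b$. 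By symmetry $u_2=u_1$, so $T=2u_1$.

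Next I differentiate in $\Delta$. By \cref{eq:b-delta}, $db/d\Delta=\lambda/2$, and $G'=K$, so
\[
T'(\Delta)=2\Big[\tfrac{2(1+\alpha)}{\alpha\lambda}K(\Delta)-\tfrac{\lambda}{2\alpha}K(b)\Big],
\]
which is the displayed formula. The sign criterion follows by dividing by $K(\Delta)>0$.

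For part (a), I substitute the equilibrium condition \cref{eq:sym-foc}, $K(b)=\tfrac2\lambda K(\Delta)$. The bracket becomes $\big(\tfrac{2(1+\alpha)}{\lambda}-1\big)K(\Delta)=\tfrac\alpha\lambda K(\Delta)$, so $T'(\Delta^*)=\tfrac2\lambda K(\Delta^*)>0$.

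For part (b), at $\Delta=1$ we have $b=\tfrac{\lambda-\alpha}2=1$, and the bracket equals $K(1)\tfrac{4(1+\alpha)-\lambda^2}{2\lambda}=-\tfrac{\alpha^2}{2\lambda}K(1)$, giving $T'(1)=-\tfrac\alpha\lambda K(1)<0$. Since $T'(\Delta^*)>0>T'(1)$, continuity of $T'$ forces a sign change between the equilibrium and full separation.

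In the doubly-uniform case $K(t)=t$, so $T'$ is affine in $\Delta$. Its slope is proportional to $\tfrac{2(1+\alpha)}\lambda-\tfrac{\lambda^2}4$, which is negative because $\lambda^3-8(1+\alpha)=\alpha(\alpha^2+6\alpha+4)>0$. Hence $T$ is strictly concave on the admissible range, and therefore single-peaked. Solving $8(1+\alpha)\Delta=\lambda^2(\lambda\Delta-\alpha)$ gives
\[
\bar\Delta=\frac{\alpha\lambda^2}{\lambda^3-8(1+\alpha)}=\frac{(2+\alpha)^2}{\alpha^2+6\alpha+4}.
\]
This is $<1$ because $4+4\alpha<4+6\alpha$. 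It also exceeds $\Delta^*(\alpha)$ of \Cref{cor:alien-uu}, since $T'(\Delta^*)>0$. The value $\bar\Delta(1)=9/11$ follows by substitution, and $\Delta^*(1)=3/5$ comes from \Cref{cor:alien-uu}.

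The main point needing care is the domain of validity of the closed form. The formula uses $b\ge0$; below $\alpha/\lambda$ the outer arm is clipped at its zero and $u_1$ takes the dead-zone form of \Cref{rem:deadzone}. That is why the statement restricts to $\Delta\ge\alpha/\lambda$. I would check two things. First, by \Cref{thm:alien-unique}, $\Delta^*>\alpha/\lambda$, so the equilibrium lies in this range. Second, $\bar\Delta\ge\alpha/\lambda$, which is immediate because $T'$ is positive at $\Delta^*$. Everything else is routine calculus.
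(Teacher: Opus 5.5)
Your proof is correct and follows the paper's argument: write $T=2\big[A\,G(\Delta)-\tfrac1\alpha G(b)\big]$ on the no-dead-zone range, differentiate using $db/d\Delta=\lambda/2$, substitute \cref{eq:sym-foc} for (a), and for (b) evaluate at $b=1$ and use continuity of $T'$. The only cosmetic difference is in the doubly-uniform case: you get single-peakedness from $T'$ being affine with negative slope, which gives strict concavity, whereas the paper uses that $b/\Delta$ is increasing in $\Delta$; the two are equivalent here.
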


\begin{deferredproof}{\Cref{prop:turnout}}
Differentiate $T=2\big[A\,G(\Delta)-\tfrac1\alpha G(b)\big]$ using
$\mathrm db/\mathrm d\Delta=\lambda/2$ and $G'=K$; part~(a) substitutes
\cref{eq:sym-foc}. For (b), at $\Delta=1$ we have $b=1$ by
\cref{eq:b-delta}, so the displayed derivative gives
$T'(1)=\tfrac2\alpha K(1)\big[\tfrac{2(1+\alpha)}\lambda-\tfrac\lambda2\big]
=-\tfrac\alpha\lambda K(1)<0$, while $T'(\Delta^*)>0$ by (a); since $T'$ is
continuous, $T$ turns over in $(\Delta^*,1)$. For the benchmark put
$K=\mathrm{id}$: $T'<0$ iff
$b/\Delta>4(1+\alpha)/\lambda^2$, and substituting $b=(\lambda\Delta-\alpha)/2$
and solving gives the stated $\bar\Delta$ (the ratio $b/\Delta$ is increasing
in $\Delta$, so $T$ is single-peaked). That
$\Delta^*(\alpha)=\tfrac{2+\alpha}{4+\alpha}<\bar\Delta(\alpha)$ is consistent
with (a).
\end{deferredproof}

% fig:turnout -- extracted from results_sections.tex.
% The float is self-contained: every macro it uses is defined inside it.
% Inputs from the active draft as \input{figures/fig_turnout}.
\begin{figure}[t]
\centering
%% -------------------------------------------------------------------------
%% Turnout as a function of an IMPOSED separation, doubly uniform.
%% Closed forms (K = id, v = 1), fully editable:
%%   alpha = 0 : T = D - D^2/2
%%   alpha > 0 : T = 2(1+a)/(a*lam)*D^2 - max(0,(lam*D-a)/2)^2 / a,  lam = 2+a
%% Equilibrium  D* = (2+a)/(4+a),  T* = 2(3+a)/(4+a)^2
%% Turnout peak Dbar = (2+a)^2/(a^2+6a+4)
%% -------------------------------------------------------------------------
\begin{tikzpicture}
\begin{axis}[
  width=0.72\linewidth, height=5.4cm,
  xmin=0, xmax=1, ymin=0, ymax=0.70,
  ytick={0,0.1,0.2,0.3,0.4,0.5},
  xlabel={imposed separation $\Delta$}, ylabel={turnout $T$},
  legend style={font=\scriptsize, at={(0.5,0.995)}, anchor=north, draw=none,
    fill=none, legend columns=3, /tikz/every even column/.append style={column sep=7pt}},
  tick label style={font=\scriptsize}, label style={font=\small},
  axis lines=left,
]
\addplot[blue!70!black, very thick, domain=0:1, samples=100] {x-x^2/2};
\addlegendentry{$\alpha=0$}
\addplot[orange!85!black, densely dashed, very thick, domain=0:1, samples=200]
  {2*(1+0.5)/(0.5*2.5)*x^2 - (max(0,(2.5*x-0.5)/2))^2/0.5};
\addlegendentry{$\alpha=0.5$}
\addplot[green!45!black, densely dashdotted, very thick, domain=0:1, samples=200]
  {2*(1+1)/(1*3)*x^2 - (max(0,(3*x-1)/2))^2/1};
\addlegendentry{$\alpha=1$}
\addplot[red!70!black, dotted, very thick, domain=0:1, samples=200]
  {2*(1+2)/(2*4)*x^2 - (max(0,(4*x-2)/2))^2/2};
\addlegendentry{$\alpha=2$}
% equilibria (stars) and turnout peaks (triangles), both named in the legend
\addplot[only marks, mark=star, mark size=3.4pt, black] coordinates
  {(0.5,0.375) (0.5556,0.3457) (0.6,0.32) (0.6667,0.2778)};
\addlegendentry{equilibrium $\Delta^*(\alpha)$}
\addplot[only marks, mark=triangle*, mark size=3.4pt, black!55] coordinates
  {(0.8621,0.4138) (0.8182,0.3636) (0.8,0.3)};
\addlegendentry{turnout peak $\bar\Delta(\alpha)$}
\end{axis}
\end{tikzpicture}
\caption{Turnout against an \emph{imposed} separation in the doubly-uniform
benchmark (\Cref{prop:turnout}). Stars mark the equilibrium $\Delta^*(\alpha)$,
triangles the turnout peak $\bar\Delta(\alpha)$. The equilibrium always lies on
the rising branch, so locally polarization raises turnout; but once $\alpha>0$
the curve turns over, and candidates pushed beyond $\bar\Delta$ by forces
outside the model depress participation. At $\alpha=0$ turnout is increasing
throughout.}
\label{fig:turnout}
\end{figure}

\Cref{prop:turnout} is the one result here whose implication for polarization
is indirect, so it is worth stating plainly. It does \emph{not} say that
alienation lowers equilibrium polarization---\Cref{thm:alien-mono} says the
opposite. It says that the \emph{feedback} from polarization to turnout, which
is unambiguously positive in the baseline, changes sign once the gap grows
beyond $\bar\Delta(\alpha)$ (\Cref{fig:turnout}). Equilibrium candidates never get there on their
own. But a model in which some outside force (primaries, activists, party
discipline) pushes candidates past $\bar\Delta$ predicts falling turnout, and
in that regime polarization and participation move in opposite
directions: the pattern Rogowski~\cite{rogowski2014electoral} estimates from
joint measures of voter preferences and candidate platforms in U.S. House and
Senate races, where greater ideological conflict between the candidates lowers
turnout. Alienation together with voting costs can thus explain how candidate
polarization can both increase and decrease participation. Both directions
come out of the same two parameters, so neither is an artifact of a separate
modeling choice.
\rmr{`Alienation together with voting costs can thus explain how candidate polarization can both increase and decrease participation'}\guy{Claude: adopted, nearly verbatim---it states the claim where the old sentence only gestured at a debate.}

\subsection{Higher $\alpha$ Polarizes}\label{sec:alien:polar}

The main result of this section is that higher alienation polarizes more:
\begin{theorem}[Alienation drives divergence, for every super-regular cost distribution]\label{thm:alien-mono}
In the setting of \Cref{thm:alien-unique}, the equilibrium gap
$\Delta^*(\alpha)$ is strictly increasing in $\alpha$, with
$\Delta^*(\alpha)\to1$ as $\alpha\to\infty$. \guy{key result in AMASS}
\end{theorem}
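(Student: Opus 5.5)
The plan is to work entirely with the scalar equation \cref{eq:sym-foc} and apply the implicit function theorem. Write
\[
\Psi(\Delta,\alpha):=\int_{b}^{\Delta}r(t)\,dt-\log\tfrac\lambda2,\qquad b=b(\Delta,\alpha)=\Delta-\tfrac{\alpha}{2}(1-\Delta)=\tfrac{\lambda\Delta-\alpha}{2},
\]
on the region $\tfrac\alpha\lambda<\Delta<1$, $\alpha>0$. By \Cref{thm:alien-unique}, $\Delta^*(\alpha)$ is the unique zero of $\Psi(\cdot,\alpha)$ on $(\tfrac\alpha\lambda,1)$. Since $k$ is continuous and $K>0$ on $(0,1]$ (\Cref{ass:reg}), $r=k/K$ is continuous on $(0,1)$. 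In this region $b\in(0,1)$, so $\Psi$ is $C^1$ there.

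\emph{The two partial derivatives.} Using $\partial_\Delta b=\lambda/2$ and $\partial_\alpha b=-(1-\Delta)/2$,
\[
\partial_\Delta\Psi=r(\Delta)-\tfrac\lambda2 r(b),\qquad
\partial_\alpha\Psi=\tfrac{1-\Delta}{2}\,r(b)-\tfrac1\lambda .
\]
\begin{itemize}
\item \emph{Sign of $\partial_\Delta\Psi$.} Super-regularity means $r$ is non-increasing, and $b<\Delta$, so $r(\Delta)\le r(b)$. Hence $\partial_\Delta\Psi\le(1-\tfrac\lambda2)r(b)<0$ everywhere, because $\lambda>2$ and $r>0$.
\item \emph{Sign of $\partial_\alpha\Psi$ at a root.} This is the one step needing an idea. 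I will evaluate it only on the curve $\Psi=0$ and use the equation itself. Since $r$ is non-increasing on $[b,\Delta]$,
\[
\log\tfrac\lambda2=\int_b^\Delta r\le r(b)(\Delta-b)=r(b)\,\tfrac{\alpha(1-\Delta)}2 ,
\]
so $\tfrac{1-\Delta}{2}r(b)\ge\tfrac1\alpha\log(1+\tfrac\alpha2)$. Put $s=\alpha/2>0$. It then suffices that $\tfrac{\log(1+s)}{2s}>\tfrac{1}{2(1+s)}$, i.e.\ $\log(1+s)>\tfrac{s}{1+s}$. This elementary inequality holds strictly for $s>0$. Hence $\partial_\alpha\Psi>0$ at every root.
\end{itemize}

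\emph{Monotonicity.} By the implicit function theorem, $\Delta^*$ is $C^1$ on $(0,\infty)$, and
\[
\frac{d\Delta^*}{d\alpha}=-\frac{\partial_\alpha\Psi}{\partial_\Delta\Psi}>0 .
\]
Uniqueness of the root makes the local branch coincide with $\Delta^*(\alpha)$. Since the derivative is positive everywhere, $\Delta^*$ is strictly increasing in $\alpha$.

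\emph{Limit.} \Cref{thm:alien-unique} places the root in $(\tfrac\alpha\lambda,1)$, and $\tfrac\alpha{2+\alpha}\to1$ as $\alpha\to\infty$. So $\Delta^*(\alpha)\to1$ by squeezing.

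The main obstacle is the sign of $\partial_\alpha\Psi$: it is not positive at arbitrary $(\Delta,\alpha)$. The fix is to use the equilibrium equation together with the monotonicity of $r$, which converts the problem into the scalar inequality $\log(1+s)>s/(1+s)$. As a sanity check, in the doubly-uniform case \Cref{cor:alien-uu} gives $\Delta^*=\tfrac{2+\alpha}{4+\alpha}$, which is increasing in $\alpha$ and tends to $1$.
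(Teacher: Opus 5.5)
Your proposal is correct and follows the paper's proof essentially step for step. Both use the same $\Psi$ and the same super-regularity bound $\log\tfrac\lambda2\le r(b)\tfrac{\alpha(1-\Delta)}2$ at a root, reduce to $(1+s)\log(1+s)>s$, apply the implicit function theorem, and close with the squeeze $\Delta^*>\alpha/\lambda\to1$. Your explicit bound $\partial_\Delta\Psi\le-\tfrac\alpha2\,r(b)<0$ is a small improvement: the implicit function theorem needs a nonvanishing partial derivative, whereas the paper only cites strict monotonicity of $\Psi$ in $\Delta$.
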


\begin{deferredproof}{\Cref{thm:alien-mono}}
Let $\Psi$ be as in the proof of \Cref{thm:alien-unique}, strictly decreasing
in $\Delta$ with root $\Delta^*$. With $\lambda=2+\alpha$ and
$\partial_\alpha b=-\tfrac{1-\Delta}2$,
$\partial_\alpha\Psi=r(b)\tfrac{1-\Delta}2-\tfrac1\lambda$. At a root,
super-regularity gives
$\log\tfrac\lambda2=\int_b^\Delta r\le r(b)(\Delta-b)=r(b)\tfrac{\alpha(1-\Delta)}2$,
so $r(b)\tfrac{1-\Delta}2\ge\tfrac1\alpha\log\tfrac\lambda2$ and
\[
\frac{\partial\Psi}{\partial\alpha}\Big|_{\Delta=\Delta^*}
\;\ge\;\frac1\alpha\log\Big(1+\frac\alpha2\Big)-\frac1{2+\alpha}
\;=\;\frac{\lambda\log(\lambda/2)-\alpha}{\alpha\lambda}\;>\;0,
\]
the last inequality being $(1+s)\log(1+s)>s$ for $s=\alpha/2>0$. Since $\Psi$
is strictly decreasing in $\Delta$, the implicit function theorem gives
$\mathrm d\Delta^*/\mathrm d\alpha>0$. Finally $\Delta^*>\alpha/\lambda\to1$.
\end{deferredproof}

\Cref{thm:alien-mono} is this section's headline, and the general form of
the doubly-uniform benchmark $\Delta^*(\alpha)=\tfrac{2+\alpha}{4+\alpha}$ of
\Cref{cor:alien-uu}: \emph{stronger alienation strictly polarizes}, from the baseline gap at $\alpha=0$ all the
way to full separation as $\alpha\to\infty$, for uniform voters and every
super-regular cost distribution (\Cref{fig:alien_delta}). Alienation simultaneously depresses turnout%
---in the doubly-uniform benchmark, from $\tfrac38$ to $0$---so the model
produces the empirically familiar pairing of rising elite polarization with
falling participation from a single voter-side parameter. For a uniform
electorate the direction is unambiguous and needs no condition on the cost
distribution beyond super-regularity: \emph{alienation
increases polarization, always, and without bound}. It is the strongest
single force in the paper. The cost channel moves $\Delta^*$ within $(0,1)$
according to the shape of $K$, whereas alienation drives it to $1$ regardless
of $K$. The uniform-electorate hypothesis is not decorative: for the
concentrated electorate of \Cref{rem:deadzone} the equilibrium gap also grows
with $\alpha$ ($\Delta^*=0.238$ at $\alpha=2$ and $0.250$ at $\alpha=4$ for
$\mathrm{Beta}(8,8)$ and uniform costs), but far more slowly, and we have no
general-electorate theorem.

% fig:alien_delta -- extracted from results_sections.tex.
% The float is self-contained: every macro it uses is defined inside it.
% Inputs from the active draft as \input{figures/fig_alien_delta}.
\begin{figure}[t]
\centering
%% -------------------------------------------------------------------------
%% Equilibrium separation as a function of alienation, uniform voters.
%% The four curves solve lambda*K(b) = 2*K(Delta), b=(lambda*Delta-alpha)/2,
%% for four cost distributions.  Coordinates generated by verify/figdata.py
%% (check_alienation.alien_gap); rerun it and paste new coordinates to change the figure.
%% The "uniform costs" curve is the closed form (2+a)/(4+a) and is plotted
%% as an expression, so it can be edited directly.
%% -------------------------------------------------------------------------
\begin{tikzpicture}
\begin{axis}[
  width=0.72\linewidth, height=5.4cm,
  xmin=0, xmax=6, ymin=0.35, ymax=1.0,
  xlabel={alienation parameter $\alpha$},
  ylabel={equilibrium separation $\Delta^*$},
  legend style={font=\scriptsize, at={(0.98,0.02)}, anchor=south east, draw=none, fill=none},
  tick label style={font=\scriptsize}, label style={font=\small},
  axis lines=left, clip=false,
]
\addplot[gray, thin, forget plot, domain=0:6] {1};
\addplot[red!70!black, densely dashed, very thick] coordinates {(0.000,0.6667)(0.250,0.6861)(0.500,0.7031)(0.750,0.7181)(1.000,0.7315)(1.250,0.7436)(1.500,0.7545)(1.750,0.7644)(2.000,0.7735)(2.250,0.7818)(2.500,0.7895)(2.750,0.7966)(3.000,0.8032)(3.250,0.8093)(3.500,0.8151)(3.750,0.8205)(4.000,0.8255)(4.250,0.8303)(4.500,0.8348)(4.750,0.8390)(5.000,0.8430)(5.250,0.8468)(5.500,0.8504)(5.750,0.8539)(6.000,0.8571)};
\addlegendentry{$k$ increasing ($\vartheta=2$)}
\addplot[orange!85!black, very thick, domain=0:6, samples=60] {(2+x)/(4+x)};
\addlegendentry{uniform costs}
\addplot[green!45!black, densely dashdotted, thick] coordinates {(0.000,0.4247)(0.250,0.4591)(0.500,0.4906)(0.750,0.5194)(1.000,0.5455)(1.250,0.5694)(1.500,0.5911)(1.750,0.6110)(2.000,0.6291)(2.250,0.6458)(2.500,0.6611)(2.750,0.6752)(3.000,0.6882)(3.250,0.7002)(3.500,0.7114)(3.750,0.7218)(4.000,0.7315)(4.250,0.7406)(4.500,0.7490)(4.750,0.7570)(5.000,0.7644)(5.250,0.7715)(5.500,0.7781)(5.750,0.7843)(6.000,0.7902)};
\addlegendentry{$\mathrm{Beta}(2,5)$ costs}
\addplot[blue!70!black, dotted, very thick] coordinates {(0.000,0.4226)(0.250,0.4569)(0.500,0.4879)(0.750,0.5158)(1.000,0.5412)(1.250,0.5642)(1.500,0.5852)(1.750,0.6044)(2.000,0.6220)(2.250,0.6382)(2.500,0.6532)(2.750,0.6670)(3.000,0.6797)(3.250,0.6916)(3.500,0.7027)(3.750,0.7130)(4.000,0.7226)(4.250,0.7317)(4.500,0.7402)(4.750,0.7481)(5.000,0.7557)(5.250,0.7627)(5.500,0.7694)(5.750,0.7758)(6.000,0.7818)};
\addlegendentry{$k$ decreasing ($\vartheta=-2$)}
\end{axis}
\end{tikzpicture}
\caption{Equilibrium separation under alienation for uniform voters and four
cost distributions (\Cref{thm:alien-mono}). Here
$k_\vartheta(t)=\vartheta t+\big(1-\tfrac\vartheta2\big)$, $\vartheta\in[-2,2]$,
is the linear cost family of \Cref{app:turnout}. Every curve is increasing in
$\alpha$ and tends to $1$. The three linear-family curves are ordered as
\Cref{thm:rhr-alpha} predicts, $r_\vartheta$ being pointwise increasing in
$\vartheta$; $\mathrm{Beta}(2,5)$ is not comparable with them in the
reverse-hazard-rate order (its rate crosses the uniform one), so its position
is not an instance of the theorem. The $\alpha=0$ intercepts
are $\tfrac23$, $\tfrac12$, $0.425$ and $1-\tfrac1{\sqrt3}$.}
\label{fig:alien_delta}
\end{figure}

\paragraph{Why alienation pushes candidates apart}
The logic is worth isolating, because it is different from the cost
channel. At $\alpha=0$ a candidate's flank is \emph{captive and uniform}:
every voter behind $c_1$ faces exactly the same benefit $\Delta$, whatever
their position. Alienation destroys that. For $x<c_1$,
\[
B_1(x)=\Delta-\alpha\,(c_1-x),
\]
so a candidate's own extremists---the voters furthest behind her---are the
ones with the least at stake and the first to abstain. She now has a second,
quite separate reason to move outward: not only to widen $\Delta$ for the
whole flank, but to \emph{go out and get her own flank}, whose benefit rises at
rate $\alpha$ as she approaches it. The tension in \Cref{cor:crossing} is
thereby shifted in favor of separation at every cost distribution, and
\Cref{eq:sym-foc} records the new balance: the boundary voter's turnout rate
must be held at the fixed fraction $K(b)/K(\Delta)=2/(2+\alpha)$ of the
flank's, a fraction that falls as $\alpha$ grows and can only be restored by
moving out. This is Hinich and Ordeshook's alienation channel
\cite{hinich1969abstentions} recovered inside a costly-voting model. It is a
distinct channel---one changes \emph{who} is at the margin, the other
\emph{how many}---but the two are not independent: both branches of the
sufficiency umbrella shrink as $\alpha$ grows, so alienation makes the cost
channel harder to certify, not merely different from it.
\rmr{at the least, channels interact (negatively), not orthogonal.}\guy{Claude: agreed, \emph{orthogonal} overclaimed it---and the paper contradicts it 150 lines earlier, where both umbrella branches are shown to decrease in $\alpha$. Now says distinct but interacting, with that as the evidence. I checked the other independence claims too: the one in the footnote on i.i.d.\ costs is hedged and the discussion now qualifies it at length, so it stands.}

\noindent One reading of \Cref{thm:alien-mono} has to be ruled out.
Raising $\alpha$ lowers every voter's benefit and therefore lowers turnout,
so the result might be the cost channel of \Cref{sec:base} under another
name. It is not, and \Cref{rem:fosd} already says why: by \Cref{prop:scale}
the \emph{level} of participation can be set anywhere without moving the
equilibrium at all, so ``turnout fell, hence the candidates diverged'' is not
a mechanism this model has. The constructive version is immediate.

\begin{corollary}[Polarization at unchanged turnout]\label{cor:alien-turnout}
In the setting of \Cref{thm:alien-mono}, fix $\alpha>0$ and write
$T^*(\alpha)$ for equilibrium turnout. Then there are positive multiples
$K_0$ and $K_\alpha$ of $K$, both admissible, such that the games
$(V,K_0,0)$ and $(V,K_\alpha,\alpha)$ have the \emph{same} equilibrium
turnout while their equilibrium gaps are $\Delta^*(0)$ and
$\Delta^*(\alpha)>\Delta^*(0)$. \guy{have in short in AMASS}
\end{corollary}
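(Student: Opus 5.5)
The plan is to build the two multiples by rescaling costs, using \Cref{prop:scale} in its general form (\Cref{prop:scale-gen}). Throughout, $x\sim U(0,1)$ and $K$ is super-regular. By \Cref{thm_uniform_votes}(b) together with \Cref{cor:crossing} at $\alpha=0$, and by \Cref{thm:alien-unique} at $\alpha>0$, each game $(V,K,0)$ and $(V,K,\alpha)$ has a unique equilibrium up to relabeling. Its gaps are $\Delta^*(0)$ and $\Delta^*(\alpha)$, and we write the equilibrium turnouts as $T^*(0)$ and $T^*(\alpha)$; both are strictly positive, since $K(\Delta^*)>0$ and the flank has positive mass. \Cref{thm:alien-mono} gives $\Delta^*(\alpha)>\Delta^*(0)$ directly.

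The rescaling step comes next. For any $z\in(0,1]$, $zK$ is again admissible, because $zK(1)\le K(1)\le1$ and \Cref{ass:reg} is preserved. By \Cref{prop:scale-gen}, replacing $K$ by $zK$ leaves the equilibrium set unchanged at every $(\alpha,\beta)$, so the gaps stay $\Delta^*(0)$ and $\Delta^*(\alpha)$. It multiplies all vote shares, and hence turnout, by $z$. Uniqueness matters here: since the equilibrium is unique, equilibrium turnout is a well-defined number, so matching the two turnouts is unambiguous.

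For the construction, take whichever game has the larger turnout and scale its cost distribution down to the other's turnout, leaving the other at $K$. If $T^*(0)\ge T^*(\alpha)$, set $K_\alpha:=K$ and $K_0:=zK$ with $z=T^*(\alpha)/T^*(0)\in(0,1]$. Otherwise set $K_0:=K$ and $K_\alpha:=zK$ with $z=T^*(0)/T^*(\alpha)\in(0,1)$. Either way both games have equilibrium turnout $\min\{T^*(0),T^*(\alpha)\}$, and both multiples are admissible because the scaling factor never exceeds $1$. This choice is deliberate: it avoids assuming $T^*(\alpha)\le T^*(0)$, which the paper does not prove in general. Scaling up would require $zK(1)\le1$ and could fail when $K(1)=1$.

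I expect the main obstacle to be exactly this admissibility constraint, and the ``scale the larger side down'' choice disposes of it. The only other point to check is that \Cref{prop:scale-gen} applies at $\alpha>0$ with $\beta=0$. It does, because payoffs remain linear in $K$ through \cref{eq:shares}, with $B_i$ as in \cref{eq:benefit-gen}. For illustration, in the doubly-uniform case \Cref{cor_uu} and \Cref{cor:alien-uu} give $T^*(0)=\tfrac38>\tfrac{2(3+\alpha)}{(4+\alpha)^2}=T^*(\alpha)$. Hence $K_\alpha=K$ and $K_0=z(\alpha)K$ with $z(\alpha)=\tfrac{16(3+\alpha)}{3(4+\alpha)^2}$.
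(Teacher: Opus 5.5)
Your proposal is correct and matches the paper's proof: by scale invariance (\Cref{prop:scale}), you rescale whichever game has the larger equilibrium turnout down by a factor $z\le1$, so both multiples stay admissible. The equilibrium gaps are unchanged by the rescaling, and the strict inequality comes from \Cref{thm:alien-mono}. Your doubly-uniform illustration $z(\alpha)=\tfrac{16(3+\alpha)}{3(4+\alpha)^2}$ also agrees with the paper's.
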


\begin{proof}
By \Cref{prop:scale}, replacing $K$ by $zK$ for admissible $z>0$ leaves the
equilibrium set unchanged and multiplies turnout by $z$. If
$T^*(\alpha)\le T^*(0)$, take $K_\alpha=K$ and $K_0=zK$ with
$z=T^*(\alpha)/T^*(0)$; otherwise take $K_0=K$ and $K_\alpha=zK$ with
$z=T^*(0)/T^*(\alpha)$. Either way $z\le1$, so $zK(1)\le K(1)\le1$ and $zK$
is a cost distribution, and the two games have equal equilibrium turnout by
construction. Neither scaling moves its own equilibrium, so the gaps are
still $\Delta^*(0)$ and $\Delta^*(\alpha)$, and the inequality is
\Cref{thm:alien-mono}.
\end{proof}

\noindent Scaling the \emph{larger} of the two turnouts down, rather than
the smaller up, is what keeps $z\le1$; it costs nothing and avoids assuming
that $T^*$ falls in $\alpha$, which we have not proved
(\Cref{prop:turnout} concerns $T$ as a function of $\Delta$ at fixed
$\alpha$). In the doubly-uniform benchmark the rescaling is explicit: by
\Cref{cor:alien-uu}, $T^*(\alpha)=\tfrac{2(3+\alpha)}{(4+\alpha)^2}$
falls from $\tfrac38$, so $z(\alpha)=\tfrac{16(3+\alpha)}{3(4+\alpha)^2}$,
and at $\alpha=1$ both games turn out $\tfrac8{25}$ of the electorate while
the gap is $\tfrac35$ under alienation against $\tfrac12$ without it. The
gap between the matched games can be much wider: for $K(t)=t^{1/2}$ at
$\alpha=4$ both turn out $\approx0.313$, with gaps $\approx0.692$ and
$\tfrac13$.

\paragraph{Telling the two channels apart}
Higher costs and stronger alienation both raise polarization and both depress
turnout, so the two are observationally similar in aggregate data. They
differ cross-sectionally: alienation reshapes \emph{who} abstains---%
participation falls fastest among voters far from both candidates, opening
the inner dead zone around $m$ visible in \Cref{fig:benefit-alien}, whereas a cost shift moves
participation for ideologues and centrists alike. The model thus offers a
concrete empirical signature separating ``voting became harder'' from
``voters became disaffected''. The signature is a shape: since
$p_i=K(B_i^+)$ and $B_i$ is a tent peaked at $c_i$, alienation concentrates
the active electorate into two humps centred on the candidates, falling away
on both sides and vanishing outside $[\zout{1}\vee0,\zin{1}]$. At $\alpha=0$ there is no
outer edge to fall away from: the flank is flat at the full gain $\Delta$,
so the active electorate simply tracks $v$ there. A cost shift scales that
picture without changing its shape; alienation makes it bimodal.
\rmr{so in the latter case we should see a more `bimodal' distribution of active voters?}\guy{Claude: yes, and it is worth stating---it sharpens the signature from ``who abstains'' to a shape an empiricist can look for. It follows from the tent geometry already in \Cref{fig:anatomy}: two humps at the candidates under alienation, against a flat flank at $\alpha=0$.}

\section{Competitive Candidates}\label{sec:comp}

The second extension changes the candidates rather than the voters:
$U_i=u_i-\beta u_{-i}$, spanning vote-share maximization ($\beta=0$) to pure
margin maximization ($\beta=1$). \Cref{rem:beta} showed that under full
participation $\beta$ would be irrelevant; everything in this section is
therefore attributable to abstention.

The message is that the separation between the two objectives is
\emph{de}-polarizing, that it runs against both channels of
\Cref{sec:base,sec:alien}, and that it is continuous in $\beta$ up to a
single knife-edge at $\beta=1$. Accordingly we take $\beta\in[0,1)$
throughout the first three parts, which follow those of \Cref{sec:alien}
result for result,
and treat the margin game $\beta=1$ as a special case at the end.
\Cref{sec:comp:char} characterizes the equilibrium at a fixed $\beta$: it is
unique and symmetric (\Cref{thm:beta-suff}), and under uniform costs the
mass-gap law locates it for any electorate (\Cref{prop:massgap,cor:quantile}).
\Cref{sec:comp:fixed} carries the comparative statics of
\Cref{sec:base:polar} to a fixed $\beta$: the cost channel and the
transmission of mass polarization survive (\Cref{prop:rhr-beta,prop:mass-gen}),
and the turnout curve is the same at every $\beta$ (\Cref{prop:turnout-beta}).
\Cref{sec:comp:polar} contains the section's point: the gap falls in $\beta$ at
every level of alienation (\Cref{prop:beta-mono}), and does so at a fixed level
of participation (\Cref{cor:beta-turnout}).
At $\beta=1$ (\Cref{sec:comp:margin}) this picture breaks: the equilibrium
set becomes a band rather than a point (\Cref{prop:margin_uniform}), the
$\beta\uparrow1$ limit settles on its \emph{most polarized} profile rather
than the median (\Cref{thm:beta-select}), the median fails as soon as the
electorate is hollow at the centre (\Cref{thm:quartic}), and for some
electorates no pure equilibrium survives at all (\Cref{prop:nonexistence}).
\guy{New, for you to check: \Cref{prop:mass-gen,prop:turnout-beta,cor:beta-turnout} are the $\beta$-counterparts of \Cref{prop:mass,prop:turnout,cor:alien-turnout}, which \Cref{sec:comp} lacked; all three are verified numerically. \Cref{cor:beta-uu} is the closed form split out of \Cref{prop:beta-foc} to mirror \Cref{cor:alien-uu}.}

\rmr{what is the main message of this section, that all results should support?}\guy{Claude: the section did state one, but it reached four of its eleven results---\Cref{thm:beta-suff}, \Cref{prop:massgap}, \Cref{cor:quantile}, \Cref{thm:quartic} and \Cref{prop:nonexistence} all sat outside it. Restated above as continuity plus a knife-edge, which reaches all but the two refinements, and those are named as refinements. Separately: the section was carrying the paper's conclusion. ``What drives polarization'' with \Cref{tab:drivers}, and ``The baseline and the two dials together'' with \Cref{tab:summary}, are paper-level summaries, and a reader looking for this section's message reached a table covering all five settings instead. Both have moved to the head of \Cref{sec:discussion}, which already opened by referring back to \Cref{tab:drivers}; its old first sentence was then redundant and is gone.}

With $\alpha$ and $\beta$ both in play, the scale invariance of
\Cref{sec:base} can be stated in full generality.

\begin{proposition}[Scale invariance, general form]\label{prop:scale-gen}
\Cref{prop:scale} holds verbatim for every $\alpha\ge0$ and every
$\beta\in[0,1]$.
\end{proposition}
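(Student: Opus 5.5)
The plan is to repeat the proof of \Cref{prop:scale}, now with the general benefit \eqref{eq:benefit-gen} and the objective \eqref{eq:objective}, checking that neither parameter enters the argument. Fix $\alpha\ge0$ and $\beta\in[0,1]$, a voter distribution $V$, and an admissible $z>0$ with $zK(1)\le1$. Write $u_i^{K}$ and $U_i^{K}$ for the vote share and payoff computed with cost distribution $K$.

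The first step is linearity of vote shares in $K$. By \cref{eq:shares}, which holds verbatim for every $\alpha$ (as noted in \Cref{sec:model:ext}),
\[
u_i^{K}(c_1,c_2)=\int_0^1 K\big(B_i(x)^+\big)\,v(x)\,dx .
\]
Here $B_i$ depends on $\alpha$ and on the profile, but not on $K$. I use $K(B_i^+)$ with the convention $K(0)=0$ from \Cref{ass:reg}, so that the piecewise definition of $p_i$ is captured. Hence $u_i^{zK}=z\,u_i^{K}$ for every profile, including tied ones, where both sides vanish. Turnout therefore scales by $z$ as well.

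The second step is linearity of payoffs. From \eqref{eq:objective}, $U_i^{zK}=u_i^{zK}-\beta u_{-i}^{zK}=z\,U_i^{K}$, since $U_i$ is a fixed linear combination of the two shares with coefficients independent of $K$. Consequently, for each opponent strategy $c_{-i}$, the map $c_i\mapsto U_i^{zK}(c_i,c_{-i})$ is a positive multiple of $c_i\mapsto U_i^{K}(c_i,c_{-i})$. The two maps have the same $\arg\max$ on $[0,1]$, so the best-response correspondences coincide. Pure equilibria are then identical.

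For mixed equilibria, expected payoffs against any mixed strategy of the opponent are also multiplied by $z$, by linearity of the integral. So best responses among mixed strategies coincide too, and the sets of mixed equilibria agree.

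There is no real obstacle here. The only point needing care is to confirm that $\alpha$ enters only through $B_i$, and $\beta$ only through the $K$-independent coefficients of $U_i$. I also need to check that admissibility of $zK$ (that is, $zK(1)\le1$, with the defect interpreted as never-voters under \Cref{ass:reg}) is the only hypothesis used. This check matters at $\beta=1$, where payoffs can be negative: scaling by $z>0$ still preserves the order of payoffs, so nothing changes there.
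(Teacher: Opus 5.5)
Your proposal is correct and follows the paper's own argument: $\alpha$ enters only through $B_i$, which does not involve $K$, and $\beta$ only through the $K$-independent coefficients of $U_i=u_i-\beta u_{-i}$. Hence every payoff is multiplied by $z>0$ and the $\arg\max$ is unchanged. Your explicit remark that mixed-strategy expected payoffs also scale by $z$ adds a detail the paper leaves implicit; the rest matches its proof.
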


\begin{proof}
Neither parameter enters the proof of \Cref{prop:scale}. Alienation acts only
through $B_i$ in \cref{eq:benefit-gen}, which does not involve $K$ and is
therefore unchanged; and for any $\beta$, multiplying $u_1$ and $u_2$ by the
same constant $z>0$ multiplies $U_i=u_i-\beta u_{-i}$ by $z$ as well. Positive
affine transformations of payoffs leave $\arg\max$ unchanged.
\end{proof}

\subsection{Equilibrium at a Fixed $\beta$}\label{sec:comp:char} \rmr{I suggest define $\beta \in [0,1)$ throughout the section, and separately discuss the case of $\beta=1, \alpha>0$ }\guy{Done: $\beta\in[0,1)$ in \Cref{sec:comp:char,sec:comp:fixed,sec:comp:polar}, and \Cref{sec:comp:margin} treats $\beta=1$, announced in the section opening. I included $\alpha=0$ there as well (\Cref{thm:margin_median}), so that nothing about $\beta=1$ is left in the body; and \Cref{thm:beta-select} moved with it, since it says where the $\beta<1$ equilibria end up inside the $\beta=1$ band.}

The logic is visible already in \Cref{lem:balance}. A candidate has two
ways to add to her tally: convert an abstainer on her own flank, or take a
voter away from her opponent in the contested middle. The distributive-politics
literature has long framed this choice as a targeting problem
\cite{cox1986electoral,lindbeck1987balanced,cox2010swing} and which
Glaeser, Ponzetto and Shapiro~\cite{glaeser2005strategic} make the engine of candidate divergence. The two differ in their effect on the
opponent. Converting an abstainer on one's own flank means moving \emph{out},
which widens the separation and so raises the benefit of voting---and with it
the turnout---on the opponent's flank as well: the first strategy actually
\emph{helps} the opponent. Taking a voter from the contested middle costs the
opponent a vote outright. To a candidate who weighs the opponent's votes
negatively the two are therefore not merely unequal but opposite in sign.
That is exactly the asymmetry the balance
condition prices, and it can now be stated for every $\alpha$ and $\beta$.

\begin{lemma}[Balance conditions, general form]\label{lem:balance-gen}
Let $v,k$ satisfy \Cref{ass:reg}, and let $\alpha\ge0$, $\beta\in[0,1)$. Every
interior stationary profile $(c_1,c_2)$ of the game (\Cref{def:stationary})
satisfies, for each candidate $i$,
\[
(1+\alpha)\,\big(I_i-O_i\big)\;+\;\beta\,\big(I_{-i}+O_{-i}\big)\;=\;0 .
\]
Summing and differencing the two conditions gives a \emph{ratio law} and an
\emph{asymmetry law},
\[
(1+\alpha+\beta)\,I=(1+\alpha-\beta)\,O,
\qquad
(1+\alpha-\beta)(I_1-I_2)=(1+\alpha+\beta)(O_1-O_2):
\]
the aggregate inner share of marginal voters is the universal constant
$\tfrac{1+\alpha-\beta}{2(1+\alpha)}$, for every $v$ and $k$. At
$(\alpha,\beta)=(0,0)$ this is \Cref{lem:balance}; at $\beta=0$ it is
\Cref{prop:master}, summed over the two candidates.
\end{lemma}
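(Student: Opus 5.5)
The plan is to differentiate candidate~$1$'s payoff $U_1=u_1-\beta u_2$ in her own position, treating the two terms separately, and then to obtain the ratio and asymmetry laws by linear algebra. Throughout, $I_i$ and $O_i$ are the alienation versions of the marginal masses: each arm is cut at the tent's own zeros, as in \cref{eq:master}. So
\[
I_1=\int_{c_1}^{\zin{1}}k(B_1)v,\qquad O_1=\int_{\zout{1}\vee0}^{c_1}k(B_1)v,
\]
with the mirror definitions for candidate~$2$. At $\alpha=0$ these reduce to the definitions before \Cref{lem:balance}.

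\emph{Own term.} The computation in the proof of \Cref{prop:master} already gives
\[
\partial_{c_1}u_1=(1+\alpha)\,(I_1-O_1).
\]
That computation is valid for every $\alpha\ge0$, including $\alpha=0$ with $\zout{1}\vee0=0$. I will reuse it verbatim: the Leibniz terms vanish because $B_1=0$ at moving zeros and $K(0)=0$, and a clipped limit is constant.

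\emph{Cross term.} Write $u_2=\int_{\zin{2}}^{\zout{2}\wedge1}K(B_2)v$. Every voter in this range satisfies $x\ge\zin{2}>c_1$, since $\zin{2}\ge m$ at $\alpha=0$ and the inner dead zone separates the tents for $\alpha>0$. Hence $d_1(x)=x-c_1$ and $d_2$ does not involve $c_1$, so
\[
\partial_{c_1}B_2(x)=-1
\]
pointwise on the whole support of candidate~$2$'s tent. This holds on both of her arms, because the alienation term $(1+\alpha)d_2$ is independent of $c_1$. The endpoints $\zin{2},\zout{2}$ move with $c_1$, but again $B_2=0$ there and $K(0)=0$, so the boundary terms drop out. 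Therefore
\[
\partial_{c_1}u_2=-(I_2+O_2).
\]

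\emph{Combining.} This gives
\[
\partial_{c_1}U_1=(1+\alpha)(I_1-O_1)+\beta(I_2+O_2).
\]
At an interior stationary profile this vanishes, and the mirror computation gives the condition for candidate~$2$. Adding the two conditions yields $(1+\alpha)(I-O)+\beta(I+O)=0$, which is the ratio law. Subtracting them and collecting coefficients yields
\[
(1+\alpha-\beta)(I_1-I_2)=(1+\alpha+\beta)(O_1-O_2).
\]
Since $I+O>0$ (by $k,v>0$ on the arms), the ratio law gives the inner share $I/(I+O)=\tfrac{1+\alpha-\beta}{2(1+\alpha)}$. The two specializations are immediate from the formula: $\beta=0$ recovers \cref{eq:master} summed over both candidates, and $(\alpha,\beta)=(0,0)$ recovers \Cref{lem:balance}.

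The main obstacle I expect is justifying differentiation under the integral. First, $k(B)$ may blow up as $B\downarrow0$ at the tent zeros, for example under power costs with $p<1$, so $I_i$ and $O_i$ must be shown finite. I would handle this by the substitution $t=B_i(x)$: the arms are linear with slopes $\alpha$ and $\lambda$, so $\int k(B_i)v\,dx$ is bounded by a constant times $\int_0^\Delta k(t)\,dt=K(\Delta)<\infty$. For $\alpha=0$ the outer arm is flat at $\Delta$, where $k$ is finite. Second, the regime may switch as $c_1$ moves, with $\zout{2}$ crossing $1$ or $\zout{1}$ crossing $0$. Because $K(0)=0$, the integrand vanishes at such a crossing, so one-sided derivatives agree and the derivative is continuous. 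This is the differentiability already asserted via \Cref{n:lem:master}, which I would cite for these points.
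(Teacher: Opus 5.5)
Your proposal is correct and is essentially the paper's own proof. The derivative formula $\partial U_1/\partial c_1=(1+\alpha)(I_1-O_1)+\beta(I_2+O_2)$ is \Cref{n:lem:master}: its cross term comes from $\partial B_2/\partial c_1=-1$ on the opponent's support, and the Leibniz boundary terms vanish. The candidate-$2$ condition is \Cref{m:lem:foc2}, obtained there by reflection, and the two laws follow by adding and subtracting exactly as in \Cref{m:prop:ratio}.

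The one overstatement is your finiteness claim. Bounding $\int k(B_i)v$ by a constant times $K(\Delta)$ silently uses $\sup v$ on the arm, and that supremum can be infinite. This bites only in the knife-edge case $\zout{1}=0$ (or $\zout{2}=1$) with $k(0^+)=\infty$ and $v$ unbounded at that endpoint. For instance, $K(t)=t^{1/2}$ with the arcsine electorate makes $O_1$ diverge like $\int_0 dx/x$. The paper does not prove finiteness there; it carries it as the explicit proviso of \Cref{n:lem:master}, and you should adopt that proviso rather than claim finiteness.
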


\noindent{\small\itshape Proof: \Cref{m:prop:ratio}.}

\medskip
The restriction to interior profiles matters only in the joint case: ties and
boundary positions are excluded at $\beta=0$ for every $\alpha$, at $\alpha=0$
for every $\beta<1$, and for uniform voters at every $(\alpha,\beta)$ with
$\beta<1$ (\Cref{n:lem:interior}); with $\alpha,\beta>0$ and a non-uniform
electorate they are not excluded in general. So the inner marginal mass carries weight $1+\alpha+\beta$ and the outer mass
weight $1+\alpha-\beta$, and the equilibrium inner share falls from
$\tfrac12$ at $\beta=0$ (\Cref{lem:balance}) toward $\tfrac{\alpha}{2(1+\alpha)}$
as $\beta\uparrow1$. As competitiveness rises the contested middle becomes worth more
relative to the flank, and the candidates move in to contest it: \emph{this is
the paper's only de-polarizing force}. It also explains why $\beta$ is inert
under full participation (\Cref{rem:beta}): with nobody abstaining there is no
flank to convert, every vote is taken from the opponent, and the distinction
the parameter prices does not exist.

What the model adds to that choice is a price: persuasion is worth
$\tfrac{1+\alpha+\beta}{1+\alpha-\beta}$ units of mobilization, for every $v$,
every $K$ and every $\beta<1$ (\Cref{lem:balance-gen}). It also corrects the
folk version of the claim: a flank is worth mobilizing because its turnout is
\emph{responsive}, not because it is low (\Cref{prop:scale-gen,rem:fosd}).

\paragraph{Both extensions, non-uniform electorates}
With both extensions on and a non-uniform electorate the characterization is
partial, and we record what is known. Under uniform costs the balance
conditions of \Cref{lem:balance-gen} still characterize equilibria, up to the
two cross-over checks, whenever $\rho_v\le\min\{f_\beta,g_\beta\}$ with the
thresholds of \Cref{tab:summary} (\Cref{m:thm:D1}); competition moves the two
branches in opposite directions, widening the concavity branch $f_\beta$ and
narrowing the dead-zone branch $g_\beta$. A symmetric stationary profile
exists for every $\beta<1$ (\Cref{m:prop:roots}), and for a symmetric
electorate strictly inside the vote-share umbrella, with a unique symmetric
stationary profile there, the symmetric equilibrium of \Cref{n:cor:A3exist}
continues into small $\beta>0$ (\Cref{m:thm:D1exist}); numerically it
persists to moderate $\beta$, inside and outside the umbrella
(\Cref{m:rem:jointnum}). Competition also makes alienation dead zones easier
to sustain: the density ratio above which a stationary profile can have one
falls from $\lambda/\alpha$ at $\beta=0$ to $g_\beta$ (\Cref{m:prop:D3}).
Beyond uniform costs the concavity argument meets an obstruction that we
isolate rather than resolve (\Cref{m:rem:D1k}).

\paragraph{Uniform voters}
For a uniform electorate the characterization again collapses to a single
equation, \cref{eq:beta-foc} below, whose unique root is the equilibrium gap
of \Cref{thm:beta-suff}.

\begin{proposition}[The stationarity condition for $\beta<1$]\label{prop:beta-foc}
Let $x\sim U(0,1)$, $\alpha>0$, $\beta\in[0,1)$, and let $K$ satisfy
\Cref{ass:reg}. Every symmetric interior stationary profile of the
$\beta$-game satisfies
\begin{equation}\label{eq:beta-foc}
(1+\alpha-\beta)\,K(b)\;=\;(1-\beta)\,\frac{2(1+\alpha)}{2+\alpha}\,K(\Delta).
\end{equation}
\end{proposition}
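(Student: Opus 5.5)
The plan is to differentiate candidate~$1$'s payoff $U_1=u_1-\beta u_2$ directly in closed form and evaluate the derivative at a symmetric profile. The $u_1$ term is already available from the proof of \Cref{thm:alien-unique}. The only new ingredient is the cross-effect of $c_1$ on the opponent's share $u_2$.

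\emph{Step 1 (closed forms).} For $x\sim U(0,1)$, and in the region where candidate $i$'s tent meets the edge of the policy space (outer zero $\zout{i}$ outside $[0,1]$), integrating $K(B_i)$ over the two arms gives
\[
u_1=A\,G(\Delta)-\tfrac1\alpha G(b_1),\qquad
u_2=A\,G(\Delta)-\tfrac1\alpha G(b_2),\qquad
A=\tfrac{2(1+\alpha)}{\alpha\lambda}.
\]
Here $b_1=c_2-(1+\alpha)c_1$ and $b_2=(1+\alpha)c_2-c_1-\alpha$ are the boundary benefits. These are exactly the expressions used in the proofs of \Cref{thm:alien-unique,prop:turnout}, with $u_2$ obtained by mirroring.

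\emph{Step 2 (derivatives).} Using $\partial_{c_1}\Delta=-1$, $\partial_{c_1}b_1=-(1+\alpha)$, $\partial_{c_1}b_2=-1$ and $G'=K$:
\[
\partial_{c_1}u_1=\tfrac{1+\alpha}{\alpha}K(b_1)-A\,K(\Delta),
\qquad
\partial_{c_1}u_2=-A\,K(\Delta)+\tfrac1\alpha K(b_2).
\]
Note that $\tfrac{1+\alpha}{\alpha}\cdot\tfrac2\lambda=A$. Hence
\[
\partial_{c_1}U_1=\tfrac1\alpha\big[(1+\alpha)K(b_1)-\beta K(b_2)\big]-(1-\beta)A\,K(\Delta).
\]
At a symmetric profile $b_1=b_2=b$ by \cref{eq:b-delta}. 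Setting the derivative to zero and multiplying by $\alpha$ gives $(1+\alpha-\beta)K(b)=(1-\beta)\tfrac{2(1+\alpha)}{\lambda}K(\Delta)$, which is \cref{eq:beta-foc}. Candidate~$2$'s condition is the mirror image and yields the same equation.

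\emph{Step 3 (the regime check, the main obstacle).} Step~1 presumes there is no outer dead zone, i.e.\ $b\ge0$, equivalently $\Delta\ge\alpha/\lambda$. At a symmetric profile both tents detach from the boundary together. I would rule out the case $b<0$ as follows. If $b<0$, then for $c_1$ near its current value each tent lies inside $[0,1]$. So $u_1=u_2=(\tfrac1\alpha+\tfrac1\lambda)G(\Delta)$, which depends on the profile only through $\Delta$, as in \Cref{rem:deadzone}. Then $\partial_{c_1}U_1=-(1-\beta)(\tfrac1\alpha+\tfrac1\lambda)K(\Delta)<0$, since $\beta<1$ and $K(\Delta)>0$. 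So such a profile is not stationary.

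The boundary case $b=0$ is consistent with both formulas, because $K(0)=G(0)=0$. There the one-sided derivatives agree, so \cref{eq:beta-foc} still holds, reading $0=(1-\beta)\cdots K(\Delta)$. This contradicts $K(\Delta)>0$, so in fact stationary profiles have $b>0$. This case split, rather than the algebra, is where care is needed: in particular one must confirm that the $u_2$ formula switches regime at the same place as $u_1$ when only $c_1$ moves. I would handle this by noting that $b_2$ has the same sign as $b_1$ at symmetry and moves continuously with $c_1$.
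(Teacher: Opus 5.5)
Your proposal is correct and follows the paper's proof. The paper also rules out $b\le0$, where $u_1=u_2=A\,G(\Delta)$ gives $\partial U_1/\partial c_1=-(1-\beta)A\,K(\Delta)<0$; it then differentiates $u_i=A\,G(\Delta)-\tfrac1\alpha G(b_i)$ using $\partial_{c_1}b_1=-(1+\alpha)$ and $\partial_{c_1}b_2=-1$, and sets $b_1=b_2=b$.

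Your separate treatment of $b=0$, and the worry that both regimes must switch together, are harmless but unnecessary. Writing the payoffs with $b_i^+$, as in \cref{n:eq:DuV}, covers every case at once.
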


\begin{deferredproof}{\Cref{prop:beta-foc}}
Under uniform voters no untied symmetric stationary profile has a dead zone,
for any $\alpha>0$, any $\beta<1$ and any $K$: if $b_1=b_2=b\le0$ then
$u_1=u_2=A\,G(\Delta)$ with $A=\tfrac{2(1+\alpha)}{\alpha\lambda}$
(\Cref{rem:deadzone}), so $\partial U_1/\partial c_1=-(1-\beta)A\,K(\Delta)<0$
for $\Delta\in(0,1)$---the argument of \Cref{n:prop:C2}(b), which uses only
the first-order conditions. Hence $b_1,b_2>0$ and
$u_1=A\,G(\Delta)-\tfrac1\alpha G(b_1)$,
$u_2=A\,G(\Delta)-\tfrac1\alpha G(b_2)$. Since $\partial_{c_1}\Delta=-1$,
$\partial_{c_1}b_1=-(1+\alpha)$ and $\partial_{c_1}b_2=-1$,
\[
\frac{\partial U_1}{\partial c_1}
=-(1-\beta)A\,K(\Delta)+\frac{1+\alpha}{\alpha}K(b_1)-\frac{\beta}{\alpha}K(b_2).
\]
At a symmetric profile $b_1=b_2=b$, and setting the derivative to zero yields
\cref{eq:beta-foc}.
\end{deferredproof}

\Cref{prop:beta-foc} characterizes stationary profiles; under
super-regularity its root is in fact the game's unique equilibrium, so the characterization is
not an artifact of restricting to symmetric or stationary profiles. Define
$\psi_\beta(t):=t+\tfrac{1+\beta}{1-\beta}\,K(t)/k(t)$---the baseline $\psi$
with the term $K/k=1/r$ weighted by how much a converted opponent voter is worth
relative to a converted abstainer. The $\alpha\downarrow0$ limit of
\cref{eq:beta-foc} then reads $\psi_\beta(\Delta)=1$, which is
\cref{eq:uv_foc} with competition priced in.
(As in \cref{eq:sym-foc}, substituting $\alpha=0$ outright yields an identity;
the content is at first order in $\alpha$.)

\noindent The next theorem is the continuity half of the section's message:
below the knife-edge the equilibrium is unique, for every $\alpha\ge0$.

\begin{theorem}[Existence and uniqueness for every $\beta<1$]\label{thm:beta-suff}
Let $x\sim U(0,1)$, $\beta\in[0,1)$, and let $K$ be super-regular.
\begin{enumerate}
\item[(a)] If $\alpha=0$, the game has a unique Nash equilibrium: symmetric,
with gap the unique root of $\psi_\beta(\Delta)=1$.
\item[(b)] If $\alpha>0$, the game has a unique Nash equilibrium: symmetric,
with no dead zone, with gap the unique root of \cref{eq:beta-foc}, and
strictly more polarized than every symmetric margin-game profile:
$\Delta^*(\alpha,\beta)>\tfrac{\alpha}{2+\alpha}$ (\Cref{prop:margin_uniform}). \guy{combine with \ref{lem:balance-gen} and \ref{prop:beta-foc} in AMASS in short to present equilibrium conditions}
\end{enumerate}
\end{theorem}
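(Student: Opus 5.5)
The plan follows the proof of \Cref{thm:alien-unique}, with the payoff $U_1=u_1-\beta u_2$ in place of $u_1$. It has three steps: necessity (every equilibrium is interior, has no dead zone and is symmetric), uniqueness of the root, and verification that the root is a global mutual best response.

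\emph{Necessity.} For uniform voters and $\beta<1$, ties and boundary positions are excluded by \Cref{n:lem:interior}. To rule out a dead zone at a best response of candidate~1, fix $c_2$ and suppose $b_1\le0$. Then $u_1=A\,G(\Delta)$ and $u_2=A\,G(\Delta)-\tfrac1\alpha G(b_2^+)$, with $A=\tfrac{2(1+\alpha)}{\alpha\lambda}$ (\Cref{rem:deadzone}). Hence
\[
\partial_{c_1}U_1=-(1-\beta)A\,K(\Delta)-\tfrac\beta\alpha K(b_2^+)<0 ,
\]
so candidate~1 strictly gains by moving out, and neither tent is detached at equilibrium. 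In the no-dead-zone region the derivative computed in the proof of \Cref{prop:beta-foc} gives two first-order conditions:
\[
(1+\alpha)K(b_1)-\beta K(b_2)=(1-\beta)\alpha A\,K(\Delta)
\]
and its mirror with $b_1,b_2$ exchanged. Subtracting them gives $(1+\alpha+\beta)\big(K(b_1)-K(b_2)\big)=0$. Since $K$ is strictly increasing, $b_1=b_2$, and $b_2-b_1=\alpha(c_1+c_2-1)$ then forces symmetry. The gap therefore solves \cref{eq:beta-foc}. This is the $\beta$-version of the necessity step before \Cref{thm:alien-unique}.

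\emph{Uniqueness of the root.} Dividing \cref{eq:beta-foc} by $K(\Delta)$ puts it in reverse-hazard form:
\[
\int_b^\Delta r=\log\frac{(1+\alpha-\beta)\lambda}{2(1+\alpha)(1-\beta)} .
\]
The right-hand side is strictly positive, because $(1+\alpha-\beta)\lambda-2(1+\alpha)(1-\beta)=\alpha(1+\alpha+\beta)>0$. Set $\Psi_\beta(\Delta):=\int_{b(\Delta)}^{\Delta}r$ minus this constant. Exactly as in the proof of \Cref{thm:alien-unique}, $\Psi_\beta$ is strictly decreasing under super-regularity, because $[b,\Delta]$ shifts right and shrinks. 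It equals $+\infty$ at $\Delta=\alpha/\lambda$, where $b=0$, and tends to a negative limit as $\Delta\uparrow1$. So there is a unique root $\Delta^*>\alpha/\lambda$, and by \Cref{prop:margin_uniform} $\alpha/\lambda$ is the largest symmetric margin-game separation, which is the strict comparison claimed in (b).

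Part (a) is the same argument with baseline formulas. At $\alpha=0$ one has $u_1=c_1K(\Delta)+\tfrac12G(\Delta)$ and its mirror. The two first-order conditions differ by a positive multiple of $(c_1-(1-c_2))\,k(\Delta)$, which forces symmetry. On the diagonal they read $\psi_\beta(\Delta)=1$, where $\psi_\beta$ is strictly increasing because $K/k$ is non-decreasing. Its root is the limit of the argument above.

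\emph{Equilibrium property.} Fix $c_2=\tfrac{1+\Delta^*}2$. On the own side, divide $\partial_{c_1}U_1$ by $\tfrac1\alpha K(\Delta)$. This gives
\[
(1+\alpha)\,e^{-\int_{b_1}^\Delta r}-\beta\,e^{-\int_{b_2}^\Delta r}-(1-\beta)\alpha A .
\]
As $c_1$ increases, $[b_1,\Delta]$ lengthens and moves left, so the first term decreases, as in \Cref{thm:alien-unique}. However $[b_2,\Delta]$ keeps its length and moves left, so the second exponential also decreases, and because it enters with a minus sign this pushes the expression the wrong way. This step is the main obstacle. I would first try to show that the whole expression has strictly negative derivative wherever it vanishes, which is enough for a single sign change from $+$ to $-$. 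The derivative is
\[
-(1+\alpha)\big[(1+\alpha)r(b_1)-r(\Delta)\big]e^{-\int_{b_1}^\Delta r}+\beta\big[r(b_2)-r(\Delta)\big]e^{-\int_{b_2}^\Delta r}.
\]
The plan is to bound the positive $\beta$-term using $r(b_2)\le r(b_1)$ when $b_1\le b_2$, which holds for $c_1$ at or left of the mirror position. Past the mirror point I would instead use the zero condition to trade the $\beta$-term against the $(1+\alpha)$-term. Together with the dead-zone monotonicity shown above, this makes $U_1(\cdot,c_2)$ single-peaked on the own side.

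For cross-over deviations I would use the fixed-distance comparison from the proof of \Cref{thm:alien-unique}. At distance $t$ from the opponent, the deviator's $u$ grows with the room $\varrho$ behind her. The opponent's $u_2$ depends on the deviator only through the boundary benefit on the opponent's own far side, and that boundary benefit does not change between the two mirror positions. Hence $U_1$ at a far-side point is at most $U_1$ at its own-side mirror. Alternatively, this step can be cited directly from \Cref{n:lem:reflection}, if that lemma covers general $\beta$.
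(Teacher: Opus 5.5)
Your necessity step and your uniqueness-of-root step are correct and coincide with the paper's proof (\Cref{n:thm:A2}, where the gap is the root of $J=L_\beta$). Your direct computation that $\partial_{c_1}U_1<0$ whenever $b_1\le0$ is a clean way to exclude dead zones. The gap is the own-side step for $\alpha,\beta>0$, which you flag but do not close, and your region bookkeeping is reversed. Since $b_2-b_1=\alpha(c_1+c_2-1)$, at $c_2=c_2^*=1-c_1^*$ one has $b_1\le b_2$ exactly when $c_1\ge c_1^*$, that is, at or to the \emph{right} of the mirror position. Write $X_i=K(b_i)/K(\Delta)$ and substitute the zero condition $\beta X_2=(1+\alpha)X_1-(1-\beta)\alpha A$ into your derivative. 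This gives
\[
-(1+\alpha)X_1\big[(1+\alpha)r(b_1)-r(b_2)\big]-(1-\beta)\alpha A\big[r(b_2)-r(\Delta)\big],
\]
which is negative on the right, where $r(b_2)\le r(b_1)$. On the left, however, $b_2<b_1$, and you would need $r(b_2)<(1+\alpha)r(b_1)$. Super-regularity does not supply this: it bounds $r(b_2)$ below by $r(b_1)$, not above. So the trade does not close the left half. What is actually true there is that $\partial_{c_1}U_1$ has no zero at all, and that is a global statement about $\Delta$ versus $\Delta^*$.

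The missing idea is the paper's centered decomposition. Fix $c_2=c_2^*$, let $\Delta=c_2^*-c_1$, and write $b_s:=b_s(\Delta)=\tfrac{\lambda\Delta-\alpha}2$. Then $b_1=b_s-\delta$ and $b_2=b_s+\delta$ with $\delta=\tfrac\alpha2(c_1-c_1^*)=-\tfrac\alpha2(\Delta-\Delta^*)$. The right side of \cref{n:eq:DuV} therefore becomes $\Phi(\Delta)+(1+\alpha)\big[K(b_1^+)-K(b_s^+)\big]-\beta\big[K(b_2^+)-K(b_s^+)\big]$, where $\Phi(\Delta)=(1+\alpha-\beta)K(b_s^+)-(1-\beta)\alpha A\,K(\Delta)$ is the symmetric-profile expression. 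The sign of $\Phi$ is that of $\Delta-\Delta^*$, by the monotonicity of $J$ you already proved. Both corrections weakly have the sign of $-\delta$, which is again the sign of $\Delta-\Delta^*$. Hence $\partial_{c_1}U_1$ is positive for $c_1<c_1^*$ and negative for $c_1>c_1^*$, the dead-zone piece included, with no second-order analysis.

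Three smaller points. First, in your cross-over comparison the opponent's boundary benefit does change. It is $t-\alpha(1-c_2^*)$ when candidate~$1$ deviates to $c_2^*-t$, and $t-\alpha c_2^*$ when she crosses to $c_2^*+t$. What saves the inequality is $c_2^*\ge\tfrac12$, which makes the opponent's share larger after the cross-over, the direction you need. Second, \Cref{n:lem:reflection} cannot be cited instead, since it fails for $\beta>0$ (\Cref{n:rem:reflection-beta}). Third, in part (a) you still owe existence of the root of $\psi_\beta=1$ and the own-side step. Both are easy: at $\alpha=0$ the derivative factors as $F'=k\,[Q-\chi_\beta]$ with $Q=c_2-\beta(1-c_2)$ constant in $c_1$, and $\chi_\beta$ is strictly increasing, so single-peakedness follows directly.
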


\noindent{\small\itshape Proof: \Cref{n:thm:A1,n:thm:A2}.}

\begin{corollary}\label{cor:beta-uu}
If $x,\kappa\sim U(0,1)$ then
\begin{equation}\label{eq:beta-closed}
\Delta^*(\alpha,\beta)=
\frac{\alpha(2+\alpha)(1+\alpha-\beta)}
     {(1+\alpha-\beta)(2+\alpha)^2-4(1-\beta)(1+\alpha)},
\qquad
\Delta^*(0,\beta)=\frac{1-\beta}{2},
\end{equation}
the second entry being the $\alpha\downarrow0$ limit of the first, equivalently
the root of $\psi_\beta(\Delta)=1$ for uniform costs.
\end{corollary}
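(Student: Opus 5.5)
The plan is to specialize the characterization of \Cref{thm:beta-suff} to $K(t)=t$, solve the resulting linear equation for $\Delta$, and then check the $\alpha\downarrow0$ limit against part~(a) of the same theorem.

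\emph{Hypotheses.} First confirm that uniform costs meet the hypotheses of \Cref{thm:beta-suff}. With $k\equiv1$, \Cref{ass:reg} holds. The reverse hazard rate $r(t)=1/t$ is non-increasing, so $K$ is super-regular. Hence, for every $\beta\in[0,1)$:
\begin{itemize}
\item for $\alpha>0$, the unique equilibrium is symmetric, has no dead zone, and its gap is the unique root of \cref{eq:beta-foc};
\item for $\alpha=0$, the gap is the unique root of $\psi_\beta(\Delta)=1$.
\end{itemize}
So nothing remains beyond solving these two equations.

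\emph{Case $\alpha>0$.} Put $K=\mathrm{id}$ in \cref{eq:beta-foc} and substitute $b=(\lambda\Delta-\alpha)/2$ from \cref{eq:b-delta}, with $\lambda=2+\alpha$. Multiplying through by $2\lambda$ gives
\[
(1+\alpha-\beta)\big(\lambda^2\Delta-\alpha\lambda\big)=4(1-\beta)(1+\alpha)\,\Delta .
\]
This is linear in $\Delta$, so
\[
\Delta=\frac{\alpha\lambda(1+\alpha-\beta)}{(1+\alpha-\beta)\lambda^2-4(1-\beta)(1+\alpha)},
\]
which is the first entry of \cref{eq:beta-closed}.

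The one point needing care is that the denominator is non-zero, so that the linear equation really has this root. I would expand it directly:
\[
(1+\alpha-\beta)(4+4\alpha+\alpha^2)-4(1-\beta)(1+\alpha)=\alpha\big(4+(5-\beta)\alpha+\alpha^2\big),
\]
which is strictly positive for $\alpha>0$ and $\beta<1$. The root is therefore well defined. By the uniqueness already supplied by \Cref{thm:beta-suff}(b), it is the equilibrium gap, and it automatically exceeds $\alpha/\lambda$.

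\emph{Case $\alpha=0$.} With uniform costs, $\psi_\beta(t)=t+\tfrac{1+\beta}{1-\beta}\,t=\tfrac{2}{1-\beta}\,t$. Setting $\psi_\beta(\Delta)=1$ gives $\Delta^*(0,\beta)=\tfrac{1-\beta}2$, which is \Cref{thm:beta-suff}(a).

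\emph{Agreement of the two entries.} To see that the second entry of \cref{eq:beta-closed} is the limit of the first, cancel the common factor $\alpha$ using the factored denominator above:
\[
\Delta^*(\alpha,\beta)=\frac{(2+\alpha)(1+\alpha-\beta)}{4+(5-\beta)\alpha+\alpha^2}\;\longrightarrow\;\frac{2(1-\beta)}{4}=\frac{1-\beta}2
\quad\text{as }\alpha\downarrow0 .
\]

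There is no real obstacle here. The only step where an error could slip in is the expansion of the denominator, which is what exhibits the factor $\alpha$ and makes both the positivity claim and the limit immediate.
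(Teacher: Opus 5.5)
Your proposal is correct and takes the same route as the paper: substitute $K=\mathrm{id}$ and $b=(\lambda\Delta-\alpha)/2$ into \cref{eq:beta-foc} and solve the resulting linear equation. Factoring the denominator as $\alpha\big(4+(5-\beta)\alpha+\alpha^2\big)$ is a useful addition, since it makes the denominator's positivity and the $\alpha\downarrow0$ limit explicit; the paper asserts both without computing them and checks the $\beta=0$ and $\beta\to1$ endpoints instead.
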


\begin{proof}
Put $K=\mathrm{id}$ and $b=(\lambda\Delta-\alpha)/2$ in \cref{eq:beta-foc} and
solve the resulting linear equation in $\Delta$. At $\beta=0$ this returns
$\tfrac{2+\alpha}{4+\alpha}$ (\Cref{cor:alien-uu}), and as $\beta\to1$ it tends
to $\tfrac\alpha{2+\alpha}$, the top of the band of \Cref{sec:comp:margin}.
\end{proof}

\paragraph{Uniform costs, arbitrary electorates}
The quartile structure of \Cref{cor_quarter} also generalizes, giving the
distribution-free face of the same de-polarization.

\begin{proposition}[Mass-gap law]\label{prop:massgap}
Let $\alpha=0$, $\kappa\sim U(0,1)$, $\beta\in[0,1)$, and let $v$ satisfy
\Cref{ass:reg}. Every Nash equilibrium satisfies
\[
V(c_2)-V(c_1)=\frac{1-\beta}2,
\qquad
2V(c_1)=V(m)+\beta\big(1-V(m)\big):
\]
the candidates trap voter mass exactly $\tfrac{1-\beta}2$, independently of
$v$. If moreover $\rho_v\le\tfrac4{1-\beta}$, each candidate's payoff is
concave on her side, and equilibria are exactly the solutions passing the two
cross-over checks.
\end{proposition}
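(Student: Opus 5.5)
The plan is to specialize the general balance condition of \Cref{lem:balance-gen} to $\alpha=0$ and $k\equiv1$, where every marginal mass becomes a voter mass, and then to obtain the concavity statement from a direct second-derivative computation. Let $(c_1,c_2)$ be a Nash equilibrium. At $\alpha=0$ and $\beta<1$ it is untied and interior (\Cref{n:lem:interior}), so we label $0<c_1<c_2<1$. The payoffs are differentiable there (\Cref{n:lem:master}), so the profile is stationary and \Cref{lem:balance-gen} applies with $\alpha=0$:
\[
(I_i-O_i)+\beta\,(I_{-i}+O_{-i})=0 .
\]
With $k\equiv1$ the marginal masses are
\[
O_1=V(c_1),\qquad I_1=V(m)-V(c_1),\qquad I_2=V(c_2)-V(m),\qquad O_2=1-V(c_2).
\]
In particular $I_1+O_1=V(m)$ and $I_2+O_2=1-V(m)$. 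Candidate~1's condition then reads $V(m)-2V(c_1)+\beta\big(1-V(m)\big)=0$, which is the second displayed identity. Candidate~2's condition reads $2V(c_2)-1-V(m)+\beta V(m)=0$. Subtracting the first from the second, the $V(m)$ and $\beta V(m)$ terms cancel and we are left with $2\big(V(c_2)-V(c_1)\big)=1-\beta$. This is the mass-gap law, and it visibly involves no property of $v$.

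For the concavity claim I would differentiate directly rather than go through the lemma. For $c_1<c_2$ with $K=\mathrm{id}$,
\[
u_1=\Delta\,V(c_1)+\int_{c_1}^{m}(c_1+c_2-2x)\,v(x)\,dx,
\qquad
u_2=\Delta\big(1-V(c_2)\big)+\int_{m}^{c_2}(2x-c_1-c_2)\,v(x)\,dx .
\]
The Leibniz boundary terms vanish at $m$ and cancel at $c_1$, which gives
\[
\partial_{c_1}u_1=V(m)-2V(c_1),\qquad \partial_{c_1}u_2=-\big(1-V(m)\big).
\]
Differentiating once more, with $\partial_{c_1}m=\tfrac12$,
\[
\partial^2_{c_1}U_1=\tfrac{1-\beta}{2}\,v(m)-2\,v(c_1).
\]
This is non-positive whenever $v(m)/v(c_1)\le 4/(1-\beta)$, which the hypothesis $\rho_v\le\tfrac{4}{1-\beta}$ guarantees. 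The same computation, mirrored, covers candidate~1 on the far side $c_1>c_2$, where the flank and the middle swap roles, and likewise covers candidate~2.

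It follows that $U_1(\cdot,c_2)$ is concave on each side of $c_2$. On her own side, stationarity therefore makes $c_1$ the maximizer. It remains to compare this value with the best available on the far side, and with the tie $c_1=c_2$, which yields $0$ by continuity. These comparisons are exactly the two cross-over checks. Conversely, every equilibrium passes them, which gives the ``exactly''.

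The main obstacle is not the algebra but the bookkeeping. One has to confirm that \Cref{n:lem:interior} really does cover $\alpha=0$ for every $\beta<1$ and arbitrary $v$, so that no boundary or tied equilibrium escapes the first-order conditions. One also has to check that the far-side branch is concave under the same ratio bound, which I expect to follow from the mirror computation.
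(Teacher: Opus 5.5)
Your proof is correct and is essentially the paper's own argument (the proof of \Cref{n:prop:C4}). Both take the two first-order conditions from the $\alpha=0$ master derivative with $k\equiv1$, combine them into the mass-gap law, and get concavity from the second derivative $\tfrac{1-\beta}{2}v(m)-2v(c_1)$. One slip: with both conditions written in your form $(\cdot)=0$, the mass-gap law comes from \emph{adding} them, since subtracting leaves $2V(c_1)+2V(c_2)-2(1-\beta)V(m)=1+\beta$. Also, the far-side concavity you flag as an obstacle is not needed, because the cross-over check compares directly against the far-side supremum.
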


\noindent{\small\itshape Proof: \Cref{n:prop:C4}.}

\begin{corollary}[Quantile equilibria]\label{cor:quantile}
In the setting of \Cref{prop:massgap} with $v$ symmetric about $\tfrac12$,
the \emph{quantile profile} $(c_1^*,c_2^*)$ with $V(c_1^*)=\tfrac{1+\beta}4$,
$V(c_2^*)=\tfrac{3-\beta}4$ is a Nash equilibrium under explicit bounds on
$v$ (\Cref{m:thm:E1});
for convex symmetric electorates the single condition
$v(c_2^*)/v(\tfrac12)\le\tfrac4{1-\beta}$ suffices. In particular, for the
arcsine electorate $v=\mathrm{Beta}(\tfrac12,\tfrac12)$,
\[
c_2^*=\sin^2\!\Big(\frac{\pi(3-\beta)}{8}\Big),
\qquad
\frac{v(c_2^*)}{v(\tfrac12)}=\frac1{\sin\big(\tfrac{\pi(3-\beta)}4\big)}
\in(1,\sqrt2\,],
\]
so a pure equilibrium exists for \emph{every} $\beta\in[0,1)$.
\end{corollary}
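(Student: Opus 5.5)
The plan has four steps: check that the quantile profile meets the necessary conditions of \Cref{prop:massgap}; show each candidate's payoff is single-peaked on her own side; dispose of cross-over deviations by symmetry; then specialize to the arcsine electorate.

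\emph{Step 1: the quantile profile meets the mass-gap law, and is the only symmetric profile that does.} Since $v$ is symmetric, $V(c_1^*)=\tfrac{1+\beta}4$ and $V(c_2^*)=\tfrac{3-\beta}4$ give $c_2^*=1-c_1^*$. Hence $m=\tfrac12$ and $V(m)=\tfrac12$. Then $V(c_2^*)-V(c_1^*)=\tfrac{1-\beta}2$, and $2V(c_1^*)=\tfrac{1+\beta}2=V(m)+\beta(1-V(m))$. Conversely, any symmetric profile satisfying the law has $V(m)=\tfrac12$, and the second equation then pins $V(c_1)$. So the statement reduces to sufficiency.

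\emph{Step 2: own-side optimality, the main obstacle.} I would compute the payoff directly, without citing the concavity clause of \Cref{prop:massgap} (its hypothesis $\rho_v\le\tfrac4{1-\beta}$ is global and too strong here). With $k\equiv1$ and $\alpha=0$, on $c_1<c_2$:
\[
u_1=\Delta V(c_1)+\int_{c_1}^m(c_1+c_2-2x)v\,dx,
\qquad
\partial_{c_1}u_1=V(m)-2V(c_1),
\qquad
\partial_{c_1}u_2=-(1-V(m)).
\]
This gives
\[
\partial_{c_1}U_1=V(m)-2V(c_1)+\beta\big(1-V(m)\big),
\qquad
\partial^2_{c_1}U_1=\tfrac{1-\beta}2\,v(m)-2v(c_1).
\]
Own-side concavity on $c_1\in(0,c_2^*)$ therefore needs only $v(m)/v(c_1)\le\tfrac4{1-\beta}$ for $m=\tfrac{c_1+c_2^*}2\in(\tfrac{c_2^*}2,c_2^*)$. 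This is the kind of local density bound I expect \Cref{m:thm:E1} to record.

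For a convex symmetric $v$ the denominator is at least $v(\tfrac12)$, and $v(m)$ is maximized at an endpoint of $[\tfrac{c_2^*}2,c_2^*]$. The delicate point is the left endpoint when $c_2^*<\tfrac23$, because then $v(\tfrac{c_2^*}2)$ may exceed $v(c_2^*)$. I would first try to avoid needing concavity there. On the stretch where $m\le\tfrac12$, $V(m)\le\tfrac12$ and $c_1<m$, and I would show that $\partial_{c_1}U_1$ keeps a fixed sign from the displayed formula, so single-peakedness follows from a sign argument rather than from concavity. If that fails, the fallback is to state the bound with $\sup_{[c_2^*/2,\,c_2^*]}v$ and note that it reduces to $v(c_2^*)$ whenever $c_2^*\ge\tfrac23$.

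\emph{Step 3: cross-over deviations.} The profile is a mirror profile of a symmetric electorate, so \Cref{n:lem:reflection} makes own-side optimality sufficient. In its $\beta$-form, every far-side deviation is weakly dominated by its reflection onto the own side. Together with Step 2 and \Cref{lem:interior} for ties and boundaries, this gives a Nash equilibrium.

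\emph{Step 4: the arcsine electorate.} Here $V(x)=\tfrac2\pi\arcsin\sqrt x$, so $V(c_2^*)=\tfrac{3-\beta}4$ gives $c_2^*=\sin^2\theta$ with $\theta=\tfrac{\pi(3-\beta)}8\in(\tfrac\pi4,\tfrac{3\pi}8]$, so $c_2^*=\sin^2\theta\in(\tfrac12,0.854]$. From $v(x)=\big(\pi\sqrt{x(1-x)}\big)^{-1}$ we get $v(\tfrac12)=\tfrac2\pi$ and
\[
\frac{v(c_2^*)}{v(\tfrac12)}=\frac{1}{2\sin\theta\cos\theta}=\frac1{\sin2\theta}\in(1,\sqrt2\,].
\]
Since $\sqrt2<4\le\tfrac4{1-\beta}$, and the arcsine density is convex and symmetric, the condition holds for every $\beta\in[0,1)$. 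This requires that whatever endpoint issue remains in Step 2 is covered for arcsine; there $v$ is U-shaped about $\tfrac12$ and the slack of $4/\sqrt2$ is large, so I expect it to be.
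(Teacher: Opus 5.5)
Your Step~3 does not go through as written. \Cref{n:lem:reflection} is stated and proved only for $\beta=0$. Its step~(ii) controls the deviator's own share $u_1$, and \Cref{n:rem:reflection-beta}(b) notes explicitly that for $\beta>0$ this does not control $U_1=u_1-\beta u_2$, because the opponent's share moves as well. There is no $\beta$-form of the lemma to cite.

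The cross-over is exactly the step on which the paper's proof of \Cref{m:thm:E1} spends its effort. It bounds every far-side payoff by $\tfrac{(1-\beta)^2}{4}t-\tfrac{s_0}{2}t^2\le\tfrac{(1-\beta)^4}{32s_0}$, using $u_2\ge t\,V(c_2^*)$ for the opponent. It bounds the equilibrium payoff below by $\tfrac{(1-\beta^2)(1-\beta)}{8\bar v}$ via the mass-gap law. Comparing the two is condition (E1-b), which for convex $v$ follows from $v(c_2^*)/v(\tfrac12)\le\tfrac4{1-\beta}\le\tfrac{4(1+\beta)}{(1-\beta)^2}$.

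Your domination claim is nonetheless true at $\alpha=0$, if you prove it by pairing each cross-over $c_2^*+s$ with the own-side point $c_2^*-s$ at the same distance. Let $L_s(a)$ and $R_s(a)$ be the shares of the left and right candidate at the profile $(a,a+s)$. They are $\int K\big((s-2(x-a)^+)^+\big)v\,dx$ and $\int K\big((s-2(a+s-x)^+)^+\big)v\,dx$, with integrands non-increasing, resp.\ non-decreasing, in $x-a$. Hence $L_s$ is non-decreasing and $R_s$ non-increasing in $a$.

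By symmetry of $v$, the cross-over gives the deviator $R_s(c_2^*)=L_s(c_1^*-s)$ and the opponent $L_s(c_2^*)=R_s(c_1^*-s)$. The mirror gives them $L_s(c_2^*-s)$ and $R_s(c_2^*-s)$. Since $c_2^*-s>c_1^*-s$, the mirror gives the deviator weakly more and the opponent weakly less, hence weakly higher $U_1$ for every $\beta$.

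Completed this way, your route is genuinely stronger than the paper's: (E1-a) alone suffices and (E1-b) is never needed. Separately, the tie is excluded by $U_1(c_1^*,c_2^*)=(1-\beta)\,u_1(c_1^*,c_2^*)>0$, not by \Cref{lem:interior}.

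Your Step~2 also leaves the convex clause unproved. The ``delicate endpoint'' comes from bounding numerator and denominator separately, $\sup_{[c_2^*/2,\,c_2^*]}v$ against $v(\tfrac12)$. When $c_2^*<\tfrac23$, that decoupled bound is not implied by $v(c_2^*)/v(\tfrac12)\le\tfrac4{1-\beta}$. So your fallback would prove the result only under a stronger hypothesis than the one stated, and the alternative sign argument is never carried out.

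The fix is to keep the coupling between $v(m)$ and $v(c_1)$.
\begin{itemize}
\item For $c_1\le c_1^*$ you have $c_1<m\le\tfrac12$. A convex symmetric $v$ is non-increasing on $[0,\tfrac12]$, so $v(m)\le v(c_1)$.
\item For $c_1>c_1^*$ you have $m\in(\tfrac12,c_2^*)$, so $v(m)\le v(c_2^*)$, while $v(c_1)\ge v(\tfrac12)$.
\end{itemize}
This is the paper's one-line observation: $m$ is the midpoint of $[c_1,c_2^*]$, so convexity gives $v(m)\le\max\{v(c_1),v(c_2^*)\}$. The ratio in (E1-a) is then at most $\max\{1,\,v(c_2^*)/v(\tfrac12)\}$.

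Steps~1 and~4, including the arcsine computation, are correct.
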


Note how the two conditions of \Cref{prop:massgap} interpolate the whole
paper's geometry: at $\beta=0$ they are \Cref{thm_uniform_costs}; as
$\beta\uparrow1$ the trapped mass vanishes and the quantiles collapse onto
the median of \Cref{thm:margin_median}, moving from the quartiles to the
median along an explicit path of quantile pairs.

\subsection{Costs and Polarization at a Fixed $\beta$}\label{sec:comp:fixed}

At a fixed $\beta$ the comparative statics of \Cref{sec:base:polar} carry
over as they do at a fixed $\alpha$---the cost channel (\Cref{prop:rhr-beta})
and the transmission of mass polarization (\Cref{prop:mass-gen})---and the
turnout curve of \Cref{prop:turnout} does not depend on $\beta$ at all
(\Cref{prop:turnout-beta}).

\begin{proposition}[The cost channel survives competition]\label{prop:rhr-beta}
In the setting of \Cref{thm:beta-suff}, let $\widetilde K$ be a second
super-regular cost distribution with $\widetilde r\ge r$ pointwise on
$(0,1)$. Then $\widetilde\Delta^*(\alpha,\beta)\ge\Delta^*(\alpha,\beta)$ for
every $\alpha\ge0$ and every $\beta\in[0,1)$. \guy{key result in AMASS}
\end{proposition}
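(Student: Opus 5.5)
The argument copies the proof of \Cref{thm:rhr-alpha}. The only new work is to recast the equilibrium condition of \Cref{thm:beta-suff} as a scalar function of $\Delta$ that depends on $K$ only through $r$, is strictly decreasing in $\Delta$, and is pointwise monotone in $r$.

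I treat the case $\alpha>0$ first. By \Cref{thm:beta-suff}(b) the unique equilibrium gap is the unique root of \cref{eq:beta-foc}, with $b=(\lambda\Delta-\alpha)/2$ as in \cref{eq:b-delta}. Dividing by $K(\Delta)$ and taking logarithms turns \cref{eq:beta-foc} into
\[
\Psi_\beta(\Delta):=\int_{b(\Delta)}^{\Delta}r(t)\,dt-\log\frac{(1+\alpha-\beta)\lambda}{2(1-\beta)(1+\alpha)}=0 ,
\]
on the range $\Delta>\alpha/\lambda$, where $b>0$. The constant is positive because $(1+\alpha-\beta)\lambda-2(1-\beta)(1+\alpha)=\alpha(1+\alpha+\beta)>0$. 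The monotonicity argument from the proof of \Cref{thm:alien-unique} then applies word for word. As $\Delta$ grows, the interval $[b,\Delta]$ shifts right and shrinks. Since $r$ is positive and non-increasing (super-regularity), $\Psi_\beta$ is strictly decreasing. Its boundary values are $+\infty$ at $\Delta=\alpha/\lambda$ and a negative constant as $\Delta\uparrow1$, so $\Psi_\beta$ has a unique root.

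For the comparison, the hypothesis $\widetilde r\ge r$ gives $\widetilde\Psi_\beta\ge\Psi_\beta$ pointwise, since the subtracted constant does not involve $K$. Therefore
\[
\widetilde\Psi_\beta(\Delta^*)\ge\Psi_\beta(\Delta^*)=0=\widetilde\Psi_\beta(\widetilde\Delta^*),
\]
and because $\widetilde\Psi_\beta$ is strictly decreasing, $\widetilde\Delta^*\ge\Delta^*$. The step I need to check most carefully is that the root of $\Psi_\beta$ really is the equilibrium gap: the equilibrium must be interior with $b>0$, that is, with no dead zone. \Cref{thm:beta-suff}(b) asserts exactly this, and the proof of \Cref{prop:beta-foc} already rules out $b\le0$.

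For $\alpha=0$, \Cref{thm:beta-suff}(a) gives the gap as the unique root of $\psi_\beta(\Delta)=1$, where $\psi_\beta(t)=t+\tfrac{1+\beta}{1-\beta}/r(t)$. Following \Cref{cor:crossing}, I rewrite this as the crossing condition
\[
r(\Delta)=\frac{1+\beta}{(1-\beta)(1-\Delta)} .
\]
Under super-regularity the left side is non-increasing and the right side is strictly increasing, and both are positive. The case is then the argument of \Cref{prop:rh}. Raising $r$ to $\widetilde r$ makes $\widetilde\psi_\beta\le\psi_\beta$ pointwise, and $\widetilde\psi_\beta$ is strictly increasing. So $\widetilde\psi_\beta(\Delta^*)\le1=\widetilde\psi_\beta(\widetilde\Delta^*)$, which yields $\widetilde\Delta^*\ge\Delta^*$ and completes the statement for every $\beta\in[0,1)$.
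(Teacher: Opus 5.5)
Your proposal is correct and is essentially the paper's own proof (\Cref{n:prop:B3}). For $\alpha>0$ the paper writes the equilibrium condition as $J(\Delta)=\int_{b}^{\Delta}r=L_\beta$, with the same $K$-free constant $L_\beta$ you derive, and concludes from $J$ being strictly decreasing and pointwise monotone in $r$; for $\alpha=0$ it compares $\widetilde\psi_\beta\le\psi_\beta$ exactly as you do.
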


\noindent{\small\itshape Proof: \Cref{n:prop:B3}.}

\medskip
At $\beta=0$ this is \Cref{thm:rhr-alpha}, and at $(\alpha,\beta)=(0,0)$ it
is \Cref{prop:rh}: more responsive electorates sustain weakly more
polarization, uniformly in competitiveness.

\paragraph{Mass polarization under competition}
The transmission of mass polarization is unchanged as well.

\begin{proposition}[Mass polarization scales elite polarization, general form]\label{prop:mass-gen}
\Cref{prop:mass} holds verbatim for every $\beta\in[0,1)$: the game
$(V_\sigma,K,\alpha,\beta)$ is isomorphic to $(V_0,K_\sigma,\alpha,\beta)$, and
in part~(b) the setting is that of \Cref{thm:beta-suff}.
\end{proposition}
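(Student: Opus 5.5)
The plan is to rerun the proof of \Cref{prop:mass} with $U_i=u_i-\beta u_{-i}$ in place of $u_i$, and to check at each step that $\beta$ enters only through a fixed linear combination of the two vote shares. Linear combinations of vote shares commute with the affine change of variables and with rescaling of $K$.

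\textbf{Step 1: restricting strategies to the support.} The first paragraph of the proof of \Cref{prop:mass} shows that no best response lies outside $\operatorname{supp}V_\sigma$. That argument was already written for every $\beta\ge0$. Moving $c_1$ toward the support raises every $B_1(x)$ pointwise on the support and lowers every $B_2(x)$. Hence $u_1$ weakly rises and $u_2$ weakly falls, so $U_1$ weakly rises. Strictness comes from the same two cases, $c_2\neq L$ and $c_2=L$. The tie case is handled for $\beta>0$ via the tied profile $(L,L)$, and for $\beta=0$ via an interior move. So Step 1 transfers verbatim, and the strategy space may be taken to be $[\tfrac{1-\sigma}2,\tfrac{1+\sigma}2]$, on which the affine map is a bijection.

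\textbf{Step 2: the isomorphism.} Under $x=\tfrac12+\sigma(a-\tfrac12)$, each distance satisfies $d_i=\sigma|a-a_i|$. This holds for both candidates, so $B_i=\sigma B_i^0$ at every $\alpha$. It follows that $u_i=\int K(\sigma B_i^0)\,dV_0$, which is the vote share in $(V_0,K_\sigma,\alpha)$. Because the map $(u_1,u_2)\mapsto(u_i-\beta u_{-i})$ is the same linear map in both games, the two games have identical payoff functions $U_i$. They therefore have the same best responses and the same equilibria.

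\textbf{Step 3: parts (a) and (b).} Part (a) follows because $K_\sigma=\sigma^pK$, and \Cref{prop:scale-gen} already covers every $\beta\in[0,1]$. So equilibria in $a$-space are fixed, and $\Delta^*(\sigma)=\sigma\Delta^*(1)$. For part (b), the identity $r_\sigma(t)=\eta(\sigma t)/t$ does not involve $\beta$. Monotonicity of $\eta$ orders $r_\sigma$ pointwise in $\sigma$. Super-regularity and admissibility of $K_\sigma$ are inherited exactly as in \Cref{prop:mass}: $r_\sigma(t)=\sigma r(\sigma t)$ is non-increasing, and $K_\sigma(1)=K(\sigma)\in(0,1]$. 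So every $K_\sigma$ meets the hypotheses of \Cref{thm:beta-suff}. In place of \Cref{thm:rhr-alpha} one invokes \Cref{prop:rhr-beta}. It gives that the $a$-space gap $\Delta_a(\sigma)$ is monotone in $\sigma$, increasing when $\eta$ is non-decreasing and decreasing when $\eta$ is non-increasing. The amplification or damping of $\Delta^*(\sigma)=\sigma\Delta_a(\sigma)$ then follows.

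\textbf{Main obstacle.} The one place needing care is the supplementary claim in the proof of \Cref{prop:mass} that $\Delta^*(\sigma)$ rises in $\sigma$ outright. I would redo it with \cref{eq:beta-foc} in place of \cref{eq:sym-foc}. In $x$-space, with $u=\sigma t$, the condition becomes a relation between $K(\Delta^*-\tfrac\alpha2(\sigma-\Delta^*))$ and $K(\Delta^*)$ with a $\beta$-dependent constant. Written as an $\int r$ identity with right side $\log$ of a constant ratio, its left side is increasing in $\sigma$ and, for non-increasing $r$, decreasing in $\Delta^*$. This is the same argument as before. For $\alpha=0$, where \cref{eq:beta-foc} degenerates, I would use the root of $\psi_\beta=1$ for $K_\sigma$ instead, namely $\tfrac{1-\beta}{1+\beta}\,r(\Delta^*)=1/(\sigma-\Delta^*)$, and argue as with \Cref{cor:crossing}.
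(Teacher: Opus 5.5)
Your proposal is correct and follows the paper's proof. The paper likewise observes that $B_i=\sigma B_i^0$ makes both vote shares, and hence $U_i=u_i-\beta u_{-i}$, those of $(V_0,K_\sigma,\alpha,\beta)$; notes that the support-restriction argument of \Cref{prop:mass} was already written for every $\beta\ge0$; and takes part~(a) from \Cref{prop:scale-gen} and part~(b) from \Cref{prop:rhr-beta} using super-regularity of $K_\sigma$. Your extra check that $\Delta^*(\sigma)$ rises outright, via \cref{eq:beta-foc} for $\alpha>0$ and $\tfrac{1-\beta}{1+\beta}\,r(\Delta^*)=1/(\sigma-\Delta^*)$ at $\alpha=0$, is not in the paper's proof but is correct.
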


\begin{proof}
Under the affine map both benefits scale, $B_i=\sigma B_i^0$, so both vote
shares are those of $(V_0,K_\sigma,\alpha)$, and hence so is
$U_i=u_i-\beta u_{-i}$; the argument that no best response lies outside
$\operatorname{supp}V_\sigma$ is already made for every $\beta\ge0$ in the
proof of \Cref{prop:mass}. Part~(a) then follows from \Cref{prop:scale-gen},
and part~(b) from \Cref{prop:rhr-beta} in place of \Cref{thm:rhr-alpha}, since
$K_\sigma$ is super-regular whenever $K$ is.
\end{proof}

\noindent So under power costs elite polarization is exactly proportional to
mass polarization at every level of competition, and otherwise the elasticity
of $K$ decides whether it is amplified or damped. At the margin endpoint mass
polarization acts differently, only through the shape of the electorate at
the centre (\Cref{thm:quartic}).

\paragraph{Turnout: competition lowers it}
Turnout depends on the platforms but not on $\beta$, which enters only the
candidates' objectives; so the turnout curve of \Cref{prop:turnout} is the
same at every level of competition, and what competition changes is where on
it the equilibrium sits.

\begin{proposition}[Turnout under competition]\label{prop:turnout-beta}
In the setting of \Cref{thm:beta-suff}, let $\alpha>0$. The formula for
$T'(\Delta)$ and part~(b) of \Cref{prop:turnout} hold unchanged for every
$\beta\in[0,1)$, and at the equilibrium
\[
T'(\Delta^*)=\frac{2(1+\alpha)(1+\beta)}{\lambda(1+\alpha-\beta)}\,K(\Delta^*)>0 .
\]
Equilibrium turnout $T\big(\Delta^*(\alpha,\beta)\big)$ is strictly decreasing
in $\beta$.
\end{proposition}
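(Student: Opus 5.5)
Since turnout $T=u_1+u_2$ is a function of the platforms and voter behaviour only, $\beta$ never enters it. So the formula for $T'(\Delta)$ and part~(b) of \Cref{prop:turnout} transfer verbatim. In the setting of \Cref{thm:beta-suff} the equilibrium is symmetric with no dead zone. Moreover $\Delta^*(\alpha,\beta)>\alpha/\lambda$, so $b>0$ and the representation
\[
T=2\big[A\,G(\Delta)-\tfrac1\alpha G(b)\big],\qquad A=\tfrac{2(1+\alpha)}{\alpha\lambda},
\]
from the proof of \Cref{prop:turnout} is the relevant one. For the displayed value I would substitute the stationarity condition \cref{eq:beta-foc}, in the form $K(b)/K(\Delta)=\frac{2(1+\alpha)(1-\beta)}{\lambda(1+\alpha-\beta)}$, into
\[
T'(\Delta)=\tfrac2\alpha\Big[\tfrac{2(1+\alpha)}\lambda K(\Delta)-\tfrac\lambda2K(b)\Big].
\]
The bracket becomes $(1+\alpha)K(\Delta)\big[\tfrac2\lambda-\tfrac{1-\beta}{1+\alpha-\beta}\big]$. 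The difference in the inner bracket simplifies to $\frac{\alpha(1+\beta)}{\lambda(1+\alpha-\beta)}$, because $2(1+\alpha-\beta)-(1-\beta)(2+\alpha)=\alpha(1+\beta)$. This gives the stated $T'(\Delta^*)$. It is positive since $\beta<1\le1+\alpha$ and $K(\Delta^*)>0$.

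For the monotonicity in $\beta$ I would write $T\big(\Delta^*(\alpha,\beta)\big)$ as a composition and prove two things. First, $\Delta^*(\alpha,\beta)$ is strictly decreasing in $\beta$. Second, $T$ is strictly increasing on an interval containing every $\Delta^*(\alpha,\beta)$ with $\beta\in[0,1)$.

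For the first, take logarithms in \cref{eq:beta-foc}:
\[
\int_{b(\Delta)}^{\Delta}r(t)\,dt=\log\frac{\lambda(1+\alpha-\beta)}{2(1+\alpha)(1-\beta)}.
\]
The right side is strictly increasing in $\beta$, because $(1+\alpha-\beta)/(1-\beta)$ has derivative $\alpha/(1-\beta)^2>0$. The left side is $\Psi+\log(\lambda/2)$ in the notation of the proof of \Cref{thm:alien-unique}. That proof shows it is strictly decreasing in $\Delta$ under super-regularity: the interval $[b,\Delta]$ shifts right and shrinks while $r$ is non-increasing. The unique root, which \Cref{thm:beta-suff} identifies with the equilibrium gap, therefore moves strictly left as $\beta$ rises.

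For the second, which is the step I expect to need care, recall that by \Cref{prop:turnout}, $T'(\Delta)<0$ exactly when $K(b)/K(\Delta)>4(1+\alpha)/\lambda^2$. By the same interval argument, $\log K(b)-\log K(\Delta)=-\int_b^\Delta r$ is strictly increasing in $\Delta$ on $(\alpha/\lambda,1)$. So the set where $T'>0$ is an initial segment $(\alpha/\lambda,\bar\Delta)$ of that range: the ratio crosses the threshold at most once, upward. Each $\Delta^*(\alpha,\beta)$ lies in this segment, because $T'(\Delta^*(\alpha,\beta))>0$ by the first part. Hence the whole image of $\beta\mapsto\Delta^*(\alpha,\beta)$ lies in a region where $T$ is strictly increasing. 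Composing a strictly increasing $T$ with a strictly decreasing $\Delta^*$ gives that equilibrium turnout is strictly decreasing in $\beta$.
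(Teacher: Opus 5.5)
Your proof is correct, and up to the display it follows the paper: substitute \cref{eq:beta-foc} into the $\beta$-free formula for $T'(\Delta)$, whose sign is that of $2(1+\alpha-\beta)-\lambda(1-\beta)=\alpha(1+\beta)$. Your log-form derivation that $\Delta^*$ strictly decreases in $\beta$ is the argument behind \Cref{prop:beta-mono}, which the paper simply cites.

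The routes part at the last step. There you need $T'>0$ not only at equilibrium gaps but on the whole stretch between them.

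\emph{The paper's route.} It gets this from continuity. The root of \cref{eq:beta-foc} is unique, hence continuous in $\beta$: it is the inverse of a continuous, strictly decreasing function of $\Delta$, evaluated at a continuous function of $\beta$. So as $\beta$ rises, $\Delta^*$ sweeps an interval each of whose points is itself an equilibrium gap, and the display applies at every one of them.

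\emph{Your route.} You use single crossing instead. Under super-regularity, $K(b)/K(\Delta)=\exp\big(-\int_b^\Delta r\big)$ is strictly increasing in $\Delta$. Hence $\{T'>0\}$ is an initial segment of $(\alpha/\lambda,1)$, and it contains every point below any equilibrium gap.

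\emph{What each buys.} Your route invokes super-regularity a second time but needs no continuity of $\beta\mapsto\Delta^*$. It also proves more than is asked. Combined with $T'=2AK(\Delta)>0$ for $\Delta\le\alpha/\lambda$ and $T'(1)<0$, it shows that $T$ is single-peaked in $\Delta$ for every super-regular $K$; \Cref{prop:turnout}(b) asserts this only in the doubly-uniform benchmark. The paper's route is shorter: it reads everything off the equilibrium display and never examines the shape of $T$ away from equilibrium gaps.
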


\begin{proof}
$T$ is the sum of the two vote shares at the symmetric profile, and neither
involves $\beta$, so the formula for $T'(\Delta)$ and part~(b) carry over. At
the equilibrium there is no dead zone (\Cref{thm:beta-suff}), and
\cref{eq:beta-foc} gives
$K(b)/K(\Delta^*)=\tfrac{2(1+\alpha)(1-\beta)}{\lambda(1+\alpha-\beta)}$;
substituting into $T'(\Delta)$ gives the display, whose sign is that of
$2(1+\alpha-\beta)-\lambda(1-\beta)=\alpha(1+\beta)>0$. Finally, the root
$\Delta^*(\alpha,\beta)$ of \cref{eq:beta-foc} is unique, hence continuous in
$\beta$, and strictly decreasing in $\beta$ (\Cref{prop:beta-mono}); as
$\beta$ rises it therefore sweeps an interval at every point of which
$T'>0$, and $T(\Delta^*)$ falls.
\end{proof}

\noindent At $\beta=0$ the display is part~(a) of \Cref{prop:turnout}. Locally,
more polarized candidates raise turnout at every level of competition; and
since competition narrows the gap, it lowers turnout too.

\subsection{Higher $\beta$ De-polarizes}\label{sec:comp:polar}

Under the hypotheses that delivered uniqueness, the equilibrium gap is
monotone in $\alpha$ and $\beta$ at once: competition for the margin pulls the
candidates together at every level of alienation, and alienation pushes them
apart at every level of competition.

\begin{proposition}\label{prop:beta-mono}
In the setting of \Cref{thm:beta-suff}, the equilibrium gap
$\Delta^*(\alpha,\beta)$ is strictly decreasing in $\beta$ on $[0,1)$ and
(for $\alpha>0$) strictly increasing in $\alpha$. \guy{key result in AMASS}
\end{proposition}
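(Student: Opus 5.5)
The plan is to run the implicit-function argument of \Cref{thm:alien-mono} on the stationarity condition \cref{eq:beta-foc}. By \Cref{thm:beta-suff}, its unique root is the equilibrium gap, and there is no dead zone, so $b>0$. Treat $\alpha>0$ first and $\alpha=0$ separately.

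\emph{The case $\alpha>0$.} Take logarithms in \cref{eq:beta-foc} and use $\log K(\Delta)-\log K(b)=\int_b^\Delta r$. This gives the equivalent form
\[
\int_{b}^{\Delta} r(t)\,dt \;=\; L(\alpha,\beta)
:=\log\frac{\lambda\,(1+\alpha-\beta)}{2(1+\alpha)(1-\beta)},
\]
which reduces to \cref{eq:sym-foc} at $\beta=0$. Set $\Psi_\beta(\Delta):=\int_{b(\Delta)}^{\Delta}r-L(\alpha,\beta)$ with $b$ as in \cref{eq:b-delta}; $\beta$ enters only through the constant $L$. The proof of \Cref{thm:alien-unique} shows verbatim that $\Psi_\beta$ is strictly decreasing in $\Delta$ on $(\alpha/\lambda,1)$ under super-regularity. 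That argument uses only the interval $[b,\Delta]$ shifting right and shrinking, and $r$ being non-increasing.

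\emph{Monotonicity in $\beta$, $\alpha>0$.} We have $\partial_\beta L=\frac1{1-\beta}-\frac1{1+\alpha-\beta}>0$ because $\alpha>0$. Hence $\Psi_\beta$ is strictly decreasing in $\beta$ at fixed $\Delta$. Combined with strict decrease in $\Delta$, the root $\Delta^*(\alpha,\beta)$ is strictly decreasing in $\beta$; the direct comparison argument from the proof of \Cref{thm:rhr-alpha} suffices, without differentiability.

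\emph{Monotonicity in $\beta$, $\alpha=0$.} By \Cref{thm:beta-suff}(a), the gap solves $\psi_\beta(\Delta)=1$. Here $\psi_\beta(t)=t+\frac{1+\beta}{1-\beta}K/k$ is strictly increasing in $t$ (super-regularity) and strictly increasing in $\beta$ pointwise (since $K/k>0$). So the root strictly decreases in $\beta$.

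\emph{Monotonicity in $\alpha$.} Now $\alpha$ also moves $b$. As in \Cref{thm:alien-mono}, $\partial_\alpha b=-\frac{1-\Delta}{2}$, so
\[
\partial_\alpha\Psi=r(b)\tfrac{1-\Delta}{2}-\partial_\alpha L .
\]
At a root, super-regularity gives $L=\int_b^\Delta r\le r(b)(\Delta-b)=r(b)\frac{\alpha(1-\Delta)}2$. Hence $\partial_\alpha\Psi\ge L(\alpha)/\alpha-\partial_\alpha L(\alpha)$ at fixed $\beta$. It therefore suffices to show $L(\alpha)>\alpha L'(\alpha)$. This follows from $L(0,\beta)=0$ together with strict concavity of $\alpha\mapsto L$:
\[
\partial_\alpha^2L=-\frac1{\lambda^2}-\frac1{(1+\alpha-\beta)^2}+\frac1{(1+\alpha)^2}<0,
\]
where the last two terms have non-positive sum because $1+\alpha-\beta\le1+\alpha$. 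Concavity with $L(0)=0$ gives $L(\alpha)/\alpha>L'(\alpha)$. It also gives $L>0$, since $\partial_\alpha L(0)=\tfrac12+\tfrac1{1-\beta}-1>0$, and this keeps the root inside $(\alpha/\lambda,1)$. The implicit function theorem then yields $\mathrm d\Delta^*/\mathrm d\alpha=-\partial_\alpha\Psi/\partial_\Delta\Psi>0$. This needs $\partial_\Delta\Psi<0$ strictly, which I expect to be the delicate point when $r$ is merely non-increasing. If differentiability is in doubt, I will fall back on the comparison argument of \Cref{thm:rhr-alpha} applied to small increments of $\alpha$.

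I expect the main obstacle to be this $\alpha$-step. One must check that the super-regularity bound on $r(b)$ combines correctly with the $\beta$-dependent constant, and that the concavity of $L$ survives for all $\beta\in[0,1)$.
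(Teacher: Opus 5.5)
Your proposal is correct and follows the paper's proof (\Cref{n:prop:B2}) step for step. The shared steps are the log form $\int_b^\Delta r=L_\beta$, the monotone-root argument in $\beta$, the $\psi_\beta$ argument at $\alpha=0$, and the super-regularity bound $r(b)\tfrac{1-\Delta}{2}\ge L_\beta/\alpha$ that reduces the $\alpha$-step to $L_\beta>\alpha\,\partial_\alpha L_\beta$. The one difference is how you close that last inequality: you use strict concavity of $\alpha\mapsto L$ with $L(0,\beta)=0$, whereas the paper shows $L_\beta-\alpha\,\partial_\alpha L_\beta$ is increasing in $\beta$, starting from its positive value $\log(1+\tfrac\alpha2)-\tfrac{\alpha}{2+\alpha}$ at $\beta=0$.

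The step you flag as delicate is immediate. Since $r(b)\ge r(\Delta)>0$, you get $\partial_\Delta\Psi=r(\Delta)-\tfrac\lambda2\,r(b)\le-\tfrac\alpha2\,r(\Delta)<0$, and $r=k/K$ is continuous on $(0,1)$. Your side remark that $L>0$ does not follow from concavity and $\partial_\alpha L(0)>0$ alone. It does follow from $\partial_\alpha L>0$ for every $\alpha$, which holds because $1+\alpha-\beta\le1+\alpha$; in any case the remark is not needed for the monotonicity argument.
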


\noindent{\small\itshape Proof: \Cref{n:prop:B2}.}

\medskip
Competition for the margin thus pulls candidates together exactly as costly
voting and alienation push them apart; in the doubly-uniform benchmark the
closed form \eqref{eq:beta-closed} prices the trade-off between $\alpha$ and
$\beta$ exactly.

\noindent Under alienation, competition lowers turnout as well as
polarization (\Cref{prop:turnout-beta}), so the reading ruled out for
alienation must be ruled out here too: that the candidates converge because
fewer people vote. The argument of \Cref{cor:alien-turnout} applies
unchanged.

\begin{corollary}[De-polarization at unchanged turnout]\label{cor:beta-turnout}
In the setting of \Cref{thm:beta-suff}, fix $\beta\in(0,1)$. Then there are
positive multiples $K_0$ and $K_\beta$ of $K$, both admissible, such that the
games $(V,K_0,\alpha,0)$ and $(V,K_\beta,\alpha,\beta)$ have the \emph{same}
equilibrium turnout while their equilibrium gaps are $\Delta^*(\alpha,0)$ and
$\Delta^*(\alpha,\beta)<\Delta^*(\alpha,0)$.
\end{corollary}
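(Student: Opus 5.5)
The plan copies the proof of \Cref{cor:alien-turnout}, with \Cref{prop:scale-gen} in place of \Cref{prop:scale} and \Cref{prop:beta-mono} in place of \Cref{thm:alien-mono}. Fix $\alpha$ and $\beta\in(0,1)$. By \Cref{thm:beta-suff}, each of the games $(V,K,\alpha,0)$ and $(V,K,\alpha,\beta)$ has a unique equilibrium, with gaps $\Delta^*(\alpha,0)$ and $\Delta^*(\alpha,\beta)$. Write $T_0$ and $T_\beta$ for the corresponding equilibrium turnouts. Both are strictly positive: the equilibria are untied, so each candidate's outer arm carries positive benefit on a set of positive voter mass, and $K>0$ there by \Cref{ass:reg}.

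The construction then scales down whichever game has the larger turnout. If $T_\beta\le T_0$, set $K_\beta=K$ and $K_0=zK$ with $z=T_\beta/T_0$. Otherwise set $K_0=K$ and $K_\beta=zK$ with $z=T_0/T_\beta$. In both cases $z\in(0,1]$, so $zK(1)\le K(1)\le1$ and $zK$ is admissible.

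Next apply \Cref{prop:scale-gen}, which holds for every $\alpha\ge0$ and $\beta\in[0,1]$. It says that replacing $K$ by $zK$ leaves the equilibrium set unchanged and multiplies all vote shares, and hence turnout, by $z$. The rescaled game therefore keeps its gap and has turnout exactly equal to the other game's. It remains to check that $zK$ still satisfies the hypotheses of \Cref{thm:beta-suff}, so that the statement is made within its setting. This holds because super-regularity depends only on $r=k/K$, and $r$ is invariant under scaling. Finally, the strict inequality $\Delta^*(\alpha,\beta)<\Delta^*(\alpha,0)$ is exactly \Cref{prop:beta-mono}.

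I see no genuine obstacle. The only point needing care is the same one flagged after \Cref{cor:alien-turnout}: we must not assume a direction for the turnout comparison. Here \Cref{prop:turnout-beta} in fact gives $T_\beta<T_0$ when $\alpha>0$, but the case split above makes that unnecessary, and it also covers $\alpha=0$, where the gap formula $\Delta^*(0,\beta)=\tfrac{1-\beta}2$ is an example in the doubly-uniform case.
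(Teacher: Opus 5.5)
Your proposal is correct and is the paper's own argument: the proof of \Cref{cor:alien-turnout} rerun with \Cref{prop:scale-gen} and \Cref{prop:beta-mono} in place of \Cref{prop:scale} and \Cref{thm:alien-mono}, scaling down the larger of the two turnouts so that $z\le1$. Your added checks fill in details the paper leaves implicit. Positivity of equilibrium turnout is needed for $z$ to be defined. Invariance of super-regularity under $K\mapsto zK$ is harmless but not needed, since \Cref{prop:scale-gen} already transfers the equilibrium set.
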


\begin{proof}
As for \Cref{cor:alien-turnout}, with \Cref{prop:scale-gen} and
\Cref{prop:beta-mono} in place of \Cref{prop:scale} and \Cref{thm:alien-mono}.
\end{proof}

\noindent Super-regularity is a genuine
hypothesis, not a convenience:

\begin{example}[A fold: competition can polarize]\label{ex:fold}
There is an explicit cost distribution with piecewise-linear $K/k$
(\Cref{n:ex:phi}) whose vote-share game has \emph{three}
equilibria, with gaps $\tfrac13$, $\tfrac{19}{50}$ and $\tfrac{11}{20}$. Along
the middle branch $\mathrm d\Delta^*/\mathrm d\beta>0$: more competitive
candidates, \emph{more} polarization. At $\beta^*=\tfrac3{47}\approx0.064$
the middle and upper branches collide and annihilate in a fold,
and the equilibrium set jumps discontinuously from three points to one. (We
verified the middle branch is a genuine global equilibrium, to deviation gain
$2\cdot10^{-12}$.)
\end{example}

The example is not a curiosity about multiplicity; it locates precisely where
the de-polarizing logic can break. Monotonicity of $K/k$ is what guarantees
that the ``supply'' curve of \Cref{cor:crossing} meets the ``demand'' curve
once and from above. When it does not, an increase in $\beta$ can slide the
crossing the wrong way along a branch where supply cuts demand from below---%
and such a branch is fragile, which is why it is the one that dies in the
fold. Two lessons for the polarization story: comparative statics in $\beta$
are safe exactly under the same hypothesis that gives uniqueness, and a small
increase in competitiveness can produce a \emph{discontinuous} collapse in
polarization when the fold is reached: at $\beta^*=\tfrac3{47}$ the upper
branch, down from $\tfrac{11}{20}$ at $\beta=0$ to $\tfrac12$, disappears and
the equilibrium gap drops to $\tfrac{11}{36}\approx0.31$.

\noindent Where the decline ends as $\beta\uparrow1$---not at the median---is
taken up in \Cref{sec:comp:margin}.

\subsection{The Margin Game ($\beta=1$) \guy{reference in short in AMASS}}\label{sec:comp:margin}

At the endpoint $\beta=1$ the game is zero-sum, and without alienation the
classical Hotelling--Downs prediction of \Cref{claim:hd} returns in full
force---costs and all.

\begin{theorem}[Without alienation, the margin game selects the median]\label{thm:margin_median}
Let $\alpha=0$, let $v,k$ satisfy \Cref{ass:reg}, and let
$c_{\mathrm{med}}=V^{-1}(\tfrac12)$. Then
$(c_{\mathrm{med}},c_{\mathrm{med}})$ is the unique Nash equilibrium of the
margin game, and every unilateral deviation strictly lowers the deviator's
margin.
\end{theorem}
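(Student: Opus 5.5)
The plan is to exploit that the margin game is symmetric and zero-sum. It then suffices to prove one inequality: against an opponent at the median, every other platform earns a strictly negative margin. Existence, the strictness claim and uniqueness all follow from that inequality by standard zero-sum arguments.

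\emph{Step 1 (layer-cake form of the vote shares).} At $\alpha=0$, fix a profile with $c_1<c_2$, write $\Delta=c_2-c_1$ and $m=\tfrac{c_1+c_2}2$. By \cref{eq:shares}, $u_i=\int_0^1K(B_i(x)^+)v(x)\,dx$. Writing $K(s)=\int_0^s k(t)\,dt$ and applying Fubini gives
\[
u_i=\int_0^{\Delta}k(t)\,\mathbb P_V\{B_i\ge t\}\,dt .
\]
The superlevel sets of the tents of \Cref{fig:benefit} are explicit:
\begin{itemize}
\item $\{B_1\ge t\}=[0,\,m-\tfrac t2]$, because the outer arm is flat at $\Delta\ge t$ and the inner arm falls at slope $2$;
\item $\{B_2\ge t\}=[m+\tfrac t2,\,1]$, by the mirror argument.
\end{itemize}
Hence
\[
w_2=u_2-u_1=\int_0^{\Delta}k(t)\Big[1-V\big(m+\tfrac t2\big)-V\big(m-\tfrac t2\big)\Big]dt .
\]

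\emph{Step 2 (sign of the integrand when $c_2=c_{\mathrm{med}}$).} Let $c_2=c_{\mathrm{med}}$ and $c_1<c_2$. For $t\in(0,\Delta)$ we have $m-\tfrac t2<m+\tfrac t2< c_2$. Since $v>0$ on $(0,1)$ (\Cref{ass:reg}), $V$ is strictly increasing, so
\[
V\big(m-\tfrac t2\big)<V\big(m+\tfrac t2\big)\le V(c_{\mathrm{med}})=\tfrac12 .
\]
The bracket is therefore strictly positive. Because $k>0$ on $(0,1)$ and $\Delta>0$, this gives $w_2>0$, i.e.\ candidate 1's margin is strictly negative. The case $c_1>c_{\mathrm{med}}$ is the mirror image: relabel and use $1-V$. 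At $c_1=c_{\mathrm{med}}$ both shares vanish, so the margin is $0$. Thus against a median opponent the median is the unique best response, and every other platform strictly lowers the deviator's margin below its equilibrium value $0$. This proves that $(c_{\mathrm{med}},c_{\mathrm{med}})$ is a Nash equilibrium together with the strictness clause.

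\emph{Step 3 (uniqueness).} The margin game is zero-sum and symmetric, and each candidate can guarantee margin $0$ by playing $c_{\mathrm{med}}$. So the value is $0$, and every pure equilibrium $(c_1,c_2)$ pays each player $0$. In a zero-sum game equilibrium strategies are interchangeable, so $(c_1,c_{\mathrm{med}})$ is also an equilibrium with payoff $0$ to candidate 1. By Step 2 this forces $c_1=c_{\mathrm{med}}$, and symmetrically $c_2=c_{\mathrm{med}}$.

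The main obstacle I anticipate is Step 2 for an arbitrary, possibly asymmetric electorate. Comparing the two tents by reflecting about $m$ fails there, since the mass at reflected points differs. The layer-cake rewriting is what avoids this: both superlevel sets lie on the same side of the median, so the comparison reduces to $V\le\tfrac12$ there. Minor care is needed at boundary deviations ($c_1=0$) and with a defective $K(1)<1$; neither affects the argument, since only $k>0$ on $(0,\Delta)$ is used.
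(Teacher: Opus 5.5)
Your proof is correct. Its architecture matches the paper's: show that every deviation against a median opponent strictly loses, then get uniqueness by letting each candidate deviate to the median. Your appeal to interchangeability of zero-sum equilibria is the paper's uniqueness step in compressed form. The paper unrolls it: if $c_1\neq c_{\mathrm{med}}$, candidate~2's option of moving to the median forces $w_1(c_1,c_2)<0$, and then candidate~1's own move to the median strictly improves her margin.

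The genuine difference is in the key inequality. The paper stays in voter space. It writes
\[
w_1(c_1',c_{\mathrm{med}})=K(\Delta')\big(V(c_1')-\tfrac12\big)+\int_0^{\Delta'/2}K(2s)\,\big[v(m'-s)-v(m'+s)\big]\,ds .
\]
It then bounds the deviator's inner arm as if it turned out at the flank rate, $K(2s)\le K(\Delta')$, and drops the opponent's inner arm, arriving at $w_1\le K(\Delta')\big(V(m')-\tfrac12\big)<0$.

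You integrate over the cost threshold first. This turns the margin into the exact identity $w_1=\int_0^{\Delta}k(t)\big[V(m+\tfrac t2)+V(m-\tfrac t2)-1\big]\,dt$, whose integrand has a fixed sign pointwise.

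The two routes trade off as follows. The paper's uses only monotonicity of $K$ and needs no interchange of integrals. Yours costs one application of Fubini but yields an identity rather than a bound, so it gives the exact size of the loss. It also makes the mechanism transparent: at each cost level $t$ the game is a classical Hotelling contest with an abstaining strip of width $t$ around the midpoint, and the median wins every one of them. It is the same layer-cake device the paper itself uses in \Cref{n:lem:master}, so it fits naturally in the paper's toolkit.
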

\proofin{app:omitted-proofs}

Together with \Cref{sec:base:char}, this cleanly separates the two
objectives: costly participation moves \emph{vote-share} maximizers away from
the median for every distribution satisfying \Cref{ass:reg}, yet leaves
\emph{margin}
maximizers exactly at the median, for all distributions. With full alienation,
however, the margin game degenerates:

\begin{proposition}[The margin band under alienation]\label{prop:margin_uniform}
Suppose $\beta=1$. Let $x\sim U(0,1)$ and let $K$ satisfy \Cref{ass:reg}. For every
$\alpha\ge0$, the pure Nash equilibria of the margin game are exactly
\[
E(\alpha)=\Big[\tfrac1{2+\alpha},\ \tfrac{1+\alpha}{2+\alpha}\Big]^{2},
\]
each with margin $0$: both platforms may sit anywhere in a band of width
$\Delta_0:=\tfrac{\alpha}{2+\alpha}$ around the median, independent of the
cost distribution.
\end{proposition}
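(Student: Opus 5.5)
The plan rests on the fact that the margin game is symmetric and zero-sum: $U_1=w_1=u_1-u_2=-U_2$. Its value is therefore $0$, and I would first record the standard consequence. A profile $(c_1,c_2)$ is a pure Nash equilibrium if and only if each $c_i$ is a \emph{safe} platform. By this I mean a platform $c$ with $w(c,c')\ge0$ for every $c'\in[0,1]$, where $w(c,c')$ is the margin of a candidate at $c$ against one at $c'$.

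Both directions are short. If both platforms are safe, then each candidate has margin $\ge0$ and the margins sum to $0$, so each margin is exactly $0$; and a deviation by candidate $1$ cannot raise her margin above $0$, because candidate $2$'s platform is safe. Conversely, at an equilibrium, a candidate with negative margin could copy her opponent's platform and get $0$, so both margins are $0$. A candidate's own margin is then already maximal against every deviation of the opponent, so her platform is safe. Hence the equilibrium set is $S\times S$ with $S$ the set of safe platforms, and every equilibrium has margin $0$. It remains to show $S=\big[\tfrac1\lambda,\tfrac{1+\alpha}\lambda\big]$.

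The second step is a closed form for vote shares under uniform voters, extending the one used in \Cref{rem:deadzone} and the proof of \Cref{thm:alien-unique}. For a candidate at separation $t=|c-c'|$ with room $\varrho$ behind her ($\varrho=c$ if $c<c'$, $\varrho=1-c$ if $c>c'$), the share is
\[
A\,G(t)-\tfrac1\alpha\,G\big((t-\alpha\varrho)^+\big),\qquad A=\tfrac{2(1+\alpha)}{\alpha\lambda}.
\]
I would verify this by integrating $K(B^+)$ over the tent. The inner arm contributes $G(t)/\lambda$. The outer arm contributes $\tfrac1\alpha\big[G(t)-G(b^+)\big]$, where $b=t-\alpha\varrho$ is the boundary benefit; when $b\le0$ (outer dead zone) that term is simply absent. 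Since both candidates face the same $t$, the $A\,G(t)$ terms cancel in the margin, and
\[
w(c,c')=\tfrac1\alpha\Big[G\big((t-\alpha\varrho')^+\big)-G\big((t-\alpha\varrho)^+\big)\Big].
\]
Because $k>0$ on $(0,1)$, $K>0$ on $(0,1]$ and $G$ is strictly increasing on $[0,1]$. So $w(c,c')\ge0$ iff $\varrho\ge\varrho'$ or $t\le\alpha\varrho$. The case $\alpha=0$ must be handled separately (or read as the limit): there $b=t$ and the margin is $\int$-difference of flank masses, which gives $S=\{\tfrac12\}$, consistent with \Cref{thm:margin_median}.

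The third step is the safety computation. Take $c'>c$, so $\varrho=c$ and $\varrho'=1-c'$. Safety against $c'$ fails exactly when $c'<1-c$ and $c'-c>\alpha(1-c')$, that is, when $\tfrac{c+\alpha}{1+\alpha}<c'<1-c$. Such a $c'$ exists iff $\tfrac{c+\alpha}{1+\alpha}<1-c$, i.e.\ iff $c<\tfrac1{2+\alpha}$; the ties $c'=c$ give margin $0$ and are harmless. The mirror computation for $c'<c$ gives the condition $c\le\tfrac{1+\alpha}{2+\alpha}$. Together these yield $S=\big[\tfrac1\lambda,\tfrac{1+\alpha}\lambda\big]$, whose width is $\Delta_0$. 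The whole characterization uses $K$ only through the strict monotonicity of $G$, which is why the band is independent of the cost distribution.

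The main obstacle is the second step. The share formula must be justified in every geometric regime: outer zero inside or outside $[0,1]$, and deviations to either side of the opponent. One also has to check that all benefits lie in $[0,1]$, so that $G$ is evaluated only where it is defined, and that the inner dead zone contributes nothing. I would handle this by writing $u=\int K(B^+)\,dx$ arm by arm with the substitution $s=B(x)$, which turns each arm into $\tfrac1{\text{slope}}\int K$ over an interval of benefits. This treats all cases uniformly, with the clipping appearing as the $(\cdot)^+$.
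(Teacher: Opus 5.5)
Your route is sound, and its core computation is the paper's. Both proofs write the margin as $\tfrac1\alpha$ times a difference of $G$ at the two truncated boundary benefits, and both use only the strict monotonicity of $G$. The difference is in the packaging.

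The paper argues profile by profile:
\begin{itemize}
\item $E(\alpha)$ has margin $0$;
\item a candidate can win against $c_2$ iff $c_2\notin[\tfrac1\lambda,\tfrac{1+\alpha}\lambda]$, split into case (i), where the deviator's own tent is interior, and case (ii), where both tents are truncated;
\item hence profiles in $E$ are equilibria;
\item outside $E$, a split on the sign of $w_1$ yields a profitable deviation: either into the winning window, or the trailing candidate moves to the mirror position $1-c_1$.
\end{itemize}

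You instead first prove that the pure equilibria of a symmetric zero-sum game are exactly $S\times S$, where $S$ is the set of platforms that guarantee the value $0$, and then compute only $S$. This gives you three things:
\begin{itemize}
\item the product structure and the margin-$0$ clause come for free;
\item the mirror deviation is replaced by the simpler copy deviation;
\item it is the same security-value principle the paper itself uses later for \Cref{prop:nonexistence} and \Cref{n:prop:grid}.
\end{itemize}
The paper's version, in exchange, exhibits the profitable deviations explicitly.

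One step needs correcting. Your criterion is right: $w(c,c')<0$ iff $\varrho<\varrho'$ and $t>\alpha\varrho$. With $c'>c$ this reads $c'<1-c$ and $c'-c>\alpha c$, i.e.\ $(1+\alpha)c<c'<1-c$.

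You substituted $\varrho'$ for $\varrho$ and wrote $c'-c>\alpha(1-c')$. That condition says the \emph{attacker's} tent is truncated. Your window $\big(\tfrac{c+\alpha}{1+\alpha},1-c\big)$ is therefore only the paper's case (ii), and it omits attackers whose own tent is interior. For example, at $\alpha=1$ and $c=0.1$, the attacker at $c'=0.3$ gives $w(0.1,0.3)=-G(0.1)<0$, yet $0.3\notin(0.55,0.9)$. So ``fails exactly when'' is false, and your safe direction ($c\ge\tfrac1\lambda$ admits no threat from the right) currently rests on it.

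The repair is one line. The correct window $\big((1+\alpha)c,1-c\big)$ is nonempty iff $c<\tfrac1{2+\alpha}$, which is the same threshold. So $S=[\tfrac1\lambda,\tfrac{1+\alpha}\lambda]$ and everything downstream stands. The mirror condition you state for $c'<c$ is also correct.
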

\proofin{app:omitted-proofs}

The band contains a continuum of profiles, from the tied median profile to
the maximally separated $\big(\tfrac1\lambda,\tfrac{1+\alpha}\lambda\big)$%
---and even tied profiles $c_1=c_2$ in which (almost) nobody votes and both
candidates, caring only about the margin, are content. Which of these
predictions should one believe? \Cref{thm:beta-select} below answers the question,
and the answer is not the median.

Both results are knife-edges---\Cref{thm:margin_median} in the objective,
\Cref{prop:margin_uniform} in the voter density. \Cref{thm:beta-select} shows
which point of the band survives a small vote-share motive, and
\Cref{thm:quartic} what happens to the median when the electorate is not
uniform.

\noindent The decline in $\beta$ of \Cref{sec:comp:polar} ends at a
specific point of the band.

\begin{theorem}[Selection in the band]\label{thm:beta-select}
In the setting of \Cref{prop:beta-foc}, as $\beta\uparrow1$ the right-hand
side of \cref{eq:beta-foc} vanishes, forcing $K(b)\to0$ and hence $b\to0$: for every cost distribution $K$ and every $\alpha>0$ the selected
profile converges to
\[
(c_1,c_2)\longrightarrow\Big(\tfrac1{2+\alpha},\ \tfrac{1+\alpha}{2+\alpha}\Big),
\qquad\Delta\longrightarrow\frac{\alpha}{2+\alpha},
\]
which is the \emph{most polarized} profile of the band $E(\alpha)$---and also
its turnout-maximizing profile, since inside the band both tents are interior
and $T=2\big(\tfrac1\alpha+\tfrac1\lambda\big)G(\Delta)$ increases in $\Delta$.
\end{theorem}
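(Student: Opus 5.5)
The plan is to work directly with the equilibrium gap $\Delta^*(\alpha,\beta)$ supplied by \Cref{thm:beta-suff}(b): for every $\beta\in[0,1)$ it is the unique root of \cref{eq:beta-foc}, the equilibrium is symmetric with $c_1=\tfrac{1-\Delta^*}2$, and it has no dead zone, so $b=b(\Delta^*)\ge0$. I will first argue that it suffices to track $\Delta^*$ as $\beta\uparrow1$. By \cref{eq:b-delta}, $b=(\lambda\Delta-\alpha)/2$ is an affine bijection of $\Delta$. Hence $b\to0$ is equivalent to $\Delta\to\alpha/\lambda$. Symmetry then gives $c_1\to\tfrac12(1-\tfrac\alpha\lambda)=\tfrac1\lambda$ and $c_2\to\tfrac{1+\alpha}\lambda$.

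The core step is the limit in \cref{eq:beta-foc}. Its right-hand side is $(1-\beta)\tfrac{2(1+\alpha)}{\lambda}K(\Delta^*)$, which is at most $(1-\beta)\tfrac{2(1+\alpha)}\lambda K(1)$ and so tends to $0$ uniformly. The coefficient on the left-hand side is $1+\alpha-\beta\ge\alpha>0$, so $K(b(\Delta^*))\le(1-\beta)\tfrac{2(1+\alpha)}{\alpha\lambda}K(1)\to0$. To pass from $K(b)\to0$ to $b\to0$, I will use that $K$ is continuous and strictly increasing on $[0,1]$ with $K(0)=0$, which follows from $k>0$ in \Cref{ass:reg}. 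Then for every $\eps>0$ we have $K(b)<K(\eps)$ eventually, and this forces $b<\eps$. Because $b\ge0$ throughout, it follows that $b\to0$. This is the step I expect to need the most care: the statement says ``selected profile'', and the argument leans on the no-dead-zone conclusion ($b>0$) and on uniqueness from \Cref{thm:beta-suff}. Without those, a stationary profile with $b\le0$ would not be governed by \cref{eq:beta-foc} at all. Both facts were already established in the proof of \Cref{prop:beta-foc}, so I will cite them rather than reprove them.

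Next I will show that the limit is the most polarized point of the band. By \Cref{prop:margin_uniform}, $E(\alpha)=[\tfrac1\lambda,\tfrac{1+\alpha}\lambda]^2$, so every profile in the band has $|c_2-c_1|\le\tfrac{\alpha}{\lambda}$. Equality holds only at the corner $(\tfrac1\lambda,\tfrac{1+\alpha}\lambda)$ and its mirror, which is exactly the limit found above.

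For the turnout claim, take any profile in the band with $c_1\le c_2$. Then $c_1\ge\tfrac1\lambda$ and $c_2\le\tfrac{1+\alpha}\lambda$, so $(1+\alpha)c_1-c_2\ge\tfrac{1+\alpha}{\lambda}-c_2\ge0$. This says $\zout{1}\ge0$, and symmetrically $\zout{2}\le1$, so both tents lie inside $[0,1]$. By the computation in \Cref{rem:deadzone}, each vote share is then $(\tfrac1\alpha+\tfrac1\lambda)G(\Delta)$. Summing gives $T=2(\tfrac1\alpha+\tfrac1\lambda)G(\Delta)$, which is strictly increasing in $\Delta$ because $G'=K>0$ on $(0,1]$. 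So turnout over the band is maximized exactly at maximal $\Delta=\alpha/\lambda$, the same corner. Finally, I will note as a consistency check that in the doubly-uniform case \eqref{eq:beta-closed} at $\beta=1$ returns $\alpha(2+\alpha)\alpha/(\alpha(2+\alpha)^2)=\alpha/\lambda$.
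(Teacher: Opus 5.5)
This is essentially the paper's proof. The right-hand side of \cref{eq:beta-foc} vanishes while the coefficient $1+\alpha-\beta$ tends to $\alpha>0$, so $K(b)\to0$, hence $b\to0$ and $\Delta\to\alpha/\lambda$, and symmetry puts the profile at the corner $\bigl(\tfrac1\lambda,\tfrac{1+\alpha}\lambda\bigr)$ of $E(\alpha)$. Your uniform bound $K(\Delta^*)\le K(1)$ and your explicit checks of the ``most polarized'' and turnout clauses are sound additions that the paper's proof leaves implicit.

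One correction: drop the appeal to \Cref{thm:beta-suff} and to uniqueness. Uniqueness is not proved in \Cref{prop:beta-foc}; it needs super-regularity, so relying on it would cut the result down below the statement's ``for every cost distribution $K$''. You do not need it. Your bound holds for every root of \cref{eq:beta-foc}, and $b>0$ holds for every $K$ by the proof of \Cref{prop:beta-foc}. So your argument already covers every selection of symmetric stationary profiles, as in \Cref{n:prop:B4}(a).
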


\begin{proof}
In \cref{eq:beta-foc} the factor $1+\alpha-\beta$ tends to $\alpha>0$ while the
right-hand side tends to $0$, so $K(b)\to0$; as $K$ is strictly increasing
with $K(0)=0$, $b\to0$, and $b=(\lambda\Delta-\alpha)/2$ from \cref{eq:b-delta}
gives $\Delta\to\alpha/\lambda$. The profile is symmetric, so
$c_1=\tfrac{1-\Delta}2\to\tfrac1\lambda$, the lower edge of $E(\alpha)$ in
\Cref{prop:margin_uniform}.
\end{proof}

Thus the indeterminacy of the band is resolved in a specific direction: an
arbitrarily small vote-share motive added to a margin-maximizing candidate
selects maximal polarization, not the median (\Cref{fig:beta_select}). The
median is the \emph{least} robust point of the band. At $\alpha=0$ the band is
a single point and the limit is the median; \cref{eq:beta-closed} shows the
collapse is linear, $\Delta^*=(1-\beta)/2$.

% fig:beta_select -- extracted from results_sections.tex.
% The float is self-contained: every macro it uses is defined inside it.
% Inputs from the active draft as \input{figures/fig_beta_select}.
\begin{figure}[t]
\centering
%% -------------------------------------------------------------------------
%% Selection in the band: Delta*(alpha,beta), doubly uniform.  Closed form,
%% fully editable -- edit the alpha values in the four \addplot lines.
%%   alpha > 0 : a(2+a)(1+a-b) / [ (1+a-b)(2+a)^2 - 4(1-b)(1+a) ]
%%   alpha = 0 : (1-b)/2
%% Band tops at beta=1:  alpha/(2+alpha).
%% -------------------------------------------------------------------------
\begin{tikzpicture}
\begin{axis}[
  width=0.72\linewidth, height=5.4cm,
  xmin=0, xmax=1, ymin=0, ymax=0.72,
  xlabel={competitiveness $\beta$ \ (0 = vote share, 1 = margin)},
  ylabel={equilibrium separation $\Delta^*$},
  legend style={font=\scriptsize, at={(0.02,0.02)}, anchor=south west, draw=none, fill=none},
  tick label style={font=\scriptsize}, label style={font=\small},
  axis lines=left, clip=false,
]
\addplot[red!70!black, dotted, very thick, domain=0:1, samples=100]
  {2*(2+2)*(1+2-x)/((1+2-x)*(2+2)^2-4*(1-x)*(1+2))};
\addlegendentry{$\alpha=2$}
\addplot[green!45!black, densely dashdotted, very thick, domain=0:1, samples=100]
  {1*(2+1)*(1+1-x)/((1+1-x)*(2+1)^2-4*(1-x)*(1+1))};
\addlegendentry{$\alpha=1$}
\addplot[orange!85!black, densely dashed, very thick, domain=0:1, samples=100]
  {0.5*(2+0.5)*(1+0.5-x)/((1+0.5-x)*(2+0.5)^2-4*(1-x)*(1+0.5))};
\addlegendentry{$\alpha=0.5$}
\addplot[blue!70!black, very thick, domain=0:1, samples=2] {(1-x)/2};
\addlegendentry{$\alpha=0$}
\addplot[only marks, mark=*, mark size=1.7pt, black, forget plot] coordinates
  {(1,0.5) (1,0.3333) (1,0.2) (1,0)};
\node[anchor=west, font=\scriptsize] at (axis cs:1.01,0.5) {$\frac{\alpha}{2+\alpha}$};
\end{axis}
\end{tikzpicture}
\caption{Equilibrium separation as competitiveness rises, doubly uniform
(\Cref{thm:beta-select}). Increasing $\beta$ pulls the candidates together, but
for $\alpha>0$ the limit at $\beta=1$ is $\tfrac\alpha{2+\alpha}$---the
\emph{top} of the margin band $E(\alpha)$, not the median. Only at $\alpha=0$
does the limit reach the median, and there the collapse is linear.}
\label{fig:beta_select}
\end{figure}

Under super-regularity the collapse of \Cref{thm:beta-select} has an exact
rate: the approach to the band top is linear in $1-\beta$ whenever
$k(0^+)\in(0,\infty)$, and of order $(1-\beta)^{1/p}$ when $K(t)\sim Ct^p$
near $0$; at $\alpha=0$ the rate $\tfrac{1-\beta}2$ of \cref{eq:beta-closed}
is universal, $\Delta^*=\tfrac{1-\beta}2\,(1+o(1))$ for every $K$ with
$k(0^+)\in(0,\infty)$ (\Cref{n:prop:B4}).

\paragraph{Mass polarization breaks the median for margin maximizers}
With $\alpha>0$ the median solves the margin first-order condition but need not
be an equilibrium. The second-order condition at the median is degenerate
(identically zero), so the question is settled at fourth order; the answer
is a clean dichotomy in the \emph{curvature of the electorate at the center}.

\begin{theorem}[Median dichotomy]\label{thm:quartic}
Let $\kappa\sim U(0,1)$, $\alpha>0$, and let $v$ be symmetric about $\tfrac12$ and
$C^2$ near $\tfrac12$. Consider the margin obtained by a candidate deviating to
$\tfrac12-\delta$ against an opponent at the median. Then
\[
w_1(\delta)\;=\;C(\alpha)\,v''\!\big(\tfrac12\big)\,\delta^4+o(\delta^4),
\qquad
C(\alpha)=\frac{1+\alpha}{2\alpha(2+\alpha)}
+\frac16\Big(\frac1{\alpha^2}-\frac1{(2+\alpha)^2}\Big)>0 .
\]
Hence the median profile is a strict local equilibrium of the margin game if
$v''(\tfrac12)<0$ (an electorate peaked at the center), and is \emph{not} an
equilibrium if $v''(\tfrac12)>0$ (a hollow, polarized electorate). Uniform
voters, $v''\equiv0$, are exactly the knife-edge that produces the band
$E(\alpha)$.
\end{theorem}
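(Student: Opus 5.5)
The plan is to compute $w_1(\delta)=u_1-u_2$ exactly as an integral against a single rescaled tent shape, and then Taylor-expand $v$ about $\tfrac12$.

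\emph{Setup.} Fix the opponent at $c_2=\tfrac12$ and put $c_1=\tfrac12-\delta$ with $\delta>0$ small, so $\Delta=\delta$. Under uniform costs $p_i=B_i^+$, so $u_i=\int B_i^+v$. By the tent geometry of \Cref{fig:anatomy}, candidate~$1$'s positive part is a triangle of height $\delta$ over $[c_1-\delta/\alpha,\,c_1+\delta/\lambda]$, and candidate~$2$'s is the mirror triangle over $[\tfrac12-\delta/\lambda,\,\tfrac12+\delta/\alpha]$. For $\delta<\alpha/(2+2\alpha)$ both supports lie inside $(0,1)$, so there is no truncation at the boundary and no outer dead zone to track.

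\emph{Reduction to one tent.} Let $\tau(y)$ be the unit tent: it equals $1+\alpha y$ on $[-1/\alpha,0]$, $1-\lambda y$ on $[0,1/\lambda]$, and $0$ elsewhere. Then $B_1^+(\tfrac12-\delta+s)=\delta\tau(s/\delta)$. Reflecting candidate~$2$'s tent about $\tfrac12$ gives $B_2^+(\tfrac12-s)=\delta\tau(s/\delta)$. Using the symmetry $v(\tfrac12+s)=v(\tfrac12-s)$, both shares are integrals of the same tent:
\[
w_1(\delta)=\delta^2\int\tau(y)\Big[v\big(\tfrac12+\delta(y-1)\big)-v\big(\tfrac12+\delta y\big)\Big]\,dy .
\]
This identity is the heart of the argument. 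The two candidates' tents are translates of each other, up to the reflection that symmetry of $v$ makes free.

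\emph{Expansion.} Symmetry gives $v'(\tfrac12)=0$, and $C^2$ gives $v(\tfrac12+h)=v(\tfrac12)+\tfrac12v''(\tfrac12)h^2+o(h^2)$. Because $\tau$ has compact support independent of $\delta$, the remainder is $o(\delta^2)$ uniformly in $y$ on that support. In the bracket the constant terms cancel and the linear terms are absent; this is why the second-order term of $w_1$ vanishes. What remains is
\[
\tfrac12v''(\tfrac12)\,\delta^2\big[(y-1)^2-y^2\big]=\tfrac12v''(\tfrac12)\,\delta^2(1-2y),
\]
so that
\[
w_1(\delta)=\tfrac12v''(\tfrac12)\,\delta^4\big(M_0-2M_1\big)+o(\delta^4),
\]
where $M_0=\int\tau=\tfrac12(\tfrac1\alpha+\tfrac1\lambda)$ and $M_1=\int y\tau=\tfrac16(\tfrac1{\lambda^2}-\tfrac1{\alpha^2})$ are two elementary triangle moments. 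Substituting gives $\tfrac14(\tfrac1\alpha+\tfrac1\lambda)+\tfrac16(\tfrac1{\alpha^2}-\tfrac1{\lambda^2})$. Since $\tfrac14(\tfrac1\alpha+\tfrac1\lambda)=\tfrac{1+\alpha}{2\alpha\lambda}$, this is exactly $C(\alpha)$, and it is positive because $\alpha<\lambda$.

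\emph{Conclusions.} By symmetry, a deviation to $\tfrac12+\delta$ yields the same margin, and the game is symmetric in the two candidates. So if $v''(\tfrac12)<0$, every small nonzero deviation by either candidate gives a strictly negative margin against the tied value $0$, and the median is a strict local equilibrium. If $v''(\tfrac12)>0$, small deviations yield a positive margin, so the median is not even a local equilibrium. When $v''\equiv0$ (uniform voters) the identity above gives $w_1\equiv0$ exactly for all small $\delta$, which is the band of \Cref{prop:margin_uniform}.

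The step I expect to need the most care is the reduction to one tent: checking the reflection sign conventions so that the two integrals really share the same $\tau$ and differ only by the shift $\delta$. Everything after that is a routine second-order Taylor expansion with a uniform remainder.
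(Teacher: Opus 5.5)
Your proof is correct and follows the paper's route. The reflection of candidate~$2$'s tent, combined with the symmetry of $v$, gives the same identity $w_1=\int B_1^+(x)\,[v(x)-v(x+\delta)]\,dx$ (your rescaled display), and $C(\alpha)$ comes out of the same two tent moments, $\int B_1^+$ and $\int (x-c_1)B_1^+$. Your single expansion about $\tfrac12$, which yields the factor $(y-1)^2-y^2=1-2y$, is slightly cleaner than the paper's two-stage expansion (first in $\delta$, then of $v'$ about $c_1$), and it gives the $o(\delta^4)$ remainder under exactly the stated $C^2$ hypothesis, whereas the paper's intermediate $O(\delta^5)$ tacitly assumes more smoothness.
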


\begin{deferredproof}{\Cref{thm:quartic}}
Write $q=\tfrac12$, $c_1=q-\delta$, $c_2=q$, so $\Delta=\delta$ and
$m=q-\delta/2$. The two benefit tents are reflections of one another about
$m$: $B_2(x)=B_1(2m-x)$. With uniform costs
$w_1=\int B_1^+(x)[v(x)-v(2m-x)]dx$, and since $v$ is symmetric about $q$,
$v(2m-x)=v(1-\delta-x)=v(x+\delta)$, so
\[
w_1=\int B_1^+(x)[v(x)-v(x+\delta)]dx
=-\delta\!\int\!B_1^+v'\;-\;\frac{\delta^2}2\!\int\!B_1^+v''\;+\;O(\delta^5).
\]
$B_1$ is a tent with apex $c_1$, height $\delta$, left half-width
$\delta/\alpha$ and right half-width $\delta/\lambda$, so
$\int B_1^+=\tfrac{\delta^2}2A$ with $A=\tfrac1\alpha+\tfrac1\lambda$, and
$\int B_1^+(x-c_1)=\tfrac{\delta^3}6\big(\tfrac1{\lambda^2}-\tfrac1{\alpha^2}\big)$.
Expanding $v'(x)=v'(c_1)+v''(c_1)(x-c_1)+O((x-c_1)^2)$ and using $v'(q)=0$,
$v'(c_1)=-\delta v''(q)+O(\delta^2)$,
\begin{align*}
-\delta\!\int\!B_1^+v'
&=\delta^4v''(q)\Big[\tfrac A2-\tfrac16\big(\tfrac1{\lambda^2}-\tfrac1{\alpha^2}\big)\Big]+O(\delta^5),\\
-\frac{\delta^2}2\!\int\!B_1^+v''
&=-\frac{\delta^4}4v''(q)\,A+O(\delta^5).
\end{align*}
Adding gives
$w_1=\delta^4v''(q)\big[\tfrac A4+\tfrac16(\tfrac1{\alpha^2}-\tfrac1{\lambda^2})\big]+O(\delta^5)$,
which is the claim since $A/4=\tfrac{1+\alpha}{2\alpha(2+\alpha)}$.
\end{deferredproof}

The formula is sharp: for $\alpha=1$, $\delta=0.02$ it predicts
$w_1=-9.244\cdot10^{-7}$ for $v=\mathrm{Beta}(2,2)$ and $+1.962\cdot10^{-7}$
for $v=\mathrm{Beta}(\tfrac12,\tfrac12)$, against simulated values
$-9.244\cdot10^{-7}$ and $+1.967\cdot10^{-7}$; for uniform voters the simulated
margin is zero to machine precision at every $\delta$ (\Cref{fig:dichotomy}).
Uniform costs enter only through the constant. For a general $K$ the same
reflection argument gives
\[
w_1(\delta)=v''\big(\tfrac12\big)\Big[\tfrac{\delta^2}2\Big(\tfrac1\alpha+\tfrac1\lambda\Big)\int_0^{\delta}\!K
+\delta\Big(\tfrac1{\alpha^2}-\tfrac1{\lambda^2}\Big)\int_0^{\delta}\!K(t)\,(\delta-t)\,dt\Big]
+o\Big(\delta^2\!\int_0^{\delta}\!K\Big),
\]
whose bracket is positive because $\lambda>\alpha$: the dichotomy in the sign
of $v''(\tfrac12)$ holds for every cost distribution satisfying
\Cref{ass:reg}, and at $K=\mathrm{id}$ the bracket is $C(\alpha)\delta^4$.
\Cref{thm:quartic} is the margin-game counterpart of the mass-polarization
result of \Cref{sec:base:polar}, and it sharpens what ``mass polarization''
has to mean. In \Cref{sec:base:polar} the relevant statistic was a global
one, the interquartile range. Here it is purely \emph{local}: only the sign of
$v''$ at the median matters, and the size of $C(\alpha)$ is irrelevant to the
conclusion. An electorate can be enormously dispersed and still hold the
candidates at the median against small deviations, provided it is peaked there; and it can be barely
dispersed at all and lose the median, provided it is locally hollow. For
margin maximizers, then, what drives elite polarization is not how spread out
the voters are but whether the center is \emph{empty}; the classical
Downsian prediction fails as soon as it is, the moment $\alpha>0$.

% fig:dichotomy -- extracted from results_sections.tex.
% The float is self-contained: every macro it uses is defined inside it.
% Inputs from the active draft as \input{figures/fig_dichotomy}.
\begin{figure}[t]
\centering
%% -------------------------------------------------------------------------
%% Median (in)stability in the margin game: deviator's margin w1(delta).
%% alpha = 1, uniform costs, opponent at the median, deviator at 1/2 - delta.
%% Coordinates from verify/figdata.py (exact incomplete-beta evaluation,
%% certificate.vote_share_exact, of
%%   w1 = int B1^+(x) [ v(x) - v(1-delta-x) ] dx ).
%% The dashed curves are the quartic law of Theorem: C(alpha)*v''(1/2)*delta^4
%% with C(1) = 0.481481, v''(1/2) = -12 for Beta(2,2) and +8/pi = 2.5465 for
%% Beta(1/2,1/2); edit those constants to change the prediction curves.
%% -------------------------------------------------------------------------
\begin{tikzpicture}
\begin{axis}[
  width=0.72\linewidth, height=5.4cm,
  xmin=0, xmax=0.2, ymin=-1.4e-3, ymax=0.45e-3,
  xlabel={deviation $\delta$ from the median},
  ylabel={deviator's margin $w_1$},
  scaled y ticks=false, yticklabel style={/pgf/number format/fixed,
    /pgf/number format/precision=4, font=\scriptsize},
  tick label style={font=\scriptsize}, label style={font=\small},
  legend style={font=\scriptsize, at={(0.02,0.03)}, anchor=south west, draw=none,
    fill=white, fill opacity=0.85, text opacity=1},
  xtick={0,0.05,0.1,0.15,0.2},
  xticklabel style={/pgf/number format/fixed},
  axis lines=left,
]
\addplot[black, thin, forget plot, domain=0:0.2] {0};
\addplot[green!45!black, densely dashdotted, very thick] coordinates {(0.0050,7.654325e-10)(0.0148,5.882924e-08)(0.0246,4.506941e-07)(0.0344,1.729550e-06)(0.0442,4.736707e-06)(0.0540,1.061677e-05)(0.0638,2.083944e-05)(0.0736,3.722846e-05)(0.0834,6.199911e-05)(0.0932,9.780615e-05)(0.1030,1.478050e-04)(0.1128,2.157297e-04)(0.1226,3.059927e-04)(0.1324,4.238143e-04)(0.1422,5.753924e-04)(0.1520,7.681283e-04)(0.1618,1.010934e-03)(0.1716,1.314664e-03)(0.1814,1.692731e-03)(0.1912,2.162040e-03)};
\addlegendentry{$\mathrm{Beta}(\frac12,\frac12)$: $v''(\frac12)>0$, median fails}
\addplot[orange!85!black, very thick] coordinates {(0.0050,0.000000e+00)(0.0148,0.000000e+00)(0.0246,0.000000e+00)(0.0344,0.000000e+00)(0.0442,0.000000e+00)(0.0540,0.000000e+00)(0.0638,0.000000e+00)(0.0736,0.000000e+00)(0.0834,0.000000e+00)(0.0932,0.000000e+00)(0.1030,0.000000e+00)(0.1128,0.000000e+00)(0.1226,0.000000e+00)(0.1324,0.000000e+00)(0.1422,0.000000e+00)(0.1520,0.000000e+00)(0.1618,0.000000e+00)(0.1716,0.000000e+00)(0.1814,0.000000e+00)(0.1912,0.000000e+00)};
\addlegendentry{uniform: $v''(\frac12)=0$, knife-edge}
\addplot[blue!70!black, very thick] coordinates {(0.0050,-3.606460e-09)(0.0148,-2.772092e-07)(0.0246,-2.115930e-06)(0.0344,-8.090858e-06)(0.0442,-2.205210e-05)(0.0540,-4.912877e-05)(0.0638,-9.572901e-05)(0.0736,-1.695400e-04)(0.0834,-2.795278e-04)(0.0932,-4.359378e-04)(0.1030,-6.502940e-04)(0.1128,-9.353997e-04)(0.1226,-1.305337e-03)(0.1324,-1.775468e-03)(0.1422,-2.362431e-03)(0.1520,-3.084148e-03)(0.1618,-3.959815e-03)(0.1716,-5.009911e-03)(0.1814,-6.256191e-03)(0.1912,-7.721690e-03)};
\addlegendentry{$\mathrm{Beta}(2,2)$: $v''(\frac12)<0$, median holds}
% quartic prediction C(1)*v''(1/2)*delta^4
\addplot[black, dashed, thin, forget plot, domain=0:0.2, samples=60] {0.481481*(-12)*x^4};
\addplot[black, dashed, thin, forget plot, domain=0:0.2, samples=60] {0.481481*(2.5465)*x^4};
\end{axis}
\end{tikzpicture}
\caption{The quartic law of \Cref{thm:quartic} at $\alpha=1$ with uniform
costs: a candidate deviating to $\tfrac12-\delta$ against an opponent at the
median gains iff the electorate is hollow at the center. Thin dashed lines are
the prediction $C(\alpha)v''(\tfrac12)\delta^4$; they are indistinguishable
from the computed curves for small $\delta$. For uniform voters the margin is
zero to machine precision at every $\delta$---the knife-edge that manufactures
the band $E(\alpha)$.}
\label{fig:dichotomy}
\end{figure}

\paragraph{Existence can fail}
Where alienation meets a hollow electorate, the margin game can have no pure
equilibrium at all, and a certificate rules one out for the arcsine electorate
at $\alpha=1$ and for every $\beta\ge0.9168$.

\begin{proposition}[Non-existence]\label{prop:nonexistence}\stat{computer-assisted}
Let $\alpha=1$ and $\kappa\sim U(0,1)$. For $v=\mathrm{Beta}(\tfrac12,\tfrac12)$ and
for $v=\mathrm{Beta}(0.8,0.8)$ the margin game has no pure Nash equilibrium
(\Cref{n:thm:C1}); for the former, the same holds for every $\beta\ge0.9168$
(\Cref{m:thm:E2}).
\end{proposition}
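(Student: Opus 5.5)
The statement is computer-assisted, so the plan is to reduce non-existence to a finite list of strict inequalities that can be checked rigorously. In a zero-sum game every pure equilibrium $(c_1,c_2)$ has value $0$: by symmetry of the margin game each player can guarantee $0$ by copying the opponent, since the tie yields margin $0$. Hence a pure equilibrium exists iff some $c^\ast$ satisfies $w(c,c^\ast)\le 0$ for every $c$, where $w(c,c')$ is the margin of a candidate at $c$ against one at $c'$ (\Cref{tab:notation-analysis}). Non-existence is therefore equivalent to
\[
\max_{c\in[0,1]} w(c,c')>0\qquad\text{for every }c'\in[0,1],
\]
and for $\beta<1$ the analogous statement is that no profile is a mutual best response of $U_i$. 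For the margin game I would certify the displayed inequality. The first step handles the centre. By \Cref{thm:quartic} with $v''(\tfrac12)>0$ (both Beta$(\tfrac12,\tfrac12)$ and Beta$(0.8,0.8)$ are hollow), the deviation to $\tfrac12-\delta$ beats the median. This is only asymptotic, so near $c'=\tfrac12$ I would use an explicit $\delta$ together with a quantitative remainder bound on the $o(\delta^4)$ term, valid on a small interval around the median.

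Second, I would cover the rest of $[0,1]$ by a grid certificate, as described in the introduction. For each cell $[c'_j,c'_{j+1}]$ I exhibit a single deviation $c_j$ with $w(c_j,c'_j)\ge\varepsilon_j>0$, computed exactly. Under uniform costs, $w$ is an integral of piecewise-linear tents against a Beta density, which reduces to incomplete-beta evaluations in closed form. I then need a Lipschitz bound $L$ on $c'\mapsto w(c_j,c')$, obtained from $|\partial B_i/\partial c'|\le 1+\alpha=2$ and $\int v=1$, giving $L\le 4$ for the two shares. The cell is certified if $L(c'_{j+1}-c'_j)<\varepsilon_j$. Undecided cells are bisected adaptively. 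Boundary cells near $c'\in\{0,1\}$ are easy because a move toward the centre wins. Symmetry $c'\mapsto 1-c'$ halves the work.

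Third, for $\beta\ge0.9168$ the game is no longer zero-sum, so I would certify over the two-dimensional profile space. For each cell of $(c_1,c_2)$ I exhibit a player and a deviation whose gain exceeds a Lipschitz error. The payoff $U_i=u_i-\beta u_{-i}$ is Lipschitz in $(c_1,c_2,\beta)$ with a constant of order $2(1+\beta)$, so cells in $\beta$ can be handled jointly and the bound $0.9168$ is simply where the refinement stops succeeding. Cells near ties, where $c_1=c_2$ and payoffs vanish, need a separate argument: the deviation to a nearby interior point gives a strictly positive own share with an explicit lower bound.

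The main obstacle is the neighbourhood of the median, and near ties, because the margin gain there is only of order $\delta^4$ (\Cref{thm:quartic}) while the Lipschitz error is linear in cell width. A naive grid therefore never closes near $c'=\tfrac12$. I would first try to replace the Lipschitz bound there by the explicit quartic expansion with a rigorous Taylor remainder, using the smoothness of the Beta density at $\tfrac12$. I would then match the grid to the resulting annulus.
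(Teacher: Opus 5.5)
Your core certificate is the paper's. By antisymmetry of $w$, showing that every opponent position $c'$ is strictly beaten by some fixed attacker, with a Lipschitz bound in $c'$, is the same as the paper's statement that every position has strictly negative security value (\Cref{n:lem:lip,n:prop:grid,n:thm:C1}). Your $L\le4$ is valid, though \Cref{n:lem:lip} gives $(2+\alpha)L_K=3$. For $\beta<1$, you exhibit in each cell of $[0,1]^2$ a player and a fixed attacker whose gain beats the Lipschitz error, then sweep $\beta$ with a Lipschitz bound. That is the regret certificate of \Cref{m:thm:E2}, which uses $|R_\beta-R_{\beta'}|\le2u_{\sup}|\beta-\beta'|$ and closes $\beta>0.98615$ with \Cref{n:cor:beta-interval}.

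What is wrong is the obstacle you single out, and the machinery built around it. \Cref{thm:quartic} concerns \emph{small} deviations against an opponent exactly at the median. Your grid lets the attacker be anywhere, and against positions near $\tfrac12$ the best attacker is not a small deviation: it wins by about $0.026$ for the arcsine electorate (the security curve of \Cref{fig:security} is $\approx-0.026$ around $c=\tfrac12$). A plain grid therefore closes there with room to spare. The binding cells are near $c\approx0.17$ and $c\approx0.83$, where the security value peaks at $-8.22\cdot10^{-3}$. That is still well clear of the slack $(2+\alpha)h/2=1.25\cdot10^{-3}$ of the uniform $1201$-point grid, which is all \Cref{n:thm:C1} needs; $N=2401$ does the same for $\mathrm{Beta}(0.8,0.8)$. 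Your quartic detour also buys little. \Cref{thm:quartic} fixes the opponent at $\tfrac12$, so with Lipschitz transfer the attacker $\tfrac12-\delta$ certifies only $|c'-\tfrac12|\lesssim\delta^4$, and anything wider needs a new expansion off the median.

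Ties are not special either. The bounds of \Cref{n:lem:lip} hold for all positions, including across $c_1=c_2$. At a tie the regret equals the best-response value, which is at least $\tfrac{1+\beta}2\varepsilon'$ by \Cref{n:prop:grid}(b) and $U_1=\tfrac{1-\beta}2T+\tfrac{1+\beta}2w$. So tied cells are among the easiest. The separate argument you propose there would also fail for $\beta>0$: a nearby deviation with strictly positive \emph{own share} $u_1$ does not give $U_1=u_1-\beta u_2>0$. What stops the $\beta$-walk at $0.9168$ is the regret itself dying as $\beta$ approaches the numerically observed threshold near $0.908$ (\Cref{m:rem:E2stop}). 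Drop the median and tie detours and your plan is the paper's proof.
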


\noindent Each statement is a finite computation at a fixed $(\alpha,v)$. We
expect, but do not prove, that non-existence persists for every sufficiently
hollow $v$ and every $\alpha$ above a $v$-dependent threshold.

\begin{proof}[Sketch; the full certificate is \Cref{n:thm:C1,m:thm:E2}]
The margin game is symmetric and zero-sum, so a pure equilibrium exists if and
only if some position has security value zero,
$\max_{c\in[0,1]}\min_{c'\in[0,1]}w(c,c')=0$, writing $w(c,c')$ for the margin
of a candidate at $c$ against one at $c'$; since $w(c,c)=0$ and $w$ is
antisymmetric, the security value is always $\le0$. For $\alpha=1$ and uniform
costs it equals $-8.2244\cdot10^{-3}$ (at $c=0.829$) for the arcsine density
$v=\mathrm{Beta}(\tfrac12,\tfrac12)$ and $-3.0238\cdot10^{-3}$ for
$\mathrm{Beta}(0.8,0.8)$, so \emph{every} position can be strictly beaten
(\Cref{fig:security}); for $\mathrm{Beta}(2,2)$ the grid security value
attains $0$ at the median and only there, as \Cref{thm:quartic} leads one to
expect. A Lipschitz bound on $w$ turns the finite grid computation into
an exact statement about the continuum, giving a computer-assisted proof. A
mixed equilibrium always exists by Glicksberg's fixed-point theorem, with value $0$.
\end{proof}

The certificate extends off the boundary $\beta=1$: no pure equilibrium exists
for any $\beta\ge0.9168$ (\Cref{m:thm:E2}), while a pure equilibrium is still
found numerically at $\beta=0.90$, placing the true threshold in
$(0.90,0.9168]$ with the numerics placing that threshold $\beta_c$ between $0.908$ and
$0.909$. The
contrast with $\alpha=0$ is stark: there, under uniform costs, the very
arcsine electorate that admits no pure equilibrium at $\alpha=1$ keeps one for
\emph{every} $\beta<1$, along with every other convex symmetric electorate
meeting the density bound of \Cref{cor:quantile}. Non-existence is therefore a phenomenon of
alienation and competitiveness \emph{together}, not of competitiveness alone.

% fig:security -- extracted from results_sections.tex.
% The float is self-contained: every macro it uses is defined inside it.
% Inputs from the active draft as \input{figures/fig_security}.
\begin{figure}[t]
\centering
%% -------------------------------------------------------------------------
%% Zero-sum security value of the margin game, alpha = 1, uniform costs.
%% sec(c) = min_{c'} w(c,c').  A pure equilibrium exists iff max_c sec(c)=0.
%% Coordinates regenerated with the batch-1/2 engine (exact incomplete-beta
%% quadrature); a naive grid rule is unusable here -- see Remark on the
%% -0.387 artifact.  Regenerate with verify/figdata.py (certificate.vote_share_exact).
%% -------------------------------------------------------------------------
\begin{tikzpicture}
\begin{axis}[
  width=0.72\linewidth, height=5.0cm,
  xmin=0, xmax=1, ymin=-0.245, ymax=0.015,
  xlabel={own position $c$},
  ylabel={security value $\min_{c'}w(c,c')$},
  legend style={font=\scriptsize, at={(0.5,0.03)}, anchor=south, draw=none,
    fill=white, fill opacity=0.85, text opacity=1},
  tick label style={font=\scriptsize}, label style={font=\small},
  axis lines=left,
]
\addplot[black, thin, forget plot, domain=0:1] {0};
\addplot[blue!70!black, very thick] coordinates {(0.0000,-0.227208)(0.0042,-0.223637)(0.0083,-0.220054)(0.0125,-0.216461)(0.0167,-0.212859)(0.0208,-0.209249)(0.0250,-0.205633)(0.0292,-0.202012)(0.0333,-0.198388)(0.0375,-0.194761)(0.0417,-0.191135)(0.0458,-0.187509)(0.0500,-0.183885)(0.0542,-0.180265)(0.0583,-0.176650)(0.0625,-0.173041)(0.0667,-0.169439)(0.0708,-0.165847)(0.0750,-0.162264)(0.0792,-0.158693)(0.0833,-0.155135)(0.0875,-0.151590)(0.0917,-0.148061)(0.0958,-0.144548)(0.1000,-0.141053)(0.1042,-0.137577)(0.1083,-0.134121)(0.1125,-0.130686)(0.1167,-0.127273)(0.1208,-0.123885)(0.1250,-0.120521)(0.1292,-0.117182)(0.1333,-0.113871)(0.1375,-0.110588)(0.1417,-0.107335)(0.1458,-0.104111)(0.1500,-0.100919)(0.1542,-0.097759)(0.1583,-0.094633)(0.1625,-0.091542)(0.1667,-0.088486)(0.1708,-0.085467)(0.1750,-0.082485)(0.1792,-0.079542)(0.1833,-0.076639)(0.1875,-0.073776)(0.1917,-0.070955)(0.1958,-0.068176)(0.2000,-0.065441)(0.2042,-0.062749)(0.2083,-0.060103)(0.2125,-0.057504)(0.2167,-0.054951)(0.2208,-0.052446)(0.2250,-0.049990)(0.2292,-0.047584)(0.2333,-0.045228)(0.2375,-0.042925)(0.2417,-0.040675)(0.2458,-0.038480)(0.2500,-0.036340)(0.2542,-0.034258)(0.2583,-0.032235)(0.2625,-0.030273)(0.2667,-0.028372)(0.2708,-0.026534)(0.2750,-0.024761)(0.2792,-0.023056)(0.2833,-0.021418)(0.2875,-0.019852)(0.2917,-0.018358)(0.2958,-0.016940)(0.3000,-0.015600)(0.3042,-0.014340)(0.3083,-0.013158)(0.3125,-0.012050)(0.3167,-0.011015)(0.3208,-0.010047)(0.3250,-0.009144)(0.3292,-0.008304)(0.3333,-0.007523)(0.3375,-0.006798)(0.3417,-0.006128)(0.3458,-0.005508)(0.3500,-0.004936)(0.3542,-0.004410)(0.3583,-0.003927)(0.3625,-0.003485)(0.3667,-0.003081)(0.3708,-0.002714)(0.3750,-0.002380)(0.3792,-0.002078)(0.3833,-0.001806)(0.3875,-0.001562)(0.3917,-0.001343)(0.3958,-0.001148)(0.4000,-0.000975)(0.4042,-0.000822)(0.4083,-0.000688)(0.4125,-0.000571)(0.4167,-0.000470)(0.4208,-0.000383)(0.4250,-0.000308)(0.4292,-0.000245)(0.4333,-0.000193)(0.4375,-0.000149)(0.4417,-0.000113)(0.4458,-0.000084)(0.4500,-0.000061)(0.4542,-0.000043)(0.4583,-0.000029)(0.4625,-0.000019)(0.4667,-0.000012)(0.4708,-0.000007)(0.4750,-0.000004)(0.4792,-0.000002)(0.4833,-0.000001)(0.4875,-0.000000)(0.4917,-0.000000)(0.4958,-0.000000)(0.5000,0.000000)(0.5042,-0.000000)(0.5083,-0.000000)(0.5125,-0.000000)(0.5167,-0.000001)(0.5208,-0.000002)(0.5250,-0.000004)(0.5292,-0.000007)(0.5333,-0.000012)(0.5375,-0.000019)(0.5417,-0.000029)(0.5458,-0.000043)(0.5500,-0.000061)(0.5542,-0.000084)(0.5583,-0.000113)(0.5625,-0.000149)(0.5667,-0.000193)(0.5708,-0.000245)(0.5750,-0.000308)(0.5792,-0.000383)(0.5833,-0.000470)(0.5875,-0.000571)(0.5917,-0.000688)(0.5958,-0.000822)(0.6000,-0.000975)(0.6042,-0.001148)(0.6083,-0.001343)(0.6125,-0.001562)(0.6167,-0.001806)(0.6208,-0.002078)(0.6250,-0.002380)(0.6292,-0.002714)(0.6333,-0.003081)(0.6375,-0.003485)(0.6417,-0.003927)(0.6458,-0.004410)(0.6500,-0.004936)(0.6542,-0.005508)(0.6583,-0.006128)(0.6625,-0.006798)(0.6667,-0.007523)(0.6708,-0.008304)(0.6750,-0.009144)(0.6792,-0.010047)(0.6833,-0.011015)(0.6875,-0.012050)(0.6917,-0.013158)(0.6958,-0.014340)(0.7000,-0.015600)(0.7042,-0.016940)(0.7083,-0.018358)(0.7125,-0.019852)(0.7167,-0.021418)(0.7208,-0.023056)(0.7250,-0.024761)(0.7292,-0.026534)(0.7333,-0.028372)(0.7375,-0.030273)(0.7417,-0.032235)(0.7458,-0.034258)(0.7500,-0.036340)(0.7542,-0.038480)(0.7583,-0.040675)(0.7625,-0.042925)(0.7667,-0.045228)(0.7708,-0.047584)(0.7750,-0.049990)(0.7792,-0.052446)(0.7833,-0.054951)(0.7875,-0.057504)(0.7917,-0.060103)(0.7958,-0.062749)(0.8000,-0.065441)(0.8042,-0.068176)(0.8083,-0.070955)(0.8125,-0.073776)(0.8167,-0.076639)(0.8208,-0.079542)(0.8250,-0.082485)(0.8292,-0.085467)(0.8333,-0.088486)(0.8375,-0.091542)(0.8417,-0.094633)(0.8458,-0.097759)(0.8500,-0.100919)(0.8542,-0.104111)(0.8583,-0.107335)(0.8625,-0.110588)(0.8667,-0.113871)(0.8708,-0.117182)(0.8750,-0.120521)(0.8792,-0.123885)(0.8833,-0.127273)(0.8875,-0.130686)(0.8917,-0.134121)(0.8958,-0.137577)(0.9000,-0.141053)(0.9042,-0.144548)(0.9083,-0.148061)(0.9125,-0.151590)(0.9167,-0.155135)(0.9208,-0.158693)(0.9250,-0.162264)(0.9292,-0.165847)(0.9333,-0.169439)(0.9375,-0.173041)(0.9417,-0.176650)(0.9458,-0.180265)(0.9500,-0.183885)(0.9542,-0.187509)(0.9583,-0.191135)(0.9625,-0.194761)(0.9667,-0.198388)(0.9708,-0.202012)(0.9750,-0.205633)(0.9792,-0.209249)(0.9833,-0.212859)(0.9875,-0.216461)(0.9917,-0.220054)(0.9958,-0.223637)(1.0000,-0.227208)};
\addlegendentry{$\mathrm{Beta}(2,2)$: attains $0$ at the median}
\addplot[orange!85!black, densely dashed, very thick] coordinates {(0.0000,-0.166667)(0.0042,-0.162526)(0.0083,-0.158438)(0.0125,-0.154401)(0.0167,-0.150417)(0.0208,-0.146484)(0.0250,-0.142604)(0.0292,-0.138776)(0.0333,-0.135000)(0.0375,-0.131276)(0.0417,-0.127604)(0.0458,-0.123984)(0.0500,-0.120417)(0.0542,-0.116901)(0.0583,-0.113438)(0.0625,-0.110026)(0.0667,-0.106667)(0.0708,-0.103359)(0.0750,-0.100104)(0.0792,-0.096901)(0.0833,-0.093750)(0.0875,-0.090651)(0.0917,-0.087604)(0.0958,-0.084609)(0.1000,-0.081667)(0.1042,-0.078776)(0.1083,-0.075938)(0.1125,-0.073151)(0.1167,-0.070417)(0.1208,-0.067734)(0.1250,-0.065104)(0.1292,-0.062526)(0.1333,-0.060000)(0.1375,-0.057526)(0.1417,-0.055104)(0.1458,-0.052734)(0.1500,-0.050417)(0.1542,-0.048151)(0.1583,-0.045938)(0.1625,-0.043776)(0.1667,-0.041667)(0.1708,-0.039609)(0.1750,-0.037604)(0.1792,-0.035651)(0.1833,-0.033750)(0.1875,-0.031901)(0.1917,-0.030104)(0.1958,-0.028359)(0.2000,-0.026667)(0.2042,-0.025026)(0.2083,-0.023437)(0.2125,-0.021901)(0.2167,-0.020417)(0.2208,-0.018984)(0.2250,-0.017604)(0.2292,-0.016276)(0.2333,-0.015000)(0.2375,-0.013776)(0.2417,-0.012604)(0.2458,-0.011484)(0.2500,-0.010417)(0.2542,-0.009401)(0.2583,-0.008438)(0.2625,-0.007526)(0.2667,-0.006667)(0.2708,-0.005859)(0.2750,-0.005104)(0.2792,-0.004401)(0.2833,-0.003750)(0.2875,-0.003151)(0.2917,-0.002604)(0.2958,-0.002109)(0.3000,-0.001667)(0.3042,-0.001276)(0.3083,-0.000937)(0.3125,-0.000651)(0.3167,-0.000417)(0.3208,-0.000234)(0.3250,-0.000104)(0.3292,-0.000026)(0.3333,-0.000000)(0.3375,-0.000000)(0.3417,-0.000000)(0.3458,-0.000000)(0.3500,-0.000000)(0.3542,-0.000000)(0.3583,-0.000000)(0.3625,-0.000000)(0.3667,-0.000000)(0.3708,-0.000000)(0.3750,-0.000000)(0.3792,-0.000000)(0.3833,-0.000000)(0.3875,-0.000000)(0.3917,-0.000000)(0.3958,-0.000000)(0.4000,-0.000000)(0.4042,-0.000000)(0.4083,-0.000000)(0.4125,-0.000000)(0.4167,-0.000000)(0.4208,-0.000000)(0.4250,-0.000000)(0.4292,-0.000000)(0.4333,-0.000000)(0.4375,-0.000000)(0.4417,-0.000000)(0.4458,-0.000000)(0.4500,-0.000000)(0.4542,-0.000000)(0.4583,-0.000000)(0.4625,-0.000000)(0.4667,-0.000000)(0.4708,-0.000000)(0.4750,-0.000000)(0.4792,-0.000000)(0.4833,-0.000000)(0.4875,-0.000000)(0.4917,-0.000000)(0.4958,-0.000000)(0.5000,-0.000000)(0.5042,-0.000000)(0.5083,-0.000000)(0.5125,-0.000000)(0.5167,-0.000000)(0.5208,-0.000000)(0.5250,-0.000000)(0.5292,-0.000000)(0.5333,-0.000000)(0.5375,-0.000000)(0.5417,-0.000000)(0.5458,-0.000000)(0.5500,-0.000000)(0.5542,-0.000000)(0.5583,-0.000000)(0.5625,-0.000000)(0.5667,-0.000000)(0.5708,-0.000000)(0.5750,-0.000000)(0.5792,-0.000000)(0.5833,-0.000000)(0.5875,-0.000000)(0.5917,-0.000000)(0.5958,-0.000000)(0.6000,-0.000000)(0.6042,-0.000000)(0.6083,-0.000000)(0.6125,-0.000000)(0.6167,-0.000000)(0.6208,-0.000000)(0.6250,-0.000000)(0.6292,-0.000000)(0.6333,-0.000000)(0.6375,-0.000000)(0.6417,-0.000000)(0.6458,-0.000000)(0.6500,-0.000000)(0.6542,-0.000000)(0.6583,-0.000000)(0.6625,-0.000000)(0.6667,-0.000000)(0.6708,-0.000026)(0.6750,-0.000104)(0.6792,-0.000234)(0.6833,-0.000417)(0.6875,-0.000651)(0.6917,-0.000937)(0.6958,-0.001276)(0.7000,-0.001667)(0.7042,-0.002109)(0.7083,-0.002604)(0.7125,-0.003151)(0.7167,-0.003750)(0.7208,-0.004401)(0.7250,-0.005104)(0.7292,-0.005859)(0.7333,-0.006667)(0.7375,-0.007526)(0.7417,-0.008437)(0.7458,-0.009401)(0.7500,-0.010417)(0.7542,-0.011484)(0.7583,-0.012604)(0.7625,-0.013776)(0.7667,-0.015000)(0.7708,-0.016276)(0.7750,-0.017604)(0.7792,-0.018984)(0.7833,-0.020417)(0.7875,-0.021901)(0.7917,-0.023438)(0.7958,-0.025026)(0.8000,-0.026667)(0.8042,-0.028359)(0.8083,-0.030104)(0.8125,-0.031901)(0.8167,-0.033750)(0.8208,-0.035651)(0.8250,-0.037604)(0.8292,-0.039609)(0.8333,-0.041667)(0.8375,-0.043776)(0.8417,-0.045937)(0.8458,-0.048151)(0.8500,-0.050417)(0.8542,-0.052734)(0.8583,-0.055104)(0.8625,-0.057526)(0.8667,-0.060000)(0.8708,-0.062526)(0.8750,-0.065104)(0.8792,-0.067734)(0.8833,-0.070417)(0.8875,-0.073151)(0.8917,-0.075937)(0.8958,-0.078776)(0.9000,-0.081667)(0.9042,-0.084609)(0.9083,-0.087604)(0.9125,-0.090651)(0.9167,-0.093750)(0.9208,-0.096901)(0.9250,-0.100104)(0.9292,-0.103359)(0.9333,-0.106667)(0.9375,-0.110026)(0.9417,-0.113438)(0.9458,-0.116901)(0.9500,-0.120417)(0.9542,-0.123984)(0.9583,-0.127604)(0.9625,-0.131276)(0.9667,-0.135000)(0.9708,-0.138776)(0.9750,-0.142604)(0.9792,-0.146484)(0.9833,-0.150417)(0.9875,-0.154401)(0.9917,-0.158437)(0.9958,-0.162526)(1.0000,-0.166667)};
\addlegendentry{uniform: $0$ on the whole band $E(\alpha)$}
\addplot[green!45!black, densely dashdotted, very thick] coordinates {(0.0000,-0.095973)(0.0042,-0.091562)(0.0083,-0.087516)(0.0125,-0.083705)(0.0167,-0.080081)(0.0208,-0.076618)(0.0250,-0.073296)(0.0292,-0.070103)(0.0333,-0.067029)(0.0375,-0.064066)(0.0417,-0.061206)(0.0458,-0.058444)(0.0500,-0.055775)(0.0542,-0.053194)(0.0583,-0.050699)(0.0625,-0.048286)(0.0667,-0.045952)(0.0708,-0.043693)(0.0750,-0.041509)(0.0792,-0.039396)(0.0833,-0.037352)(0.0875,-0.035377)(0.0917,-0.033467)(0.0958,-0.031621)(0.1000,-0.029839)(0.1042,-0.028118)(0.1083,-0.026457)(0.1125,-0.024855)(0.1167,-0.023312)(0.1208,-0.021825)(0.1250,-0.020394)(0.1292,-0.019018)(0.1333,-0.017695)(0.1375,-0.016426)(0.1417,-0.015210)(0.1458,-0.014045)(0.1500,-0.012931)(0.1542,-0.011867)(0.1583,-0.010852)(0.1625,-0.009886)(0.1667,-0.008969)(0.1708,-0.008224)(0.1750,-0.008504)(0.1792,-0.008784)(0.1833,-0.009066)(0.1875,-0.009351)(0.1917,-0.009636)(0.1958,-0.009922)(0.2000,-0.010210)(0.2042,-0.010499)(0.2083,-0.010789)(0.2125,-0.011079)(0.2167,-0.011371)(0.2208,-0.011663)(0.2250,-0.011956)(0.2292,-0.012249)(0.2333,-0.012543)(0.2375,-0.012837)(0.2417,-0.013131)(0.2458,-0.013426)(0.2500,-0.013720)(0.2542,-0.014014)(0.2583,-0.014308)(0.2625,-0.014602)(0.2667,-0.014895)(0.2708,-0.015188)(0.2750,-0.015480)(0.2792,-0.015772)(0.2833,-0.016062)(0.2875,-0.016352)(0.2917,-0.016641)(0.2958,-0.016928)(0.3000,-0.017214)(0.3042,-0.017499)(0.3083,-0.017782)(0.3125,-0.018064)(0.3167,-0.018343)(0.3208,-0.018621)(0.3250,-0.018897)(0.3292,-0.019171)(0.3333,-0.019442)(0.3375,-0.019711)(0.3417,-0.019977)(0.3458,-0.020240)(0.3500,-0.020501)(0.3542,-0.020759)(0.3583,-0.021013)(0.3625,-0.021264)(0.3667,-0.021512)(0.3708,-0.021756)(0.3750,-0.021996)(0.3792,-0.022233)(0.3833,-0.022465)(0.3875,-0.022692)(0.3917,-0.022915)(0.3958,-0.023134)(0.4000,-0.023347)(0.4042,-0.023556)(0.4083,-0.023758)(0.4125,-0.023956)(0.4167,-0.024148)(0.4208,-0.024334)(0.4250,-0.024513)(0.4292,-0.024686)(0.4333,-0.024852)(0.4375,-0.025011)(0.4417,-0.025163)(0.4458,-0.025307)(0.4500,-0.025443)(0.4542,-0.025571)(0.4583,-0.025691)(0.4625,-0.025802)(0.4667,-0.025926)(0.4708,-0.026034)(0.4750,-0.026125)(0.4792,-0.026199)(0.4833,-0.026257)(0.4875,-0.026299)(0.4917,-0.026326)(0.4958,-0.026339)(0.5000,-0.026339)(0.5042,-0.026339)(0.5083,-0.026326)(0.5125,-0.026299)(0.5167,-0.026257)(0.5208,-0.026199)(0.5250,-0.026125)(0.5292,-0.026034)(0.5333,-0.025926)(0.5375,-0.025802)(0.5417,-0.025691)(0.5458,-0.025571)(0.5500,-0.025443)(0.5542,-0.025307)(0.5583,-0.025163)(0.5625,-0.025011)(0.5667,-0.024852)(0.5708,-0.024686)(0.5750,-0.024513)(0.5792,-0.024334)(0.5833,-0.024148)(0.5875,-0.023956)(0.5917,-0.023758)(0.5958,-0.023556)(0.6000,-0.023347)(0.6042,-0.023134)(0.6083,-0.022915)(0.6125,-0.022692)(0.6167,-0.022465)(0.6208,-0.022233)(0.6250,-0.021996)(0.6292,-0.021756)(0.6333,-0.021512)(0.6375,-0.021264)(0.6417,-0.021013)(0.6458,-0.020759)(0.6500,-0.020501)(0.6542,-0.020240)(0.6583,-0.019977)(0.6625,-0.019711)(0.6667,-0.019442)(0.6708,-0.019171)(0.6750,-0.018897)(0.6792,-0.018621)(0.6833,-0.018343)(0.6875,-0.018064)(0.6917,-0.017782)(0.6958,-0.017499)(0.7000,-0.017214)(0.7042,-0.016928)(0.7083,-0.016641)(0.7125,-0.016352)(0.7167,-0.016062)(0.7208,-0.015772)(0.7250,-0.015480)(0.7292,-0.015188)(0.7333,-0.014895)(0.7375,-0.014602)(0.7417,-0.014308)(0.7458,-0.014014)(0.7500,-0.013720)(0.7542,-0.013426)(0.7583,-0.013131)(0.7625,-0.012837)(0.7667,-0.012543)(0.7708,-0.012249)(0.7750,-0.011956)(0.7792,-0.011663)(0.7833,-0.011371)(0.7875,-0.011079)(0.7917,-0.010789)(0.7958,-0.010499)(0.8000,-0.010210)(0.8042,-0.009922)(0.8083,-0.009636)(0.8125,-0.009351)(0.8167,-0.009066)(0.8208,-0.008784)(0.8250,-0.008504)(0.8292,-0.008224)(0.8333,-0.008969)(0.8375,-0.009886)(0.8417,-0.010852)(0.8458,-0.011867)(0.8500,-0.012931)(0.8542,-0.014045)(0.8583,-0.015210)(0.8625,-0.016426)(0.8667,-0.017695)(0.8708,-0.019018)(0.8750,-0.020394)(0.8792,-0.021825)(0.8833,-0.023312)(0.8875,-0.024855)(0.8917,-0.026457)(0.8958,-0.028118)(0.9000,-0.029839)(0.9042,-0.031621)(0.9083,-0.033467)(0.9125,-0.035377)(0.9167,-0.037352)(0.9208,-0.039396)(0.9250,-0.041509)(0.9292,-0.043693)(0.9333,-0.045952)(0.9375,-0.048286)(0.9417,-0.050699)(0.9458,-0.053194)(0.9500,-0.055775)(0.9542,-0.058444)(0.9583,-0.061206)(0.9625,-0.064066)(0.9667,-0.067029)(0.9708,-0.070103)(0.9750,-0.073296)(0.9792,-0.076618)(0.9833,-0.080081)(0.9875,-0.083705)(0.9917,-0.087516)(0.9958,-0.091562)(1.0000,-0.095973)};
\addlegendentry{arcsine: maximum $-8.22\cdot10^{-3}<0$}
\addplot[only marks, mark=*, mark size=2pt, green!45!black, forget plot]
  coordinates {(0.8292,-0.008224)};
\end{axis}
\end{tikzpicture}
\caption{The certificate behind \Cref{prop:nonexistence}. The margin game is
symmetric zero-sum, so a pure equilibrium exists iff some position has security
value $0$. For a centrally peaked electorate the median achieves it; for
uniform voters every point of the band does; for the arcsine (hollow)
electorate the maximum is strictly negative at $c=0.829$, so \emph{every}
position can be strictly beaten and no pure equilibrium exists. Computing this
curve requires quadrature that respects the endpoint singularity of $v$: a
naive grid rule returns $-0.387$ for the arcsine maximum, a factor of $47$ too
negative (\Cref{m:rem:artifact}).}
\label{fig:security}
\end{figure}

Two remarks complete the picture. First, equilibria of nearly-zero-sum games
converge onto margin-game equilibria (upper hemicontinuity,
\Cref{n:thm:B5}). \Cref{cor:quantile} shows the statement is
not vacuous at $\alpha=0$, where the approximating equilibria exist all the
way to the limit and select the median. Second, at $\alpha=0$ we conjecture
pure existence for \emph{every} $(v,k)$ and $\beta<1$
(\Cref{m:conj:alpha0}); the certified failure at $\alpha=1$ shows that no
proof can survive a relaxation of $\alpha=0$. Together with
\Cref{thm:quartic}, this gives the picture for the margin game:
the median survives alienation, at least locally, when the electorate is
centrally peaked; when it is hollow, the median fails and, at least for the densities examined
here, no pure equilibrium takes its place.

%%%%%%%%%%%%%%%%%%%%%%%%%%%%%%%%%%%%%%%%%%%%%%%%%%%%%%%%%%%%%%%%%%%%%%%%%%%%%%
%%%%%%%%%%%%%%%%%%%%%%%%%%%%%%%%%%%%%%%%%%%%%%%%%%%%%%%%%%%%%%%%%%%%%%%%%%%%%%
%% Discussion and Further Research
%%%%%%%%%%%%%%%%%%%%%%%%%%%%%%%%%%%%%%%%%%%%%%%%%%%%%%%%%%%%%%%%%%%%%%%%%%%%%%
\section{Discussion and Further Research}\label{sec:discussion}

\paragraph{What drives polarization}
\Cref{tab:drivers} collects the answer the paper was built to give. Three
forces raise equilibrium polarization---a more dispersed electorate, costlier
voting in the reverse-hazard-rate sense, and stronger alienation; a hollow
center removes the median as a margin-game equilibrium; one force lowers
polarization---competition for the margin; and one candidate that every
intuition nominates, the level of turnout, provably does nothing at all. The
two voter-side channels do not interfere with each other
(\Cref{thm:rhr-alpha}): each still polarizes at every level of the other, and
both survive the presence of competitiveness
(\Cref{prop:beta-mono,prop:rhr-beta}). The one de-polarizing
force is also the only one whose comparative static is shown to reverse sign
when super-regularity fails (\Cref{ex:fold}), an asymmetry worth
keeping in mind when reading the model empirically.

\begin{table}[!t]
\centering
\caption{Every force in the model, and which way it moves equilibrium
polarization $\Delta^*$. The last row is not a gap in the analysis but a
theorem: the level of participation is an output of the model, never an input
to it.}
\label{tab:drivers}
\setlength{\tabcolsep}{3.5pt}\footnotesize
\renewcommand{\arraystretch}{1.15}
\begin{tabular}{@{}lc R{3.2cm} R{2.5cm}@{}}
\toprule
Increase in\ldots & $\Delta^*$ & Holds when & Where \\
\midrule
electorate spread (IQR) & $\uparrow$ & unif.\ costs, $v$ symmetric, $\rho_v\le4$ & \Cref{cor_quarter} \\
electorate scale $\sigma$ & $\uparrow$ & unif.\ voters, $K$ super-reg.; exact for power costs (any $v$) & \Cref{prop:mass} \\
voting costs (rhr order) & $\uparrow$ & unif.\ voters, $K$ super-reg., any $(\alpha,\beta)$ & \Cref{prop:rh,thm:rhr-alpha,prop:rhr-beta} \\
alienation $\alpha$ & $\uparrow$ & unif.\ voters, $K$ super-reg., $\alpha>0$; $\Delta^*\to1$ & \Cref{thm:alien-mono} \\
competitiveness $\beta$ & $\downarrow$ & unif.\ voters, monotone rhr \emph{only} & \Cref{prop:beta-mono,ex:fold} \\
hollow center, $v''(\tfrac12)>0$ & breaks median & margin game, unif.\ costs, $\alpha>0$, $v$ symmetric & \Cref{thm:quartic} \\
cost \emph{level}, $K\mapsto zK$ & --- & always: no effect & \Cref{prop:scale-gen,rem:fosd} \\
\bottomrule
\end{tabular}
\end{table}

\paragraph{The baseline and the two dials together}
\Cref{tab:summary} collects the doubly-uniform benchmarks. Reading it as a
story: costly voting moves vote-share maximizers from the median to the
quartiles; alienation pushes them further out, without bound; competitiveness
pulls them back in, linearly to the median when voters are not alienated,
but only to the top of the margin band when they are. The classical Downsian
prediction survives costly voting only under the margin objective, and even
then only when voters are unalienated or, locally, when the electorate is
centrally peaked;
for alienated voters facing a hollow electorate, nothing need survive---not
even equilibrium existence. Polarization in this model is thus neither an
inevitability nor an accident: it is the resultant of two voter-side
forces---costly voting and alienation---and one candidate-side force, each
with an explicit comparative static.

\begin{table}[!t]
\centering
\caption{The baseline and the two dials at a glance. Second column: the
equilibrium in the doubly-uniform benchmark ($x,\kappa\sim U(0,1)$), with
$\lambda=2+\alpha$. Third column: sufficient density-ratio bound on $v$ for
the first-order conditions to characterize equilibria under uniform costs,
where
$f_\beta=\tfrac{2\lambda(1+\alpha)+\beta}{(1+\alpha)^2}$ (concavity) and
$g_\beta=\tfrac{(1+\alpha)\lambda}{\alpha(1+\alpha+\beta)+\beta\lambda}$
(dead zone), from \Cref{m:thm:D1}; the bounds come from
different arguments and are not claimed tight. The row `both' is stated for
$\alpha>0$ and does not reduce to the row above it as $\alpha\downarrow0$:
$f_\beta\to4+\beta\leq \tfrac4{1-\beta}$, because the $\alpha=0$ bound uses an
exact cancellation at the midpoint that the worst-case argument of
\Cref{m:thm:D1} forgoes (\Cref{m:rem:D1shape}), and $g_\beta\to1/\beta$
bounds a dead-zone piece of the strategy space that vanishes at $\alpha=0$.}
\label{tab:summary}
\setlength{\tabcolsep}{2.6pt}\footnotesize
\begin{tabular}{@{}lll R{2.9cm}@{}}
\toprule
Setting & $\Delta^*$ (doubly unif.) & Umbrella $\rho_v\le$ & Where \\
\midrule
baseline $(0,0)$ & $\tfrac12$, at $(\tfrac14,\tfrac34)$ & $4$ & \Cref{cor_uu,prop:fourflat} \\[2pt]
alienation $(\alpha,0)$ & $\tfrac{2+\alpha}{4+\alpha}$ & $\min\{\tfrac{2\lambda}{1+\alpha},\tfrac\lambda\alpha\}$ & \Cref{cor:alien-uu,n:thm:A3} \\[2pt]
competition $(0,\beta)$ & $\tfrac{1-\beta}2$ (quantile profile) & $\tfrac4{1-\beta}$ & \Cref{cor:beta-uu,prop:massgap} \\[2pt]
both $(\alpha,\beta)$ & \cref{eq:beta-closed} & $\min\{f_\beta,g_\beta\}$ & \Cref{cor:beta-uu,m:thm:D1} \\[2pt]
margin limit $(\alpha,1)$ & band: $\Delta\in\big[0,\tfrac\alpha\lambda\big]$ & knife-edge & \Cref{prop:margin_uniform,thm:quartic} \\
\bottomrule
\end{tabular}
\end{table}

\paragraph{Open questions}
Several questions remain open. On
sufficiency: sharp thresholds for the density-ratio umbrellas of
\Cref{tab:summary}, and a sufficiency theory beyond uniform costs
(\Cref{n:prop:A4} covers non-increasing $k$ at $\beta=0$; the obstruction at
$\beta>0$ is isolated in \Cref{m:rem:D1k}). On alienation: a
general-electorate version of \Cref{thm:alien-mono}, whose conclusion holds
numerically for the concentrated electorate of \Cref{rem:deadzone} but is
proved only for uniform voters. On existence: under
\Cref{ass:reg} we conjecture pure existence for every $(v,k)$ in the baseline
and, at $\alpha=0$, for every $\beta<1$ (\Cref{rem:exist,m:conj:alpha0}),
the first of these open already for uniform costs once $\rho_v>4$; closing the certified gap $(0.90,0.9168]$
of \Cref{m:thm:E2}; and characterizing the mixed equilibria of the margin
game where pure ones fail. On the margin game with alienation: the exact
non-existence region in $(\alpha,\beta,v)$, and whether the quantile path of
\Cref{cor:quantile} has an $\alpha>0$ analogue. The certificate technique
behind \Cref{m:thm:E2} may be of independent interest for other spatial games
with continuous strategy spaces.

Empirically, where political polarization has
risen~\cite{pew2017partisan,boxell2024cross} alongside declining
turnout~\cite{hooghe2017tipping,franklin2004voter}, the model offers a channel
complementary to media- and network-based accounts
\cite{prior2013media,kubin2021role}, with a sharpened prediction: the
association should run through the \emph{composition} of the abstaining
margin rather than the participation rate, and the model separates ``voting
became harder'' (uniform shifts in participation) from ``voters became
disaffected'' (abstention concentrated among the ideologically
distant)---distinctions the turnout literature is equipped to test
\cite{leighley2014who,fowler2013electoral}. A first piece of evidence on the
composition side exists: across Dutch municipalities, lower turnout goes with
a more polarized distribution of the votes cast, the parties furthest from the
center gaining most, consistent with centrist voters being the ones who
abstain~\cite{alyussef2026turnout}---the abstaining margin of the baseline
model, where abstention concentrates between the candidates.

That test has a confound the present model assumes away. Costs here are drawn
i.i.d.\ and independently of position (\Cref{sec:model}), and the assumption
does real work: written out for the baseline, the balance condition of
\Cref{lem:balance} has an outer term that factorizes as $k(\Delta)V(c_1)$,
one cost density times one voter mass, because
every voter on the flank both faces the same benefit $\Delta$ and draws from the
same $K$. Replace $K$ by a family $K_x$ and that product becomes
$\int_0^{c_1}k_x(\Delta)v(x)\,dx$, which does not split; \Cref{prop:scale} and
the balance principle itself survive, but \Cref{cor:rhr-canonical} weakens from
``one scalar function summarizes the cost side'' to a statement about a field
$\{r_x\}$. Note that this is not the position-dependence the model already has.
Alienation acts through the benefit, via $d_i$, so its bite moves when the
candidates move; a cost--position correlation is pinned to the voter and does
not move at all. The paper treats endogenous position-dependence thoroughly and
exogenous position-dependence not at all, and no choice of $\alpha$ substitutes
for the latter. The direction is signed, moreover, and the two channels
disagree: alienation predicts that a candidate's own extremists are the first to
abstain, whereas if the ideologically committed are simply cheaper to mobilize,
then $k_x$ is shifted toward low costs at extreme $x$, the flank is more
responsive than \Cref{prop:rh} assumes, and the model \emph{understates}
polarization. Which of the two dominates is an empirical question, and it is the
same question as whether a composition shift in the abstaining margin identifies
the strategic channel or merely reflects who finds voting cheap.

Finally, the costly-turnout framework introduced here can be naturally
combined with additional dynamics. One extension is to allow for non-linear or
probabilistic voter utilities, as in \cite{coughlin2015probabilistic}, which
may better capture how voters evaluate platforms. Another is to embed the
model in a dynamic environment with repeated elections, building on iterative
voting models \cite{meir2017iterative}; such a setting could generate
endogenous status quo effects that shape participation incentives, in the
spirit of \cite{shapiro2017reality}. Further natural directions include more
than two candidates and endogenous entry, where the alienation dead zones
suggest room for centrist or extremist entrants; asymmetric candidates
(valence or cost advantages), for which our asymmetry law
(\Cref{lem:balance-gen}) already prices the trade-off; and finite electorates with
pivotality considerations in the spirit of the Calculus of Voting
\cite{riker1968theory}, connecting the present continuum model back to the
classical turnout literature.

%%%%%%%%%%%%%%%%%%%%%%%%%%%%%%%%%%%%%%%%%%%%%%%%%%%%%%%%%%%%%%%%%%%%%%%%%%%%%%
\appendix
% From here on every \section is an appendix, so cleveref should say
% "Appendix B" and not "Section B".
\crefalias{section}{appendix}
\crefalias{subsection}{appendix}

\section{Deferred Proofs}\label{app:proofs}
\noindent{\small\itshape This appendix collects the proofs deferred from
\Cref{sec:base,sec:alien,sec:comp}. Proofs of
\Cref{lem:interior,thm_uniform_costs,prop:fourflat,cor_quarter,thm_uniform_votes,prop:rh,thm:margin_median,prop:margin_uniform}
are in \Cref{app:omitted-proofs}. The remaining proofs, together with the
supplementary results, worked examples and certificates they draw on, occupy
\Crefrange{n:sec:prelim}{m:sec:G}.}
\medskip
\printdeferredproofs

\clearpage
\section{Omitted Proofs: Baseline and Margin Game}\label{app:omitted-proofs}

This appendix proves the results of \Cref{sec:base,sec:comp} that are stated
without proof in the main text; proofs of the remaining results are in
\Cref{app:proofs} and in \Crefrange{n:sec:prelim}{m:sec:G}. For the reader's convenience we
restate each result before its proof.

\subsection{Preliminaries: Payoffs and Their Derivatives}\label{app:prelim}

Throughout, fix a profile $(c_1,c_2)$ with $c_1<c_2$ and write $\Delta:=c_2-c_1$ and $m:=\frac{c_1+c_2}{2}$. Voters on $[0,m)$ either vote for candidate $1$ or abstain, and voters on $(m,1]$ either vote for candidate $2$ or abstain. Evaluating the gain $d_{-i}(x)-d_i(x)$ piecewise---it equals $\Delta$ outside $[c_1,c_2]$ and $|c_1+c_2-2x|$ inside---we obtain
\begin{align}
u_1(c_1,c_2)&=K(\Delta)\,V(c_1)+\int_{c_1}^{m}K(c_1+c_2-2x)\,v(x)\,dx,\label{eq:u1_general}\\
u_2(c_1,c_2)&=K(\Delta)\,\big(1-V(c_2)\big)+\int_{m}^{c_2}K(2x-c_1-c_2)\,v(x)\,dx.\label{eq:u2_general}
\end{align}
Since $v$ and $k$ are continuous, $u_1$ is differentiable in $c_1$ on $(0,c_2)$. Differentiating \cref{eq:u1_general} by the Leibniz rule, the term $K(\Delta)v(c_1)$ produced by the first summand cancels against the lower-limit term of the integral, and the upper-limit term vanishes because $K(0)=0$; what remains is
\begin{equation}\label{eq:du1_general}
\frac{\partial u_1}{\partial c_1}(c_1,c_2)
=\int_{c_1}^{m}k(c_1+c_2-2x)\,v(x)\,dx\;-\;k(\Delta)\,V(c_1),
\end{equation}
and, symmetrically,
\begin{equation}\label{eq:du2_general}
\frac{\partial u_2}{\partial c_2}(c_1,c_2)
=k(\Delta)\,\big(1-V(c_2)\big)\;-\;\int_{m}^{c_2}k(2x-c_1-c_2)\,v(x)\,dx.
\end{equation}
Both formulas remain valid as one-sided derivatives at $c_1=0$ and $c_2=1$.

Two specializations will be used repeatedly. First, for \emph{uniform costs} ($K(t)=t$, $k\equiv1$), \crefrange{eq:du1_general}{eq:du2_general} become
\begin{equation}\label{eq:du_ucost}
\frac{\partial u_1}{\partial c_1}=V(m)-2V(c_1),
\qquad
\frac{\partial u_2}{\partial c_2}=1+V(m)-2V(c_2),
\end{equation}
and integrating \cref{eq:u1_general} by parts gives the compact representation
\begin{equation}\label{eq:u_2intV}
u_1(c_1,c_2)=2\int_{c_1}^{m}V(x)\,dx,
\qquad
u_2(c_1,c_2)=2\int_{m}^{c_2}\big(1-V(x)\big)\,dx .
\end{equation}
(The analogous formulas with the roles of the candidates reversed hold when $c_1>c_2$; they follow from the reflection $x\mapsto1-x$, which maps $V$ to the CDF $\widetilde V(x)=1-V(1-x)$, whose density $\widetilde v(x)=v(1-x)$ has the same bounds as $v$.)

Second, for \emph{uniform voters} ($v\equiv1$), substituting $y=c_1+c_2-2x$ in \cref{eq:u1_general} gives, with $G(t):=\int_0^tK(y)\,dy$,
\begin{equation}\label{eq:u_uvoters}
u_1(c_1,c_2)=c_1K(\Delta)+\tfrac12G(\Delta),
\qquad
u_2(c_1,c_2)=(1-c_2)K(\Delta)+\tfrac12G(\Delta),
\end{equation}
so that
\begin{equation}\label{eq:du_uvoters}
\frac{\partial u_1}{\partial c_1}=\tfrac12K(\Delta)-c_1k(\Delta),
\qquad
\frac{\partial u_2}{\partial c_2}=(1-c_2)k(\Delta)-\tfrac12K(\Delta).
\end{equation}

\medskip
\noindent\textbf{\Cref{lem:interior} (restated).}\enspace
\emph{In the baseline model, under \Cref{ass:reg}, every pure Nash equilibrium $(c_1,c_2)$ satisfies $c_1\neq c_2$ and, labeling the candidates so that $c_1<c_2$, $0<c_1<c_2<1$.}

\begin{proof}
\emph{No ties.} At a tied profile $c_1=c_2=c$ both candidates receive $0$. If $c>0$, let candidate $1$ deviate to any $c_1'\in(0,c)$; by \cref{eq:u1_general}, her payoff is at least $\int_{c_1'}^{m'}K(c_1'+c-2x)v(x)\,dx>0$, since on the interior of that interval the argument of $K$ lies in $(0,\Delta')\subseteq(0,1)$, where $K>0$, and $v>0$. If $c=0$, the mirror deviation to any $c_1'\in(0,1)$ works. Hence tied profiles are never equilibria.

\emph{No boundary positions.} Suppose $c_1=0<c_2$. By \cref{eq:du1_general}, the one-sided derivative at $c_1=0$ equals $\int_{0}^{c_2/2}k(c_2-2x)v(x)\,dx>0$ (the argument of $k$ ranges over $(0,c_2)\subseteq(0,1)$, where $k>0$, and $v>0$), so a small move to the right strictly increases $u_1$, and $c_1=0$ is not a best response. The mirror argument using \cref{eq:du2_general} rules out $c_2=1$.
\end{proof}

\medskip
\noindent\textbf{\Cref{thm_uniform_costs} (restated).}\enspace
\emph{Let $\kappa\sim U(0,1)$ and let $v$ satisfy \Cref{ass:reg}. Then every Nash equilibrium $(c_1,c_2)$ satisfies}
\[
V(c_1)=V(c_2)-\tfrac12=\tfrac12\,V\!\left(\tfrac{c_1+c_2}{2}\right).
\]

\begin{proof}
By \Cref{lem:interior} and \cref{eq:du_ucost}, an equilibrium satisfies
\begin{equation}\label{eq:FOC_ucost}
V(m)=2V(c_1)
\qquad\text{and}\qquad
V(m)=2V(c_2)-1 .
\end{equation}
Combining the two equalities yields $V(c_1)=V(c_2)-\tfrac12=\tfrac12V(m)$, as required.
\end{proof}

\medskip
\noindent\textbf{\Cref{prop:fourflat} (restated).}\enspace
\emph{Let $\kappa\sim U(0,1)$ and suppose $\rho_v\le4$. Then each candidate's payoff is concave in her own position on each side of the opponent. Consequently, $(c_1,c_2)$ with $0<c_1<c_2<1$ is a Nash equilibrium if and only if it satisfies the conditions of \Cref{thm_uniform_costs} and neither candidate gains by a cross-over deviation to the far side of her opponent.}

\begin{proof}
Fix $c_2$ and consider $c_1'\in[0,c_2)$. By \cref{eq:du_ucost}, $\partial u_1/\partial c_1'=V(m')-2V(c_1')$ with $m'=\frac{c_1'+c_2}{2}$, whose derivative in $c_1'$ is
\[
\tfrac12\,v(m')-2\,v(c_1')\;\le\;\tfrac12\sup v-2\inf v\;\le\;0
\]
whenever $\rho_v\le4$. Hence $u_1(\cdot,c_2)$ has a non-increasing derivative, i.e.,\ is concave, on $[0,c_2]$. For $c_1'\in(c_2,1]$ the same argument applies after the reflection $x\mapsto1-x$ (which preserves the density ratio), so $u_1(\cdot,c_2)$ is also concave on $[c_2,1]$. The argument for candidate $2$ is symmetric.

($\Rightarrow$) If $(c_1,c_2)$ is an equilibrium then the balance conditions hold by \Cref{thm_uniform_costs}, and the cross-over inequalities hold trivially, since they compare the equilibrium payoff with payoffs of feasible deviations.

($\Leftarrow$) Suppose the conditions of \Cref{thm_uniform_costs} hold at an interior profile. Then $\partial u_1/\partial c_1=V(m)-2V(c_1)=0$, so by concavity $c_1$ maximizes $u_1(\cdot,c_2)$ on $[0,c_2]$. Note also that $\rho_v\le4$ forces $\inf v\ge\tfrac14\sup v\ge\tfrac14$ (as $\sup v\ge\int_0^1v=1$), so $V(c_1)\ge c_1/4>0$ and, by \cref{eq:u_2intV}, $u_1(c_1,c_2)=2\int_{c_1}^mV>0$; in particular the tie $c_1'=c_2$ (payoff $0$) is not profitable. The cross-over inequality covers all $c_1'\in[c_2,1]$. Hence $c_1$ is a global best response, and symmetrically for $c_2$, so the profile is an equilibrium.
\end{proof}

\medskip
\noindent\textbf{\Cref{cor_quarter} (restated).}\enspace
\emph{Let $\kappa\sim U(0,1)$ and suppose $\rho_v\le4$. Then any profile with $V(c_1)=\tfrac14$, $V(c_2)=\tfrac34$ and $V\big(\tfrac{c_1+c_2}2\big)=\tfrac12$ is a Nash equilibrium.}

\begin{proof}
Write $s:=\inf v$; as noted above, $\rho_v\le4$ implies $s\ge\tfrac14$ and $\sup v\le 4s$. The profile is interior ($0<V(c_1)<V(c_2)<1$ and $V$ is strictly increasing) and satisfies \cref{eq:FOC_ucost}: $V(m)=\tfrac12=2V(c_1)$ and $2V(c_2)-1=\tfrac12=V(m)$. By \Cref{prop:fourflat} it therefore suffices to verify the two cross-over inequalities.

\emph{A lower bound on the candidates' payoffs.} Since $V(x)\ge V(c_1)+s(x-c_1)=\tfrac14+s(x-c_1)$, \cref{eq:u_2intV} gives
\[
u_1(c_1,c_2)=2\int_{c_1}^{m}V(x)\,dx
\;\ge\;2\left[\frac14\cdot\frac\Delta2+\frac s2\Big(\frac\Delta2\Big)^2\right]
=\frac\Delta4+\frac{s\Delta^2}{4}.
\]
Moreover $\tfrac12=V(c_2)-V(c_1)=\int_{c_1}^{c_2}v\le 4s\Delta$, i.e.,\ $\Delta\ge\tfrac1{8s}$, whence
\begin{equation}\label{eq:quartile_lb}
u_1(c_1,c_2)\;\ge\;\frac{1}{32s}+\frac{1}{256s}\;>\;\frac{1}{32s}.
\end{equation}
The same bound holds for $u_2(c_1,c_2)$, using $1-V(x)\ge\tfrac14+s(c_2-x)$ in \cref{eq:u_2intV}.

\emph{An upper bound on cross-over payoffs.} Consider a deviation of candidate $1$ to $c_1'=c_2+t$ with $t\in(0,1-c_2]$. Every voter who votes for the deviator lies to the right of the new midpoint $m'=c_2+\tfrac t2$, and votes with probability at most $K(t)=t$. Hence
\[
u_1(c_1',c_2)\;\le\;t\,\big(1-V(c_2+\tfrac t2)\big)
\;\le\;t\Big(\frac14-\frac{st}{2}\Big)
\;\le\;\max_{\tau\ge0}\ \tau\Big(\frac14-\frac{s\tau}{2}\Big)
=\frac{1}{32s},
\]
where the second inequality uses $1-V(c_2)=\tfrac14$ and $V(c_2+\tfrac t2)-V(c_2)\ge\tfrac{st}2$. Combining with \cref{eq:quartile_lb}, no cross-over deviation of candidate $1$ is profitable. The bound for candidate $2$ is the mirror image (using $V(c_1)=\tfrac14$ and $V(c_1)-V(c_1-\tfrac t2)\ge \tfrac{st}2$). By \Cref{prop:fourflat}, the profile is a Nash equilibrium.
\end{proof}

\begin{remark}\label{rem:tight}
The bound $\rho_v\le4$ is tight for this argument in the following sense: at $\rho_v=4$ the deviator's payoff can be exactly flat over an interval of alternative best responses. For instance, for the symmetric step density $v=\tfrac58$ on $[0,\tfrac25)\cup(\tfrac35,1]$ and $v=\tfrac52$ on $[\tfrac25,\tfrac35]$ (density ratio exactly $4$), the quartile profile is $\big(\tfrac25,\tfrac35\big)$, and one computes from \cref{eq:u_2intV} that $u_1(c_1',\tfrac35)=\tfrac3{40}$ for \emph{every} $c_1'\in[\tfrac15,\tfrac25]$: candidate $1$ is indifferent across a whole interval of positions, and the quartile profile is a (non-strict) equilibrium.
\end{remark}

\medskip
\noindent\textbf{\Cref{cor_quarter}, symmetric case (restated).}\enspace
\emph{Let $\kappa\sim U(0,1)$ and let $v$ be symmetric around $\tfrac12$ with $\rho_v\le4$. Then the quartile profile $\big(V^{-1}(\tfrac14),V^{-1}(\tfrac34)\big)$ is a Nash equilibrium; in particular an equilibrium exists. Moreover, any profile satisfying the conditions of \Cref{cor_quarter} must be symmetric, $c_2=1-c_1$.}

\begin{proof}
Symmetry of $v$ gives $V(x)+V(1-x)=1$ for all $x$, hence $V^{-1}(\tfrac34)=1-V^{-1}(\tfrac14)$; the midpoint of the quartile profile is then $\tfrac12$ and $V(\tfrac12)=\tfrac12$, so all three conditions of \Cref{cor_quarter} hold and the profile is an equilibrium. Conversely, if $V(c_1)=\tfrac14$ and $V(c_2)=\tfrac34$, then $V(1-c_1)=1-V(c_1)=\tfrac34=V(c_2)$; since $\rho_v\le4$ forces $\inf v>0$, $V$ is strictly increasing, so $c_2=1-c_1$.
\end{proof}

\medskip
\noindent\textbf{\Cref{thm_uniform_votes} (restated).}\enspace
\emph{Let $x\sim U(0,1)$ and let $k$ satisfy \Cref{ass:reg}. (a) Every Nash equilibrium is symmetric, $c_1=1-c_2$, with gap $\Delta\in(0,1)$ solving $K(\Delta)=(1-\Delta)k(\Delta)$. (b) If moreover $\psi(t)=t+K(t)/k(t)$ is strictly increasing on $(0,1)$, then this equation has a unique solution $\Delta^*\in(0,1)$, and $\big(\tfrac{1-\Delta^*}2,\tfrac{1+\Delta^*}2\big)$ is the unique Nash equilibrium up to relabeling.}

\begin{proof}
\emph{(a)} By \Cref{lem:interior}, an equilibrium is interior with $0<\Delta<1$, so both derivatives in \cref{eq:du_uvoters} vanish:
\begin{equation}\label{eq:uv_focs}
\tfrac12K(\Delta)=c_1\,k(\Delta)
\qquad\text{and}\qquad
\tfrac12K(\Delta)=(1-c_2)\,k(\Delta).
\end{equation}
Since $\Delta\in(0,1)$ and $k>0$ on $(0,1)$, we have $K(\Delta)>0$; the first equation then forces $k(\Delta)>0$, so comparing the two equations gives $c_1=1-c_2$, whence $c_1=\tfrac{1-\Delta}2$, and \cref{eq:uv_focs} becomes $K(\Delta)=(1-\Delta)k(\Delta)$.

\emph{(b)} On $(0,1)$, where $k>0$, the equation $K(\Delta)=(1-\Delta)k(\Delta)$ is equivalent to $\psi(\Delta)=1$.

\emph{Existence and uniqueness of the root.} $\psi$ is continuous on $(0,1)$ and strictly increasing, so $L:=\lim_{t\downarrow0}\psi(t)$ exists. Suppose $L>0$; then $K(t)/k(t)\ge L/2$ for all sufficiently small $t$, i.e.,\ $(\log K)'(t)=k(t)/K(t)\le 2/L$, which upon integrating from $t$ to a fixed small $t_0$ gives $K(t)\ge K(t_0)\,e^{-2t_0/L}>0$ as $t\downarrow0$---contradicting $K(0)=0$ and the continuity of $K$. Hence $L=0<1$. At the other end, $\psi$ exceeds $1$ somewhere on $(0,1)$: otherwise $K(t)/k(t)\le 1-t$, i.e.\ $k(t)/K(t)\ge 1/(1-t)$, for all $t\in(0,1)$, and integrating from $t$ to $s<1$ gives $\log K(s)-\log K(t)=\int_t^{s}k/K\ge\log\tfrac{1-t}{1-s}\to\infty$ as $s\uparrow1$, contradicting $K(s)\le K(1)\le1$. (No bound on $k$ is needed: \Cref{ass:reg} permits $k$ unbounded at the endpoints, as in \Cref{ex:power} with $p<1$.) By the intermediate value theorem and strict monotonicity, $\psi(\Delta)=1$ has a unique root $\Delta^*\in(0,1)$.

\emph{The symmetric profile is an equilibrium.} Let $c_1=\tfrac{1-\Delta^*}2$, $c_2=\tfrac{1+\Delta^*}2$, and consider deviations of candidate $1$ (the case of candidate $2$ is the mirror image). Parameterize a deviation on the \emph{own} side of the opponent by its distance $t\in(0,c_2]$ from $c_2$: by \cref{eq:u_uvoters}, the deviation payoff is $u_L(t)=(c_2-t)K(t)+\tfrac12G(t)$, with derivative
\[
u_L'(t)=(c_2-t)\,k(t)-\tfrac12K(t)
=k(t)\left[c_2-\frac{t+\psi(t)}{2}\right].
\]
Since $t\mapsto t+\psi(t)$ is strictly increasing and equals $2c_2=1+\Delta^*$ exactly at $t=\Delta^*$ (because $\psi(\Delta^*)=1$), the bracket is positive for $t<\Delta^*$ and negative for $t>\Delta^*$: $u_L$ is strictly single-peaked with maximum at $t=\Delta^*$, i.e.,\ at $c_1$ itself. A deviation \emph{across} the opponent to $c_1'=c_2+t$, $t\in(0,1-c_2]$, yields by \cref{eq:u_uvoters} the payoff $u_R(t)=(1-c_2-t)K(t)+\tfrac12G(t)\le u_L(t)$, since $1-c_2<c_2$; hence its supremum is at most $\max_tu_L(t)=u_1(c_1,c_2)$. Finally, the tie $c_1'=c_2$ yields $0<u_1(c_1,c_2)$. So no deviation is profitable, and the profile is an equilibrium.

\emph{Uniqueness.} By part (a), any equilibrium is symmetric with gap solving $\psi(\Delta)=1$, i.e.,\ $\Delta=\Delta^*$.
\end{proof}

\medskip
\noindent\textbf{\Cref{prop:rh} (restated).}\enspace
\emph{Let $x\sim U(0,1)$ and let $k,\widetilde k$ both satisfy the hypotheses of \Cref{thm_uniform_votes}(b), with equilibrium gaps $\Delta^*,\widetilde\Delta^*$. If $\widetilde k/\widetilde K\ge k/K$ pointwise on $(0,1)$, then $\widetilde\Delta^*\ge\Delta^*$. If in addition $\widetilde K(1)\le K(1)$, then $\widetilde K\le K$ pointwise.}

\begin{proof}
\emph{First claim.} Since $k,\widetilde k>0$ on $(0,1)$, the hypothesis gives $\widetilde K(t)/\widetilde k(t)\le K(t)/k(t)$, hence $\widetilde\psi(t)=t+\widetilde K(t)/\widetilde k(t)\le t+K(t)/k(t)=\psi(t)$ for all $t\in(0,1)$, so
\[
\widetilde\psi(\Delta^*)\;\le\;\psi(\Delta^*)=1=\widetilde\psi(\widetilde\Delta^*),
\]
and since $\widetilde\psi$ is strictly increasing, $\Delta^*\le\widetilde\Delta^*$. No condition on $\widetilde K(1)$ is used.

\emph{Second claim.} The reverse hazard rate is the derivative of $\log K$, and $\int_t^1 k/K<\infty$ for $t\in(0,1)$ because $K\ge K(t)>0$ on $[t,1]$ and $\int_t^1 k=K(1)-K(t)$. Hence, for $t\in(0,1)$, using $\widetilde K(1)\le K(1)$,
\[
\log\widetilde K(t)=\log\widetilde K(1)-\int_t^1\frac{\widetilde k(s)}{\widetilde K(s)}\,ds
\;\le\;\log K(1)-\int_t^1\frac{k(s)}{K(s)}\,ds=\log K(t),
\]
so $\widetilde K\le K$ pointwise, i.e.,\ costs under $\widetilde K$ are stochastically higher.
\end{proof}

\medskip
\noindent\textbf{\Cref{prop:margin_uniform} (restated).}\enspace
\emph{Under uniform voters, $k>0$ on $(0,1)$, and alienation with parameter $\alpha\ge0$, the pure Nash equilibria of the margin game are exactly $E=\big[\tfrac{1}{2+\alpha},\tfrac{1+\alpha}{2+\alpha}\big]^{2}$, each with margin $0$, independently of $K$.}

\begin{proof}
We first treat $\alpha>0$; the case $\alpha=0$ is handled at the end. Take any profile with $c_1\le c_2$ and let $G(t)=\int_0^tK$. Candidate $1$'s benefit tent has arms of slope $\alpha$ (outer) and $2+\alpha$ (inner); changing variables along each arm, and subtracting the part of the outer arm that is truncated by the boundary $0$, gives
\[
u_1=\Big(\frac1\alpha+\frac1{2+\alpha}\Big)G(\Delta)-\frac1\alpha G(b_1),
\qquad
b_1:=\big(B_1(0)\big)^+=(\Delta-\alpha c_1)^+ ,
\]
and symmetrically for $u_2$ with $b_2=\big(B_2(1)\big)^+=\big(\Delta-\alpha(1-c_2)\big)^+$
(throughout this proof, $b_1,b_2$ denote the boundary benefits of
\Cref{sec:alien:char} truncated at zero). Hence the margin is
\begin{equation}\label{eq:margin_G}
w_1\;=\;u_1-u_2\;=\;\frac1\alpha\Big(G(b_2)-G(b_1)\Big).
\end{equation}
Since $k>0$ on $(0,1)$, $G$ is strictly increasing on $(0,1]$, so $w_1$ and $b_2-b_1$ have the same sign whenever $\max(b_1,b_2)>0$, and $w_1=0$ iff $b_1=b_2$. In words: a candidate leads exactly when her \emph{opponent's} tent is more severely truncated by the boundary.

\emph{Profiles in $E$ have margin $0$.} If $c_1\ge\tfrac{1}{2+\alpha}$ and $c_2\le\tfrac{1+\alpha}{2+\alpha}$, then
$\Delta-\alpha c_1=c_2-(1+\alpha)c_1\le\tfrac{1+\alpha}{2+\alpha}-\tfrac{1+\alpha}{2+\alpha}=0$, and likewise $\Delta-\alpha(1-c_2)=(1+\alpha)c_2-c_1-\alpha\le0$: both tents are interior, $b_1=b_2=0$, and $w_1=0$.

\emph{When can candidate $1$ secure a positive margin?} By \cref{eq:margin_G}, $w_1>0$ at a profile $(c_1',c_2)$ with $c_1'<c_2$ requires $b_2'>b_1'$, i.e.,\ either (i) $b_1'=0<b_2'$, which reads $c_1'\ge\tfrac{c_2}{1+\alpha}$ together with $c_1'<c_2-\alpha(1-c_2)$, and admits a solution iff $\tfrac{c_2}{1+\alpha}<c_2-\alpha(1-c_2)$, i.e.,\ iff $c_2>\tfrac{1+\alpha}{2+\alpha}$; or (ii) $b_2'>b_1'>0$, which reads $c_1'>1-c_2$ together with $c_1'<\min\big(\tfrac{c_2}{1+\alpha},\,c_2-\alpha(1-c_2)\big)$, and both $1-c_2<\tfrac{c_2}{1+\alpha}$ and $1-c_2<c_2-\alpha(1-c_2)$ are again equivalent to $c_2>\tfrac{1+\alpha}{2+\alpha}$. Thus candidate $1$ can achieve a strictly positive margin at some position left of $c_2$ iff $c_2>\tfrac{1+\alpha}{2+\alpha}$; by the reflection $x\mapsto1-x$, she can achieve one at some position right of $c_2$ iff $c_2<\tfrac{1}{2+\alpha}$. The mirror statements hold for candidate $2$ with respect to $c_1$.

\emph{Profiles in $E$ are equilibria.} Let $(c_1,c_2)\in E$. Then $c_2\in\big[\tfrac{1}{2+\alpha},\tfrac{1+\alpha}{2+\alpha}\big]$, so by the previous paragraph no deviation of candidate $1$ achieves a positive margin; since $w_1=0$ at the profile, no deviation is profitable. Symmetrically for candidate $2$. (Note that $E$ includes tied profiles, which are thus zero-turnout, zero-margin equilibria of the margin game.)

\emph{Profiles outside $E$ are not equilibria.} Consider a profile violating the bounds; after relabeling, either $c_2>\tfrac{1+\alpha}{2+\alpha}$ or $c_1<\tfrac{1}{2+\alpha}$ (or both). Say $c_2>\tfrac{1+\alpha}{2+\alpha}$ (the other case is the mirror image). If $w_1\le0$ at the profile, candidate $1$ deviates into the window of case (i) above and achieves $w_1'>0\ge w_1$: profitable. If $w_1>0$, then $w_2<0$, and candidate $2$ deviates to the mirror position $c_2'=1-c_1$, where the two truncation levels coincide and hence $w_2'=0>w_2$: profitable. In all cases some candidate has a profitable deviation, so the profile is not an equilibrium.

\emph{The case $\alpha=0$.} Here $B_1\equiv\Delta$ to the left of $c_1$, so by \cref{eq:u_uvoters}, $w_1=\big(c_1-(1-c_2)\big)K(\Delta)$, and $E=\{(\tfrac12,\tfrac12)\}$. At $(\tfrac12,\tfrac12)$, a deviation of candidate $1$ to distance $\delta>0$ from the opponent (on either side) yields margin $-\delta K(\delta)<0$, so the median profile is an equilibrium and it is strict. Conversely, at any other profile: if $c_1<c_2$ and $c_1\neq1-c_2$, then $w_1\neq0$ and the trailing candidate deviates to the mirror position to obtain margin $0$; if $c_1=1-c_2<\tfrac12$, candidate $1$ deviates to $c_1'\in(c_1,c_2)$ and obtains $w_1'=(c_1'-c_1)K(c_2-c_1')>0$; and if $c_1=c_2=c\neq\tfrac12$, say $c>\tfrac12$, then candidate $1$ deviates to $c_1'\in(1-c,c)$ and obtains $w_1'=\big(c_1'-(1-c)\big)K(c-c_1')>0$. Hence $\big(\tfrac12,\tfrac12\big)$ is the unique equilibrium.

Finally, all the thresholds above are determined by the benefits $B_i$ alone; $K$ enters only through the strictly increasing $G$, so the set $E$ is independent of $K$.
\end{proof}

\medskip
\noindent\textbf{\Cref{thm:margin_median} (restated).}\enspace
\emph{Let $\alpha=0$, let $v,k$ satisfy \Cref{ass:reg}, and let $c_{\mathrm{med}}=V^{-1}(\tfrac12)$. Then $(c_{\mathrm{med}},c_{\mathrm{med}})$ is the unique Nash equilibrium of the margin game, and every unilateral deviation from it strictly lowers the deviator's margin.}

\begin{proof}
Write $q:=c_{\mathrm{med}}$, which is well-defined, unique, and lies in $(0,1)$ since $V$ is strictly increasing under \Cref{ass:reg}.

\emph{Step 1: any deviation against the median loses strictly.} Consider first a deviation of candidate $1$ to $c_1'<q$, and write $\Delta'=q-c_1'\in(0,1)$ and $m'=\tfrac{c_1'+q}2$. The deviator collects the constant-gain region $[0,c_1']$ (probability $K(\Delta')$ each) plus the inner arm on $[c_1',m']$; the opponent collects the constant-gain region $[q,1]$, of voter mass $1-V(q)=\tfrac12$, plus the inner arm on $[m',q]$. Substituting $s=m'-x$ and $s=x-m'$ in the two inner-arm integrals,
\[
w_1(c_1',q)
=K(\Delta')\Big(V(c_1')-\frac12\Big)
+\int_0^{\Delta'/2}K(2s)\,\big[v(m'-s)-v(m'+s)\big]\,ds .
\]
Bounding the positive part of the integral using $K(2s)\le K(\Delta')$ for $s\le\Delta'/2$,
\[
\int_0^{\Delta'/2}K(2s)\,v(m'-s)\,ds\;\le\;K(\Delta')\,\big(V(m')-V(c_1')\big),
\]
we obtain
\begin{align*}
w_1(c_1',q)\;&\le\;K(\Delta')\Big(V(m')-\frac12\Big)-\int_0^{\Delta'/2}K(2s)\,v(m'+s)\,ds\\
&\le\;K(\Delta')\Big(V(m')-\frac12\Big)\;<\;0,
\end{align*}
where the final strict inequality holds because $K(\Delta')>0$ (as $k>0$ on $(0,1)$ and $0<\Delta'\le q<1$) and $V(m')<V(q)=\tfrac12$ (as $m'<q$ and $v>0$). For a deviation to $c_1'>q$, the mirror computation gives $w_1(c_1',q)\le K(\Delta')\big(\tfrac12-V(m')\big)<0$, now with $m'=\tfrac{c_1'+q}2>q$. Finally, $w_1(q,q)=0$. Hence every unilateral deviation from $(q,q)$ strictly lowers the deviator's margin, and $(q,q)$ is an equilibrium.

\emph{Step 2: uniqueness.} Suppose $(c_1,c_2)$ is an equilibrium with, say, $c_1\neq q$. Deviating to $q$ is available to candidate $2$, and by Step~1 (with the roles of the players reversed), $w_2(c_1,q)=-w_1(c_1,q)>0$; equilibrium therefore requires $w_2(c_1,c_2)\ge w_2(c_1,q)>0$, i.e.,\ $w_1(c_1,c_2)<0$. But candidate $1$ could deviate to $q$: if $c_2\neq q$ then $w_1(q,c_2)>0$ by Step~1, and if $c_2=q$ then $w_1(q,q)=0$; either way candidate $1$'s margin strictly improves, contradicting equilibrium. Hence $c_1=q$, and symmetrically $c_2=q$.
\end{proof}

\clearpage
\section{Existence and Uniqueness of Equilibria}\label{app:exist-uniq}

This appendix collects the existence arguments and the (non-)uniqueness examples referenced in the main text. Recall the distinction stressed in \Cref{sec:base:char}: the balance conditions of \Cref{lem:balance,thm_uniform_costs,thm_uniform_votes} are \emph{necessary} for equilibrium. Below we show that \emph{solutions of the conditions} always exist in the two benchmark settings; whether these solutions are genuine equilibria is a separate question---answered affirmatively under the sufficient conditions of the main text, and negatively in general (see the example below and \Cref{app:counterexample}).

\paragraph{Solutions of the conditions always exist under uniform costs.}
Under \Cref{ass:reg}, $V$ is strictly increasing, so its inverse $V^{-1}$ is well-defined on $[0,1]$. For any $c_1\in\big[0, V^{-1}(\tfrac12)\big]$, define
\[
c_2(c_1)\;:=\;V^{-1}\!\Big(V(c_1)+\tfrac12\Big),
\]
which is well-defined since $V(c_1)\in[0,\tfrac12]$ on this domain. Next, define
\[
f(c_1)\;:=\;V\!\left(\frac{c_1+c_2(c_1)}{2}\right)-2V(c_1).
\]
By construction, $f$ is continuous on $\big[0, V^{-1}(\tfrac12)\big]$. Moreover,
\[
f(0)
=V\!\left(\frac{V^{-1}(\tfrac12)}{2}\right)\ge 0,
\]
and, letting $q:=V^{-1}(\tfrac12)$, so that $c_2(q)=V^{-1}(1)=1$,
\[
f(q)
=V\!\left(\frac{q+1}{2}\right)-1\le 0.
\]
Hence there exists $c_1^\ast\in[0,q]$ such that $f(c_1^\ast)=0$. Setting $c_2^\ast=c_2(c_1^\ast)$ yields
\[
V(c_2^\ast)=V(c_1^\ast)+\tfrac12
\qquad\text{and}\qquad
V\!\left(\frac{c_1^\ast+c_2^\ast}{2}\right)=2V(c_1^\ast),
\]
so $(c_1^\ast,c_2^\ast)$ satisfies the conditions of \Cref{thm_uniform_costs}. Whether it is an equilibrium requires a separate verification---\Cref{app:counterexample} shows it may fail; under the hypotheses of \Cref{cor_quarter} (symmetric case), equilibrium existence is guaranteed.

\paragraph{Solutions of the conditions always exist under uniform voters.}
Define, on $(0,1)$,
\[
f(\Delta)\;:=\;K(\Delta)-(1-\Delta)\,k(\Delta),
\]
which is continuous there by \Cref{ass:reg}. Near $0$, $f<0$ at points arbitrarily close to $0$: if $f\ge0$ on some $(0,t_0)$ with $t_0\le\tfrac12$, then $k/K\le1/(1-t)\le2$ there, so $\log K(t_0)-\log K(t)=\int_t^{t_0}k/K\le2t_0$ for all $t\in(0,t_0)$ (note $K(t_0)>0$ since $k>0$ on $(0,1)$); hence $K(t)\ge K(t_0)e^{-2t_0}>0$ as $t\downarrow0$, contradicting $K(0)=0$ and the continuity of $K$. Near $1$, $f>0$ at points arbitrarily close to $1$: if $f\le0$ on some $(t_1,1)$ with $t_1\in(0,1)$, then $k/K\ge1/(1-t)$ there, so $\log K(t)-\log K(t_1)=\int_{t_1}^{t}k/K\ge\log\frac{1-t_1}{1-t}\to\infty$ as $t\uparrow1$, contradicting $K\le1$. By the intermediate value theorem, $f$ has a root $\Delta^\ast\in(0,1)$. Under the monotonicity condition of \Cref{thm_uniform_votes}(b), this root is unique and yields the (unique) equilibrium. Without that condition, some roots may fail to be equilibria, as the following example shows. (When $k(0^+)=0$, extending $k$ by continuity to $0$ makes $\Delta=0$ a further solution of $K(\Delta)=(1-\Delta)k(\Delta)$; it is not an equilibrium gap, since \Cref{thm_uniform_votes}(a) requires $\Delta\in(0,1)$.)

\paragraph{Non-uniqueness (and non-sufficiency) under uniform voters.}
Consider the cost density
\(
k(\Delta)=8\Delta^2-\frac{16}{3}\Delta+1
\)
(a valid density: $K(1)=1$ and $k\ge\tfrac19>0$, with minimum at $\Delta=\tfrac13$). The equation $K(\Delta)=(1-\Delta)k(\Delta)$ admits three solutions,
\(
\Delta\in\left\{\frac14,\frac12,\frac34\right\}
\)
(the function $\psi$ of \cref{eq:psi} is not monotone for this $k$, so \Cref{thm_uniform_votes}(b) does not apply). The three corresponding symmetric profiles behave very differently:
\begin{itemize}
\item $\Delta=\tfrac14$, profile $\big(\tfrac38,\tfrac58\big)$: a Nash equilibrium (verified by computing global best responses numerically), with $u_1=u_2=\tfrac{131}{2304}\approx0.057$;
\item $\Delta=\tfrac34$, profile $\big(\tfrac18,\tfrac78\big)$: a Nash equilibrium, with $u_1=u_2=\tfrac{27}{256}\approx0.105$;
\item $\Delta=\tfrac12$, profile $\big(\tfrac14,\tfrac34\big)$: \textbf{not} a Nash equilibrium. Here $u_1=\tfrac14K\big(\tfrac12\big)+\tfrac12\int_0^{1/2}K=\tfrac{5}{72}\approx0.069$, but candidate $1$'s best response to $c_2=\tfrac34$ is $c_1'\approx0.473$, yielding a payoff of $\approx0.073$. The profile solves both first-order conditions, yet it is not even a local equilibrium: \Cref{n:rem:chi-criterion} shows this middle root to be a local \emph{minimum} of the deviator's payoff, so deviating either way pays.
\end{itemize}
Thus equation \eqref{eq:uv_foc} may have multiple roots; the set of equilibria may be a strict subset of these roots; and, in particular, the necessary condition of \Cref{thm_uniform_votes}(a) cannot be reversed without an assumption such as the monotonicity of $\psi$. Whether at least one root is always an equilibrium (i.e.,\ whether a pure equilibrium always exists under uniform voters) is open.

For comparison, the simple sufficient conditions for the monotonicity of
$\psi$ (hence for uniqueness \emph{and} sufficiency) are the two that give
super-regularity, $k$ non-increasing or log-concave (\eqref{eq:psi} and
\Cref{thm_uniform_votes}(b)); \Cref{fig:rhr} visualizes the
resulting unique crossing for three such families.

\paragraph{Non-uniqueness under uniform costs.}
Uniqueness can also fail when the voter density (rather than the cost density) is non-uniform; \Cref{app:multi-eq} works out a symmetric voter density that admits three distinct equilibria, two of which are asymmetric.

\paragraph{Mixed equilibria.}
Finally, we note that although the existence of \emph{pure} equilibria is open in general, the payoffs $u_i$ are continuous on the compact strategy space $[0,1]^2$ (continuity at tied profiles holds because vote shares vanish as the candidates approach each other), so a possibly-mixed Nash equilibrium always exists by Glicksberg's fixed-point theorem \cite{glicksberg1952further}.

\section{Power Costs against a Non-Uniform Electorate}\label{app:power-nonuniform}

This appendix records a case that no sufficiency result in the main text
reaches, and where the conditions nonetheless appear to deliver equilibria.

\begin{remark}[Power costs against a non-uniform electorate]%
\label{rem:power-nonuniform}\stat{numerical observation}
Every sufficiency result above misses this family. For $K(t)=t^{p}$ the cost
density is $k(t)=p\,t^{p-1}$, which is increasing when $p>1$; and when $p<1$ it
is non-increasing, but $\rho_k:=k(0^+)/k(1^-)$ is infinite. So the criterion
$\rho_v\rho_k\le4$ of \Cref{n:prop:A4} reaches the family only at $p=1$, where
it is \Cref{cor_quarter}, and \Cref{thm_uniform_votes} needs a uniform
electorate.

For a symmetric electorate the gap is narrower than that makes it sound. By
\Cref{n:lem:reflection}, own-side optimality alone makes a mirror profile an
equilibrium whatever the cost distribution, so the cross-over deviations are
already handled and what is unproved is exactly this: that the stationary point
maximizes $u_1(\cdot,c_2)$ on $[0,c_2)$. On this family $\psi$ is linear,
$\psi(t)=\big(1+\tfrac1p\big)t$, which is as favourable a setting as the paper
offers.

Numerically the profile is an equilibrium throughout. Scanning $(0,\tfrac12)$
for symmetric stationary profiles, each of the five symmetric electorates below
has exactly one at each $p\in\{\tfrac12,1,2,3\}$, and in all twenty cases it is
a Nash equilibrium; the largest profitable deviation found was
$1.1\times10^{-16}$. The quantile $V(c_1^*)$ it sits at:
\begin{center}\small\setlength{\tabcolsep}{8pt}
\begin{tabular}{@{}lcccc@{}}
\toprule
electorate & $p=\tfrac12$ & $p=1$ & $p=2$ & $p=3$\\
\midrule
uniform                       & $0.33333$ & $0.25$ & $0.16667$ & $0.12500$\\
$\mathrm{Beta}(1.5,1.5)$      & $0.33386$ & $0.25$ & $0.16334$ & $0.11859$\\
$\mathrm{Beta}(2,2)$          & $0.33409$ & $0.25$ & $0.16176$ & $0.11541$\\
$\mathrm{Beta}(3,3)$          & $0.33430$ & $0.25$ & $0.16023$ & $0.11227$\\
hollow cosine (\Cref{fig:quartiles}) & $0.32015$ & $0.25$ & $0.18595$ & $0.15077$\\
\bottomrule
\end{tabular}
\end{center}
At $p=1$ it is $\tfrac14$ for every electorate, which is \Cref{cor_quarter}.
Away from $p=1$ it moves, but by less than $0.03$ across this range, and the
departure from the uniform-voter value $\tfrac1{2(1+p)}$ of
\Cref{rem:quantile-sweep} is signed by the shape of $v$: a peaked electorate
raises the quantile when $p<1$ and lowers it when $p>1$, a hollow one does the
reverse, and both effects vanish at $p=1$ and grow with $|p-1|$ and with the
distance of $v$ from uniform. Distribution-freeness is therefore not merely
lost away from $p=1$; it degrades continuously, and $p=1$ is the point where it
is exact. \rmr{good to know. I would keep it for now only in the appendix of the full version. for future work perhaps we can bound the quantile in terms of $p$ and some parameter of the distribution? }\guy{Claude: moved here from \Cref{sec:base:polar}, with a one-sentence pointer left in its place. On the future-work question: the table already suggests the shape of such a bound---the departure from $\tfrac1{2(1+p)}$ vanishes at $p=1$, grows with $|p-1|$, and grows with the distance of $v$ from uniform---so a bound in $|p-1|$ and $\rho_v$ looks plausible, but I have not tried to prove one.}
\end{remark}

\section{Example with Multiple Equilibria (Uniform Costs)}\label{app:multi-eq}

Consider the continuous, symmetric density $v:[0,1]\to\mathbb{R}_{\ge 0}$ given by
\[
v(x)=
\begin{cases}
\dfrac{5}{7}, & 0\le x\le \dfrac{19}{50},\\[8pt]
\dfrac{250x-90}{7}, & \dfrac{19}{50}\le x\le \dfrac{21}{50},\\[10pt]
\dfrac{15}{7}, & \dfrac{21}{50}\le x\le \dfrac{29}{50},\\[8pt]
\dfrac{160-250x}{7}, & \dfrac{29}{50}\le x\le \dfrac{31}{50},\\[10pt]
\dfrac{5}{7}, & \dfrac{31}{50}\le x\le 1.
\end{cases}
\]
Under uniform costs, the conditions of \Cref{thm_uniform_costs} admit three distinct solutions,
\[
(c_1,c_2)\in\left\{
\left(\frac{3}{16},\frac{9}{16}\right),\;
\left(\frac{7}{20},\frac{13}{20}\right),\;
\left(\frac{7}{16},\frac{13}{16}\right)
\right\},
\]
and---in contrast to the cost-side example of \Cref{app:exist-uniq}---we verified numerically that here \emph{all three} profiles are genuine Nash equilibria, with payoffs $(u_1,u_2)\approx(0.075,0.212)$, $(0.106,0.106)$ and $(0.212,0.075)$, respectively. Note that although $v$ is symmetric around $\tfrac12$, two of the three equilibria are asymmetric mirror images of one another. (This density has $\rho_v=3\le4$, but the outer equilibria are not quartile profiles: \Cref{cor_quarter} identifies \emph{some} equilibria, not necessarily all of them.)

\section{Stationary Profiles Need Not Be Equilibria (Uniform Costs)}\label{app:counterexample}

We now show that the conditions of \Cref{thm_uniform_costs} are not sufficient: a profile can satisfy them and still fail to be a Nash equilibrium. Let $h:[0,1]\to\mathbb R_{\ge0}$ be the continuous piecewise-linear function taking the constant values
\[
6 \ \text{on}\ [0,0.08],\quad
\tfrac1{20} \ \text{on}\ [0.10,0.20],\quad
5 \ \text{on}\ [0.22,0.34],\quad
\tfrac1{20} \ \text{on}\ [0.36,0.44],
\]
\[
3 \ \text{on}\ [0.46,0.62],\quad
\tfrac1{20} \ \text{on}\ [0.64,0.75],\quad
2 \ \text{on}\ [0.77,1],
\]
and interpolating linearly on the six gaps of width $0.02$. Then $\int_0^1h=\tfrac{911}{400}$, so $v:=\tfrac{400}{911}\,h$ is a continuous density, strictly positive on $[0,1]$ (\Cref{ass:reg} holds), describing an electorate with four ideological ``clusters'' of decreasing mass.

Solving the conditions of \Cref{thm_uniform_costs} numerically yields exactly three solutions:
\[
(c_1,c_2)\;\approx\;(0.226,\,0.621),\qquad(0.232,\,0.709),\qquad(0.248,\,0.798).
\]
Computing global best responses, the first and the third profiles are Nash equilibria. The middle profile is \emph{not}: candidate $2$'s payoff there is $u_2\approx0.1353$, but relocating to $c_2'\approx0.789$ (adjacent to the fourth cluster) yields $\approx0.1365$---a profitable deviation to a different local maximum of her payoff. Her platform is therefore a local maximum that is not a global one, so the profile is a local equilibrium without being a Nash equilibrium: the opposite failure to the cost-side example of \Cref{app:exist-uniq}, where the middle root is not even a local equilibrium. Note that $\rho_v=120$ here, far above the bound of \Cref{prop:fourflat}; with such uneven densities the candidates' payoffs are multi-peaked, and stationarity no longer implies global optimality.

\section{Turnout under Linear Costs}\label{app:turnout}
We compute each candidate's vote share at a symmetric profile under uniform voters and the linear cost family $k_\vartheta(x)=\vartheta x+\big(1-\tfrac\vartheta2\big)$, $\vartheta\in[-2,2]$, for which
\[
K_\vartheta(\Delta)=\frac \vartheta2\Delta^2+\Big(1-\frac \vartheta2\Big)\Delta,
\qquad
\int_0^{\Delta}K_\vartheta(y)\,dy=\frac \vartheta6\Delta^3+\frac{2-\vartheta}{4}\Delta^2 .
\]
At the symmetric profile with gap $\Delta$ (so $c_1=\tfrac{1-\Delta}2$), \cref{eq:u_uvoters} gives
\begin{align*}
u_1(\Delta;\vartheta)
&=c_1K_\vartheta(\Delta)+\frac12\int_0^{\Delta}K_\vartheta(y)\,dy\\
&=\frac{1-\Delta}{2}\left[\frac \vartheta2\Delta^2+\Big(1-\frac \vartheta2\Big)\Delta\right]
+\frac \vartheta{12}\Delta^3+\frac{2-\vartheta}{8}\Delta^2\\
&=-\frac \vartheta6\Delta^3+\Big(\frac{3\vartheta}8-\frac14\Big)\Delta^2+\Big(\frac12-\frac \vartheta4\Big)\Delta .
\end{align*}
By symmetry $u_2=u_1$, so total turnout at the profile is
\begin{equation}\label{eq:turnout_linear}
T(\Delta;\vartheta)\;=\;2\,u_1(\Delta;\vartheta)
\;=\;-\frac \vartheta3\Delta^3+\Big(\frac{3\vartheta}4-\frac12\Big)\Delta^2+\Big(1-\frac \vartheta2\Big)\Delta .
\end{equation}
(As a check: at $\vartheta=0$, $\Delta=\tfrac12$ we get $u_1=\tfrac3{16}$ and $T=\tfrac38$, matching \Cref{cor_uu}.)

\section{Turnout Comparative Statics: A Worked Example}\label{app:whatcamefirst}

This appendix expands the causality comparison of \Cref{sec:base:polar}, using the turnout formula \eqref{eq:turnout_linear} of \Cref{app:turnout}. Consider a uniform voter distribution and a uniform cost distribution, for which the unique equilibrium is at $\left(\tfrac14,\tfrac34\right)$ with overall turnout $\tfrac38$. First suppose the candidates remain \emph{fixed} at $\left(\tfrac14,\tfrac34\right)$ (i.e.,\ $\Delta=\tfrac12$) while voting costs increase within the linear cost family, i.e.,\ $\vartheta$ increases over $[-2,2]$. By \cref{eq:turnout_linear}, turnout is then $T\big(\tfrac12;\vartheta\big)=\tfrac38-\tfrac{5\vartheta}{48}$, declining from $\tfrac{14}{24}\cong0.58$ at $\vartheta=-2$, to $\tfrac{9}{24}=0.375$ under uniform costs ($\vartheta=0$), and further to $\tfrac{4}{24}\cong0.17$ at $\vartheta=2$---a baseline effect driven solely by rising costs and decreasing engagement.

Next, instead of holding the candidates fixed, consider turnout \emph{in equilibrium} as $\vartheta$ varies over $[-2,2]$: within the log-concave linear family the equilibrium gap (\Cref{thm_uniform_votes}(b); $\Delta^\ast(\vartheta)=\tfrac{2(\vartheta-1)+\sqrt{(\vartheta-1)^2+3}}{3\vartheta}$ for $\vartheta\neq0$)  $\Delta^\ast(\vartheta)$ widens from $\approx0.42$ at $\vartheta=-2$ to $\tfrac23$ at $\vartheta=2$, and equilibrium turnout equals $T\big(\Delta^\ast(\vartheta);\vartheta\big)$. Although turnout still decreases, the decline is more \emph{modest} (from $\approx0.54$ at $\vartheta=-2$, through $0.375$ at $\vartheta=0$, to $\tfrac{20}{81}\cong0.25$ at $\vartheta=2$), because the candidates become more polarized and a wider gap raises each voter's benefit from participating. Thus, at least in this scenario, abstention increases polarization, whereas polarization has \emph{the opposite} effect on abstention. Whether this pattern persists under other distributions is left to future work.

%%%%%%%%%%%%%%%%%%%%%%%%%%%%%%%%%%%%%%%%%%%%%%%%%%%%%%%%%%%%%%%%%%%%%%%%%%%%%%
%% Supplementary results, batch 1 (labels prefixed n:).
%% Supplementary results, batch 1. Labels prefixed n:.

\section{Preliminaries for all $(\alpha,\beta)$}\label{n:sec:prelim}

\paragraph{Standing notation and conventions.}
Voters $x\in[0,1]$ have CDF $V$ with density $v$; voting costs
$\kappa\in[0,1]$ are i.i.d.\ with CDF $K$ and density $k$, independent of
position.  We assume \Cref{ass:reg} throughout.  We allow $K(1)\le1$ (a mass $1-K(1)$ of voters never votes); everything below
uses only $K(1)>0$.  Candidates $i\in\{1,2\}$ choose $c_i\in[0,1]$; write
$d_i(x)=|x-c_i|$ and, for $\alpha\ge0$,
$B_i(x)=d_{-i}(x)-(1+\alpha)d_i(x)$, so that $i$'s vote share is
$u_i=\int_0^1K\!\big(B_i(x)^+\big)v(x)\,dx$, and candidate $i$ maximizes
$U_i=u_i-\beta u_{-i}$ with $\beta\in[0,1]$.  When $c_1<c_2$ we write
$\Delta=c_2-c_1$, $m=\tfrac{c_1+c_2}2$, and
\[
\begin{gathered}
\lambda:=2+\alpha,\qquad
\zin{1}=c_1+\tfrac\Delta\lambda,\qquad \zout{1}=c_1-\tfrac\Delta\alpha,\\
\zin{2}=c_2-\tfrac\Delta\lambda,\qquad \zout{2}=c_2+\tfrac\Delta\alpha,
\end{gathered}
\]
for the inner and outer zeros of the two benefit tents ($\zout{1}=-\infty$,
$\zout{2}=+\infty$ when $\alpha=0$), and
\[
b_1=B_1(0)=\Delta-\alpha c_1,\qquad
b_2=B_2(1)=\Delta-\alpha(1-c_2)
\]
for the boundary benefits; candidate $1$'s tent is truncated by the boundary
(equivalently, she has no \emph{dead zone} $[0,\zout{1}]$ of unreachable voters on
her flank) iff $b_1\ge0$, i.e.\ $\zout{1}\le0$.  Two identities used repeatedly:
with $b_s(\Delta):=\tfrac{\lambda\Delta-\alpha}2=\Delta-\tfrac\alpha2(1-\Delta)$,
\begin{equation}\label{n:eq:bident}
b_1+b_2=2\,b_s(\Delta),
\qquad
b_2-b_1=\alpha\,(c_1+c_2-1).
\end{equation}
Finally $G(t)=\int_0^tK$, $r=k/K$ (the reverse hazard rate),
$\psi(t)=t+K(t)/k(t)$, $\rho_v=\sup v/\inf v\in[1,\infty]$,
$\rho_k=\sup k/\inf k$, and $\Delta_0:=\tfrac\alpha\lambda$, the most
polarized symmetric gap inside the margin-game band
$E(\alpha)=\big[\tfrac1\lambda,\tfrac{1+\alpha}\lambda\big]^2$.

Two elementary facts are used without further comment.  First, since
$K(0)=0$ and $K>0$ on $(0,1]$,
\begin{equation}\label{n:eq:rdiv}
\int_{0^+}^{t_0}r(s)\,ds
=\lim_{t\downarrow0}\big(\log K(t_0)-\log K(t)\big)=+\infty
\qquad(t_0\in(0,1)),
\end{equation}
and symmetrically $\int_{t}^{1^-}r=\log K(1)-\log K(t)<\infty$.  Second, for
every $x$ at most one of $B_1(x),B_2(x)$ is strictly positive, because
$B_1+B_2=-\alpha(d_1+d_2)\le0$; hence total turnout satisfies
$T=u_1+u_2\le K(1)\le1$ at \emph{every} profile.

%%%%%%%%%%%%%%%%%%%%%%%%%%%%%%%%%%%%%%%%%%%%%%%%%%%%%%%%%%%%%%%%%%%%%%%%%%%%%%
\medskip\noindent
The two lemmas of this section extend \Cref{lem:interior} and the balance
conditions of \Cref{lem:balance} from the baseline $(\alpha,\beta)=(0,0)$ to
the whole parameter square, and are the entry point for every result below.

\begin{lemma}[Interiority and no ties, extended]\label{n:lem:interior}
Let \Cref{ass:reg} hold and $\beta\in[0,1)$.  Then every pure Nash
equilibrium is untied and interior ($c_1\ne c_2$ and, after labeling
$c_1<c_2$, $0<c_1<c_2<1$) in each of the following cases:
\textup{(a)} $\beta=0$, any $\alpha\ge0$ and any $(v,k)$;
\textup{(b)} $\alpha=0$, any $(v,k)$;
\textup{(c)} uniform voters $x\sim U(0,1)$, any $\alpha\ge0$ and any $k$.
\end{lemma}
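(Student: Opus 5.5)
The plan is to treat the two claims separately: first that no equilibrium sits on the boundary, then that no equilibrium is tied. Only the second claim needs the case split (a)--(c). As in \Cref{lem:interior}, boundary positions are excluded by a one-sided derivative, and ties by an explicit profitable deviation.

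\emph{Boundary positions.} Suppose $c_1=0<c_2$; the case $c_2=1$ is the mirror image. Write $\partial_{c_1}U_1=\partial_{c_1}u_1-\beta\,\partial_{c_1}u_2$ and sign each term.
\begin{itemize}
\item For the opponent's share, every voter with $B_2(x)>0$ lies to the right of $c_1$. For such a voter $d_1(x)=x-c_1$ falls as $c_1$ rises, so $B_2$ falls pointwise. Since $K$ is non-decreasing, $\partial_{c_1}u_2\le0$, and so $-\beta\,\partial_{c_1}u_2\ge0$ for every $\beta\ge0$. This is the monotonicity already used in the proof of \Cref{prop:mass}.
\item For the own share, the computation in the proof of \Cref{prop:master} applies. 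Its Leibniz terms vanish because $K(0)=0$. At $c_1=0$ the outer arm $[\zout{1}\vee0,c_1]$ is empty, so the one-sided derivative is $(1+\alpha)\int_{0}^{\zin{1}}k(B_1)v\,dx$.
\item This integral is strictly positive. On $(0,\zin{1})$ the benefit $B_1$ lies in $(0,\Delta)\subseteq(0,1)$, where $k>0$, and $v>0$ by \Cref{ass:reg}.
\end{itemize}
Hence $c_1=0$ is not a best response. This argument needs neither the restriction on $(\alpha,\beta)$ nor the restriction on $v$, so boundary positions are excluded in all three cases at once.

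\emph{Ties.} At $c_1=c_2=c$ every benefit equals $-\alpha d_i\le0$, so both shares vanish and $U_1=0$. It therefore suffices to find a deviation with $U_1>0$.

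In case (a), with $\beta=0$, we only need $u_1>0$. Take $c_1'$ close to $c$ and inside $(0,1)$. The voter at $c_1'$ has $B_1=|c_1'-c|>0$, so by continuity $B_1\in(0,1)$ on a set of positive $V$-measure, and $u_1>0$.

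For $\beta>0$ we must beat the opponent's share as well. Move a distance $\delta$ from $c$ toward the side with more voter mass behind the deviator; this side always exists, and at an endpoint $c$ it is forced.
\begin{itemize}
\item In case (b), $\alpha=0$. The flank formula \cref{eq:u1_general} gives $u_1\ge K(\delta)\,V(c-\delta)$. The same formula gives $u_2\le K(\delta)\,\big(1-V(c)\big)+O\big(\delta K(\delta)\big)$, since the inner arms have width $\delta/2$ and integrand at most $K(\delta)\sup v$ near $c$.
\item If the chosen side carries mass $\mu\ge\tfrac12$, then $U_1\ge K(\delta)\big[\mu-\beta(1-\mu)\big]-O\big(\delta K(\delta)\big)\ge K(\delta)\big[\tfrac{1-\beta}2-O(\delta)\big]$. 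This is positive for small $\delta$ because $\beta<1$.
\item In case (c), voters are uniform. The closed forms $u_i=\big(\tfrac1\alpha+\tfrac1\lambda\big)G(\Delta)-\tfrac1\alpha G(b_i^+)$ (for $\alpha>0$), or \cref{eq:u_uvoters} (for $\alpha=0$), give $U_1$ exactly. Deviating toward the longer side makes the deviator's tent the less truncated one, so $b_1^+\le b_2^+$.
\item Then $U_1=(1-\beta)\big(\tfrac1\alpha+\tfrac1\lambda\big)G(\delta)-\tfrac1\alpha\big(G(b_1^+)-\beta G(b_2^+)\big)$. We need this to be positive for small $\delta$.
\end{itemize}

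The main obstacle is this last case (c) estimate when $\alpha>0$. The truncation terms $G(b_i^+)$ are of the same order as $G(\delta)$ once $c$ is within about $\delta/\alpha$ of an endpoint. My first attempt will be to choose $\delta$ small relative to the distance from $c$ to the far endpoint, so that $b_1^+=0$ and only the subtracted $\beta G(b_2^+)$ term survives, which helps. If $c$ is itself an endpoint, I will instead place the deviation at a point of $E(\alpha)$, where both tents are interior. There \cref{eq:margin_G} shows that the margin term is nonnegative and the turnout term $(1-\beta)T/2$ is positive.
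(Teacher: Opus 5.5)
Your proposal is correct and is essentially the paper's proof. Boundary positions are excluded by the positive one-sided derivative $(1+\alpha)\int_0^{\zin{1}}k(B_1)v\,dx$, together with the fact that $u_2$ cannot rise as $c_1$ moves toward it. Ties are excluded by an explicit deviation: in case (b) the flank bound with the side chosen by voter mass, and in case (c) the uniform-voter closed form with the deviator's own tent left untruncated, exactly as in the paper.

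The obstacle you flag in case (c) is not there. The far endpoint is always at distance at least $\tfrac12$ from $c$, so any $\delta<\tfrac{\alpha}{2(1+\alpha)}$ gives $b_1^+=0$ even when $c\in\{0,1\}$. Your first attempt then yields $U_1\ge(1-\beta)\big(\tfrac1\alpha+\tfrac1\lambda\big)G(\delta)>0$ in every case, and the $E(\alpha)$ fallback can be dropped. Its claim that both tents are interior is in any case false when the opponent sits at an endpoint.

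In case (b), write the error term as $o\big(K(\delta)\big)$ rather than $O\big(\delta K(\delta)\big)$, since $v$ may be unbounded at an endpoint $c$. Alternatively, use $u_2\le K(\delta)\big(1-V(m')\big)$ with $m'$ the new midpoint, as the paper does.
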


\begin{proof}
\emph{No boundary positions (all three cases).}
Suppose $c_1=0<c_2$.  By \Cref{n:lem:master} below, the right derivative of
$u_1$ at $c_1=0$ is $(1+\alpha)\int_0^{\zin{1}}\!k(B_1)v\,dx$, which is positive (the left
integral in \cref{n:eq:master} is empty). At the same time
$\partial u_2/\partial c_1$ equals
$-\int_{\zin{2}}^{\zout{2}\wedge1}\!k(B_2)v\,dx$, which is at most
$0$.  Hence
$\partial U_1/\partial c_1=\partial u_1/\partial c_1-\beta\,\partial
u_2/\partial c_1>0$: a small move to the right strictly increases $U_1$, so
$c_1=0$ is not a best response.  The mirror argument rules out $c_2=1$.

\emph{No ties, case (a).}  At a tie both payoffs vanish, and for $\beta=0$ the
deviation argument of the baseline lemma applies verbatim: any $c_1'$ strictly
between the tie point and the nearer end of $(0,1)$ yields
$u_1(c_1',c)>0=U_1(c,c)$.

\emph{No ties, case (b).}  Let $\alpha=0$ and $c_1=c_2=c$.  For
$\varepsilon>0$ small consider the deviation $c_1'=c-\varepsilon$ (possible if
$c>0$).  Every voter in $[0,c-\varepsilon]$ gains exactly $\varepsilon$ from
the deviator, and every voter gains at most $\varepsilon$ from either
candidate, so with $m'=c-\tfrac\varepsilon2$,
\[
u_1\ge K(\varepsilon)\,V(c-\varepsilon),
\qquad
u_2\le K(\varepsilon)\,\big(1-V(m')\big),
\]
whence
$U_1\ge K(\varepsilon)\big[V(c-\varepsilon)-\beta\big(1-V(m')\big)\big]$.
As $\varepsilon\downarrow0$ the bracket tends to
$(1+\beta)\big[V(c)-\tfrac\beta{1+\beta}\big]$, which is positive whenever
$V(c)>\tfrac\beta{1+\beta}$; since $K(\varepsilon)>0$ for
$\varepsilon\in(0,1)$, the deviation is then profitable for small
$\varepsilon$.  If instead $V(c)\le\tfrac\beta{1+\beta}<\tfrac1{1+\beta}$, the
mirror deviation $c_1'=c+\varepsilon$ gives a bracket tending to
$(1+\beta)\big[1-V(c)-\tfrac\beta{1+\beta}\big]>0$.  (At $c=0$ only the second
option is available and applies since $V(0)=0$; at $c=1$ only the first, since
$V(1)=1$.)

\emph{No ties, case (c).}  Let $v\equiv1$ and $c_1=c_2=c$; by symmetry assume
$c>0$ and deviate to $c_1'=c-\varepsilon$ with
$\varepsilon<\min\{c,\tfrac{\alpha c}{1+\alpha}\}$ (the case $c=0$ is the mirror
image; for $\alpha=0$ the claim is case (b)).  By the
closed form \eqref{n:eq:uV} below, the deviator's tent is interior
($b_1'=\varepsilon-\alpha(c-\varepsilon)<0$), so
$u_1=A\,G(\varepsilon)$ with $A=\tfrac1\alpha+\tfrac1\lambda$, while the
opponent's payoff is $u_2=A\,G(\varepsilon)-\tfrac1\alpha
G\big((\varepsilon-\alpha(1-c))^+\big)\le A\,G(\varepsilon)$.  Hence
$U_1\ge(1-\beta)A\,G(\varepsilon)>0=U_1(c,c)$.

\end{proof}

We record for later use that \Cref{n:lem:interior} fails at $\beta=1$: tied
profiles are equilibria of the margin game (e.g.\ the median profile at
$\alpha=0$), which is precisely why all statements below assume $\beta<1$ and
treat $\beta=1$ separately.

\begin{lemma}[Master payoff derivative]\label{n:lem:master}
Let \Cref{ass:reg} hold, $\alpha\ge0$, $\beta\in[0,1]$, and fix
$0<c_1<c_2<1$.  Assume the right-hand side of \cref{n:eq:master} below is
finite at $c_1$; this holds automatically whenever $k$ is bounded, or $v$ is
locally bounded at $\{0,1\}$, or the tent edges avoid the boundary
($\zout{1}\neq0$, $\zout{2}\neq1$)---one of which is the case in every application in
this paper.  Then $u_1$, $u_2$, and hence $U_1$, are continuously
differentiable in $c_1$ there, and
\begin{equation}\label{n:eq:master}
\frac{\partial U_1}{\partial c_1}
=(1+\alpha)\left[\int_{c_1}^{\zin{1}}k(B_1)\,v\,dx-\int_{\zout{1}\vee0}^{c_1}k(B_1)\,v\,dx\right]
+\beta\int_{\zin{2}}^{\zout{2}\wedge1}k(B_2)\,v\,dx ,
\end{equation}
where for $\alpha=0$ one reads $\zout{1}\vee0=0$, $\zout{2}\wedge1=1$, and $\zin{1}=\zin{2}=m$.
The formula for $\partial U_2/\partial c_2$ is the mirror image.  In
particular, at an untied interior equilibrium with $\beta<1$ both first-order
conditions $\partial U_1/\partial c_1=\partial U_2/\partial c_2=0$ hold.
\end{lemma}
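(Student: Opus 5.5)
Write both vote shares as integrals over the support of the positive part of the respective tent, and differentiate under the integral sign with the Leibniz rule. For $\alpha>0$ and $c_1<c_2$, candidate $1$ collects votes exactly on $[\zout{1}\vee0,\zin{1}]$, so
\[
u_1=\int_{\zout{1}\vee0}^{\zin{1}}K\big(B_1(x)\big)v(x)\,dx .
\]
Candidate $2$ collects votes on $[\zin{2},\zout{2}\wedge1]$, with $u_2$ given by the analogous integral. At $\alpha=0$ one reads $\zout{1}\vee0=0$, $\zout{2}\wedge1=1$ and $\zin{1}=\zin{2}=m$, which recovers \cref{eq:u1_general,eq:u2_general}.

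The key steps are the following.

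\textbf{Endpoint terms.} At every finite zero $B_i=0$, so $K(B_i)=0$ by \Cref{ass:reg} and the Leibniz boundary terms vanish. At a clipped endpoint ($0$ or $1$) the limit does not move with $c_1$, so it contributes nothing either. The same argument already appears in the proof of \Cref{prop:master}.

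\textbf{Interior derivatives of the benefits.} For $x<c_1$, $B_1=(c_2-x)-(1+\alpha)(c_1-x)$, so $\partial_{c_1}B_1=-(1+\alpha)$. For $c_1<x<c_2$, $B_1=(c_2-x)-(1+\alpha)(x-c_1)$, so $\partial_{c_1}B_1=+(1+\alpha)$. On candidate $2$'s support $x>c_1$, hence $d_1=x-c_1$ and $\partial_{c_1}B_2=\partial_{c_1}d_1=-1$. These give the bracket in \cref{n:eq:master} and
\[
\partial_{c_1}u_2=-\int_{\zin{2}}^{\zout{2}\wedge1}k(B_2)v\,dx .
\]
Combining them as $\partial U_1=\partial u_1-\beta\,\partial u_2$ yields \cref{n:eq:master}. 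The apex $x=c_1$ is a single point where $B_1$ is continuous, so it does not matter. The kink of $B_2$ at $x=c_2$ lies inside candidate $2$'s region, but $B_2$ is continuous there in $c_1$, so splitting the integral at $c_2$ introduces no extra terms.

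\textbf{Main obstacle: justifying differentiation under the integral and continuity of the result.} The integrands $K(B_i)v$ are continuous and Lipschitz in $c_1$, with difference quotients bounded by $(1+\alpha)\sup_{[a,b]}k\cdot v$. The difficulty is that $k$ may be unbounded near $0$ (as for power costs with $p<1$), and near the zeros $B_i\to0$.

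To handle this, change variables $t=B_1(x)$ on each arm. The inner arm becomes
\[
\frac1\lambda\int_0^{\Delta}K(t)\,v\!\left(c_1+\tfrac{\Delta-t}{\lambda}\right)dt .
\]
On the outer arm, when $\alpha>0$, the integrand carries the factor $1/\alpha$ and the lower $t$-limit is $b_1^+$. In these variables the $c_1$-dependence enters only through $\Delta$ and the argument of $v$. Differentiating then requires only $k\in L^1$, which holds because $K$ is its integral, together with local boundedness of $v$ in a neighbourhood of the arguments.

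The hypothesis that the right-hand side is finite covers exactly the residual case, in which a tent edge sits at $0$ or $1$ where $v$ may blow up. There one applies dominated convergence on $[t,1]$ and passes to the limit using monotonicity of the integrands. Continuity of the derivative in $c_1$ then follows from continuity of the integrands and of the limits (the zeros move continuously with $c_1$) via dominated convergence again.

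The final sentence of the statement is immediate: at an untied interior equilibrium each $c_i$ is an interior maximizer of a differentiable function, so the first-order condition holds. The restriction $\beta<1$ is needed only so that \Cref{n:lem:interior} supplies interiority.
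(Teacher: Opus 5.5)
Your formal computation is correct. It is exactly the quick Leibniz argument the paper gives for \Cref{prop:master}: the endpoint terms vanish because $K(0)=0$, $\partial_{c_1}B_1=\mp(1+\alpha)$ on the outer and inner arms, and $\partial_{c_1}B_2=-1$ on candidate~$2$'s support. The substance of this lemma, though, is the justification of differentiability, and your plan fails there.

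After the substitution $t=B_1(x)$, the inner-arm contribution is
\[
\frac1\lambda\int_0^{\Delta}K(t)\,v\Big(\tfrac{1+\alpha}{\lambda}\,c_1+\tfrac{c_2-t}{\lambda}\Big)\,dt .
\]
So $c_1$ sits inside the argument of $v$ with coefficient $\tfrac{1+\alpha}{\lambda}\neq0$, and likewise on the outer arm. \Cref{ass:reg} makes $v$ only continuous, so this integrand need not be differentiable in $c_1$, and you cannot differentiate under the $t$-integral. Note also that $k$ does not appear in this representation, so ``$k\in L^1$'' has nothing to act on. What $k\in L^1$ plus local boundedness of $v$ actually gives you is finiteness of the \emph{formal} derivative $\tfrac{1+\alpha}{\lambda}\int_0^\Delta k(t)\,v(\cdot)\,dt$. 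A finite candidate derivative does not license exchanging derivative and integral. Returning to the $x$-form does not help either: as you noted, for general unbounded $k$ the difference quotients of $K(B_1(x))$ need not admit an integrable majorant near the zeros of the tent.

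The missing idea is to move the $c_1$-dependence onto $V$, which is $C^1$. Integrating by parts in $t$ does this. Equivalently, use the paper's layer-cake representation
\[
u_1=\int_0^1 k(s)\,\Big[V\big(\zin{1}(s)\big)-V\big(\zout{1}(s)\vee0\big)\Big]\,ds,
\]
where $\zin{1}(s)$ and $\zout{1}(s)$ are the edges of the super-level set $\{B_1>s\}$, both affine in $c_1$. Now $k$ is the weight and $c_1$ enters only through $V$. The difference quotients are dominated by $k(s)$ times a local supremum of $v$, which is integrable precisely because $k\in L^1$. The single $s$ with $\zout{1}(s)=0$ is a null set. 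Changing variables back gives the two $x$-integrals of \cref{n:eq:master} with no boundary terms, and $u_2$ is handled the same way. So you had the right hypotheses, applied to the wrong representation.

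An equally valid repair avoids differentiating under the integral altogether. Write $B_1(x;\gamma)$ for the benefit when candidate~$1$ stands at $\gamma$. For fixed $x$, the map $\gamma\mapsto K\big(B_1(x;\gamma)^+\big)$ is absolutely continuous. The fundamental theorem of calculus plus Fubini (justified since $\int_{c^0}^{c}(1+\alpha)(I_1+O_1)\,d\gamma<\infty$) then gives $u_1(c)-u_1(c^0)=\int_{c^0}^{c}D(\gamma)\,d\gamma$, where $D$ is the $u_1$-part of the right-hand side of \cref{n:eq:master} at $c_1=\gamma$. Continuity of $D$, which your dominated-convergence argument in the $t$-variables with weight $k$ does address, then yields $u_1\in C^1$. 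Your treatment of the boundary case, ``dominated convergence on $[t,1]$ and monotonicity'', is too vague to check and should be replaced by one of these arguments.
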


\begin{proof}
By the layer-cake representation, using $K(t)=\int_0^1k(s)\mathbf1\{s<t\}ds$,
\[
u_1=\int_0^1K\big(B_1(x)^+\big)v(x)\,dx
=\int_0^1 k(s)\,\mu\big(\{B_1>s\}\big)\,ds,
\quad \mu(dx)=v(x)\,dx .
\]
The super-level set $\{B_1>s\}$ is the interval
$\big(\zout{1}(s)\vee0,\;\zin{1}(s)\big)$ for $s\in(0,\Delta)$,
where, from $B_1=c_2+(1+\alpha)c_1-\lambda x$ on the inner arm and
$B_1=c_2-(1+\alpha)c_1+\alpha x$ on the outer arm,
\[
\zin{1}(s)=\frac{c_2+(1+\alpha)c_1-s}{\lambda},
\quad
\zout{1}(s)=\frac{s-c_2+(1+\alpha)c_1}{\alpha}
\;\;(=-\infty\text{ if }\alpha=0).
\]
Both endpoints are affine in $c_1$ with velocities
$\tfrac{1+\alpha}\lambda$ and $\tfrac{1+\alpha}\alpha$, and both lie in the
open interval $(0,c_2)\subset(0,1)$ for $s$ in the relevant ranges, where $V$
is continuously differentiable ($v$ continuous).  Hence for each fixed $s$,
$c_1\mapsto\mu(\{B_1>s\})=V(\zin{1}(s))-V(\zout{1}(s)\vee0)$
is differentiable in $c_1$ except at the single value of $s$ where
$\zout{1}(s)=0$ (a null set in $s$), with derivative
$\tfrac{1+\alpha}\lambda v(\zin{1}(s))
-\tfrac{1+\alpha}\alpha v(\zout{1}(s))\mathbf1\{\zout{1}(s)>0\}$.
The difference quotients are dominated, locally uniformly in $c_1$, by
$k(s)$ times the supremum of $v$ over a compact neighborhood of the swept
intervals, and the resulting dominating function is integrable in $s$
precisely when the right-hand side of \cref{n:eq:master} is finite (change
variables $s=B_1(x)$ back: the two $s$-integrals become the two $x$-integrals
of \cref{n:eq:master}); under the standing finiteness assumption dominated
convergence applies and yields
\[
\frac{\partial u_1}{\partial c_1}
=(1+\alpha)\left[\int_{c_1}^{\zin{1}}\!k(B_1)v\,dx
-\int_{\zout{1}\vee0}^{c_1}\!k(B_1)v\,dx\right],
\]
with \emph{no boundary terms}: the moving tent edges $\zin{1}$ and $\zout{1}$ carry
$K(B_1)=K(0)=0$, and the kink of $B_1$ at $x=c_1$ contributes nothing since
$B_1$ is continuous there.  Continuity of the derivative in $c_1$, including
across the regime switch $\zout{1}=0$, follows because the outer-arm integral is
continuous there: even when $k$ is unbounded at $0^+$, the vanishing piece
obeys
$\int_{\zout{1}}^{\zout{1}+\delta}k(B_1)v
\le\sup_{[\zout{1},\zout{1}+\delta]}v\cdot\tfrac1\alpha K(\alpha\delta)\to0$
by the change of variables $t=B_1(x)$ (the local boundedness of $v$ near the
switch point is part of the standing assumption).

For $u_2$: on candidate $2$'s support $[\zin{2},\zout{2}\wedge1]$ (which lies in
$(c_1,1]$) we have $d_1=x-c_1$, so $\partial B_2/\partial c_1=-1$ pointwise,
and the same argument gives
$\partial u_2/\partial c_1=-\int_{\zin{2}}^{\zout{2}\wedge1}k(B_2)v\,dx$, again with no
boundary terms ($B_2=0$ at $\zin{2}$ and at $\zout{2}$ when $\zout{2}<1$; if $\zout{2}\ge1$ the
limit $1$ is fixed).  Combining, and noting
$U_1=u_1-\beta u_2$, yields \cref{n:eq:master}.  The first-order conditions
at an equilibrium follow since, by \Cref{n:lem:interior} (whose cases cover
every configuration in which we invoke them) the equilibrium is untied and
interior, and $U_1(\cdot,c_2)$ is differentiable there.

\end{proof}

Specializing \cref{n:eq:master} to $\alpha=0$ gives, for general $(v,k)$,
\begin{equation}\label{n:eq:master0}
\frac{\partial U_1}{\partial c_1}
=\int_{c_1}^{m}k(B_1)\,v\,dx-k(\Delta)V(c_1)
+\beta\int_{m}^{1}k(B_2)\,v\,dx ,
\end{equation}
the $\beta$-generalization of the baseline first-order condition \eqref{eq:du1_general}: the
opponent's marginal voters at the midpoint now enter with weight $-\beta$.

Finally, we record the closed form for uniform voters used throughout
\Cref{n:sec:alien,n:sec:comp}.  For $v\equiv1$ and any $K$, changing variables $t=B_1(x)$
along each arm of the tent (slopes $\alpha$ and $\lambda$) and subtracting the
part of the outer arm truncated by the boundary gives, for $\alpha>0$,
\begin{equation}\label{n:eq:uV}
u_1=A\,G(\Delta)-\tfrac1\alpha\,G\big(b_1^+\big),
\qquad A:=\tfrac1\alpha+\tfrac1\lambda=\tfrac{2(1+\alpha)}{\alpha\lambda},
\end{equation}
and $u_1=c_1K(\Delta)+\tfrac12G(\Delta)$ for $\alpha=0$; the formulas for
$u_2$ replace $b_1$ by $b_2$ and $c_1$ by $1-c_2$.  Substituting into
\cref{n:eq:master} (each integral is $\tfrac1\lambda K(\Delta)$,
$\tfrac1\alpha[K(\Delta)-K(b_1^+)]$, and
$A\,K(\Delta)-\tfrac1\alpha K(b_2^+)$ respectively) yields, for $\alpha>0$ and
uniform voters,
\begin{equation}\label{n:eq:DuV}
\alpha\,\frac{\partial U_1}{\partial c_1}
=-(1-\beta)\,\alpha A\,K(\Delta)
+(1+\alpha)\,K\big(b_1^+\big)-\beta\,K\big(b_2^+\big).
\end{equation}

%%%%%%%%%%%%%%%%%%%%%%%%%%%%%%%%%%%%%%%%%%%%%%%%%%%%%%%%%%%%%%%%%%%%%%%%%%%%%%
\section{Results for the baseline ($\alpha=\beta=0$): a reflection lemma}
\label{n:sec:base}

The sufficiency results for uniform costs (\Cref{prop:fourflat,cor_quarter})
(density ratio $\rho_v\le4$) verify two \emph{cross-over} deviations by an
explicit computation available only at quartile profiles.  The following
lemma removes both the computation and the restriction to uniform costs: for
symmetric electorates and vote-share maximizers, cross-over deviations are
\emph{never} binding.  It is stated for all $\alpha\ge0$ because it is used
again in \Cref{n:sec:alien}; the restriction to $\beta=0$ is essential
(\Cref{n:rem:reflection-beta}).

\begin{lemma}[Reflection lemma]\label{n:lem:reflection}
Let $\beta=0$, $\alpha\ge0$, let $K$ be arbitrary (\Cref{ass:reg}), and let
$v$ be symmetric about $\tfrac12$.  Let $c_1\in(0,\tfrac12)$ and
$c_2=1-c_1$.  If $c_1$ maximizes $u_1(\cdot,c_2)$ over $[0,c_2)$, then
$(c_1,c_2)$ is a Nash equilibrium.
\end{lemma}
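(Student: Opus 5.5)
The plan is to compare every cross-over deviation of candidate $1$ with a deviation on her own side, and to show that the own-side deviation is at least as good. Fix $c_2=1-c_1>\tfrac12$. Candidate $1$ has three kinds of deviation: to $[0,c_2)$, which is covered by the hypothesis; to the tie $c_2$; and to $(c_2,1]$. The tie yields $0$, while $u_1(c_1,c_2)>0$ because the tent of $B_1$ reaches positive values on a set of positive $v$-mass and $K>0$ on $(0,1]$. So the tie is never profitable. Once candidate $1$ is settled, candidate $2$ follows by the reflection $x\mapsto1-x$. That reflection maps $v$ to itself and the profile $(c_1,c_2)$ to $(1-c_2,1-c_1)=(c_1,c_2)$ with the labels swapped. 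Hence candidate $2$'s problem is candidate $1$'s problem mirrored, and her own-side optimality is exactly the hypothesis transported. With $\beta=0$ each payoff involves only the player's own share, so this transport is legitimate.

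The core step is the cross-over comparison. Take a deviation $c_1'=c_2+t$ with $t\in(0,1-c_2]$, and pair it with the own-side deviation $c_1''=c_2-t$, which lies in $[2c_2-1,c_2)\subset[0,c_2)$ because $t\le 1-c_2<c_2$. In either position, the deviator's benefit at a voter depends only on the signed distance $y$ from the opponent, measured away from the opponent on the deviator's side: $B_1=d_2-(1+\alpha)d_1$ with $d_2=|y|$ and $d_1=|y-t|$. The payoff can therefore be written as
\[
u_1=\int K\big(\beta_t(y)^+\big)\,v(c_2\pm y)\,dy ,
\]
with one and the same function $\beta_t$, integrated over $y\in[0,c_2]$ on the left and $y\in[0,1-c_2]$ on the right. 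Since $K(\beta_t^+)\ge0$, the comparison reduces to comparing the voter measures. On the left the integrand uses $v(c_2-y)$ for $y\in[0,c_2]$; on the right it uses $v(c_2+y)$ for $y\in[0,1-c_2]$. Symmetry of $v$ gives $v(c_2+y)=v(1-c_2-y)=v(c_1-y)$. This is the step I expect to be the main obstacle: the two densities are not pointwise comparable, since $v(c_1-y)$ against $v(c_2-y)$ has no sign in general.

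I would resolve this by reflecting the opponent rather than the voters. Reflecting the whole configuration $x\mapsto 1-x$ maps the right deviation $(c_2+t,\,c_2)$ to $(c_1-t,\,c_1)$. By symmetry of $v$, candidate $1$'s share at the right deviation therefore equals the share of a candidate at $c_1-t$ facing an opponent at $c_1$. The latter has flank $[0,c_1-t]$ and inner arm up to the opponent's side. Compare it with the own-side deviation $c_2-t$ against $c_2$. The two configurations have the same tent shape relative to the opponent, because the tent depends only on $t$ and on the relative coordinate, but the second one has more room behind it: $c_2-t>c_1-t$. The difference is that the second configuration extends the flank by the region $[c_1-t,\,c_2-t]$ shifted appropriately, where the integrand $K(B^+)v\ge0$. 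To carry this out I would write each share as the integral of $K(\beta_t(y)^+)$ against $v$ over a window of $y$ whose inner part is identical (using symmetry) and whose outer extent is $c_1$ versus $c_2$. If the inner windows do not coincide exactly, I would fall back on the uniform-voter computation in the proof of \Cref{thm:alien-unique}, where the payoff is increasing in the room behind the deviator $\varrho$, and adapt it by translating the voter measure using symmetry. Monotonicity in $\varrho$ then gives $u_1(c_2+t,c_2)\le u_1(c_2-t,c_2)\le u_1(c_1,c_2)$, the last inequality being the hypothesis.
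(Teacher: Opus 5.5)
There is a genuine gap in the cross-over comparison. Your reflection step is right: mapping $(c_2+t,c_2)$ to $(c_1-t,c_1)$ and using the symmetry of $v$ shows that the cross-over payoff equals the share of a candidate at $c_1-t$ facing an opponent at $c_1$. The gap is what you compare that share with. The first inequality of your final chain, $u_1(c_2+t,c_2)\le u_1(c_2-t,c_2)$, is false in general once $\alpha>0$.

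Translating the pair $(c_1-t,c_1)$ to $(c_2-t,c_2)$ keeps the tent's shape but changes which voters lie under it. Symmetry of $v$ about $\tfrac12$ gives you reflections, not translations. The ``room behind the deviator'' monotonicity in the proof of \Cref{thm:alien-unique} holds only because $v\equiv1$ there.

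At $\alpha>0$ the translation is not even pointwise monotone. On the flank, $B(x;o-t,o)=t-\alpha(o-t-x)$ falls as $o$ rises, so the deviator's distant supporters lose benefit. (At $\alpha=0$ the flank is flat at height $t$ and $B^+$ rises pointwise, so your route works only in that case.) Concretely, take $\alpha=1$, $K(t)=t$, $c_1=0.3$, $c_2=0.7$, $t=0.3$:
\begin{itemize}
\item The reflected cross-over is a candidate at $0$ against $0.3$. Its benefit is $0.3-3x$ on $[0,0.1]$, so it earns at least $0.15\,V(0.05)$.
\item The translated own-side deviation is a candidate at $0.4$ against $0.7$. Its tent lies on $[0.1,0.5]$ with height at most $0.3$, so it earns at most $0.3\,\big(V(0.5)-V(0.1)\big)$.
\item Take a symmetric, continuous, positive density with $V(0.05)\ge0.45$. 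The first payoff is then at least $0.0675$, while the second is at most $0.015$.
\end{itemize}

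The missing idea is to keep the deviator fixed and move the opponent instead. Pair $c_2+t$ with the own-side point $a:=1-(c_2+t)=c_1-t\in[0,c_1)$, not with $c_2-t$, and compare $u_{\mathrm{dev}}(a,c_1)$ with $u_{\mathrm{dev}}(a,c_2)$. The key fact is that for $a<o\le\tilde o$ you have $B(x;a,o)^+\le B(x;a,\tilde o)^+$ at every voter:
\begin{itemize}
\item If $B(x;a,o)>0$, then $|x-o|>(1+\alpha)|x-a|\ge|x-a|$, so $x<\tfrac{a+o}{2}<o\le\tilde o$.
\item Hence $|x-\tilde o|\ge|x-o|$ while the distance to $a$ is unchanged, so $B(x;a,\tilde o)\ge B(x;a,o)$.
\end{itemize}
This is a pointwise comparison under the same voter measure, so symmetry of $v$ is needed only in the reflection step. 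It gives
\[
u_1(c_2+t,c_2)=u_{\mathrm{dev}}(a,c_1)\le u_{\mathrm{dev}}(a,c_2)\le u_1(c_1,c_2),
\]
where the last inequality is the hypothesis, since $a\in[0,c_2)$. This is the paper's argument. Your treatment of the tie and of candidate~$2$ by reflection is fine as it stands.
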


\begin{proof}
By symmetry of $v$ and of the profile it suffices to show candidate $1$ has
no profitable deviation.  Own-side deviations $c_1'\in[0,c_2)$ are unprofitable
by hypothesis.  The tie $c_1'=c_2$ yields $0<u_1(c_1,c_2)$ (the benefit
$B_1$ is positive on a nondegenerate interval around $c_1$ and $v>0$ there).
It remains to rule out cross-over deviations $c_1'\in(c_2,1]$.  We use two
observations.

\emph{(i) Reflection.}  For any positions $a,o$, the change of variables
$y=1-x$ together with $v(1-y)=v(y)$ gives
$u_{\mathrm{dev}}(a,o)=u_{\mathrm{dev}}(1-a,1-o)$, where
$u_{\mathrm{dev}}(a,o)$ denotes the vote share of a candidate at $a$ facing an
opponent at $o$.  Applying this with $a=c_1'\in(c_2,1]$ and $o=c_2$, and using
$1-c_2=c_1$:
\[
u_1(c_1',c_2)=u_{\mathrm{dev}}\big(1-c_1',\,c_1\big),
\qquad 1-c_1'\in[0,c_1).
\]
Cross-over deviations against $c_2$ are thus payoff-equivalent to own-side
deviations against an opponent located at $c_1$.

\emph{(ii) Monotonicity in the opponent's position.}  For $a<o\le\widetilde o$
and every $x$,
$B(x;a,o)^+\le B(x;a,\widetilde o)^+$, where $B(x;a,o)=|x-o|-(1+\alpha)|x-a|$.
Indeed, if $B(x;a,o)\le0$ there is nothing to prove; if $B(x;a,o)>0$ then
$|x-o|>(1+\alpha)|x-a|\ge|x-a|$, so $x$ is strictly closer to $a$ than to
$o$, forcing $x<\tfrac{a+o}2<o\le\widetilde o$ and hence
$|x-\widetilde o|=\widetilde o-x>o-x=|x-o|$, so
$B(x;a,\widetilde o)>B(x;a,o)>0$.  Since $K$ is non-decreasing, integrating
against $v$ gives $u_{\mathrm{dev}}(a,o)\le u_{\mathrm{dev}}(a,\widetilde o)$.

Combining, for every $c_1'\in(c_2,1]$, with $a=1-c_1'<c_1\le c_2$,
\[
u_1(c_1',c_2)
\stackrel{\text{(i)}}{=}u_{\mathrm{dev}}(a,c_1)
\stackrel{\text{(ii)}}{\le}u_{\mathrm{dev}}(a,c_2)
\le\sup_{c_1''\in[0,c_2)}u_1(c_1'',c_2)
=u_1(c_1,c_2),
\]
the last equality by hypothesis.  No deviation is profitable.
\end{proof}

\begin{remark}\label{n:rem:reflection-beta}
\emph{(a)}  At the quartile profile of a symmetric electorate with uniform
costs, \Cref{n:lem:reflection} recovers the cross-over verification in the
proof of \Cref{cor_quarter} (\Cref{app:omitted-proofs}) without any
computation: under $\rho_v\le4$ the own-side payoff is concave
(\Cref{prop:fourflat}) and stationary at $c_1$, hence globally maximized
there, and the lemma finishes the proof.  The explicit bounds there
($u_1\ge\tfrac1{32s}$ etc.) are retained only for the asymmetric case of
\Cref{cor_quarter}, where the reflection argument is unavailable.
\emph{(b)}  The lemma fails for $\beta>0$: step (ii) compares the
\emph{deviator's} share pointwise, but moving the opponent also changes the
\emph{opponent's} share $u_2$, in a direction that is not signed in general,
so $U_1=u_1-\beta u_2$ is not monotone in the opponent's position.  The
$\beta>0$ cross-over checks in \Cref{n:sec:comp} are therefore carried out explicitly.
\end{remark}

%%%%%%%%%%%%%%%%%%%%%%%%%%%%%%%%%%%%%%%%%%%%%%%%%%%%%%%%%%%%%%%%%%%%%%%%%%%%%%
\section{Results for alienation ($\alpha>0$, $\beta=0$)}\label{n:sec:alien}

\subsection{A sufficiency threshold under uniform costs}

For $\alpha=0$, \Cref{prop:fourflat} shows that $\rho_v\le4$ makes each candidate's
own-side payoff concave, reducing equilibrium verification to the balance
conditions plus two cross-over checks.  The next theorem gives the
$\alpha$-analogue and answers the question of whether alienation helps or
hurts this program: the threshold
\begin{equation}\label{n:eq:Fdef}
F(\alpha):=\min\{f(\alpha),g(\alpha)\},
\quad
f(\alpha):=\frac{2(2+\alpha)}{1+\alpha},
\quad
g(\alpha):=\frac{2+\alpha}{\alpha}
\;\;(g(0):=\infty),
\end{equation}
is \emph{strictly decreasing}, from $F(0)=4$ through $F(1)=3$ (where $f=g$)
to $1$ as $\alpha\to\infty$.  Alienation \emph{shrinks} the sufficiency
umbrella.

\begin{theorem}[Single-peakedness under uniform costs]\label{n:thm:A3}
Let $\kappa\sim U(0,1)$, $\beta=0$, $\alpha>0$, and $\rho_v\le F(\alpha)$.  Fix any
$c_2\in(0,1]$.  Then $c_1\mapsto u_1(c_1,c_2)$ is, on $[0,c_2)$: concave on
the no-dead-zone piece $\big[0,\tfrac{c_2}{1+\alpha}\big]$, and strictly
decreasing on the dead-zone piece $\big(\tfrac{c_2}{1+\alpha},c_2\big)$; its
derivative is continuous, positive at $c_1=0$, and the payoff is single-peaked
with all maximizers in the no-dead-zone piece.  Consequently a profile
$0<c_1<c_2<1$ is a Nash equilibrium if and only if both balance conditions
\begin{equation}\label{n:eq:A3foc}
V(\zin{1})+V(\zout{1}\vee0)=2V(c_1),
\qquad
V(\zin{2})+V\big(\min(\zout{2},1)\big)=2V(c_2)
\end{equation}
hold \textup(where $V(\zout{1}\vee0):=V(\zout{1})$ if $\zout{1}>0$ and $:=0$ otherwise, and the
second condition is the mirror image\textup) and neither candidate gains from
the single cross-over deviation to the far side of her opponent.
\end{theorem}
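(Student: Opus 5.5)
The plan is to reduce everything to the sign of the own-side derivative, which under uniform costs is a pure voter-mass expression. With $k\equiv1$, \Cref{n:lem:master} (equivalently \Cref{prop:master}) at $\beta=0$ gives, for $c_1\in[0,c_2)$,
\[
D(c_1):=\frac{\partial u_1}{\partial c_1}=(1+\alpha)\Big[\big(V(\zin{1})-V(c_1)\big)-\big(V(c_1)-V(\zout{1}\vee0)\big)\Big].
\]
Here $\zin{1}=\tfrac{(1+\alpha)c_1+c_2}{\lambda}$ moves with velocity $\tfrac{1+\alpha}{\lambda}$, and $\zout{1}=\tfrac{(1+\alpha)c_1-c_2}{\alpha}$ moves with velocity $\tfrac{1+\alpha}{\alpha}$. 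The outer zero enters the policy space exactly when $c_1>\tfrac{c_2}{1+\alpha}$, which splits $[0,c_2)$ into the two pieces of the statement. Since $V$ is continuous, $V(\zout{1}\vee0)$ is continuous in $c_1$, including across the switch $\zout{1}=0$. Hence $D$ is continuous on $[0,c_2)$. At $c_1=0$ we get $D(0)=(1+\alpha)V(\zin{1})>0$, because $v>0$.

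\emph{No-dead-zone piece.} On this piece $V(\zout{1}\vee0)\equiv0$, so
\[
D'(c_1)=(1+\alpha)\Big[\tfrac{1+\alpha}{\lambda}v(\zin{1})-2v(c_1)\Big]\le(1+\alpha)\inf v\Big[\tfrac{1+\alpha}{\lambda}\rho_v-2\Big]\le0
\]
whenever $\rho_v\le f(\alpha)=\tfrac{2\lambda}{1+\alpha}$. So $u_1$ is concave there, exactly as in \Cref{prop:fourflat}.

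\emph{Dead-zone piece.} Here I bound the two masses by interval lengths rather than differentiating. The inner interval $[c_1,\zin{1}]$ has length $\Delta/\lambda$, so its mass is at most $\sup v\cdot\Delta/\lambda$. The outer interval $[\zout{1},c_1]$ lies inside $[0,1]$ and has length $\Delta/\alpha$, so its mass is at least $\inf v\cdot\Delta/\alpha$. Hence
\[
D\le(1+\alpha)\Delta\inf v\Big[\tfrac{\rho_v}{\lambda}-\tfrac1\alpha\Big]\le0
\]
when $\rho_v\le g(\alpha)=\lambda/\alpha$.

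The main obstacle is strictness at the boundary case $\rho_v=g(\alpha)$. If $D=0$ at some point of this piece, then $v=\sup v$ on all of the inner interval and $v=\inf v$ on all of the outer interval. Both intervals contain $c_1$, so continuity of $v$ at $c_1$ forces $\sup v=\inf v$, i.e.\ $\rho_v=1$. But $\rho_v=g(\alpha)>1$ in this case, a contradiction. So $D<0$ on the whole dead-zone piece, and $u_1$ is strictly decreasing there. By continuity, $D\le0$ also holds at the switch point.

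\emph{Single-peakedness.} Combining the pieces: $D$ is positive at $0$, non-increasing on the first piece, and negative on the second. Hence $\{D=0\}$ is a (possibly degenerate) interval contained in the no-dead-zone piece, $u_1$ increases before it and decreases after it, and every maximizer lies in the first piece. This proves the shape claims.

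\emph{Equilibrium characterization.} Interiority of equilibria follows from \Cref{n:lem:interior}(a), and the first-order conditions then hold and are exactly \cref{n:eq:A3foc}; candidate $2$'s condition comes from the mirror computation. Conversely, suppose both conditions in \cref{n:eq:A3foc} hold. Single-peakedness makes $c_1$ a global maximizer of $u_1(\cdot,c_2)$ on $[0,c_2)$. The tie $c_1'=c_2$ yields $0$, which is strictly less than $u_1(c_1,c_2)$. For deviations to $(c_2,1]$, apply the same argument after the reflection $x\mapsto1-x$; this preserves $\rho_v$, and the relevant range $c_2\in(0,1]$ is covered by the statement. It shows that the far-side payoff is also single-peaked, with its maximizer at the unique stationary point there. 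So the far side contributes exactly one cross-over comparison per candidate, and the profile is an equilibrium if and only if neither of these single cross-over deviations pays.
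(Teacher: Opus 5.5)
Your proof is correct and follows the paper's argument. It uses the same derivative formula, the same $f(\alpha)$ second-derivative bound on the no-dead-zone piece, and the same sup/inf comparison with the continuity-at-$c_1$ strictness argument on the dead-zone piece. There you bound the two arm masses by interval length instead of changing variables to $t=B_1(x)$, but this is the identical estimate, and it is how the paper itself argues in \Cref{m:thm:D1}.

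Your reflection argument for the far side is a harmless addition that the paper leaves implicit. One small correction: ``the unique stationary point there'' should read ``the maximizer set''. As you noted yourself, $\{D=0\}$ can be a nondegenerate interval, and only the maximal far-side value enters the cross-over comparison.
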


\begin{proof}
By \cref{n:eq:master} with $k\equiv1$ and $\beta=0$,
\begin{equation}\label{n:eq:A3d}
\frac{\partial u_1}{\partial c_1}
=(1+\alpha)\Big[\big(V(\zin{1})-V(c_1)\big)-\big(V(c_1)-V(\zout{1}\vee0)\big)\Big],
\end{equation}
which is continuous in $c_1$ on $(0,c_2)$, and one-sidedly at $c_1=0$, where
it equals $(1+\alpha)V(\zin{1})>0$.

\emph{No-dead-zone piece.}  For $c_1\le\tfrac{c_2}{1+\alpha}$ (i.e.\
$\zout{1}\le0$), \cref{n:eq:A3d} reads
$(1+\alpha)[V(\zin{1})-2V(c_1)]$ and, since
$\zin{1}=c_1+\tfrac{c_2-c_1}\lambda$ has $(\zin{1})'=\tfrac{1+\alpha}\lambda$,
\[
\frac{\partial^2u_1}{\partial c_1^2}
=(1+\alpha)\Big[\tfrac{1+\alpha}\lambda\,v(\zin{1})-2\,v(c_1)\Big]
\;\le\;(1+\alpha)\Big[\tfrac{1+\alpha}\lambda\,\sup v-2\inf v\Big]\;\le\;0,
\]
the last step because $\rho_v\le f(\alpha)=\tfrac{2\lambda}{1+\alpha}$.

\emph{Dead-zone piece.}  For $c_1>\tfrac{c_2}{1+\alpha}$ (i.e.\ $\zout{1}>0$),
substitute $t=B_1(x)$ on each arm of \cref{n:eq:A3d}:
\begin{equation}\label{n:eq:A3dz}
\frac{\partial u_1}{\partial c_1}
=(1+\alpha)\int_0^\Delta
\Big[\tfrac1\lambda\,v\big(\zin{1}-\tfrac t\lambda\big)
-\tfrac1\alpha\,v\big(\zout{1}+\tfrac t\alpha\big)\Big]dt .
\end{equation}
Since $\rho_v\le g(\alpha)=\tfrac\lambda\alpha$, the integrand is pointwise
$\le\tfrac{\sup v}\lambda-\tfrac{\inf v}\alpha\le0$.  If the integral
vanished, the integrand would vanish for a.e.\ $t$, forcing
$v\equiv\sup v$ on $(c_1,\zin{1})$ and $v\equiv\inf v$ on $(\zout{1},c_1)$; as
$g(\alpha)>1$ implies $\sup v>\inf v$, this contradicts continuity of $v$ at
$c_1$.  Hence the derivative is strictly negative on the whole piece.

\emph{Single-peakedness and the characterization.}  The derivative is
continuous, starts positive, is non-increasing while the payoff is concave,
and is strictly negative on the dead-zone piece; hence $u_1(\cdot,c_2)$
increases to a maximizer set (an interval, possibly a point) inside
$\big(0,\tfrac{c_2}{1+\alpha}\big]$ and strictly decreases afterwards.  A
profile is an equilibrium iff each $c_i$ globally maximizes $u_i$ against the
other; by \Cref{n:lem:interior}(a) and single-peakedness, own-side global
optimality of an interior $c_1$ is equivalent to stationarity, i.e.\ to
\cref{n:eq:A3foc}, and the only remaining deviations are the tie (never
profitable, as $u_1(c_1,c_2)>0$) and the cross-over, as stated.
\end{proof}

\begin{remark}[Sharpness, and the direction of the answer]\label{n:rem:A3sharp}
\emph{(a)}  The constant $f(\alpha)$ cannot be improved in the concavity
criterion: for any ratio $\rho>f(\alpha)$ there is a symmetric ``mesa''
density with $\rho_v=\rho$ and a no-dead-zone configuration at which
$\partial^2u_1/\partial c_1^2>0$.  (Take $v$ equal to a high value on a
window around $\zin{1}$ and a low value at $c_1$; e.g.\ at $\alpha=1$, $v$ high on
$[0.4,0.6]$, $c_1=0.3$, $c_2=0.65$, so $\zin{1}=0.41\overline6$: the second
derivative is $2\big[\tfrac23v(\zin{1})-2v(c_1)\big]>0$ iff
$v(\zin{1})/v(c_1)>3=f(1)$.  Verified numerically: with ratio $2.9$ the payoff is
concave at every no-dead-zone configuration, with ratio $3.1$ it is not.)
Similarly $g(\alpha)$ is exactly the threshold below which dead zones are
impossible at stationary profiles (\Cref{n:prop:C2}).  Whether
\emph{equilibrium existence} fails above $F(\alpha)$ is open, exactly as it is
open above $\rho_v=4$ at $\alpha=0$.
\emph{(b)}  $F$ is strictly decreasing with $F(\alpha)\downarrow1$:
the answer to ``does alienation help existence?''\ is \emph{no} on this route.
The reason is the dead zone: it is the second, boundary-truncation-driven
piece of the payoff that destroys concavity, and it grows with $\alpha$.
Contrast with competitiveness, which \emph{enlarges} the umbrella at $\alpha=0$
(\Cref{n:prop:C4}: threshold $4/(1-\beta)$).
\emph{(c)}  For $\beta\in(0,1)$ and uniform costs the same two-piece argument
goes through with thresholds $f(\alpha)$ (unchanged: the opponent term only
helps concavity, since $\partial^2u_2/\partial c_1^2=
\tfrac{v(\zin{2})}\lambda+\tfrac{v(\zout{2})}\alpha\mathbf1_{\zout{2}<1}\ge0$) and
$g_\beta(\alpha)=\tfrac{(1+\alpha)(2+\alpha)}{\alpha(1+\alpha+\beta)+\beta(2+\alpha)}
\le g(\alpha)$ on the dead-zone piece; but the cross-over is no longer free
(\Cref{n:rem:reflection-beta}), so we do not obtain an existence corollary
there.
\end{remark}

Combining \Cref{n:thm:A3} with the reflection lemma yields existence:

\begin{corollary}[Existence for symmetric electorates]\label{n:cor:A3exist}
Let $\kappa\sim U(0,1)$, $\beta=0$, $\alpha>0$, let $v$ be symmetric about
$\tfrac12$ with $\rho_v\le F(\alpha)$.  Then a symmetric Nash equilibrium
exists: any root $c_1^*\in(0,\tfrac12)$ of
\begin{equation}\label{n:eq:hsym}
\begin{gathered}
h(c_1):=V\big(\zin{1}(c_1)\big)+V\big(\zout{1}(c_1)\vee0\big)-2V(c_1),\\
\zin{1}=c_1+\tfrac{1-2c_1}\lambda,\qquad \zout{1}=c_1-\tfrac{1-2c_1}\alpha,
\end{gathered}
\end{equation}
gives the equilibrium $(c_1^*,1-c_1^*)$, and such a root exists.
\end{corollary}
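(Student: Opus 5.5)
The plan combines \Cref{n:thm:A3} (own-side single-peakedness under $\rho_v\le F(\alpha)$) with \Cref{n:lem:reflection}, which disposes of cross-over deviations for symmetric electorates at $\beta=0$. That leaves two tasks: show that $h$ has a root in $(0,\tfrac12)$, and show that any such root yields a symmetric profile at which $c_1^*$ maximizes $u_1(\cdot,1-c_1^*)$ on $[0,1-c_1^*)$.

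\emph{Existence of a root.} Along the diagonal $c_2=1-c_1$ we have $\Delta=1-2c_1$. Comparing with \eqref{n:eq:A3d} at $k\equiv1$, we get $h(c_1)=\tfrac1{1+\alpha}\,\partial_{c_1}u_1(c_1,c_2)\big|_{c_2=1-c_1}$, since the inner and outer zeros in \eqref{n:eq:hsym} are exactly $\zin{1},\zout{1}$ evaluated at $c_2=1-c_1$. The function $h$ is continuous on $[0,\tfrac12)$: $V$ is continuous and the clipping $\zout{1}\vee0$ is continuous.

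At $c_1=0$, the outer zero is negative and $\zin{1}=1/\lambda>0$, so $h(0)=V(1/\lambda)>0$. As $c_1\uparrow\tfrac12$, both zeros tend to $\tfrac12$, so $h(c_1)\to V(\tfrac12)+V(\tfrac12)-2V(\tfrac12)=0$. This limit alone gives no sign, so I would expand to first order in $\varepsilon=\tfrac12-c_1$. Here $\zin{1}=\tfrac12-\varepsilon+2\varepsilon/\lambda$ and $\zout{1}=\tfrac12-\varepsilon-2\varepsilon/\alpha>0$, so
\[
h\approx v(\tfrac12)\,\varepsilon\Big[(-1+\tfrac2\lambda)+(-1-\tfrac2\alpha)+2\Big]=v(\tfrac12)\,\varepsilon\Big(\tfrac2\lambda-\tfrac2\alpha\Big)<0 .
\]
To make this rigorous without differentiability of $v$, I would use mean-value bounds with $\inf v$ and $\sup v$ on a small neighbourhood of $\tfrac12$, where continuity makes the ratio of these close to $1$. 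Since $\tfrac1\alpha>\tfrac1\lambda$ strictly, the sign survives. The intermediate value theorem then gives a root $c_1^*\in(0,\tfrac12)$.

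\emph{Root gives equilibrium.} By \Cref{n:thm:A3} applied with $c_2=1-c_1^*$, the payoff $u_1(\cdot,c_2)$ is single-peaked on $[0,c_2)$. Its derivative is continuous, and it is non-increasing on the concave piece, strictly negative on the dead-zone piece. A zero of the derivative therefore lies in the maximizer set, so $c_1^*$ is an own-side global maximizer. If $c_1^*$ lies in the dead-zone piece the derivative is strictly negative there, which contradicts $h(c_1^*)=0$; so the root automatically lies in the no-dead-zone piece. \Cref{n:lem:reflection}, whose hypotheses are $\beta=0$, symmetric $v$, and $c_2=1-c_1^*$ with $c_1^*<\tfrac12$, then makes $(c_1^*,1-c_1^*)$ a Nash equilibrium; candidate~2 is covered by the mirror symmetry built into that lemma.

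The main obstacle is the sign of $h$ near $\tfrac12$, where $h$ tends to $0$ and only a first-order estimate decides. If the neighbourhood estimate proves awkward, I would instead use the global bound $\rho_v\le g(\alpha)$ on the dead-zone piece, which appears to apply for $c_1$ near $\tfrac12$. There $\zout{1}>0$, and by \eqref{n:eq:A3dz} the derivative is strictly negative, so $h<0$ on that piece outright.
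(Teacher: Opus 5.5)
Your proposal is correct and follows the paper's proof essentially step for step. You apply the intermediate value theorem to $h$, using $h(0^+)=V(1/\lambda)>0$ and the first-order expansion $v(\tfrac12)\Delta(\tfrac1\lambda-\tfrac1\alpha)<0$ near $\tfrac12$. You then confine every root to the no-dead-zone branch $c_1^*\le 1/\lambda$ via the dead-zone estimate of \Cref{n:thm:A3}, obtain own-side global optimality from single-peakedness, and dispose of cross-over deviations with \Cref{n:lem:reflection}. Your fallback, using $\rho_v\le g(\alpha)$ on $(1/\lambda,\tfrac12)$, is exactly the argument the paper uses to locate the roots, and it also gives the negative sign of $h$ near $\tfrac12$ without any expansion.
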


\begin{proof}
$h$ is continuous on $(0,\tfrac12)$ with
$h(0^+)=V(\tfrac1\lambda)>0$ and, as $c_1\uparrow\tfrac12$ (so
$\Delta=1-2c_1\downarrow0$ and $\zout{1},\zin{1}\to\tfrac12$),
\[
h(c_1)=\big(V(\zin{1})-V(c_1)\big)-\big(V(c_1)-V(\zout{1})\big)
=v(\tfrac12)\,\Delta\Big(\tfrac1\lambda-\tfrac1\alpha\Big)+o(\Delta)<0
\]
for small $\Delta$, since $v$ is continuous and positive at $\tfrac12$ and
$\alpha<\lambda$.  By the intermediate value theorem a root
$c_1^*\in(0,\tfrac12)$ exists.  By the dead-zone part of \Cref{n:thm:A3}
(applied along the symmetric family: for $c_1>\tfrac1\lambda$ the profile has
$\zout{1}>0$ and $h<0$ by the pointwise comparison in \cref{n:eq:A3dz}, which
only used $\rho_v\le g(\alpha)$), every root lies in the no-dead-zone branch
$c_1^*\le\tfrac1\lambda$.  At such a root, $c_1^*$ is stationary for
$u_1(\cdot,c_2^*)$, hence by single-peakedness (\Cref{n:thm:A3}) a global
maximizer of the own side; \Cref{n:lem:reflection} then upgrades own-side
optimality to a Nash equilibrium.
\end{proof}

\subsection{Beyond uniform costs}

\begin{proposition}[Sufficiency for general $(v,k)$, non-increasing costs]\label{n:prop:A4}
Let $\beta=0$, $\alpha\ge0$, and let $k$ be $C^1$ and non-increasing on
$(0,1)$ with $\rho_k=k(0^+)/k(1^-)<\infty$.  If
\[
\rho_v\,\rho_k\le f(\alpha)=\frac{2(2+\alpha)}{1+\alpha}
\qquad\text{and}\qquad
\rho_v\le g(\alpha)=\frac{2+\alpha}{\alpha}
\quad(\text{vacuous if }\alpha=0),
\]
then for every $c_2$ the map $c_1\mapsto u_1(c_1,c_2)$ is single-peaked on
$[0,c_2)$ exactly as in \Cref{n:thm:A3}, and the equilibrium
characterization of \Cref{n:thm:A3} holds verbatim with the balance
conditions of \Cref{n:lem:master}.  If moreover $v$ is symmetric, a symmetric
equilibrium exists.  At $\alpha=0$ the conditions reduce to
$\rho_v\rho_k\le4$, generalizing the $\rho_v\le4$ criterion of \Cref{prop:fourflat} to
non-uniform costs.
\end{proposition}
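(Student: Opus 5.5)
The plan is to rerun the proof of \Cref{n:thm:A3} with the uniform-cost derivative \eqref{n:eq:A3d} replaced by the general master derivative \eqref{n:eq:master} at $\beta=0$. Since $k$ is bounded, \Cref{n:lem:master} applies: $u_1(\cdot,c_2)$ is $C^1$ on $[0,c_2)$, and its right derivative at $0$ is $(1+\alpha)\int_0^{\zin{1}}k(B_1)v>0$. We split $[0,c_2)$ into the no-dead-zone piece $c_1\le c_2/(1+\alpha)$ and the dead-zone piece, exactly as in \Cref{n:thm:A3}. The aim is to show the payoff is concave on the first piece and strictly decreasing on the second. Single-peakedness, the characterization through the balance conditions, the tie check and the single cross-over check then follow word for word from the last paragraph of the proof of \Cref{n:thm:A3}.

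\emph{Dead-zone piece.} Change variables $t=B_1(x)$ on each arm, as in \eqref{n:eq:A3dz}. This gives
\[
\partial_{c_1}u_1=(1+\alpha)\int_0^\Delta k(t)\Big[\tfrac1\lambda v\big(\zin{1}-\tfrac t\lambda\big)-\tfrac1\alpha v\big(\zout{1}+\tfrac t\alpha\big)\Big]dt .
\]
The weight $k(t)$ is the same on both arms, so the pointwise sign argument from \Cref{n:thm:A3} carries over unchanged. The bracket is $\le\sup v/\lambda-\inf v/\alpha\le0$ when $\rho_v\le g(\alpha)$, and the strictness argument through continuity of $v$ at $c_1$ repeats as before. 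This is why $\rho_k$ does not enter the $g$ condition.

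\emph{No-dead-zone piece (the main obstacle).} Here the derivative is
\[
\partial_{c_1}u_1=(1+\alpha)\Big[\int_0^\Delta\tfrac1\lambda k(t)v\big(\zin{1}-\tfrac t\lambda\big)dt-k(\Delta)V(c_1)-\int_{b_1}^{\Delta}\tfrac1\alpha k(t)v(\cdot)\,dt\Big],
\]
where the outer arm is written explicitly. The cleanest form is the $x$-form, $\int_{c_1}^{\zin{1}}k(B_1)v-\int_0^{c_1}k(B_1)v$, differentiated in $c_1$ by Leibniz. The boundary terms are $\tfrac{1+\alpha}\lambda k(0)v(\zin{1})$ at the inner zero and $-2k(\Delta)v(c_1)$ at the apex, where both integrands equal $k(\Delta)v(c_1)$. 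The interior terms are $(1+\alpha)\big[\int_{c_1}^{\zin{1}}k'(B_1)v+\int_0^{c_1}k'(B_1)v\big]$, and these are $\le0$ because $k$ is non-increasing. So the second derivative is at most
\[
(1+\alpha)^2\Big[\tfrac1\lambda k(0^+)\,v(\zin{1})-\tfrac{2}{1+\alpha}k(\Delta)v(c_1)\Big]\le(1+\alpha)^2\Big[\tfrac{k(0^+)\sup v}{\lambda}-\tfrac{2k(1^-)\inf v}{1+\alpha}\Big],
\]
and this is $\le0$ exactly when $\rho_v\rho_k\le 2\lambda/(1+\alpha)=f(\alpha)$. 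The step to watch is the bookkeeping of the factor $(1+\alpha)$ on the moving limits. The velocity of $\zin{1}$ is $(1+\alpha)/\lambda$. The apex term comes from the two integrals meeting at $c_1$ with opposite signs, and it contributes $-2$ times the common integrand. I would double-check these coefficients against the uniform-cost case $k\equiv1$, where they must reproduce $\tfrac{1+\alpha}\lambda v(\zin{1})-2v(c_1)$ from \Cref{n:thm:A3}. A $C^1$ hypothesis on $k$ (possibly with $k'$ unbounded near $0$) should be handled by approximation or by the layer-cake representation. Across the regime switch the derivative stays continuous by \Cref{n:lem:master}.

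\emph{Conclusion.} For symmetric $v$, we take a root of the symmetric stationarity function $h$, defined as in \eqref{n:eq:hsym} but with $k$-weighted integrals. The boundary behavior of $h$ is $h(0^+)>0$, and $h(c_1)\sim k(0^+)v(\tfrac12)\Delta(\tfrac1\lambda-\tfrac1\alpha)<0$ near $\tfrac12$. The dead-zone sign argument keeps the root on the no-dead-zone branch, single-peakedness makes it an own-side maximum, and \Cref{n:lem:reflection}, which allows arbitrary $K$, upgrades it to a Nash equilibrium. At $\alpha=0$ we have $f=4$ and $g$ is vacuous, so the condition reduces to $\rho_v\rho_k\le4$.
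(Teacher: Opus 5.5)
Your proposal is correct and is essentially the paper's proof. It uses the same split into the no-dead-zone and dead-zone pieces, and the same Leibniz second derivative $(1+\alpha)\big[\tfrac{1+\alpha}{\lambda}k(0^+)v(\zin{1})-2k(\Delta)v(c_1)+(1+\alpha)\int k'(B_1)v\big]$, bounded via $k(0^+)=\sup k$ and $k(\Delta)\ge\inf k$. It also has the same $k$-weighted pointwise comparison on the dead-zone piece, and the same existence argument through the $k$-weighted $h$ and \Cref{n:lem:reflection}.

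The only blemish is your $t$-form display on the no-dead-zone piece, which subtracts the outer arm twice. For $\alpha>0$ the integral $\int_{b_1}^{\Delta}\frac1\alpha k(t)v\,dt$ is already the whole outer arm, and $k(\Delta)V(c_1)$ is its $\alpha=0$ form. Nothing depends on this, since you do the actual computation in the $x$-form.
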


\begin{proof}
\emph{No-dead-zone piece} ($\zout{1}\le0$).  Differentiating the first bracket of
\cref{n:eq:master} once more (now the moving tent-edge terms survive with
factor $k$ rather than $K$; the calculation is licensed by $k\in C^1$),
\[
\frac{\partial^2u_1}{\partial c_1^2}
=(1+\alpha)\Big[\tfrac{1+\alpha}\lambda\,k(0^+)\,v(\zin{1})
-2\,k(\Delta)\,v(c_1)
+(1+\alpha)\!\!\int_{x\in\text{tent}}\!\!k'(B_1)\,v\Big].
\]
Here we used $\partial B_1/\partial c_1=\pm(1+\alpha)$ on the two arms and
that the lower limit $0$ is fixed.  Since $k'\le0$, the integral term is
$\le0$; since $k$ is non-increasing, $k(0^+)=\sup k$ and
$k(\Delta)\ge\inf k$, so the bracket is at most
$\tfrac{1+\alpha}\lambda\sup k\sup v-2\inf k\inf v\le0$ by
$\rho_v\rho_k\le\tfrac{2\lambda}{1+\alpha}$.

\emph{Dead-zone piece} ($\zout{1}>0$, only if $\alpha>0$).  As in
\cref{n:eq:A3dz}, but keeping $k$:
\[
\frac{\partial u_1}{\partial c_1}
=(1+\alpha)\int_0^\Delta k(t)
\Big[\tfrac1\lambda v\big(\zin{1}-\tfrac t\lambda\big)
-\tfrac1\alpha v\big(\zout{1}+\tfrac t\alpha\big)\Big]dt<0,
\]
by the same pointwise comparison and strictness argument as in
\Cref{n:thm:A3}; note this piece needs \emph{no} assumption on $k$ at all.
The junction continuity, single-peakedness, characterization, and (for
symmetric $v$) the existence argument via \Cref{n:lem:reflection} are
verbatim as before; for existence, the sign analysis of \cref{n:eq:hsym}
carries over with the integrands weighted by $k>0$.
\end{proof}

\begin{remark}
The $C^1$ assumption is for the displayed second-derivative formula only; the
conclusion extends to continuous non-increasing $k$ by approximating $k$
uniformly with $C^1$ non-increasing densities (concavity and single-peakedness
survive pointwise limits of payoffs).  For $\beta>0$ and general $(v,k)$ the
opponent term $\partial^2u_2/\partial c_1^2$ contains $k'(B_2)$-integrals of
indefinite sign, and we make no claim; the uniform-cost case is
\Cref{n:rem:A3sharp}(c), and the uniform-voter case is settled completely in
\Cref{n:sec:comp}.
\end{remark}

\subsection{Dead zones}

\begin{proposition}[When equilibria have dead zones]\label{n:prop:C2}
Let $\alpha>0$.
\begin{enumerate}
\item[\textup{(a)}] \textup{(Necessity; any $K$, $\beta=0$.)}  If a profile
$0<c_1<c_2<1$ with $\zout{1}>0$ satisfies candidate $1$'s first-order condition,
then $\rho_v\ge\tfrac{2+\alpha}\alpha$.  Equivalently, if
$\alpha<\tfrac2{\rho_v-1}$, no stationary profile---in particular no
equilibrium---has a dead zone.
\item[\textup{(b)}] \textup{(Uniform voters never; any $K$, any
$\beta\in[0,1)$.)}  If $x\sim U(0,1)$, then no Nash equilibrium of the
$\beta$-game has a dead zone: at every equilibrium $b_1>0$ and $b_2>0$.
\item[\textup{(c)}] \textup{(Sufficiency for symmetric electorates; uniform
costs, $\beta=0$.)}  If $\kappa\sim U(0,1)$ and $v$ is symmetric about $\tfrac12$
with
\begin{equation}\label{n:eq:C2crit}
V\!\Big(\frac{2(1+\alpha)}{(2+\alpha)^2}\Big)\;>\;2\,V\!\Big(\frac1{2+\alpha}\Big),
\end{equation}
then a symmetric stationary profile with a dead zone exists.
\end{enumerate}
\end{proposition}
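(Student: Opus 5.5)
For part~\textup{(a)} I plan to reuse the dead-zone form of the first-order condition from the proof of \Cref{n:prop:A4}. When $\zout{1}>0$ and $\beta=0$, substitute $t=B_1(x)$ on each arm of \cref{n:eq:master}; this should give
\[
\frac{\partial u_1}{\partial c_1}=(1+\alpha)\int_0^\Delta k(t)\Big[\tfrac1\lambda v\big(\zin{1}-\tfrac t\lambda\big)-\tfrac1\alpha v\big(\zout{1}+\tfrac t\alpha\big)\Big]dt ,
\]
valid for any $K$. Suppose $\rho_v<\lambda/\alpha$. Then the bracket is pointwise at most $\tfrac{\sup v}\lambda-\tfrac{\inf v}\alpha<0$. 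Since $k>0$ on $(0,1)$, the derivative is strictly negative, so the first-order condition fails. By contraposition, stationarity with a dead zone forces $\rho_v\ge\lambda/\alpha=\tfrac{2+\alpha}\alpha$. The equivalent form $\alpha<\tfrac2{\rho_v-1}$ is just this inequality rearranged.

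For part~\textup{(b)} I would argue with \cref{n:eq:DuV}. Suppose an equilibrium has $b_1\le0$. By \Cref{n:lem:interior}(c) it is untied and interior, so by \Cref{n:lem:master} candidate~1's first-order condition must hold. With $b_1^+=0$, however, \cref{n:eq:DuV} reads
\[
\alpha\,\partial U_1/\partial c_1=-(1-\beta)\alpha A\,K(\Delta)-\beta K(b_2^+).
\]
This is strictly negative, because $\beta<1$, $A>0$ and $K(\Delta)>0$ for $\Delta\in(0,1)$. That contradicts stationarity. The boundary case $b_1=0$ is included, since $K(0)=0$ makes the formula identical. The mirror argument rules out $b_2\le0$. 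The same computation also shows the candidate strictly gains by moving outward, which is the content of \Cref{rem:deadzone}.

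For part~\textup{(c)} I would restrict attention to the symmetric family $(c_1,1-c_1)$ and use $h$ from \cref{n:eq:hsym}. By \cref{n:eq:A3d} with $k\equiv1$, a root of $h$ makes candidate~1 stationary, and symmetry of $v$ gives candidate~2's mirror condition automatically. Along this family $\zout{1}=c_1-\tfrac{1-2c_1}\alpha$, which is positive exactly when $c_1>\tfrac1\lambda$. At $c_1=\tfrac1\lambda$ we have $\Delta=\alpha/\lambda$, $\zout{1}=0$ and $\zin{1}=\tfrac1\lambda+\tfrac\alpha{\lambda^2}=\tfrac{2(1+\alpha)}{\lambda^2}$. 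Hence $h(\tfrac1\lambda)=V\big(\tfrac{2(1+\alpha)}{\lambda^2}\big)-2V(\tfrac1\lambda)$, which is positive by \cref{n:eq:C2crit}. As $c_1\uparrow\tfrac12$, the expansion in the proof of \Cref{n:cor:A3exist} gives $h<0$; that expansion used only continuity and positivity of $v$ at $\tfrac12$. Since $h$ is continuous on $(0,\tfrac12)$, the intermediate value theorem yields a root $c_1^*\in(\tfrac1\lambda,\tfrac12)$. There $\zout{1}>0$ strictly, so $(c_1^*,1-c_1^*)$ is a symmetric stationary profile with a dead zone.

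I expect the main obstacle to be technical rather than conceptual. It lies in justifying the derivative formulas across the regime switch at $\zout{1}=0$: continuity of $h$ at $c_1=\tfrac1\lambda$, and, in \textup{(b)}, the use of \cref{n:eq:DuV} at $b_1=0$. I would handle both by citing the junction-continuity part of \Cref{n:lem:master}, where the finiteness hypothesis holds because the tent edges avoid the boundary or $k\equiv1$.
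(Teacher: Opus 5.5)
Your proposal is correct and follows the paper's proof in (a) and (c). In (a) you use the $t$-substituted dead-zone derivative weighted by $k(t)$, which is pointwise negative when $\rho_v<\lambda/\alpha$. In (c) you apply the intermediate value theorem to $h$ on the branch $(\tfrac1\lambda,\tfrac12)$, where $h$ is positive at the junction by \cref{n:eq:C2crit} and negative near $\tfrac12$. In (b) you route differently but validly: you substitute $K(b_1^+)=0$ directly into candidate~1's own condition \cref{n:eq:DuV}, whereas the paper subtracts the two first-order conditions to get $K(b_1^+)=K(b_2^+)$ and then excludes the common value $0$. Your version is slightly more direct, since one candidate's condition suffices (as in \Cref{m:prop:D3}(c)); the paper's also delivers $b_1=b_2$, which it reuses for symmetry in \Cref{n:thm:A2}.
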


\begin{proof}
\emph{(a)}  With $\zout{1}>0$, the first-order condition from
\cref{n:eq:master} is the vanishing of \cref{n:eq:A3dz} (with the weight
$k(t)$ for general $K$).  If $\rho_v<\tfrac\lambda\alpha$ the integrand is
pointwise negative as in \Cref{n:thm:A3}, and if
$\rho_v=\tfrac\lambda\alpha$ the strictness argument there applies; either
way the integral cannot vanish.  Hence stationarity forces
$\rho_v\ge\tfrac\lambda\alpha$, i.e.\ $\alpha\ge\tfrac2{\rho_v-1}$.

\emph{(b)}  By \Cref{n:lem:interior}(c) and \Cref{n:lem:master}, an
equilibrium is untied, interior, and satisfies both first-order conditions.
By \cref{n:eq:DuV} and its mirror image,
\[
(1+\alpha)K(b_1^+)-\beta K(b_2^+)
=(1-\beta)\,\alpha A\,K(\Delta)
=(1+\alpha)K(b_2^+)-\beta K(b_1^+),
\]
so $(1+\alpha+\beta)\big[K(b_1^+)-K(b_2^+)\big]=0$, i.e.\
$K(b_1^+)=K(b_2^+)$.  If this common value is $0$, the first-order condition
gives $(1-\beta)\alpha A\,K(\Delta)=0$, impossible since $\beta<1$ and
$K(\Delta)>0$ for $\Delta\in(0,1)$.  Hence $K(b_1^+)=K(b_2^+)>0$, which forces
$b_1>0$ and $b_2>0$.

\emph{(c)}  Consider the symmetric family $c_2=1-c_1$ and $h$ from
\cref{n:eq:hsym}.  The dead-zone branch is $c_1\in(\tfrac1\lambda,\tfrac12)$
(indeed $\zout{1}>0\iff\alpha c_1>\Delta=1-2c_1\iff c_1>\tfrac1\lambda$).  At the
junction $c_1=\tfrac1\lambda$ we have $\Delta=\tfrac\alpha\lambda$,
$\zin{1}=\tfrac1\lambda+\tfrac\alpha{\lambda^2}=\tfrac{2(1+\alpha)}{\lambda^2}$
and $\zout{1}=0$, so
$h(\tfrac1\lambda)=V\big(\tfrac{2(1+\alpha)}{\lambda^2}\big)
-2V(\tfrac1\lambda)>0$ by \cref{n:eq:C2crit}, while
$h(\tfrac12^-)<0$ as in the proof of \Cref{n:cor:A3exist}.  By the
intermediate value theorem, $h$ has a root in the open dead-zone branch.
\end{proof}

\begin{remark}\label{n:rem:C2}
\emph{(a)}  Part (b) removes the last gap in the uniform-voter theory: the
characterizations of \Cref{n:sec:comp} never need a dead-zone case.  At $\beta=1$ the
conclusion fails---inside the band $E(\alpha)$ \emph{both} tents are interior,
i.e.\ $b_1,b_2\le0$, which is exactly what produces the band---so
the dichotomy at $\beta=1$ is sharp.
\emph{(b)}  Criterion \eqref{n:eq:C2crit} compares the mass of the widest
possible inner arm with twice the mass below the junction position; for
uniform voters its left side minus right side is
$-\tfrac{2}{\lambda^2}<0$, consistent with (b).  Numerically the criterion is
exact for symmetric Beta electorates: for $v=\mathrm{Beta}(n,n)$,
$n\in\{2,4,8\}$ and $\alpha\in\{1,2,4\}$, dead-zone stationary profiles exist
in precisely the cases where \cref{n:eq:C2crit} holds (it fails only for
$n=2$, $\alpha=1$).  For $v=\mathrm{Beta}(8,8)$ and $\alpha=2$ the (unique)
symmetric stationary profile $(0.381,0.619)$ has $\zout{1}=0.26>0$ and is a
genuine equilibrium (mutual global best responses, verified numerically):
dead-zone \emph{equilibria} are non-empty.
\emph{(c)}  Parts (a) and (c) bracket the phenomenon:
dead zones require $\alpha\ge\tfrac2{\rho_v-1}$ and concentrated electorates,
and are guaranteed (at the stationary level) once the center is heavy enough
that \cref{n:eq:C2crit} holds.
\end{remark}

%%%%%%%%%%%%%%%%%%%%%%%%%%%%%%%%%%%%%%%%%%%%%%%%%%%%%%%%%%%%%%%%%%%%%%%%%%%%%%
\section{Results for competitive candidates ($\beta>0$)}\label{n:sec:comp}

Throughout this section $\beta\in[0,1)$ unless stated otherwise; all results
specialize at $\beta=0$ to (and strengthen) the vote-share results of \Cref{sec:alien}.
Write
\[
\gamma_\beta:=\frac{1+\beta}{1-\beta}\in[1,\infty),
\qquad
\psi_\beta(t):=t+\gamma_\beta\,\frac{K(t)}{k(t)},
\qquad
\chi_\beta(t):=t+\frac{1+\beta}2\,\frac{K(t)}{k(t)} .
\]
The two deformations of $\psi(t)=t+K(t)/k(t)$ play distinct roles below:
$\psi_\beta$ governs \emph{stationarity}, $\chi_\beta$ governs
\emph{single-peakedness}; at $\beta=0$ they coincide only in their
monotonicity class, not in value, and the wedge between the coefficients
$\gamma_\beta$ and $\tfrac{1+\beta}2$ is what \Cref{n:ex:phi} exploits.

\subsection{Uniform voters, $\alpha=0$: complete characterization}

\begin{theorem}[Uniform voters, $\alpha=0$: characterization and uniqueness]\label{n:thm:A1}
Let $x\sim U(0,1)$, $\alpha=0$, $\beta\in[0,1)$, and let $K$ satisfy
\Cref{ass:reg}.
\begin{enumerate}
\item[\textup{(a)}] \textup{(Necessity.)}  Every Nash equilibrium is
symmetric, $c_1+c_2=1$, with gap $\Delta\in(0,1)$ satisfying
$\psi_\beta(\Delta)=1$.
\item[\textup{(b)}] \textup{(Characterization.)}  A symmetric profile with
gap $\Delta$ and $c_2=\tfrac{1+\Delta}2$ is a Nash equilibrium if and only if
$\Delta$ maximizes
\[
F(t):=\big(c_2-t-\beta(1-c_2)\big)K(t)+\tfrac{1-\beta}2\,G(t)
\qquad\text{over } t\in(0,c_2].
\]
\item[\textup{(c)}] \textup{(Uniqueness under monotone $K/k$.)}  If $K/k$ is
non-decreasing on $(0,1)$, then $\psi_\beta$ is strictly increasing,
$\psi_\beta(t)=1$ has a unique root $\Delta^*_\beta\in(0,1)$, and the game has
the \emph{unique} Nash equilibrium
$\big(\tfrac{1-\Delta^*_\beta}2,\tfrac{1+\Delta^*_\beta}2\big)$.
\end{enumerate}
\end{theorem}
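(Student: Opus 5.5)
The plan is to work entirely with the closed form \eqref{eq:u_uvoters}. With $\alpha=0$ and uniform voters, at any profile $c_1<c_2$ we have $u_1=c_1K(\Delta)+\tfrac12G(\Delta)$ and $u_2=(1-c_2)K(\Delta)+\tfrac12G(\Delta)$. Hence $U_1=u_1-\beta u_2$ is an explicit function of $(c_1,c_2)$, and all three parts reduce to one-variable calculus.

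\emph{Part (a).} By \Cref{n:lem:interior}(b), every equilibrium is untied and interior, so both first-order conditions hold (\Cref{n:lem:master}). Using $\partial_{c_1}\Delta=-1$, I expect
\[
\frac{\partial U_1}{\partial c_1}=\tfrac{1+\beta}2K(\Delta)-\big(c_1-\beta(1-c_2)\big)k(\Delta),
\]
and the mirror condition $\tfrac{1+\beta}2K(\Delta)=\big((1-c_2)-\beta c_1\big)k(\Delta)$. Since $k(\Delta)>0$, subtracting the two conditions gives $(1+\beta)\big(c_1-(1-c_2)\big)=0$, so the profile is symmetric. Substituting $c_1=1-c_2=\tfrac{1-\Delta}2$ into either condition gives $(1+\beta)K(\Delta)=(1-\beta)(1-\Delta)k(\Delta)$, which is exactly $\psi_\beta(\Delta)=1$.

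\emph{Part (b).} Fix $c_2=\tfrac{1+\Delta}2$ and parametrize candidate~1's deviations by their distance $t$ from the opponent.
\begin{itemize}
\item An own-side deviation $c_1'=c_2-t$ gives payoff $U_1=(c_2-t)K(t)+\tfrac12G(t)-\beta\big[(1-c_2)K(t)+\tfrac12G(t)\big]=F(t)$.
\item A cross-over deviation $c_1'=c_2+t$, with $t\le 1-c_2$, gives $(1-c_2-t)K(t)+\tfrac12G(t)-\beta\big[c_2K(t)+\tfrac12G(t)\big]$. This is at most $F(t)$, because $1-c_2\le c_2$ makes the first coefficient smaller and the subtracted coefficient larger.
\item The tie gives $0=F(0^+)$.
\end{itemize}
So candidate~1 has no profitable deviation if and only if $t=\Delta$ maximizes $F$ on $(0,c_2]$. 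Candidate~2's case follows by the reflection $x\mapsto1-x$, which preserves the symmetric profile.

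\emph{Part (c).} First I would establish the root. If $K/k$ is non-decreasing, then $\psi_\beta$ is strictly increasing. Its limit at $0^+$ is $0$: a positive limit would bound $r=k/K$ near $0$ and contradict $K(0)=0$, exactly as in the proof of \Cref{thm_uniform_votes}(b). Also $\psi_\beta\ge\psi$, and that same proof shows $\psi$ exceeds $1$ somewhere in $(0,1)$. The intermediate value theorem then gives a unique root $\Delta^*_\beta$, and by (a) every equilibrium is the symmetric profile with this gap.

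Next I would verify the equilibrium property through (b). Writing $a:=c_2-\beta(1-c_2)$, I expect
\[
F'(t)=k(t)\big[a-\chi_\beta(t)\big],\qquad \chi_\beta(t)=t+\tfrac{1+\beta}2\,\frac{K(t)}{k(t)}.
\]
Since $K/k$ is non-decreasing, $\chi_\beta$ is strictly increasing, so $F$ is strictly single-peaked with its peak at the unique solution of $\chi_\beta(t)=a$. Using $\gamma_\beta K/k=1-\Delta^*_\beta$ at the root, both $\chi_\beta(\Delta^*_\beta)$ and $a$ equal $\tfrac{1+\Delta^*_\beta}2-\beta\tfrac{1-\Delta^*_\beta}2$. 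Hence the peak is at $t=\Delta^*_\beta$, and (b) gives existence and uniqueness.

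The main obstacle I anticipate is bookkeeping rather than depth. The cross-over comparison must be carried out for the $\beta$-objective, where \Cref{n:lem:reflection} is unavailable (\Cref{n:rem:reflection-beta}); the termwise comparison of coefficients above is what I would rely on there. The other point needing care is checking that the stationarity coefficient $\gamma_\beta$ and the single-peakedness coefficient $\tfrac{1+\beta}2$ line up at the root, which the identity $\chi_\beta(\Delta^*_\beta)=a$ handles.
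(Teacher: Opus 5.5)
Your proposal is correct and follows the paper's proof essentially step for step: the same closed-form payoffs, the same domination of cross-over deviations by $(1+\beta)(2c_2-1)K(t)\ge0$, symmetry by subtracting the two first-order conditions, and single-peakedness of $F$ via $F'=k\,[a-\chi_\beta]$ with $\chi_\beta(\Delta^*_\beta)=a$ at the root. The paper packages that last check as the identity $a-\chi_\beta(\Delta)=\tfrac{1-\beta}{2}\bigl[1-\psi_\beta(\Delta)\bigr]$, and it reruns the integration argument with $\gamma_\beta$ to show $\psi_\beta>1$ somewhere, whereas you get this from $\psi_\beta\ge\psi$ and the proof of \Cref{thm_uniform_votes}(b); both differences are cosmetic.
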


\begin{proof}
Throughout, fix the opponent at $c_2$ and parameterize candidate $1$'s
own-side positions by the distance $t=c_2-c_1\in(0,c_2]$.  By the $\alpha=0$
closed forms below \cref{n:eq:uV},
$u_1=(c_2-t)K(t)+\tfrac12G(t)$ and $u_2=(1-c_2)K(t)+\tfrac12G(t)$, so
$U_1=u_1-\beta u_2=F(t)$ as displayed.  By \Cref{n:lem:master},
$F\in C^1$ with
\begin{equation}\label{n:eq:Fprime}
F'(t)=\big(Q-t\big)k(t)-\tfrac{1+\beta}2K(t)
=k(t)\,\big[Q-\chi_\beta(t)\big],
\qquad Q:=c_2-\beta(1-c_2).
\end{equation}

\emph{Cross-over deviations are dominated whenever $c_2\ge\tfrac12$.}  For
$s\in(0,1-c_2]$ compare the mirror pair $c_1'=c_2-s$ (own side) and
$c_1''=c_2+s$ (far side).  The closed forms give
\begin{align*}
U_1(c_1',c_2)-U_1(c_1'',c_2)
&=\big[(c_2-s)-(1-c_2-s)\big]K(s)+\beta\big[c_2-(1-c_2)\big]K(s)\\
&=(1+\beta)(2c_2-1)K(s)\ge0 .
\end{align*}
Since the far-side range $s\le1-c_2$ is contained in the own-side range
$s\le c_2$, we get
$\sup_{c_1''>c_2}U_1\le\sup_{c_1'<c_2}U_1=\sup_{t}F(t)$.  Deviating to the
tie $c_1'=c_2$ yields $0$, which is dominated whenever $\sup F>0$.

\emph{(b).}  ($\Leftarrow$)  Let $\Delta$ maximize $F$ over $(0,c_2]$ at a
symmetric profile.  Then own-side deviations are unprofitable by definition;
$c_2=\tfrac{1+\Delta}2>\tfrac12$, so cross-overs are dominated; and
$F(\Delta)=(1-\beta)u_1>0$ dominates the tie.  Candidate $2$'s problem is the
mirror image (uniform voters are symmetric), so the profile is an
equilibrium.  ($\Rightarrow$)  If the symmetric profile is an equilibrium,
$c_1$ is a global maximizer of $U_1(\cdot,c_2)$ over $[0,c_2)$, i.e.\
$\Delta$ maximizes $F$ over $(0,c_2]$.

\emph{(a).}  By \Cref{n:lem:interior}(b)(c), an equilibrium is untied and
interior; \Cref{n:lem:master} gives both first-order conditions.  In the
$t$-parameterization these read $Q_1=\chi_\beta(\Delta)$ and
$Q_2=\chi_\beta(\Delta)$ with $Q_1=c_2-\beta(1-c_2)$ and, mirroring,
$Q_2=(1-c_1)-\beta c_1$.  Subtracting, $(1+\beta)(c_1+c_2-1)=0$, so the
equilibrium is symmetric.  Substituting $c_2=\tfrac{1+\Delta}2$ into
$Q=\chi_\beta(\Delta)$ and simplifying yields the identity
\begin{equation}\label{n:eq:QchiPsi}
Q-\chi_\beta(\Delta)
=\frac{1-\beta}2\,\big[\,1-\psi_\beta(\Delta)\,\big],
\end{equation}
so stationarity of the symmetric profile is exactly $\psi_\beta(\Delta)=1$.

\emph{(c).}  If $K$ is super-regular then $\psi_\beta$ and $\chi_\beta$
are strictly increasing.  \emph{Root existence.}  First,
$\lim_{t\downarrow0}K/k=0$: the limit $L\ge0$ exists by monotonicity, and if
$L>0$ then $r=k/K\le1/L$ near $0$, contradicting \cref{n:eq:rdiv}; hence
$\psi_\beta(0^+)=0<1$.  Second, $\psi_\beta(t)>1$ for some $t<1$: otherwise
$\gamma_\beta K/k\le1-t$ near $1$, i.e.\ $r\ge\gamma_\beta/(1-t)$, making
$\int_{t_0}^{1^-}r=\infty$ for some $t_0<1$---impossible since
$\int_{t_0}^{1^-}r=\log K(1)-\log K(t_0)<\infty$.  By continuity and strict monotonicity
there is a unique root $\Delta^*_\beta\in(0,1)$.  \emph{The root is an
equilibrium.}  At the symmetric profile with gap $\Delta^*_\beta$:
$\chi_\beta$ is strictly increasing and crosses $Q$ exactly once; by
\cref{n:eq:QchiPsi} the crossing is at $\Delta^*_\beta$, so by
\cref{n:eq:Fprime}, $F'>0$ before and $F'<0$ after: $F$ is single-peaked
with peak $\Delta^*_\beta$, which is therefore its maximizer (the peak is
interior: $\chi_\beta(c_2)-Q=\tfrac{1+\beta}2\tfrac{K}{k}(c_2)
+\beta(1-c_2)>0$ shows $F'(c_2^-)<0$, ruling out $c_1=0$).  By (b) the
profile is an equilibrium.  \emph{Uniqueness.}  By (a) any equilibrium is
symmetric with $\psi_\beta(\Delta)=1$; the root is unique.
\end{proof}

\subsection{Uniform voters, $\alpha>0$: complete characterization}

Define, for $\Delta\in(\Delta_0,1)$,
\begin{equation}\label{n:eq:JLdef}
J(\Delta):=\log\frac{K(\Delta)}{K\!\big(b_s(\Delta)\big)}
=\int_{b_s(\Delta)}^{\Delta}r(s)\,ds,
\qquad
L_\beta:=\log\frac{(1+\alpha-\beta)\,\lambda}{2(1-\beta)(1+\alpha)} .
\end{equation}
Note $L_\beta>0$ always: the ratio minus one has the sign of
$(1+\alpha-\beta)\lambda-2(1-\beta)(1+\alpha)=\alpha(1+\alpha+\beta)>0$.
Note also that $L_\beta$ does not depend on $K$; this is what makes the
comparative statics in \Cref{n:prop:B2,n:prop:B3} clean.

\begin{theorem}[Uniform voters, $\alpha>0$: unique equilibrium and its gap]\label{n:thm:A2}
Let $x\sim U(0,1)$, $\alpha>0$, $\beta\in[0,1)$, and let $K$ satisfy
\Cref{ass:reg} and be super-regular.  Then the game has a
unique Nash equilibrium.  It is symmetric, has no dead zones, and its gap is
the unique root $\Delta^*=\Delta^*(\alpha,\beta)\in(\Delta_0,1)$ of
$J(\Delta)=L_\beta$; equivalently, of the first-order condition
\begin{equation}\label{n:eq:A2foc}
(1+\alpha-\beta)\,K\!\big(b_s(\Delta)\big)
=(1-\beta)\,\frac{2(1+\alpha)}{2+\alpha}\,K(\Delta).
\end{equation}
In particular the equilibrium is strictly more polarized than every symmetric
margin-game equilibrium: $\Delta^*>\Delta_0$.
\end{theorem}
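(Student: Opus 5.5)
The proof has three steps: necessity (every equilibrium is symmetric, with no dead zone, and its gap solves \cref{n:eq:A2foc}); uniqueness of the root of $J=L_\beta$; and sufficiency (the symmetric profile at that root is a global mutual best response).

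\emph{Necessity.} By \Cref{n:lem:interior}(c) every equilibrium is untied and interior. By \Cref{n:lem:master} both first-order conditions hold. By \Cref{n:prop:C2}(b) there is no dead zone, so $b_1,b_2>0$. The argument in that proof already gives $K(b_1)=K(b_2)$. Since $K$ is strictly increasing, $b_1=b_2$, and then \cref{n:eq:bident} with $\alpha>0$ forces $c_1+c_2=1$. So $b_1=b_2=b_s(\Delta)$, and \cref{n:eq:DuV} set to zero is exactly \cref{n:eq:A2foc}. Taking logarithms gives $J(\Delta)=L_\beta$. Because $b_s>0$, we also get $\Delta>\Delta_0$.

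\emph{Uniqueness of the root.} I follow the proof of \Cref{thm:alien-unique} with $\log(\lambda/2)$ replaced by $L_\beta$. The interval $[b_s(\Delta),\Delta]$ has length $\tfrac\alpha2(1-\Delta)$, which decreases in $\Delta$, and its left end increases. Since $r$ is positive and non-increasing, $J$ is strictly decreasing. At the ends, $J(\Delta_0^+)=+\infty$ by \cref{n:eq:rdiv}, and $J(1^-)=0<L_\beta$, so $J=L_\beta$ has exactly one root $\Delta^*\in(\Delta_0,1)$.

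\emph{Sufficiency.} Fix $c_2=\tfrac{1+\Delta^*}2$ and treat candidate~1's deviations in three regions.

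On the own side in the no-dead-zone region, \cref{n:eq:DuV} gives the sign of $\partial U_1/\partial c_1$ as that of $(1+\alpha)K(b_1)-\beta K(b_2)-(1-\beta)\alpha A K(\Delta)$. Here $b_2=\Delta-\alpha(1-c_2)$ moves with $c_1$ at the same unit rate as $\Delta$. Dividing by $K(\Delta)$, the expression becomes $(1+\alpha)e^{-\int_{b_1}^{\Delta}r}-\beta e^{-\int_{b_2^+}^{\Delta}r}-\mathrm{const}$. As $c_1$ rises, $\int_{b_1}^\Delta r$ increases, exactly as in \Cref{thm:alien-unique}. The $\beta$-term, however, has the opposite monotonicity. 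It depends only on $\Delta$ and gets subtracted, and $e^{-\int_{b_2}^{\Delta}r}$ decreases as the interval $[b_2,\Delta]$ shifts left without changing length. So the two terms pull against each other. \textbf{This is the main obstacle}: showing single-crossing from $+$ to $-$ despite the competing $\beta$-term.

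To handle it, I would first try differentiating the whole sign expression in $c_1$. Using super-regularity, $r(b_1)\ge r(\cdot)\ge r(\Delta)$ on the relevant ranges, and I would aim to show the derivative is negative wherever the expression vanishes. That suffices for single-crossing. Failing that, I would parametrise by $t=\Delta$ as in \Cref{n:thm:A1}. I would then write $F'(t)=K(t)\,\Phi(t)$ and prove $\Phi$ is monotone, using $K(b_1)/K(t)\le K(b_1')/K(t')$-type comparisons from the monotonicity of $r$.

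In the dead-zone region ($b_1\le0$), $u_1=A\,G(\Delta)$ and $\partial U_1/\partial c_1=-(1-\beta)AK(\Delta)+\tfrac\beta\alpha K(b_2^+)$. This should be negative: the bound $K(b_2)\le K(\Delta)$ gives $\tfrac\beta\alpha K(b_2^+)\le\tfrac\beta\alpha K(\Delta)$, and it remains to check $(1-\beta)A>\beta/\alpha$ near the relevant range. If that inequality fails, I would instead compare directly with the junction value. Ties yield $0$, and the equilibrium payoff is $(1-\beta)u_1>0$, so ties are unprofitable.

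Cross-over deviations need a direct comparison, because \Cref{n:rem:reflection-beta}(b) rules out the reflection lemma. I would mirror the computation in \Cref{n:thm:A1}. For the pair $c_2\mp s$, the deviator's share $A G(s)-\tfrac1\alpha G((s-\alpha\varrho)^+)$ is larger on the own side, which has more room $\varrho$ behind it. The opponent's share is identical for both, since it depends only on the opponent's own boundary room and $s$. So the $\beta$-term cancels, every far-side deviation is dominated by its own-side mirror, and the profile is an equilibrium. Uniqueness then follows from the necessity step.
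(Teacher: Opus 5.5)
Your necessity step and your uniqueness argument for the root of $J=L_\beta$ are correct and follow the paper. The gap is the own-side half of sufficiency. You rightly call it the crux, but you leave it at ``I would aim to show''. Without it nothing shows that the symmetric root is a best response, so existence, and with it the theorem, is unproved.

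Your first strategy does not close as stated. At a zero of $(1+\alpha)K(b_1)-\beta K(b_2)-(1-\beta)\alpha A\,K(\Delta)$, its $c_1$-derivative equals $\beta K(b_2)[r(b_2)-(1+\alpha)r(b_1)]+(1-\beta)\alpha A\,K(\Delta)[r(\Delta)-(1+\alpha)r(b_1)]$. When $b_1>b_2$, i.e.\ for $c_1<c_1^*$, the first term can be positive for a suitable super-regular $K$.

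The opposition you see between the two terms comes from normalizing by $K(\Delta)$. The missing idea, which is what the paper uses, is to center both boundary benefits on $b_s(\Delta)$. With $c_2=c_2^*$ fixed, \cref{n:eq:bident} gives $b_1=b_s(\Delta)-\delta$ and $b_2=b_s(\Delta)+\delta$, where $\delta:=\tfrac\alpha2(c_1-c_1^*)$. Then \cref{n:eq:DuV} becomes
\[
\alpha\,\frac{\partial U_1}{\partial c_1}
=\Phi(\Delta)+(1+\alpha)\big[K(b_1^+)-K(b_s(\Delta)^+)\big]-\beta\big[K(b_2^+)-K(b_s(\Delta)^+)\big],
\]
with $\Phi(\Delta):=(1+\alpha-\beta)K(b_s(\Delta)^+)-(1-\beta)\alpha A\,K(\Delta)$. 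This $\Phi$ is negative for $\Delta<\Delta^*$: trivially when $\Delta\le\Delta_0$, and because $J>L_\beta$ otherwise. It is positive on $(\Delta^*,1)$. If $c_1>c_1^*$, then $\delta>0$ and $\Delta<\Delta^*$, so both brackets are $\le0$ by monotonicity of $t\mapsto K(t^+)$, and $\Phi<0$. If $c_1<c_1^*$, every sign reverses. The three terms therefore always agree, so the derivative is strictly positive left of $c_1^*$ and strictly negative right of it. No second-order or ratio-monotonicity argument is needed.

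The same decomposition covers the dead-zone region. That region lies in $c_1>c_1^*$, because $b_1=c_2^*-(1+\alpha)c_1$ is decreasing and $b_1(c_1^*)=b_s(\Delta^*)>0$. Your separate treatment of it also has a sign error. By \cref{n:eq:DuV}, when $b_1\le0$ one gets $\partial U_1/\partial c_1=-(1-\beta)A\,K(\Delta)-\tfrac\beta\alpha K(b_2^+)<0$ at once. With your $+$ sign, the bound $(1-\beta)A>\beta/\alpha$ you hope for fails whenever $\beta\ge\tfrac{2(1+\alpha)}{4+3\alpha}$.

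Finally, in the cross-over comparison the opponent's share is not the same for the pair $c_2^*\mp s$. Her outer room is $1-c_2^*$ when you deviate to the near side and $c_2^*$ when you cross over, so her boundary benefit is $s-\alpha(1-c_2^*)$ versus $s-\alpha c_2^*$. Since $c_2^*\ge\tfrac12$, she is more truncated in the near case, so $u_2$ is smaller there. The $\beta$-term therefore favors the near side rather than cancelling. Your conclusion survives, but for this reason, not the one you give.
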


\begin{proof}
\emph{Existence and uniqueness of the root.}  On $(\Delta_0,1)$, $J$ is
differentiable with
$J'(\Delta)=r(\Delta)-\tfrac\lambda2\,r\big(b_s(\Delta)\big)
\le r(\Delta)\big(1-\tfrac\lambda2\big)=-\tfrac\alpha2\,r(\Delta)<0$,
using $b_s(\Delta)<\Delta$ and $r$ non-increasing.  Moreover
$J(\Delta_0^+)=+\infty$ by \cref{n:eq:rdiv} and
$J(1^-)=\log\tfrac{K(1)}{K(1)}=0<L_\beta$.  Hence a unique root
$\Delta^*\in(\Delta_0,1)$ exists, with $J>L_\beta$ on $(\Delta_0,\Delta^*)$
and $J<L_\beta$ on $(\Delta^*,1)$; exponentiating $J=L_\beta$ gives
\cref{n:eq:A2foc}.

\emph{Necessity.}  Exactly as in \Cref{n:prop:C2}(b): an equilibrium is
untied and interior with both first-order conditions
(\Cref{n:lem:interior}(c), \Cref{n:lem:master}); subtracting them yields
$K(b_1^+)=K(b_2^+)>0$, hence $b_1=b_2>0$ ($K$ is strictly increasing
on $(0,1)$), which by \cref{n:eq:bident} forces $c_1+c_2=1$ and
$b_1=b_2=b_s(\Delta)>0$, i.e.\ $\Delta>\Delta_0$.  The first-order condition
\eqref{n:eq:DuV} at the symmetric profile becomes exactly
\cref{n:eq:A2foc}, i.e.\ $\Delta=\Delta^*$.

\emph{Sufficiency.}  Fix $c_2=c_2^*=\tfrac{1+\Delta^*}2$ and let candidate
$1$ vary $c_1\in(0,c_2^*)$; write $\Delta:=c_2^*-c_1$,
$\delta:=\tfrac{b_2-b_1}2=\tfrac\alpha2\,(c_1-c_1^*)$ by
\cref{n:eq:bident} (using $1-c_2^*=c_1^*$), so $b_1=b_s(\Delta)-\delta$ and
$b_2=b_s(\Delta)+\delta$.  Decompose \cref{n:eq:DuV} as
\begin{equation}\label{n:eq:delta-decomp}
\alpha\frac{\partial U_1}{\partial c_1}
=\Phi(\Delta)
+(1+\alpha)\big[K(b_1^+)-K(b_s(\Delta)^+)\big]
-\beta\big[K(b_2^+)-K(b_s(\Delta)^+)\big],
\end{equation}
where
$\Phi(\Delta):=(1+\alpha-\beta)K(b_s(\Delta)^+)-(1-\beta)\alpha A\,K(\Delta)$
is the value at the ``centered'' benefits $\delta=0$.  For
$\Delta\le\Delta_0$, $\Phi(\Delta)=-(1-\beta)\alpha A\,K(\Delta)<0$; for
$\Delta\in(\Delta_0,1)$, $\Phi(\Delta)<0\iff J(\Delta)>L_\beta
\iff\Delta<\Delta^*$ (and $\Phi(\Delta^*)=0$, $\Phi>0$ for
$\Delta>\Delta^*$).  Now:
if $c_1>c_1^*$ then $\delta>0$ and $\Delta<\Delta^*$, so both bracketed
corrections in \cref{n:eq:delta-decomp} are $\le0$ (monotonicity of
$t\mapsto K(t^+)$) and $\Phi(\Delta)<0$: the derivative is strictly
negative.  If $c_1<c_1^*$ then $\delta<0$ and $\Delta\in(\Delta^*,c_2^*]
\subset(\Delta^*,1)$, so both corrections are $\ge0$ and $\Phi(\Delta)>0$:
the derivative is strictly positive.  Hence $U_1(\cdot,c_2^*)$ is strictly
single-peaked on $(0,c_2^*)$ with peak $c_1^*$, and the one-sided derivative
at $c_1=0$ is positive, so $c_1^*$ is the own-side global maximizer.

\emph{Cross-overs and the tie.}  For $s\in(0,1-c_2^*]$ compare $c_1'=c_2^*-s$
with $c_1''=c_2^*+s$ using \cref{n:eq:uV}.  The deviator's boundary
benefits are $b_L=s-\alpha(c_2^*-s)$ on the near side and
$b_R=s-\alpha(1-c_2^*-s)$ on the far side; $c_2^*\ge\tfrac12$ gives
$b_L\le b_R$, hence $u_1(c_1',c_2^*)=A\,G(s)-\tfrac1\alpha G(b_L^+)\ge
A\,G(s)-\tfrac1\alpha G(b_R^+)=u_1(c_1'',c_2^*)$.  The opponent's boundary
benefit is $s-\alpha(1-c_2^*)$ when the deviator is on the near side and
$s-\alpha c_2^*$ on the far side; $c_2^*\ge\tfrac12$ gives
$u_2(c_2^*,c_1')\le u_2(c_2^*,c_1'')$.  Hence
$U_1(c_1',c_2^*)\ge U_1(c_1'',c_2^*)$, and since the far-side range is
contained in the own-side range, cross-overs are dominated.  The tie yields
$0<(1-\beta)u_1(c_1^*,c_2^*)$ (the equilibrium turnout is positive because
$\Delta^*>b_s(\Delta^*)$ makes \cref{n:eq:uV} strictly positive).  By
symmetry candidate $2$ has no profitable deviation either, so the profile is
the unique Nash equilibrium.
\end{proof}

\begin{remark}[Consistency as $\alpha\to0$]\label{n:rem:A2limit}
Applying the mean value theorem to $K(\Delta)-K(b_s(\Delta))$ in
\cref{n:eq:A2foc} produces the exact identity
\[
\frac{1+\alpha+\beta}{2+\alpha}\,K(\Delta^*)
=\frac{1+\alpha-\beta}{2}\,(1-\Delta^*)\,k(\xi_\alpha),
\qquad \xi_\alpha\in\big(b_s(\Delta^*),\Delta^*\big).
\]
As $\alpha\downarrow0$, $b_s(\Delta^*)\to\Delta^*$, so $\xi_\alpha\to\Delta^*$
and, by continuity of $k$, the identity converges to
$(1+\beta)K(\Delta)=(1-\beta)(1-\Delta)k(\Delta)$, i.e.\
$\psi_\beta(\Delta)=1$: the $\alpha>0$ theory deforms continuously onto
\Cref{n:thm:A1}.  (Verified numerically: the closed forms for
doubly-uniform and polynomial $K$ match the roots of \cref{n:eq:A2foc} and
of $\psi_\beta=1$ to $10^{-6}$ across
$\alpha\in\{0,\tfrac12,1,2\},\beta\in\{0,\ldots,0.9\}$.)
\end{remark}

\subsection{Comparative statics and their limits}

\begin{proposition}[Monotone comparative statics]\label{n:prop:B2}
In the setting of \Cref{n:thm:A1}(c) \textup(resp.\ \Cref{n:thm:A2}\textup),
the equilibrium gap $\Delta^*(\alpha,\beta)$ is strictly decreasing in
$\beta$ on $[0,1)$, and \textup(for $\alpha>0$\textup) strictly increasing in
$\alpha$.
\end{proposition}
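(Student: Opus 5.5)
The plan is an implicit-function argument on the scalar equation that pins the gap in each regime. By \Cref{n:thm:A1}(c) and \Cref{n:thm:A2}, the equilibrium is unique and its gap is the unique root of a single equation. For $\alpha=0$ that equation is $\psi_\beta(\Delta)=1$. For $\alpha>0$ it is $J(\Delta)=L_\beta$, where $J$ depends on $\alpha$ (through $b_s$) but not on $\beta$, and $L_\beta$ depends on both parameters but not on $K$. In each case I would show two things. First, the defining function is strictly monotone in $\Delta$; this was already established in those proofs. Second, it shifts in a signed direction when the parameter moves. A root comparison then finishes the argument, exactly as in the proof of \Cref{thm:rhr-alpha}.

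\emph{Case $\alpha=0$, monotonicity in $\beta$.} Here $\gamma_\beta=\tfrac{1+\beta}{1-\beta}$ is strictly increasing in $\beta$, and $K/k>0$ on $(0,1)$. So for $\beta<\beta'$ we have $\psi_{\beta'}(t)>\psi_\beta(t)$ pointwise, and in particular $\psi_{\beta'}(\Delta^*_\beta)>1=\psi_{\beta'}(\Delta^*_{\beta'})$. Since $\psi_{\beta'}$ is strictly increasing, this gives $\Delta^*_{\beta'}<\Delta^*_\beta$.

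\emph{Case $\alpha>0$, monotonicity in $\beta$.} I would compute the sign of $\partial_\beta L_\beta$. Its logarithmic derivative is $-\tfrac{1}{1+\alpha-\beta}+\tfrac{1}{1-\beta}=\tfrac{\alpha}{(1-\beta)(1+\alpha-\beta)}>0$, so $L_\beta$ is strictly increasing in $\beta$. The proof of \Cref{n:thm:A2} shows $J$ is strictly decreasing, with $J>L_\beta$ exactly to the left of the root. Hence raising $\beta$ raises the level $L_\beta$ that $J$ must meet, and the root moves strictly left. Concretely, $J(\Delta^*_\beta)=L_\beta<L_{\beta'}=J(\Delta^*_{\beta'})$, so $\Delta^*_{\beta'}<\Delta^*_\beta$.

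\emph{Case $\alpha>0$, monotonicity in $\alpha$.} This is the main obstacle, because both $J$ and $L_\beta$ move with $\alpha$. I would mimic the proof of \Cref{thm:alien-mono}. Set $\Psi(\Delta,\alpha):=\int_{b_s}^{\Delta}r-L_\beta(\alpha)$ and show that $\partial_\alpha\Psi>0$ at the root. The implicit function theorem then gives $d\Delta^*/d\alpha>0$, since $\partial_\Delta\Psi<0$; to apply it, I would assume $r$ is continuous, which follows from \Cref{ass:reg}. The derivative has two pieces:
\begin{itemize}
\item Using $\partial_\alpha b_s=-\tfrac{1-\Delta}{2}$, the first piece is $r(b_s)\tfrac{1-\Delta}2$.
\item Super-regularity gives $L_\beta=\int_{b_s}^\Delta r\le r(b_s)\tfrac{\alpha(1-\Delta)}2$ at the root. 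Hence the first piece is at least $L_\beta/\alpha$.
\end{itemize}
It therefore suffices to prove the scalar inequality $\alpha\,\partial_\alpha L_\beta<L_\beta$ for all $\alpha>0$ and $\beta\in[0,1)$. Writing $L_\beta=\log(1+\alpha-\beta)+\log(2+\alpha)-\log 2-\log(1-\beta)-\log(1+\alpha)$, we get
\[
\partial_\alpha L_\beta=\tfrac{1}{1+\alpha-\beta}+\tfrac{1}{2+\alpha}-\tfrac{1}{1+\alpha}.
\]
Set $h(\alpha):=L_\beta-\alpha\,\partial_\alpha L_\beta$. I would prove $h>0$ from $h(0)=0$ together with $h'(\alpha)=-\alpha\,\partial^2_\alpha L_\beta>0$. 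The latter needs $\partial^2_\alpha L_\beta<0$, that is,
\[
\tfrac{1}{(1+\alpha)^2}<\tfrac{1}{(1+\alpha-\beta)^2}+\tfrac{1}{(2+\alpha)^2},
\]
which holds because $1+\alpha-\beta\le1+\alpha$; this step amounts to concavity of $L_\beta$ in $\alpha$. The value $h(0)=0$ holds because $L_\beta\to0$ as $\alpha\downarrow0$. At $\beta=0$ this reduces to the inequality $(1+s)\log(1+s)>s$ used in \Cref{thm:alien-mono}. If this route is delicate at any point, I would fall back on verifying $\alpha\,\partial_\alpha L_\beta<L_\beta$ directly.
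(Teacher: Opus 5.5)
Your proof is correct and follows the paper's argument. At $\alpha=0$ you use the pointwise upward shift of $\psi_\beta$ in $\beta$; for $\alpha>0$ the $K$-free level $L_\beta$ rises in $\beta$ against the strictly decreasing, $\beta$-free $J$, where you compare roots directly and the paper invokes the implicit function theorem; and for $\alpha$ you use the same super-regularity bound $r(b_s)\tfrac{1-\Delta^*}{2}\ge L_\beta/\alpha$, which reduces everything to the scalar inequality $L_\beta>\alpha\,\partial_\alpha L_\beta$. The only real deviation is how that inequality is checked: you use concavity of $L_\beta$ in $\alpha$ together with $h(0)=0$, whereas the paper fixes $\alpha$, shows that $L_\beta-\alpha\,\partial_\alpha L_\beta$ increases in $\beta$ (derivative $\tfrac{\alpha^2}{(1-\beta)(1+\alpha-\beta)^2}$), and starts from its $\beta=0$ value $\log(1+s)-\tfrac{s}{1+s}>0$. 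Both routes are valid.
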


\begin{proof}
\emph{Decreasing in $\beta$, $\alpha=0$.}  $\gamma_\beta$ is strictly
increasing in $\beta$ and $K/k(\Delta^*)>0$, so $\psi_\beta(\Delta^*)$ is
strictly increasing in $\beta$; since $\psi_\beta$ is strictly increasing in
$t$, the root of $\psi_\beta=1$ strictly decreases.  (No differentiability in
$\beta$ is needed.)

\emph{Decreasing in $\beta$, $\alpha>0$.}  Write
$\Psi(\Delta,\alpha,\beta)=J(\Delta)-L_\beta$, so
$\Psi(\Delta^*,\cdot)=0$, $\partial\Psi/\partial\Delta=J'<0$, and
\[
\frac{\partial\Psi}{\partial\beta}
=-\frac{dL_\beta}{d\beta}
=-\Big(\frac1{1-\beta}-\frac1{1+\alpha-\beta}\Big)
=-\frac{\alpha}{(1-\beta)(1+\alpha-\beta)}<0 ,
\]
whence $d\Delta^*/d\beta=-\Psi_\beta/\Psi_\Delta<0$.

\emph{Increasing in $\alpha$.}  Here
$\partial\Psi/\partial\alpha
=\partial J/\partial\alpha-dL_\beta/d\alpha$ with
$\partial J/\partial\alpha
=-r\big(b_s\big)\tfrac{\partial b_s}{\partial\alpha}
=r\big(b_s(\Delta)\big)\tfrac{1-\Delta}2>0$
and $dL_\beta/d\alpha=\tfrac1\lambda+\tfrac1{1+\alpha-\beta}
-\tfrac1{1+\alpha}=:R$.  At $\Delta=\Delta^*$, since $r$ is non-increasing,
\[
L_\beta=J(\Delta^*)=\int_{b_s}^{\Delta^*}r
\le r\big(b_s\big)\big(\Delta^*-b_s\big)
= r\big(b_s\big)\,\frac\alpha2\,(1-\Delta^*),
\]
so $r(b_s)\tfrac{1-\Delta^*}2\ge\tfrac{L_\beta}\alpha$ and it suffices to
show $g(\beta):=L_\beta-\alpha R>0$.  Compute
\[
g'(\beta)=\frac{\alpha}{(1-\beta)(1+\alpha-\beta)}
-\frac{\alpha}{(1+\alpha-\beta)^2}
=\frac{\alpha^2}{(1-\beta)(1+\alpha-\beta)^2}>0,
\]
and, with $s=\tfrac\alpha2$,
$g(0)=\log\tfrac\lambda2-\tfrac\alpha\lambda=\log(1+s)-\tfrac{s}{1+s}>0$.
Hence $g(\beta)>0$ for all $\beta\in[0,1)$,
$\partial\Psi/\partial\alpha\ge g(\beta)/\alpha>0$, and
$d\Delta^*/d\alpha=-\Psi_\alpha/\Psi_\Delta>0$.
\end{proof}

\begin{proposition}[Reverse-hazard-rate comparison at every $\beta$]\label{n:prop:B3}
Let $K,\widetilde K$ satisfy the hypotheses of \Cref{n:thm:A1}(c)
\textup(resp.\ \Cref{n:thm:A2}\textup) and suppose
$\widetilde r(t)\ge r(t)$ for all $t\in(0,1)$.  Then for every
$\beta\in[0,1)$ \textup(and every $\alpha$\textup),
$\widetilde\Delta^*(\alpha,\beta)\ge\Delta^*(\alpha,\beta)$: more eager
electorates sustain \emph{weakly more} polarization, uniformly in
competitiveness.
\end{proposition}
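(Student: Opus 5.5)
The proof splits on whether $\alpha=0$ or $\alpha>0$. In each case we use the scalar characterization of the unique equilibrium gap: a strictly monotone function of $\Delta$ whose root is $\Delta^*$. We then show that replacing $r$ by $\widetilde r\ge r$ moves that function pointwise in the direction that pushes the root outward. This is the argument of \Cref{prop:rh} and \Cref{thm:rhr-alpha}, now run with the $\beta$-dependent equations of \Cref{n:thm:A1,n:thm:A2}.

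\emph{Case $\alpha=0$.} By \Cref{n:thm:A1}(c) the gap is the unique root of $\psi_\beta(t)=t+\gamma_\beta K(t)/k(t)=1$, and $\psi_\beta$ is strictly increasing. Since $K/k=1/r$ and $\gamma_\beta>0$, the hypothesis $\widetilde r\ge r$ gives $\widetilde\psi_\beta\le\psi_\beta$ pointwise on $(0,1)$. Evaluating at $\Delta^*$ yields
\[
\widetilde\psi_\beta(\Delta^*)\le\psi_\beta(\Delta^*)=1=\widetilde\psi_\beta(\widetilde\Delta^*),
\]
and strict monotonicity of $\widetilde\psi_\beta$ then gives $\widetilde\Delta^*\ge\Delta^*$.

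\emph{Case $\alpha>0$.} By \Cref{n:thm:A2} the gap is the unique root in $(\Delta_0,1)$ of $J(\Delta)=L_\beta$, where $J(\Delta)=\int_{b_s(\Delta)}^{\Delta}r$. The proof there shows $J$ is strictly decreasing, with $J>L_\beta$ to the left of the root and $J<L_\beta$ to the right. Two features make the comparison clean. First, the constant $L_\beta$ does not depend on $K$, as already noted after \cref{n:eq:JLdef}. Second, the limits $b_s(\Delta)$ and $\Delta$ do not depend on $K$ either. Integrating the pointwise inequality over the same interval therefore gives $\widetilde J(\Delta)\ge J(\Delta)$ for every $\Delta\in(\Delta_0,1)$. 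In particular $\widetilde J(\Delta^*)\ge J(\Delta^*)=L_\beta$. Because $\widetilde J$ is strictly decreasing and crosses $L_\beta$ only once, at $\widetilde\Delta^*$, this forces $\widetilde\Delta^*\ge\Delta^*$.

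The only step that needs care is well-posedness: $\widetilde K$ must itself satisfy the hypotheses of the relevant theorem, so that $\widetilde J$ or $\widetilde\psi_\beta$ is strictly monotone with a unique root. This is guaranteed by assumption, so nothing further is required. One could also observe that the conclusion needs the order only at $\Delta^*$, as in the discussion after \Cref{prop:rh}, but this is not needed for the statement as given.
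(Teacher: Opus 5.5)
Your proof is correct and follows the paper's argument essentially line for line. At $\alpha=0$ you compare $\widetilde\psi_\beta\le\psi_\beta$ at $\Delta^*$ and use the monotonicity of $\widetilde\psi_\beta$; at $\alpha>0$ you use that $L_\beta$ and the limits $b_s(\Delta),\Delta$ do not depend on $K$, so $\widetilde J\ge J$ pointwise, and conclude from the strict decrease of $\widetilde J$.
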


\begin{proof}
$\alpha=0$: $\widetilde r\ge r$ means $\widetilde K/\widetilde k\le K/k$
pointwise, so $\widetilde\psi_\beta\le\psi_\beta$ pointwise; hence
$\widetilde\psi_\beta(\Delta^*)\le\psi_\beta(\Delta^*)=1
=\widetilde\psi_\beta(\widetilde\Delta^*)$ and, since
$\widetilde\psi_\beta$ is increasing, $\widetilde\Delta^*\ge\Delta^*$.
$\alpha>0$: $\widetilde J(\Delta)=\int_{b_s}^\Delta\widetilde r\ge J(\Delta)$
pointwise, while $L_\beta$ is the same for both games ($K$-free).  Thus
$\widetilde J(\widetilde\Delta^*)=L_\beta=J(\Delta^*)\le
\widetilde J(\Delta^*)$, and $\widetilde J$ decreasing gives
$\widetilde\Delta^*\ge\Delta^*$.
\end{proof}

\begin{proposition}[Collapse as $\beta\uparrow1$, with exact rate]
\label{n:prop:B4}
Let $x\sim U(0,1)$ and $\alpha>0$.
\begin{enumerate}
\item[\textup{(a)}] \textup{(Universal collapse; arbitrary $K$.)}  Let
$\beta_n\uparrow1$ and let $(c_1^n,c_2^n)$ be symmetric stationary profiles
of the $\beta_n$-games; by the proof of \Cref{n:prop:C2}\textup{(b)}
\textup(which uses no super-regularity\textup) this includes every Nash
equilibrium.  Then
$\Delta_n\to\Delta_0=\tfrac\alpha{2+\alpha}$: polarization collapses onto the
most polarized symmetric margin-game profile.
\item[\textup{(b)}] \textup{(Exact finite-$\beta$ identity.)}  Under the
hypotheses of \Cref{n:thm:A2}, for every $\beta$,
\[
\Delta^*-\Delta_0
=\frac2\lambda\,K^{-1}\!\Big((1-\beta)\,c_\beta\,K(\Delta^*)\Big),
\qquad
c_\beta:=\frac{2(1+\alpha)}{\lambda\,(1+\alpha-\beta)} .
\]
\item[\textup{(c)}] \textup{(Rates.)}  If $k(0^+)\in(0,\infty)$ then
$\Delta^*-\Delta_0
=(1-\beta)\,\tfrac{4(1+\alpha)\,K(\Delta_0)}{\lambda^2\alpha\,k(0^+)}
\,(1+o(1))$: linear in $1-\beta$, with $K$ entering only through
$K(\Delta_0)/k(0^+)$.  If instead $K(t)\sim Ct^p$ as $t\downarrow0$
\textup($p>1$\textup), the rate slows to order $(1-\beta)^{1/p}$.  At
$\alpha=0$ \textup(under \Cref{n:thm:A1}(c)\textup): if $K(t)/k(t)\sim ct$ as
$t\downarrow0$ then $\Delta^*_\beta=\tfrac{1-\beta}{2c}(1+o(1))$; in
particular $c=1$, and the rate is the \emph{universal} $\tfrac{1-\beta}2$,
whenever $k(0^+)\in(0,\infty)$.
\end{enumerate}
\end{proposition}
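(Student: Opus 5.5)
The plan is to work from the symmetric stationarity condition throughout. By the proof of \Cref{n:prop:C2}(b), every Nash equilibrium, and more generally every symmetric interior stationary profile with $\beta<1$, has $b_1=b_2=b_s(\Delta)>0$. Its gap therefore satisfies \cref{n:eq:A2foc}, which I rewrite as
\[
K\big(b_s(\Delta)\big)=(1-\beta)\,c_\beta\,K(\Delta),\qquad c_\beta=\frac{2(1+\alpha)}{\lambda(1+\alpha-\beta)}.
\]
This identity does not use super-regularity, since only the first-order conditions \eqref{n:eq:DuV} enter.

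For part (a), observe that $c_\beta\to\tfrac{2(1+\alpha)}{\lambda\alpha}<\infty$ and $K(\Delta_n)\le K(1)\le1$. Hence $K(b_s(\Delta_n))\to0$. Since $K$ is continuous, strictly increasing on $[0,1]$ and has $K(0)=0$, this forces $b_s(\Delta_n)\to0$. Because $b_s(\Delta)=\tfrac{\lambda\Delta-\alpha}2$ is affine, we get $\Delta_n\to\alpha/\lambda=\Delta_0$. This is the argument of \Cref{thm:beta-select}, with super-regularity removed.

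For part (b), invert $K$ in the displayed identity, which is legitimate because $K$ is strictly increasing. This gives $b_s(\Delta^*)=K^{-1}\big((1-\beta)c_\beta K(\Delta^*)\big)$. Then use $b_s(\Delta^*)=\tfrac\lambda2(\Delta^*-\Delta_0)$. The argument lies in $(0,K(1))$, so $K^{-1}$ is defined there, and no further work is needed.

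For part (c), set $\varepsilon:=(1-\beta)c_\beta K(\Delta^*)\to0$. By (a) and continuity, $K(\Delta^*)\to K(\Delta_0)$ and $c_\beta\to\tfrac{2(1+\alpha)}{\lambda\alpha}$.
\begin{itemize}
\item If $k(0^+)\in(0,\infty)$, then $K^{-1}(y)=\tfrac{y}{k(0^+)}(1+o(1))$ as $y\downarrow0$. Substituting into (b) gives the constant $\tfrac2\lambda\cdot\tfrac{2(1+\alpha)}{\lambda\alpha}\cdot\tfrac{K(\Delta_0)}{k(0^+)}$, which is the claimed linear rate.
\item If $K(t)\sim Ct^p$, then $K^{-1}(y)\sim(y/C)^{1/p}$, which gives order $(1-\beta)^{1/p}$.
\end{itemize}

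For $\alpha=0$, I use $\psi_\beta(\Delta^*_\beta)=1$ from \Cref{n:thm:A1}, that is, $\Delta+\gamma_\beta\tfrac{K}{k}(\Delta)=1$. First I show $\Delta^*_\beta\to0$. The root is decreasing in $\beta$ by \Cref{n:prop:B2}, so it has a limit $L$. If $L>0$, then $\gamma_\beta\to\infty$ makes $\gamma_\beta K/k(\Delta^*)\ge\gamma_\beta\inf_{[L,1)}K/k$, which blows up and contradicts $\psi_\beta=1$. Here $K/k>0$ is nondecreasing, so the infimum is positive. Having $\Delta^*\to0$, substitute $K/k\sim c\Delta$ to obtain $\Delta^*(1+c\gamma_\beta(1+o(1)))=1$, so $\Delta^*\sim\tfrac{1}{c\gamma_\beta}\sim\tfrac{1-\beta}{2c}$. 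When $k(0^+)\in(0,\infty)$, L'Hôpital gives $K(t)/t\to k(0^+)$, hence $c=1$.

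I expect the main obstacle to be the $\alpha=0$ limit, specifically the claim $\Delta^*_\beta\to0$ together with the uniformity of the $o(1)$ terms. The $\alpha>0$ parts are essentially inversions of an exact identity.
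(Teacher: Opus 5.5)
Your proposal is correct and follows the paper's proof essentially step for step. You rewrite the stationarity identity \eqref{n:eq:A2foc} as $K(b_s(\Delta))=(1-\beta)c_\beta K(\Delta)$, which gives (a) by boundedness of $c_\beta$ and (b) by inverting $K$. The rates come from the behaviour of $K^{-1}$ near $0$, and the $\alpha=0$ case is the expansion of $\psi_\beta(\Delta^*_\beta)=1$.

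Two points are handled more carefully than in the paper. First, you get the lower side of (a) directly from $b_s(\Delta_n)>0$ at any stationary profile. Second, you actually prove $\Delta^*_\beta\to0$ at $\alpha=0$, using positivity of the nondecreasing $K/k$ away from $0$, where the paper only appeals to ``the same argument as (a)''.
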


\begin{proof}
\emph{(a)}  A symmetric stationary profile satisfies \cref{n:eq:A2foc} with
$\beta=\beta_n$ (the derivation of \cref{n:eq:A2foc} from
\cref{n:eq:DuV} used no super-regularity), so
\[
K\big(b_s(\Delta_n)^+\big)
=(1-\beta_n)\,c_{\beta_n}K(\Delta_n)
\le(1-\beta_n)\,\tfrac{2(1+\alpha)}{\lambda\alpha}\,K(1)\;\longrightarrow\;0 .
\]
Since $K\ge K(t_0)>0$ on $[t_0,1]$ for each $t_0>0$, this forces
$b_s(\Delta_n)^+\to0$, i.e.\ $\Delta_n\to\Delta_0$ (using
$b_s(\Delta)=\tfrac\lambda2(\Delta-\Delta_0)$ and, for the lower side, that
$\Delta_n\le\Delta_0$ would make the right-hand side of \cref{n:eq:A2foc}
positive and the left zero).

\emph{(b)}  Immediate from \cref{n:eq:A2foc}:
$K(b_s(\Delta^*))=(1-\beta)c_\beta K(\Delta^*)$ and
$b_s(\Delta^*)=\tfrac\lambda2(\Delta^*-\Delta_0)$.

\emph{(c)}  For $k(0^+)\in(0,\infty)$, $K^{-1}(\varepsilon)
=\varepsilon/k(0^+)\,(1+o(1))$; insert
$\varepsilon=(1-\beta)c_\beta K(\Delta^*)$, $c_\beta\to
\tfrac{2(1+\alpha)}{\lambda\alpha}$ and $K(\Delta^*)\to K(\Delta_0)$ (by (a)
and continuity).  For $K\sim Ct^p$, $K^{-1}(\varepsilon)\asymp
\varepsilon^{1/p}$.  At $\alpha=0$: $\psi_\beta(\Delta^*)=1$ reads
$\gamma_\beta\,(K/k)(\Delta^*)=1-\Delta^*$; the right side tends to $1$
(as $\Delta^*\to0$ by the same argument as (a)), so
$(K/k)(\Delta^*)\sim\gamma_\beta^{-1}$ and $K/k\sim ct$ gives
$\Delta^*\sim\tfrac1{c\gamma_\beta}\sim\tfrac{1-\beta}{2c}$.
\end{proof}

\begin{remark}[Numerical confirmation]\label{n:rem:B4num}
For $K=t^2$ ($p=2$, $k(0^+)=0$) at $\alpha=0$ the root is exactly
$\Delta^*_\beta=\tfrac{2(1-\beta)}{3-\beta}$: rate $(1-\beta)$, constant
$1/c=2$, not the universal $\tfrac{1-\beta}2$.  At $\alpha=1$, $\beta=0.99$,
$K=t^2$: the computed gap is $\Delta^*=0.360984$ and
$\tfrac2\lambda K^{-1}\big((1-\beta)c_\beta K(\Delta^*)\big)=0.027651
=\Delta^*-\tfrac13$ to six digits, matching (b) exactly.  The
reverse-hazard-rate comparison of \Cref{n:prop:B3} was confirmed at every
$\beta\in\{0,0.5,0.9,0.99\}$ for $K=t$ versus $K=t^2$ at
$\alpha\in\{0,1\}$.
\end{remark}

The monotone-reverse-hazard-rate hypothesis in \Cref{n:prop:B2} is not removable: without
it, \emph{more competition can increase polarization}.

\begin{example}[An explicit cost distribution with three equilibria and an
increasing branch]\label{n:ex:phi}
Let $\alpha=0$, $x\sim U(0,1)$, and define $K$ through
$\varphi:=K/k$ by $K(t)=\exp\big(-\int_t^1\varphi(s)^{-1}ds\big)$, where
$\varphi$ is piecewise linear:
\[
\varphi(t)=
\begin{cases}
2t, & t\in\big[0,\tfrac{17}{50}\big],\\[2pt]
\tfrac{17}{25}-\tfrac32\big(t-\tfrac{17}{50}\big), &
t\in\big[\tfrac{17}{50},\tfrac12\big],\\[2pt]
\tfrac{11}{25}+\tfrac15\big(t-\tfrac12\big), & t\in\big[\tfrac12,1\big].
\end{cases}
\]
Then $\varphi>0$ on $(0,1]$, $K$ is a valid cost CDF satisfying
\Cref{ass:reg} with $K(1)=1$ ($k(t)=K(t)/\varphi(t)\sim
\mathrm{const}\cdot t^{-1/2}$ as $t\downarrow0$, continuous and positive on
$(0,1)$), and:
\begin{enumerate}
\item[\textup{(i)}] For every $\beta\in[0,\tfrac13)$, all symmetric
stationary profiles are Nash equilibria; the equilibrium gaps are exactly the
roots of $\psi_\beta(t)=t+\gamma_\beta\varphi(t)=1$.
\item[\textup{(ii)}] For $\beta\in\big[0,\tfrac3{47}\big)$ there are exactly
three equilibria, with gaps \textup(writing $\gamma=\gamma_\beta$\textup)
\[
\Delta_1=\frac1{1+2\gamma},
\qquad
\Delta_2=\frac{100-119\gamma}{100-150\gamma},
\qquad
\Delta_3=\frac{50-17\gamma}{50+10\gamma},
\]
and $\tfrac{d\Delta_2}{d\gamma}=\tfrac{3100}{(100-150\gamma)^2}>0$: along
the middle branch, \emph{polarization strictly increases in $\beta$}, from
$\Delta_2=\tfrac{19}{50}$ at $\beta=0$ to $\tfrac12$ at the fold.  The outer
branches are strictly decreasing.
\item[\textup{(iii)}] At $\beta^*=\tfrac3{47}$ \textup(where
$\gamma_{\beta^*}=\tfrac{25}{22}$\textup) the middle and upper branches
collide at $\Delta=\tfrac12$ and annihilate: for $\beta\in(\beta^*,\tfrac13)$ the branch $\Delta_1$ is the
unique equilibrium.  The equilibrium count drops from three to one through a
fold---a non-smooth one, since the two gaps meet at the kink of $\psi_\beta$ at
$t=\tfrac12$ and so approach $\tfrac12$ at finite opposite rates rather than at
the square-root rate of a smooth saddle-node: increasing competitiveness
\emph{discontinuously} de-polarizes the equilibrium set.
\end{enumerate}
\end{example}

\begin{proof}
$K$ is a CDF with $K(0)=0$ by divergence of $\int_{0^+}\varphi^{-1}
=\int_{0^+}\tfrac{ds}{2s}$, and \Cref{ass:reg} is immediate.  (i)  By
\Cref{n:thm:A1}(b) and \crefrange{n:eq:Fprime}{n:eq:QchiPsi}, a
symmetric stationary profile is an equilibrium as soon as $\chi_\beta$ is
strictly increasing (then $F$ is single-peaked with peak at the stationary
gap).  Here $\chi_\beta=t+\tfrac{1+\beta}2\varphi(t)$ is piecewise linear
with slopes
\[
1+\tfrac{1+\beta}2\cdot2=2+\beta,\qquad
1-\tfrac34(1+\beta),\qquad
1+\tfrac{1+\beta}{10},
\]
all strictly positive iff $\beta<\tfrac13$.  (ii)  $\psi_\beta=t+
\gamma\varphi(t)$ is piecewise linear with slopes $1+2\gamma>0$,
$1-\tfrac32\gamma<0$, $1+\tfrac\gamma5>0$; solving $\psi_\beta=1$ on each
piece gives the displayed formulas, and the membership conditions
$\Delta_1\le\tfrac{17}{50}$, $\Delta_2\in(\tfrac{17}{50},\tfrac12]$,
$\Delta_3\in(\tfrac12,1)$ hold precisely for $\gamma<\tfrac{25}{22}$, i.e.\
$\beta<\tfrac3{47}$ (at $\gamma=1$: $\Delta_1=\tfrac13\le0.34$,
$\Delta_2=\tfrac{19}{50}$, $\Delta_3=\tfrac{11}{20}$).  The derivative of
$\Delta_2$ in $\gamma$ is as displayed, and $\gamma_\beta$ is strictly
increasing in $\beta$; monotonicity of the outer branches is a one-line
computation.  (iii)  At $\gamma=\tfrac{25}{22}$,
$\psi_\beta(\tfrac12)=\tfrac12+\gamma\cdot\tfrac{11}{25}=1$, and for
$\gamma>\tfrac{25}{22}$ the minimum of $\psi_\beta$ over
$[\tfrac{17}{50},1]$ exceeds $1$ while the first piece still crosses:
exactly one root remains; for $\beta\in(\tfrac3{47},\tfrac13)$ it is, by
(i), the unique equilibrium.
\end{proof}

\begin{remark}[A local criterion, and the three-root example of \Cref{app:exist-uniq}]
\label{n:rem:chi-criterion}
The proof of \Cref{n:ex:phi} isolates a criterion of independent use: at a
symmetric stationary profile with gap $\Delta$ at which $\varphi=K/k$ is
differentiable,
\[
\chi_\beta'(\Delta)=1+\tfrac{1+\beta}2\,\varphi'(\Delta)<0
\;\Longrightarrow\;
\text{$\Delta$ is a strict local \emph{minimum} of $F$},
\]
hence not an equilibrium, since $F'=k\,[Q-\chi_\beta]$ then crosses $0$
from below.  This explains,
conceptually rather than numerically, why the three-root cost
distribution has non-equilibrium middle root: there
$\chi_0'=1+\varphi'/2$ evaluates \textup(numerically\textup) to $+4.5$,
$-0.5$, $+0.94$ at the three roots $\Delta\in\{\tfrac14,\tfrac12,
\tfrac34\}$---the middle root is a local minimum of the deviator's payoff,
while in \Cref{n:ex:phi} the middle root has $\chi_\beta'>0$ and is a
genuine equilibrium.  Which of the two behaviors a middle root exhibits is
thus read off the slope of $K/k$ at the root.
\end{remark}

\subsection{The $\beta\uparrow1$ limit for general electorates}

\begin{theorem}[Upper hemicontinuity onto the margin game]
\label{n:thm:B5}
Let \Cref{ass:reg} hold and $\alpha\ge0$.  The payoffs $(u_1,u_2)$ are
jointly continuous on $[0,1]^2$, and
$\sup_{c}\,|U_i^\beta(c)-U_i^1(c)|\le(1-\beta)$ for all $i$ and $\beta$.
Consequently, if $\beta_n\uparrow1$ and $(c^n)$ are Nash equilibria of the
$\beta_n$-games with $c^n\to c^\infty$, then $c^\infty$ is a Nash equilibrium
of the margin game $(\beta=1)$.  In particular:
\begin{enumerate}
\item[\textup{(a)}] If $\alpha=0$, then $c^n\to(q,q)$, the median profile:
equilibria of the nearly-zero-sum games select the margin game's unique
equilibrium.  \textup(No monotonicity of $r=k/K$ and no bound on
$\rho_v$ is assumed.\textup)
\item[\textup{(b)}] If $\alpha>0$ and $x\sim U(0,1)$, every limit point lies
in the margin-game equilibrium set
$\{(c_1,c_2):\tfrac1\lambda\le c_1\le c_2\le\tfrac{1+\alpha}\lambda\}$;
under the hypotheses of \Cref{n:thm:A2} the full limit is the maximally
polarized symmetric profile, $\Delta^*(\alpha,\beta)\downarrow\Delta_0$.
\end{enumerate}
\end{theorem}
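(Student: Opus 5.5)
The plan follows the standard upper-hemicontinuity template, in four steps: continuity, uniform approximation, passage to the limit, and then identification of the limit in the two special cases.

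\emph{Continuity.} The payoffs are $u_i=\int_0^1K(B_i(x)^+)v(x)\,dx$. The map $(c_1,c_2,x)\mapsto B_i(x)$ is continuous, and $K(\cdot^+)$ is continuous and bounded by $1$. Dominated convergence therefore gives joint continuity of $u_i$ on $[0,1]^2$, including at ties, where $B_i\le0$ and the shares vanish.

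\emph{Uniform bound.} By \cref{eq:objective}, $U_i^\beta-U_i^1=(1-\beta)u_{-i}$. Since $0\le u_{-i}\le T\le K(1)\le1$ (\Cref{n:sec:prelim}), the sup bound $1-\beta$ is immediate.

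\emph{Limit argument.} Let $c^n$ be equilibria of the $\beta_n$-games converging to $c^\infty$. Fix a deviation $c_1'$. The equilibrium property gives $U_1^{\beta_n}(c^n)\ge U_1^{\beta_n}(c_1',c_2^n)$. The uniform bound turns this into $U_1^1(c^n)\ge U_1^1(c_1',c_2^n)-2(1-\beta_n)$. Joint continuity of $U_1^1=u_1-u_2$ lets us pass to the limit, giving $U_1^1(c^\infty)\ge U_1^1(c_1',c_2^\infty)$. The same argument for candidate $2$ shows $c^\infty$ is a margin-game equilibrium. This step is routine, and the only care needed is that continuity is joint, not separate.

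\emph{Part (a).} By \Cref{thm:margin_median}, the margin game at $\alpha=0$ has the unique equilibrium $(q,q)$. Every subsequence of $(c^n)$ has, by compactness of $[0,1]^2$, a further convergent subsequence, and the limit must be $(q,q)$. Hence the whole sequence converges to $(q,q)$. Note the hypothesis as stated assumes convergence, but this subsequence argument delivers it anyway. No regularity of $r$ or bound on $\rho_v$ enters, since only continuity and the uniqueness in \Cref{thm:margin_median} were used.

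\emph{Part (b), first claim.} Every limit point lies in the margin-game equilibrium set. By \Cref{prop:margin_uniform}, that set is $E(\alpha)$, which after the labeling $c_1\le c_2$ is the displayed set.

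\emph{Part (b), second claim.} Under the hypotheses of \Cref{n:thm:A2}, the equilibrium of each $\beta_n$-game is the unique symmetric profile with gap $\Delta^*(\alpha,\beta_n)$. \Cref{n:prop:B4}(a), or equivalently \Cref{thm:beta-select}, gives $\Delta^*\to\Delta_0$. Monotonicity in $\beta$ from \Cref{n:prop:B2} upgrades the convergence to $\Delta^*\downarrow\Delta_0$. The profile then converges to $(\tfrac1\lambda,\tfrac{1+\alpha}\lambda)$, the maximally polarized point of the band.

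\emph{Main obstacle.} The hard part is making sure the cited uniqueness results apply exactly in the stated settings. For (a), \Cref{thm:margin_median} requires \Cref{ass:reg}, which is assumed. For (b), the identification of the full limit needs \Cref{n:thm:A2}. Without super-regularity one only gets the weaker band statement, which is exactly what (b) asserts in that case.
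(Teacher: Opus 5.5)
Your proposal is correct and follows essentially the same route as the paper. The shared steps are joint continuity of $u_i$, the bound $|U_i^\beta-U_i^1|=(1-\beta)u_{-i}\le1-\beta$, and passage to the limit in the equilibrium inequalities. The limit is then identified via \Cref{thm:margin_median} plus a sub-subsequence argument for (a), and via \Cref{prop:margin_uniform} together with \Cref{n:prop:B2,n:prop:B4} for (b). The only difference is in the continuity step: you use dominated convergence, while the paper gives an explicit modulus-of-continuity bound; both are valid for unbounded $v$. Your slack $2(1-\beta_n)$ is looser than needed but harmless; a single $(1-\beta_n)$ suffices because $U_1^1\le U_1^{\beta_n}$ pointwise.
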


\begin{proof}
\emph{Continuity.}  For positions $(a,o)$ and $(a',o')$ and every voter $x$,
$\big|B(x;a,o)-B(x;a',o')\big|\le(1+\alpha)|a-a'|+|o-o'|$, and
$t\mapsto K(t^+)$ is uniformly continuous on $[-1,1]$ (continuous on a
compact); since $\int v=1$,
\[
|u_1(a,o)-u_1(a',o')|\le
\omega_K\big((1+\alpha)|a-a'|+|o-o'|\big),
\]
with $\omega_K$ the modulus of continuity of $K(\cdot^+)$---valid for
\emph{arbitrary} $v$, including unbounded densities.  The same bound holds
for $u_2$.  Continuity holds across ties: as $a\to o$ all benefits tend to
$0$ uniformly, so $u_i\to0$, the tie payoff.  Next, $|U_i^\beta-U_i^1|
=(1-\beta)\,u_{-i}\le(1-\beta)K(1)\le1-\beta$.

\emph{Limit.}  Fix any $c_1'$.  For each $n$,
$U_1^{\beta_n}(c_1^n,c_2^n)\ge U_1^{\beta_n}(c_1',c_2^n)$.  Passing to the
limit using joint continuity of $(u_1,u_2)$ and the uniform bound,
$U_1^{1}(c^\infty)\ge U_1^{1}(c_1',c_2^\infty)$; similarly for candidate
$2$.  Hence $c^\infty$ is a margin-game equilibrium.  (a) then follows from
\Cref{thm:margin_median} (at $\alpha=0$ the margin game has the
unique equilibrium $(q,q)$, $q$ the median), plus
compactness of $[0,1]^2$: every subsequence has a sub-subsequence converging
to $(q,q)$, so the whole sequence does.  (b) follows from the uniform-voter
margin-game characterization and \Cref{n:prop:B2,n:prop:B4}.
\end{proof}

Upper hemicontinuity presumes the $\beta_n$-equilibria exist.  The next
results show that for concentrated-at-the-extremes electorates they may
\emph{fail to exist} for all $\beta$ close to $1$---so the median selection
of \Cref{n:thm:B5}(a) is genuinely a statement about limits, not a
guarantee.  The engine is a quantitative margin-game certificate.

\begin{lemma}[Lipschitz transfer]\label{n:lem:lip}
Let $V$ be arbitrary, $\alpha\ge0$, and let $K$ be $L_K$-Lipschitz on
$[0,1]$ \textup(e.g.\ $L_K=1$ for $\kappa\sim U(0,1)$\textup).  Then the margin
$w(c,c'):=u_{\mathrm{dev}}(c,c')-u_{\mathrm{dev}}(c',c)$ satisfies, for all
$c,\widetilde c,c'$,
\[
|w(c,c')-w(\widetilde c,c')|\le(2+\alpha)L_K\,|c-\widetilde c|,
\qquad
|w(c,c')-w(c,\widetilde c')|\le(2+\alpha)L_K\,|c'-\widetilde c'| .
\]
\end{lemma}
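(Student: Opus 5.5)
We prove the two inequalities by controlling each vote share separately through the benefits, exactly as in the continuity step of \Cref{n:thm:B5}. Write $u_{\mathrm{dev}}(a,o)=\int_0^1K\big(B(x;a,o)^+\big)v(x)\,dx$ with $B(x;a,o)=|x-o|-(1+\alpha)|x-a|$. We work with this pointwise form directly, so that ties, crossings of the two platforms and dead zones require no case analysis.

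\emph{Step 1 (pointwise Lipschitz bounds on the benefit).} By the triangle inequality,
\[
\big|B(x;a,o)-B(x;\widetilde a,o)\big|\le(1+\alpha)|a-\widetilde a|,
\qquad
\big|B(x;a,o)-B(x;a,\widetilde o)\big|\le|o-\widetilde o|,
\]
for every $x$. Since $t\mapsto t^+$ is $1$-Lipschitz and $K$ is $L_K$-Lipschitz, $K(B^+)$ inherits these bounds multiplied by $L_K$. Integrating against the probability density $v$, with no regularity of $V$ needed, gives
\[
|u_{\mathrm{dev}}(a,o)-u_{\mathrm{dev}}(\widetilde a,o)|\le(1+\alpha)L_K|a-\widetilde a|,
\qquad
|u_{\mathrm{dev}}(a,o)-u_{\mathrm{dev}}(a,\widetilde o)|\le L_K|o-\widetilde o|.
\]

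\emph{Step 2 (assemble the margin).} Changing the deviator's position from $c$ to $\widetilde c$ moves the first term of $w$ by at most $(1+\alpha)L_K|c-\widetilde c|$, since it is the own position. It moves the second term $u_{\mathrm{dev}}(c',c)$ by at most $L_K|c-\widetilde c|$, since there $c$ is the opponent's position. Summing the two contributions gives the constant $(2+\alpha)L_K$. The second inequality follows by the same bookkeeping with the roles exchanged: $c'$ is the opponent in the first term and the own position in the second term, so the total is again $(1+1+\alpha)L_K$.

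The only point needing care is to avoid the tent geometry altogether. A proof via \cref{n:eq:master} would require boundedness of $v$ and separate treatment of ties, crossings and dead zones. The pointwise triangle-inequality route bypasses all of these, so we expect no real obstacle beyond stating Step 1 for all configurations at once.
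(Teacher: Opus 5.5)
Your proof is correct and is essentially the paper's argument. Moving one platform by $h$ changes each voter's benefit toward the mover by at most $(1+\alpha)|h|$ and toward her opponent by at most $|h|$; this passes through the $L_K$-Lipschitz map $t\mapsto K(t^+)$, is integrated against $v$ using $\int v=1$ (so no boundedness of $v$ or tent geometry is needed), and the two vote-share bounds are summed to give the constant $(2+\alpha)L_K$.
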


\begin{proof}
Moving one candidate by $h$ changes every voter's benefit toward her by at
most $(1+\alpha)|h|$ and toward her opponent by at most $|h|$; since
$t\mapsto K(t^+)$ is $L_K$-Lipschitz and $\int v=1$, the two vote shares move
by at most $(1+\alpha)L_K|h|$ and $L_K|h|$.  Sum.  (No boundedness of $v$ is
used.)
\end{proof}

\begin{proposition}[Grid certificate]\label{n:prop:grid}
In the setting of \Cref{n:lem:lip}, let $g_1,\dots,g_N$ be the uniform grid
on $[0,1]$ with step $h$, and suppose that
\[
\max_i\min_j w(g_i,g_j)\;\le\;-\hat\varepsilon\;<\;0 .
\]
Set
$\varepsilon':=\hat\varepsilon-(2+\alpha)L_K\,h/2$ and assume
$\varepsilon'>0$.  Then:
\begin{enumerate}
\item[\textup{(a)}] $\min_{c'\in[0,1]}w(c,c')\le-\varepsilon'$ for every
$c\in[0,1]$; in particular the margin game \textup($\beta=1$\textup) has no
pure Nash equilibrium.
\item[\textup{(b)}] For every $c'\in[0,1]$ there exists $c''$ with
$w(c'',c')\ge\varepsilon'$.
\end{enumerate}
\end{proposition}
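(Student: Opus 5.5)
The plan is to transfer the grid bound to the continuum with \Cref{n:lem:lip}: fix an arbitrary $c\in[0,1]$, move it to a nearby grid point, use the grid hypothesis there, and absorb the movement into the Lipschitz error. Since the uniform grid has step $h$ and includes (or comes within $h/2$ of) both endpoints, every $c\in[0,1]$ has a grid point $g_i$ with $|c-g_i|\le h/2$. If the grid omits an endpoint, I will check that the covering radius is still $h/2$, and otherwise replace $h/2$ by the actual covering radius.

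For (a), fix $c$ and pick $g_i$ with $|c-g_i|\le h/2$. By hypothesis there is a $g_j$ with $w(g_i,g_j)\le-\hat\varepsilon$. The first inequality of \Cref{n:lem:lip}, applied in the first argument with $c'=g_j$ held fixed, gives
\[
w(c,g_j)\le w(g_i,g_j)+(2+\alpha)L_K\,\tfrac h2\le-\varepsilon'.
\]
Hence $\min_{c'}w(c,c')\le-\varepsilon'<0$, with the minimum attained because $w$ is continuous on a compact set (\Cref{n:thm:B5}). To conclude non-existence, I will use the zero-sum reduction from the sketch of \Cref{prop:nonexistence}. The margin game has payoffs $U_1=w(c_1,c_2)$ and $U_2=-w(c_1,c_2)=w(c_2,c_1)$. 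At a pure equilibrium $(c_1,c_2)$, candidate $1$ can deviate to $c_2$ and earn $w(c_2,c_2)=0$, so $w(c_1,c_2)\ge0$. By the same argument for candidate $2$, $w(c_1,c_2)\le0$. So the equilibrium value is $0$, and $c_2$ is a best response to $c_1$, meaning $\min_{c'}w(c_1,c')=-\max_{c'}w(c',c_1)=0$. This contradicts the bound just proved at $c=c_1$.

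For (b), fix $c'$ and take $c=c'$ in (a). This gives a $c''$ with $w(c',c'')\le-\varepsilon'$. By antisymmetry $w(c'',c')=-w(c',c'')$, which holds because $w(c,c')=u_{\mathrm{dev}}(c,c')-u_{\mathrm{dev}}(c',c)$. Therefore $w(c'',c')\ge\varepsilon'$.

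The main obstacle is bookkeeping rather than ideas. First, the grid must have covering radius $h/2$. Second, only the first-argument Lipschitz bound is needed, since $c'$ is never moved off the grid. That is why the constant is $(2+\alpha)L_K h/2$ and not twice that.
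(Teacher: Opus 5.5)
Your proposal is correct and follows the paper's proof essentially step for step. Part (a) moves $c$ to the nearest grid point and uses only the first-argument Lipschitz bound of \Cref{n:lem:lip}; non-existence follows because the symmetric zero-sum game has value $0$; part (b) is (a) at $c=c'$ plus antisymmetry, and your explicit derivation that a pure equilibrium forces $\min_{c'}w(c_1,c')=0$ simply spells out the paper's one-line remark about the value.
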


\begin{proof}
(a)  Given $c$, take the nearest grid point $g_i$
($|c-g_i|\le h/2$) and the certified $g_j$ with
$w(g_i,g_j)\le-\hat\varepsilon$; then
$w(c,g_j)\le w(g_i,g_j)+(2+\alpha)L_Kh/2\le-\varepsilon'$.  A pure
equilibrium $(c_1,c_2)$ of the symmetric zero-sum margin game would need
$w(c_1,\cdot)\ge0$ everywhere (its value is $0$ by antisymmetry), which (a)
excludes.  (b)  Apply (a) at $c=c'$ to get $g_j$ with
$w(c',g_j)\le-\varepsilon'$, hence by antisymmetry
$w(g_j,c')\ge\varepsilon'$.
\end{proof}

\begin{theorem}[Non-existence for hollow electorates]\label{n:thm:C1}\stat{computer-assisted}
Let $\alpha=1$, $\kappa\sim U(0,1)$.
\begin{enumerate}
\item[\textup{(a)}] For $v=\mathrm{Beta}(\tfrac12,\tfrac12)$
\textup(the arcsine electorate\textup): on the uniform grid with $N=1201$
\textup($h=\tfrac1{1200}$\textup), every $w(g_i,g_j)$ evaluates exactly
\textup(in terms of the regularized incomplete beta function\textup) and
$\max_i\min_jw(g_i,g_j)=-8.2244\times10^{-3}$; the Lipschitz slack is
$(2+\alpha)h/2=1.25\times10^{-3}$, so \Cref{n:prop:grid} applies with
$\varepsilon'=6.97\times10^{-3}$.  The margin game has \emph{no pure
equilibrium}.
\item[\textup{(b)}] The same holds for $v=\mathrm{Beta}(0.8,0.8)$ with
$N=2401$, $\hat\varepsilon=3.0238\times10^{-3}$, slack
$6.25\times10^{-4}$, $\varepsilon'=2.40\times10^{-3}$: non-existence is not
an artifact of the unbounded arcsine density.
\end{enumerate}
\end{theorem}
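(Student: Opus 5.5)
The proof will be a direct application of \Cref{n:prop:grid}, with uniform costs, so $K(t)=t$ and $L_K=1$, and $\alpha=1$, so the Lipschitz constant of \Cref{n:lem:lip} is $(2+\alpha)L_K=3$. What remains is a certified finite computation. On the grid, evaluate $\max_i\min_j w(g_i,g_j)$ for each electorate and check that the resulting $\hat\varepsilon$ exceeds the slack $3h/2$. The slack is $1.25\times10^{-3}$ at $N=1201$ and $6.25\times10^{-4}$ at $N=2401$. The margins $\varepsilon'$ quoted in the statement then follow by subtraction. Part (b) of \Cref{n:prop:grid} is not needed here.

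The key step is to make each $w(g_i,g_j)$ exact rather than a quadrature estimate. With uniform costs, $u_{\mathrm{dev}}(c,c')=\int B^+(x)\,v(x)\,dx$, and $B$ is piecewise linear in $x$. On candidate $c$'s side it is a tent with apex $\Delta=|c-c'|$ at $c$. Its outer arm has slope $\alpha$ and is clipped to the boundary or to the outer zero $\zout{1}$. Its inner arm has slope $\lambda=3$ and ends at the inner zero $\zin{1}$. So each vote share is a finite sum of integrals $\int_a^b (p+qx)\,v(x)\,dx$ with explicit endpoints. For a Beta$(s,s)$ density, each such integral equals $p\,[I_b-I_a]+q\,\tfrac{s}{2s}\,[I^{(+1)}_b-I^{(+1)}_a]$. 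Here $I$ is the regularized incomplete beta function with parameters $(s,s)$, and $I^{(+1)}$ is the one with parameters $(s+1,s)$, using $x\,\mathrm{Beta}(s,s)\propto\mathrm{Beta}(s+1,s)$. This closed form is what the statement calls exact evaluation.

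Two points need care. The first is the case split in the tent geometry. I will enumerate, for each ordered pair of grid points, whether each tent's outer zero falls inside $[0,1]$. I will also check the degenerate pairs: at a tie $w=0$, and an exact incomplete-beta implementation needs no special handling at the singular endpoints of the arcsine density. The second is rounding: the floating-point minimax value must be certified to the stated digits. I would evaluate with interval arithmetic, or at least a rigorous bound on the incomplete-beta error, far below $10^{-6}$. Against the margins $\varepsilon'\approx7\times10^{-3}$ and $2.4\times10^{-3}$, that is ample.

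I expect the main obstacle to be the arcsine case. Its density is unbounded at the endpoints, and \Cref{fig:security} warns that naive quadrature is off by a factor of $47$. This is exactly why the argument rests on \Cref{n:lem:lip}, whose bound uses only $\int v=1$ and no bound on $v$, together with closed-form incomplete-beta evaluation rather than any grid rule in $x$. Once those two ingredients are in place, the reported extremal values $-8.2244\times10^{-3}$ (attained near $c=0.829$) and $-3.0238\times10^{-3}$ only need to be reproduced, and \Cref{n:prop:grid}(a) yields non-existence of a pure equilibrium.
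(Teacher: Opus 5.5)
Your proposal is correct and follows the paper's proof. Both use \Cref{n:lem:lip} (constant $(2+\alpha)L_K=3$) and \Cref{n:prop:grid}(a) to reduce the claim to a finite grid computation, with each $w(g_i,g_j)$ written as a combination of $V(\cdot)$ and $\int x\,v$ at the tent break points and evaluated exactly through incomplete beta functions. The one difference is rounding: the paper takes double-precision values and deducts a fixed numerical guard of $10^{-10}$ inside $\varepsilon'$, while you propose interval arithmetic or a rigorous error bound on the incomplete-beta evaluations, which is if anything the more careful choice.
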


\begin{proof}
\Cref{n:lem:lip,n:prop:grid} reduce the claim to the finite computation,
which is exact modulo floating-point evaluation of the incomplete beta
function: each $w(g_i,g_j)$ is a finite signed combination of terms
$V(\cdot)$ and $\int_{\cdot}^{\cdot}x\,v(x)dx$ at algebraic break points
(the closed-form uniform-cost payoff), evaluated in double precision; we
deduct a numerical guard of $10^{-10}$---more than seven orders of magnitude below the
certified margins---inside $\varepsilon'$.  The computation is in the
accompanying verification suite (\texttt{check\_competition.py},
claim \texttt{C1.arcsine}).  A sanity check on
$v=\mathrm{Beta}(2,2)$ (centrally peaked), at the same $\alpha=1$, returns grid
security exactly $0$, attained at the median and only there on the grid,
consistent with \Cref{thm:quartic} ($v''(\tfrac12)=-12<0$, so the median is a
strict local equilibrium of the margin game).
\end{proof}

\begin{corollary}[Non-existence for an interval of $\beta$]
\label{n:cor:beta-interval}
In the setting of \Cref{n:thm:C1}, the $\beta$-game has no pure Nash
equilibrium for any
\[
\beta>\frac{1-\varepsilon'}{1+\varepsilon'}
=\begin{cases}
0.98615\ldots & v=\mathrm{Beta}(\tfrac12,\tfrac12),\\
0.99521\ldots & v=\mathrm{Beta}(0.8,0.8).
\end{cases}
\]
\end{corollary}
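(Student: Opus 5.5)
The plan is to reduce the $\beta$-game to the margin certificate of \Cref{n:prop:grid}, using the decomposition \eqref{eq:objective},
\[
U_i=\tfrac{1-\beta}{2}\,T+\tfrac{1+\beta}{2}\,w_i ,
\]
together with two facts recorded in \Cref{n:sec:prelim}. First, turnout satisfies $0\le T\le K(1)\le1$ at every profile. Second, $w_1=-w_2$, and tied profiles give $w=0$. The argument is by contradiction: assume a pure equilibrium $(c_1,c_2)$ of the $\beta$-game exists, with no interiority or untiedness assumed.

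\emph{Step 1: each candidate has a deviation securing a large margin.} Apply \Cref{n:prop:grid}(b) with $c'=c_2$. It gives a position $c''$ with $w(c'',c_2)\ge\varepsilon'$, where $\varepsilon'$ is the certified constant of \Cref{n:thm:C1}. Since $T\ge0$, candidate $1$ deviating to $c''$ earns $U_1\ge\tfrac{1+\beta}2\varepsilon'$. The mirror application, with $c'=c_1$, gives candidate $2$ a deviation worth at least $\tfrac{1+\beta}2\varepsilon'$. The grid bound of \Cref{n:prop:grid} covers every $c'\in[0,1]$, so this holds whatever the profile is.

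\emph{Step 2: sum the two equilibrium inequalities.} Equilibrium requires, for each $i$,
\[
\tfrac{1-\beta}{2}T+\tfrac{1+\beta}{2}w_i\;\ge\;\tfrac{1+\beta}{2}\varepsilon' .
\]
Adding the two inequalities cancels the margins, since $w_1+w_2=0$. This leaves $(1-\beta)T\ge(1+\beta)\varepsilon'$. Using $T\le1$, we get $1-\beta\ge(1+\beta)\varepsilon'$, that is $\beta\le\tfrac{1-\varepsilon'}{1+\varepsilon'}$. For larger $\beta$ this is a contradiction, so no pure equilibrium exists.

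\emph{Step 3: the numbers.} Substituting $\varepsilon'=6.97\times10^{-3}$ and $2.40\times10^{-3}$ from \Cref{n:thm:C1} gives the two thresholds $0.98615\ldots$ and $0.99521\ldots$.

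I expect no real obstacle; the argument is short. The one point to check is that the summation trick needs the bound $T\le1$ uniformly over all profiles, including ties and boundary positions. That bound holds because at most one benefit is positive at each voter. The second thing to check is that the margin $w$ used in \Cref{n:prop:grid} coincides with $w_i=u_i-u_{-i}$ of the $\beta$-game at every profile. It does, since $w$ is defined from the same vote shares. Finally, it is the summing of the two inequalities, rather than treating one candidate alone, that removes the unknown equilibrium margin $w_1$.
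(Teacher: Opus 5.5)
Your proof is correct and is essentially the paper's own argument. You split $U_i$ into a turnout term and a margin term, use \Cref{n:prop:grid}(b) to give each candidate a deviation worth at least $\tfrac{1+\beta}{2}\varepsilon'$, and sum the two equilibrium inequalities so the margins cancel; the bound $T\le1$ then gives $\beta\le\tfrac{1-\varepsilon'}{1+\varepsilon'}$.
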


\begin{proof}
Suppose $(c_1,c_2)$ is a pure equilibrium of the $\beta$-game.  Decompose
$U_1=\tfrac{1-\beta}2T+\tfrac{1+\beta}2w$ with $T=u_1+u_2$.  By
\Cref{n:prop:grid}(b) there is $c''$ with $w(c'',c_2)\ge\varepsilon'$, so
candidate $1$'s equilibrium payoff satisfies
$U_1\ge U_1(c'',c_2)\ge\tfrac{1+\beta}2\varepsilon'$ (both terms of the
decomposition are then nonnegative).  Likewise
$U_2\ge\tfrac{1+\beta}2\varepsilon'$.  Summing,
$(1-\beta)\,T(c_1,c_2)=U_1+U_2\ge(1+\beta)\varepsilon'$, while $T\le1$
always; hence $\beta\le\tfrac{1-\varepsilon'}{1+\varepsilon'}$.
\end{proof}

\begin{remark}[How conservative is the bound?]\label{n:rem:C1num}\stat{numerical observation}
The corollary certifies non-existence on a left neighborhood of $1$; it is
not tight.  Numerically \textup(best-response iteration with global verification\textup), for
$v=\mathrm{Beta}(\tfrac12,\tfrac12)$, $\alpha=1$: pure equilibria exist for
$\beta\le0.90$ \textup(at $\beta=0.90$: $(0.234,0.766)$\textup) and none are
found for $\beta\ge0.91$.  We label the location of the true threshold a
numerical observation; the theorem-grade statements are
\Cref{n:cor:beta-interval} and, sharper, \Cref{m:thm:E2} \textup($\beta\ge0.9168$\textup).
Together with \Cref{n:thm:A1,n:thm:A2} \textup(existence for all $\beta<1$
under uniform voters and super-regularity\textup) and
\Cref{m:thm:E1,m:cor:E1convex} \textup(existence for all $\beta<1$ at
$\alpha=0$ under uniform costs, the arcsine electorate included\textup), this
places the certified failure of pure existence at the joint corner $\alpha=1$,
$\beta$ near $1$, hollow $v$: non-existence is a phenomenon of alienation and
competitiveness together, not of competitiveness alone. The two existence
conjectures stay well clear of that corner: \Cref{rem:exist} is the baseline,
with $\alpha=\beta=0$, and \Cref{m:conj:alpha0} switches off alienation alone.
\end{remark}

\subsection{Quantile structure at $\alpha=0$ under uniform costs}

\begin{proposition}[Mass-gap law and concavity under uniform costs]\label{n:prop:C4}
Let $\alpha=0$, $\kappa\sim U(0,1)$, $\beta\in[0,1)$, and let $v$ satisfy
\Cref{ass:reg}.
\begin{enumerate}
\item[\textup{(a)}] Every Nash equilibrium satisfies
\begin{equation}\label{n:eq:C4foc}
V(c_2)-V(c_1)=\frac{1-\beta}2,
\qquad
2V(c_1)=V(m)+\beta\big(1-V(m)\big):
\end{equation}
the platforms trap voter mass exactly $\tfrac{1-\beta}2$, decreasing in
competitiveness, independently of $v$.
\item[\textup{(b)}] If $\rho_v\le\tfrac4{1-\beta}$ then
$U_1(\cdot,c_2)$ is concave on $[0,c_2)$ for every $c_2$ \textup(and
mirror\textup), so equilibria are exactly the profiles satisfying
\cref{n:eq:C4foc} at which neither candidate gains from her cross-over
deviation.
\end{enumerate}
\end{proposition}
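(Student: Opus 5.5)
We work from \cref{n:eq:master0} with $k\equiv1$. For uniform costs and $\alpha=0$ the three integrals there are voter masses: $\int_{c_1}^m v=V(m)-V(c_1)$, the flank term is $V(c_1)$, and $\int_m^1 v=1-V(m)$. So for $c_1<c_2$,
\[
\frac{\partial U_1}{\partial c_1}=V(m)-2V(c_1)+\beta\big(1-V(m)\big),
\]
and the mirror computation gives
\[
\frac{\partial U_2}{\partial c_2}=1+V(m)-2V(c_2)-\beta V(m).
\]
This generalizes \cref{eq:du_ucost}.

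For part (a), we first invoke \Cref{n:lem:interior}(b): every equilibrium is untied and interior, so by \Cref{n:lem:master} both first-order conditions hold. The first condition is exactly $2V(c_1)=V(m)+\beta(1-V(m))$. The second reads $2V(c_2)=1+(1-\beta)V(m)$. Subtracting the first from the second gives $2\big(V(c_2)-V(c_1)\big)=1-\beta$. Since nothing beyond $V$ enters, the conclusion holds independently of $v$.

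For part (b), we fix $c_2$ and differentiate the displayed derivative once more in $c_1$. Because $\partial m/\partial c_1=\tfrac12$,
\[
\frac{\partial^2U_1}{\partial c_1^2}=\tfrac12(1-\beta)\,v(m)-2v(c_1)\le\tfrac{1-\beta}2\sup v-2\inf v .
\]
This is $\le0$ exactly when $\rho_v\le 4/(1-\beta)$, which mirrors the proof of \Cref{prop:fourflat}. Concavity on $[0,c_2)$ together with stationarity then makes $c_1$ the own-side global maximizer. The remaining deviations are the tie and the far side.

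The main obstacle is the tie. Unlike at $\beta=0$, the tie payoff $0$ must be compared with $U_1=u_1-\beta u_2$, which can a priori be negative. The cleanest route is the concavity itself. By continuity of payoffs at ties, $U_1(c_2^-,c_2)=0$, and the tie value is the endpoint limit of the concave own-side function, so the own-side maximum is at least $0$; hence the tie is never strictly better. The far side $(c_2,1]$ is left as the explicit cross-over check named in the statement. By \Cref{n:rem:reflection-beta}(b), no reflection shortcut is available there. Conversely, any equilibrium satisfies \cref{n:eq:C4foc} by (a) and trivially passes the cross-over check, which yields the ``exactly'' characterization.
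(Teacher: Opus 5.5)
Your proposal is correct and follows the paper's proof essentially step for step. You derive both first-order conditions from \cref{n:eq:master0} with $k\equiv1$ and combine them by subtraction where the paper adds (the two are equivalent), then bound the second derivative $\tfrac{1-\beta}2v(m)-2v(c_1)$ using $\rho_v\le\tfrac4{1-\beta}$. Your explicit handling of the tie, via the limit $U_1(c_2^-,c_2)=0$ and concavity, sharpens the paper's terse ``ties are dominated as before'', which matters because for $\beta>0$ the own-side payoff $u_1-\beta u_2$ is not a priori positive.
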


\begin{proof}
(a)  By \Cref{n:lem:interior}(b) and \cref{n:eq:master0} with $k\equiv1$:
$\partial U_1/\partial c_1=V(m)-2V(c_1)+\beta(1-V(m))=0$ and, mirroring,
$\partial U_2/\partial c_2=1-V(m)-2(1-V(c_2))+\beta V(m)=0$.  The first
display is the second condition of \cref{n:eq:C4foc}; the second rearranges
to $2\big(1-V(c_2)\big)=\big(1-V(m)\big)+\beta V(m)$.  Adding this to
$2V(c_1)=V(m)+\beta(1-V(m))$ gives $2+2V(c_1)-2V(c_2)=1+\beta$, i.e.\ the
mass-gap law.
(b)  Differentiating once more,
$\partial^2U_1/\partial c_1^2=\tfrac{1-\beta}2\,v(m)-2v(c_1)\le
\tfrac{1-\beta}2\sup v-2\inf v\le0$.  Concavity makes stationarity
equivalent to own-side global optimality; ties are dominated as before, and
the only remaining deviations are cross-overs.
\end{proof}

\begin{corollary}[Quantile equilibrium for symmetric electorates]
\label{n:cor:C4}
In the setting of \Cref{n:prop:C4}, let $v$ be symmetric about $\tfrac12$
with $\rho_v\le\tfrac4{1-\beta}$.  Then the profile placing the candidates at
the $\tfrac{1+\beta}4$ and $\tfrac{3-\beta}4$ quantiles,
\[
V(c_1)=\frac{1+\beta}4,
\qquad
V(c_2)=\frac{3-\beta}4,
\]
is a Nash equilibrium.  At $\beta=0$ this is the quartile rule of \Cref{cor_quarter}; as
$\beta\uparrow1$ the platforms converge to the median from the
$\tfrac14$/$\tfrac34$ quantiles, sweeping the entire inner quartile range.
\end{corollary}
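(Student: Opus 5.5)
The plan is to reduce to \Cref{n:prop:C4}(b) plus a cross-over check, following the template of \Cref{cor_quarter}. Because $v$ is symmetric, $V(x)+V(1-x)=1$, so $c_2=1-c_1$ and $m=\tfrac12$. The first-order conditions then need only a check. The mass gap is $V(c_2)-V(c_1)=\tfrac{3-\beta}4-\tfrac{1+\beta}4=\tfrac{1-\beta}2$. Since $V(m)=\tfrac12$, the second condition of \cref{n:eq:C4foc} reads $2V(c_1)=\tfrac12+\tfrac\beta2=\tfrac{1+\beta}2$, which is exactly $V(c_1)=\tfrac{1+\beta}4$. The profile is interior because $0<\tfrac{1+\beta}4<\tfrac12$ for $\beta<1$. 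By \Cref{n:prop:C4}(b), the assumption $\rho_v\le\tfrac4{1-\beta}$ makes $U_1(\cdot,c_2)$ concave on $[0,c_2)$. Stationarity therefore gives own-side global optimality, and the mirror argument handles candidate $2$. The tie is dominated because $U_1=u_1-\beta u_2$ at the symmetric profile equals $(1-\beta)u_1>0$.

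What remains is the cross-over deviation. The reflection lemma \Cref{n:lem:reflection} cannot be used here, because it needs $\beta=0$ (\Cref{n:rem:reflection-beta}). I would instead use the mirror-pair comparison from the proof of \Cref{n:thm:A1}, adapted to a general symmetric $v$ with uniform costs. Parametrize a far-side deviation as $c_1''=c_2+s$ with $s\in(0,1-c_2]$, and its own-side mirror as $c_1'=c_2-s$. Reflecting $x\mapsto 1-x$ and using symmetry of $v$, the deviator's share at $c_2+s$ against $c_2$ equals her share at $1-c_2-s=c_1-s$ against $c_1$. Part (ii) of the proof of \Cref{n:lem:reflection} (which has $\alpha=0$ as a special case) shows this is at most her share at $c_1-s$ against $c_2$. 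For the same distance, that is at most her share at $c_2-s$ against $c_2$. The last inequality needs its own argument: with $\alpha=0$ the flank term is $K(s)V(\text{position})$ and the inner arm is a translate. I would take the cleaner route of comparing the two regions directly. On the near side ($c_2-s$) the deviator's flank $[0,c_2-s]$ has mass at least $V(c_1)$ whenever $s\le c_2-c_1$. On the far side her flank $[c_2+s,1]$ has mass at most $1-V(c_2)=V(c_1)$. The inner arms have equal width $s/2$ and lie on opposite sides of $c_2$, and by symmetry of $v$ about $\tfrac12$ with $c_2>\tfrac12$ the near one carries at least as much mass.

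The opponent's share must also be compared, since $\beta>0$. On the near side candidate $2$ keeps the flank $[c_2,1]$, of mass $V(c_1)$. On the far side candidate $2$ keeps $[0,c_2]$, of mass $V(c_2)>V(c_1)$. So the opponent's share is smaller in the near configuration, and $U_1(c_1',c_2)\ge U_1(c_1'',c_2)$. The far-side range is contained in the own-side range because $1-c_2=c_1<c_2$, so every cross-over is dominated by an own-side move and hence by $c_1$ itself.

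I expect the main obstacle to be making this mirror comparison rigorous for a general symmetric $v$: the inner-arm masses must be compared pointwise using symmetry and $c_2>\tfrac12$, which is a monotone-rearrangement argument. If that fails, the fallback is the explicit bound approach of the proof of \Cref{cor_quarter}, now with $s=\inf v\ge\tfrac{1-\beta}4$. The last claims are immediate: $\beta=0$ gives the quartiles, and as $\beta\uparrow1$ both quantiles tend to $\tfrac12$, with $\rho_v\le\tfrac4{1-\beta}$ only loosening.
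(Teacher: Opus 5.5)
Your reduction is correct: the symmetric quantile profile is stationary, \Cref{n:prop:C4}(b) gives own-side concavity, and the tie is dominated by $(1-\beta)u_1>0$. The mirror-pair inequality $U_1(c_2-s,c_2)\ge U_1(c_2+s,c_2)$ you aim for is also true. The gap is that you justify it region by region, and three of those comparisons fail.

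First, the deviator's inner arms are not ordered by symmetry. The far-side arm is $[c_2+\tfrac s2,c_2+s]$ and the near-side arm is $[c_2-s,c_2-\tfrac s2]$, and symmetry of $v$ gives no relation between them. Take the hollow electorate $v=1+\tfrac12\cos2\pi x$ of \Cref{fig:quartiles} ($\rho_v=3$) at $\beta=0$, so $c_1\approx0.178$, and set $s=c_1$. The far arm is $[0.91,1]$, where $v>1.4$; the near arm is $[0.64,0.73]$, where $v<0.95$. So the far arm is heavier.

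Second, for the opponent you compare only flanks, $V(c_1)$ against $V(c_2)$. You drop her inner arms $\int_0^{s/2}(s-2y)\,v(c_2\mp y)\,dy$. For $v$ unimodal at $\tfrac12$ we have $v(c_2-y)\ge v(c_2+y)$, so these arms work against you.

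Third, the bound $V(c_2-s)\ge V(c_1)$ needs $s\le\Delta$. But $s$ ranges up to $c_1$, which exceeds $\Delta=1-2c_1$ once $c_1>\tfrac13$ (for uniform voters, whenever $\beta>\tfrac13$). The earlier chain through step (ii) of \Cref{n:lem:reflection} does not rescue this either. It passes through $u_{\mathrm{dev}}(c_1-s,c_2)$, whose gap is $\Delta+s$, and that share can exceed the near-side share. So the cross-over step, the only nontrivial part, is not established as written.

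The missing idea is to compare whole shares at a fixed gap, not their pieces. At $\alpha=0$, a candidate at $a$ facing an opponent at $a+s$ has benefit $B(x)=\phi(x-a)$, where $\phi$ is non-increasing: $\phi=s$ for $z\le0$, $s-2z$ on $[0,s]$, and $-s$ beyond. Hence $a\mapsto K(\phi(x-a)^+)$ is non-decreasing for every $x$, so $g(a):=u_{\mathrm{dev}}(a,a+s)$ is non-decreasing in $a$. Under uniform costs this is just $g(a)=2\int_a^{a+s/2}V$, from \cref{eq:u_2intV}.

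For $s\le1-c_2=c_1$ all positions involved lie in $[0,1]$. Reflection and symmetry of $v$ identify the far-side deviator share with $g(c_1-s)$ and the near-side opponent share with $g(c_1)$. Therefore
\[
U_1(c_2-s,c_2)-U_1(c_2+s,c_2)=\big[g(c_2-s)-g(c_1-s)\big]+\beta\big[g(c_2)-g(c_1)\big]\;\ge\;0 .
\]

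With this repair your route differs genuinely from the paper's, which makes no mirror comparison. The paper bounds every cross-over payoff by $\frac{(1-\beta)^4}{32\inf v}$, using that the deviator's share is at most $s\big(1-V(c_2+\tfrac s2)\big)$ and the opponent's at least $sV(c_2)$. It bounds the equilibrium payoff below by $\frac{(1-\beta^2)\Delta}{4}$, with $\Delta\ge\frac{1-\beta}{2\rho_v\inf v}$ from the mass-gap law. The two close because $\rho_v\le\frac4{1-\beta}\le\frac{4(1+\beta)}{(1-\beta)^2}$; this is essentially your fallback.

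The paper's route is a direct quantitative check that uses the density-ratio bound a second time. The repaired mirror argument needs neither that bound nor uniform costs at the cross-over stage, so $\rho_v\le\frac4{1-\beta}$ enters only through own-side concavity. The same argument would also make \eqref{m:eq:E1b} in \Cref{m:thm:E1} superfluous.
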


\begin{proof}
By symmetry of $v$, $c_2=1-c_1$, $m=\tfrac12$, $V(m)=\tfrac12$, and
\cref{n:eq:C4foc} holds: $2\cdot\tfrac{1+\beta}4=\tfrac12+\beta\cdot
\tfrac12$.  By \Cref{n:prop:C4}(b), $c_1$ is a global maximizer of
$U_1(\cdot,c_2)$ on $[0,c_2)$.  It remains to compare the equilibrium payoff
with the tie and with cross-over deviations.  Write $s:=\inf v>0$ and
$\rho:=\rho_v$; the mass-gap law gives
$\tfrac{1-\beta}2=\int_{c_1}^{c_2}v\le(\sup v)\,\Delta\le\rho s\,\Delta$, so
\begin{equation}\label{n:eq:C4gap}
\Delta\;\ge\;\frac{1-\beta}{2\rho s}.
\end{equation}

\emph{Equilibrium payoff.}  By symmetry $u_2=u_1$ at the profile, so
$U_1=(1-\beta)u_1$.  With $K(t)=t$,
\begin{align*}
u_1&=\Delta\,V(c_1)+\int_{c_1}^{m}(c_1+c_2-2x)\,v\,dx\\
&\ge\frac{1+\beta}4\,\Delta+s\!\int_{c_1}^{m}(c_1+c_2-2x)\,dx
=\frac{1+\beta}4\,\Delta+\frac{s\Delta^2}4,
\end{align*}
hence
\begin{equation}\label{n:eq:C4pay}
U_1(c_1,c_2)\;\ge\;\frac{(1-\beta^2)\,\Delta}4\;>\;0,
\end{equation}
which in particular dominates the tie.

\emph{Cross-over.}  Let $c_1'=c_2+t$, $t\in(0,1-c_2]$, and
$m'=c_2+\tfrac t2$.  The deviator's supporters lie in $(m',1]$ with benefit
at most $t$, so, using $V(m')\ge V(c_2)+s\tfrac t2$,
\[
u_1'\le t\big(1-V(m')\big)\le t\,\frac{1+\beta}4-\frac{s\,t^2}2 .
\]
The opponent's supporters include all of $[0,c_2]$, each with benefit exactly
$t$, so her share satisfies $u_2'\ge t\,V(c_2)=t\,\tfrac{3-\beta}4$.  Hence
\[
U_1(c_1',c_2)\le t\,\frac{1+\beta}4-\beta t\,\frac{3-\beta}4-\frac{st^2}2
=\frac{(1-\beta)^2}4\,t-\frac{s}2\,t^2
\;\le\;\frac{(1-\beta)^4}{32\,s},
\]
the last step by maximizing the concave quadratic over $t\ge0$.  Finally,
by \crefrange{n:eq:C4gap}{n:eq:C4pay},
\[
U_1(c_1,c_2)\ge\frac{(1-\beta^2)}4\cdot\frac{1-\beta}{2\rho s}
=\frac{(1+\beta)(1-\beta)^2}{8\rho s}
\;\ge\;\frac{(1-\beta)^4}{32 s}
\iff
\rho\le\frac{4(1+\beta)}{(1-\beta)^2},
\]
and $\rho\le\tfrac4{1-\beta}\le\tfrac{4(1+\beta)}{(1-\beta)^2}$ always.
Candidate $2$ is the mirror image.
\end{proof}

\begin{remark}\label{n:rem:C4}
\emph{(a)}  The mass-gap law in \cref{n:eq:C4foc} is distribution-free and
gives a clean empirical signature of competitiveness: the fraction of the
electorate strictly between the platforms is $\tfrac{1-\beta}2$, whatever
$v$.  It was verified numerically to $10^{-11}$ at asymmetric densities, and
the quantile equilibrium of \Cref{n:cor:C4} to $10^{-4}$ against
best-response search for $v=1+\tfrac12\cos2\pi x$ at
$\beta\in\{0.2,0.5,0.8\}$ (unique equilibrium found in each case).
\emph{(b)}  The concavity threshold $\tfrac4{1-\beta}$ \emph{increases} in
$\beta$: competitiveness enlarges the sufficiency umbrella at $\alpha=0$,
in exact antithesis to alienation, which shrinks it
(\Cref{n:thm:A3}: $F(\alpha)\downarrow$).  The two extensions are not symmetric
levers on existence.
\emph{(c)}  Multiplicity is not excluded and is real: non-uniqueness
examples at $\beta=0$ deform continuously to small $\beta>0$
(\Cref{n:ex:phi} is the cost-side analogue).
\end{remark}

%% Batch 2 (labels prefixed m:).
%% Supplementary results, batch 2. Labels prefixed m:.

\section{The joint corner: $\alpha>0$, $\beta>0$, non-uniform voters}\label{m:sec:D}

\paragraph{Standing notation.}
As in \Cref{n:sec:prelim}: voters $x\sim V$ on $[0,1]$ with density $v$, costs $\kappa\sim K$
with density $k$, both under \Cref{ass:reg} (with $K(1)\in(0,1]$ permitted). Candidate positions $c_1\le c_2$,
$\Delta=c_2-c_1$, $\lambda=2+\alpha$, benefit $B_i(x)=d_{-i}(x)-(1+\alpha)d_i(x)$,
payoffs $u_i=\int K(B_i^+)\,dV$ and $U_i=u_i-\beta u_{-i}$ with
$\beta\in[0,1)$. Tent edges for $c_1<c_2$:
\[
\begin{gathered}
\zin{1}=c_1+\tfrac{\Delta}{\lambda},\qquad
\zout{1}=c_1-\tfrac{\Delta}{\alpha},\\
\zin{2}=c_2-\tfrac{\Delta}{\lambda},\qquad
\zout{2}=c_2+\tfrac{\Delta}{\alpha}
\qquad(\alpha>0),
\end{gathered}
\]
with the conventions $\zout{1}=-\infty$, $\zout{2}=+\infty$ at $\alpha=0$. We write
$\rho_v=\sup v/\inf v$ (possibly $\infty$),
$S=\sup v$, $s=\inf v$. The master formula
(\Cref{n:lem:master}) reads, for $c_1<c_2$ interior,
\begin{equation}\label{m:eq:master}
\frac{\partial U_1}{\partial c_1}
=(1+\alpha)\,(I_1-O_1)+\beta\,(I_2+O_2),
\end{equation}
where
\[
\begin{gathered}
I_1=\int_{c_1}^{\zin{1}}\!k(B_1)v,\qquad
O_1=\int_{\zout{1}\vee0}^{c_1}\!k(B_1)v,\\
I_2=\int_{\zin{2}}^{c_2}\!k(B_2)v,\qquad
O_2=\int_{c_2}^{\zout{2}\wedge1}\!k(B_2)v ,
\end{gathered}
\]
the $k$-weighted voter masses on the \emph{inner} and \emph{outer} arms of the
two benefit tents (at $\alpha=0$ the outer arms carry the constant benefit
$\Delta$, so $O_1=k(\Delta)V(c_1)$ and $O_2=k(\Delta)(1-V(c_2))$).

%%%%%%%%%%%%%%%%%%%%%%%%%%%%%%%%%%%%%%%%%%%%%%%%%%%%%%%%%%%%%%%%%%%%%%%%%%%%%%
\subsection{Two distribution-free laws of stationarity}

The first result generalizes the mass-gap law of \Cref{n:prop:C4}: that law
is the $\alpha=0$, uniform-cost shadow
of a ratio law that is free of \emph{both} distributions and holds at every
interior stationary profile of the joint game.

\begin{lemma}[Mirror first-order condition]\label{m:lem:foc2}
For $c_1<c_2$ interior,
\begin{equation}\label{m:eq:foc2}
\frac{\partial U_2}{\partial c_2}
=(1+\alpha)\,(O_2-I_2)-\beta\,(I_1+O_1).
\end{equation}
\end{lemma}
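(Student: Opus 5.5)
The plan is to derive \cref{m:eq:foc2} from \cref{m:eq:master} by the reflection $x\mapsto1-x$, rather than by redoing the Leibniz computation of \Cref{n:lem:master}. Under this reflection the profile $(c_1,c_2)$ with density $v$ becomes $(1-c_2,1-c_1)$ with density $\widetilde v(x)=v(1-x)$. Candidate $2$ becomes the left candidate and candidate $1$ the right one, and $\Delta$ is unchanged. The cost side $K$ does not depend on position and is untouched. \Cref{ass:reg} is invariant under the reflection, and so is the finiteness hypothesis of \Cref{n:lem:master}, since it concerns only boundedness of $k$ and of $v$ near $\{0,1\}$ and tent edges avoiding the boundary.

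First, I apply \cref{m:eq:master} in the reflected game to the left candidate, who is the original candidate $2$ at position $\widetilde c=1-c_2$. Since $\widetilde c$ moves opposite to $c_2$,
\[
\frac{\partial U_2}{\partial c_2}=-\frac{\partial \widetilde U_1}{\partial \widetilde c}
=-(1+\alpha)(\widetilde I_1-\widetilde O_1)-\beta(\widetilde I_2+\widetilde O_2).
\]

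Second, I identify the reflected arm masses with the original ones. The reflection maps the tent of candidate $2$ onto the tent of the reflected left candidate: $B$ values are preserved pointwise, and so are the zeros, with $1-\zin{2}$ the reflected inner zero and $1-(\zout{2}\wedge1)=(1-\zout{2})\vee0$ the clipped reflected outer zero. Changing variables $y=1-x$ in each integral then gives $\widetilde I_1=I_2$, $\widetilde O_1=O_2$, $\widetilde I_2=I_1$, $\widetilde O_2=O_1$. Substituting yields $(1+\alpha)(O_2-I_2)-\beta(I_1+O_1)$, which is the claim.

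The step needing care is the identification of arms under clipping: I have to check that the \emph{inner} arm of candidate $2$, namely $[\zin{2},c_2]$, maps to the inner arm $[\widetilde c,\widetilde c+\Delta/\lambda]$, and that the clipping at $1$ becomes clipping at $0$. I also need the sign convention, namely that $\partial B_1/\partial c_2=+1$ on candidate $1$'s support. This sign is exactly what turns the $+\beta$ in \cref{m:eq:master} into $-\beta$ here. As a cross-check, I would recompute directly: moving $c_2$ right raises $B_1$ pointwise at unit rate on $[\zout{1}\vee0,\zin{1}]$, so $\partial u_1/\partial c_2=I_1+O_1$. Meanwhile $\partial B_2/\partial c_2=-(1+\alpha)$ on the inner arm and $+(1+\alpha)$ on the outer arm, and no boundary terms arise because $K(0)=0$. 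These two derivatives combine to the same formula.
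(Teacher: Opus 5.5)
Your proposal is correct and follows the paper's proof essentially verbatim. You reflect via $x\mapsto 1-x$, apply \cref{m:eq:master} to the reflected left candidate, use $\partial U_2/\partial c_2=-\partial\widetilde U_1/\partial\widetilde c$, and identify $\widetilde I_1=I_2$, $\widetilde O_1=O_2$, $\widetilde I_2=I_1$, $\widetilde O_2=O_1$; your direct cross-check ($\partial B_1/\partial c_2=+1$ on candidate $1$'s support, $\mp(1+\alpha)$ on candidate $2$'s inner and outer arms) also gets the signs right.
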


\begin{proof}
Apply the reflection $\sigma(x)=1-x$. In the reflected game the voter density
is $\hat v(x)=v(1-x)$, the cost distribution is unchanged, and the profile
$(c_1,c_2)$ becomes $(\hat c_1,\hat c_2)=(1-c_2,1-c_1)$ with
$\hat c_1<\hat c_2$; the change of variables $x\mapsto 1-x$ in the payoff
integrals gives $\hat u_1(\hat c_1,\hat c_2)=u_2(c_1,c_2)$ and
$\hat u_2=u_1$, hence $\hat U_1=U_2$. Since $c_2=1-\hat c_1$,
$\partial U_2/\partial c_2=-\,\partial\hat U_1/\partial\hat c_1$, and
\cref{m:eq:master} applied in the reflected game gives
$\partial\hat U_1/\partial\hat c_1
=(1+\alpha)(\hat I_1-\hat O_1)+\beta(\hat I_2+\hat O_2)$.
Changing variables back, $\hat I_1=I_2$, $\hat O_1=O_2$, $\hat I_2=I_1$,
$\hat O_2=O_1$, which is \cref{m:eq:foc2}.
\end{proof}

\begin{proposition}[Inner--outer ratio law and asymmetry law]\label{m:prop:ratio}
Let $(c_1,c_2)$ with $0<c_1<c_2<1$ be a stationary profile of the joint game,
i.e.\ a profile at which both first-order conditions hold --- in particular,
any interior untied Nash equilibrium. Write $I:=I_1+I_2$ and $O:=O_1+O_2$ for
the total $k$-weighted inner- and outer-arm masses. Then, for every
$(v,k,\alpha,\beta)$ under \Cref{ass:reg}:
\begin{align}
(1+\alpha+\beta)\,I&=(1+\alpha-\beta)\,O,
\label{m:eq:ratio}\\
(1+\alpha-\beta)\,(I_1-I_2)&=(1+\alpha+\beta)\,(O_1-O_2).
\label{m:eq:asym}
\end{align}
\end{proposition}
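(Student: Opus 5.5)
The plan is to obtain both laws by linear algebra on the two first-order conditions, taking them as given. At a stationary interior profile, \cref{m:eq:master} and \Cref{m:lem:foc2} give
\[
(1+\alpha)(I_1-O_1)+\beta(I_2+O_2)=0,
\qquad
(1+\alpha)(O_2-I_2)-\beta(I_1+O_1)=0 .
\]
Both displays are valid because the profile is interior and untied, so \Cref{n:lem:master} applies, using its finiteness proviso as in every application in the paper. For an interior untied Nash equilibrium, stationarity is the last clause of \Cref{n:lem:master}, together with \Cref{n:lem:interior} in the cases it covers.

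For the ratio law I will subtract the second equation from the first. This gives
\[
(1+\alpha)(I_1+I_2)-(1+\alpha)(O_1+O_2)+\beta(I_1+I_2)+\beta(O_1+O_2)=0 ,
\]
that is, $(1+\alpha+\beta)I=(1+\alpha-\beta)O$, which is \cref{m:eq:ratio}.

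For the asymmetry law I will add the two equations. The result is
\[
(1+\alpha)(I_1-I_2)-(1+\alpha)(O_1-O_2)+\beta(I_2-I_1)+\beta(O_2-O_1)=0 ,
\]
that is, $(1+\alpha-\beta)(I_1-I_2)=(1+\alpha+\beta)(O_1-O_2)$, which is \cref{m:eq:asym}.

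The substantive step is already contained in \cref{m:eq:master} and its reflected mirror: in particular, that no boundary terms survive at the moving tent edges. What remains is to check the sign conventions in the mirror condition, namely that moving $c_1$ shifts $B_2$ at rate $-1$ over candidate 2's whole support, so her inner and outer arms both enter with the same sign $+\beta$. I will also confirm that the identities reduce correctly at $\alpha=0$, where $O_1=k(\Delta)V(c_1)$ and $O_2=k(\Delta)(1-V(c_2))$. Under uniform costs the ratio law combined with the asymmetry law should then return the mass-gap law of \Cref{n:prop:C4}, and I will use that as a consistency check.
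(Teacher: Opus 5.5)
Your proposal is correct and is the paper's proof: subtracting the mirror condition \cref{m:eq:foc2} from \cref{m:eq:master} gives \cref{m:eq:ratio}, and adding them gives \cref{m:eq:asym}, with the same sign bookkeeping you describe (both of candidate $2$'s arms enter with $+\beta$). One aside on your consistency check: at $\alpha=0$ with $k\equiv1$, the ratio law alone already yields the mass-gap law, as in \Cref{m:cor:ratio}(b), so the asymmetry law is not needed for that.
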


\begin{proof}
Subtracting \cref{m:eq:foc2} from \cref{m:eq:master}, both being zero,
\[
0=(1+\alpha)\big[(I_1-O_1)-(O_2-I_2)\big]
+\beta\big[(I_2+O_2)+(I_1+O_1)\big]
=(1+\alpha)(I-O)+\beta(I+O),
\]
which rearranges to \cref{m:eq:ratio}. Adding the two conditions instead,
\[
0=(1+\alpha)\big[(I_1-I_2)-(O_1-O_2)\big]
-\beta\big[(I_1-I_2)+(O_1-O_2)\big],
\]
which rearranges to \cref{m:eq:asym}.
\end{proof}

\begin{corollary}[Special cases]\label{m:cor:ratio}
At every interior stationary profile:
\begin{enumerate}
\item[(a)] $\beta=0$: $I=O$ --- each candidate pair balances inner against
  outer $k$-mass in aggregate (the balance conditions of \Cref{n:lem:master}, summed).
\item[(b)] $\alpha=0$, $k\equiv1$: $I=V(c_2)-V(c_1)$ and $O=1-I$, so
  \cref{m:eq:ratio} reads $V(c_2)-V(c_1)=\frac{1-\beta}{2}$: the mass-gap law
  of \Cref{n:prop:C4} is the $\alpha=0$ shadow of \cref{m:eq:ratio}.
\item[(c)] For any $\alpha>0$, the inner share of the total marginal
  $k$-mass is the universal constant
  $I/(I+O)=\frac{1+\alpha-\beta}{2(1+\alpha)}$; it decreases from
  $\frac12$ at $\beta=0$ to $\frac{\alpha}{2(1+\alpha)}>0$ as
  $\beta\uparrow1$. Competition drains the contested middle, but under
  alienation it can never empty it.
\item[(d)] By \cref{m:eq:asym}, since $\frac{1+\alpha+\beta}{1+\alpha-\beta}>1$,
  any asymmetry between the candidates' outer masses is
  \emph{amplified} in their inner masses, with the universal gain
  $\frac{1+\alpha+\beta}{1+\alpha-\beta}$; at symmetric profiles both sides
  vanish.
\end{enumerate}
\end{corollary}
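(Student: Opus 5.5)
The plan is to read each part of \Cref{m:cor:ratio} directly off \Cref{m:prop:ratio}, specializing the parameters and, where needed, evaluating the marginal masses $I_i,O_i$ explicitly. Every interior stationary profile satisfies \cref{m:eq:ratio,m:eq:asym}, so nothing about equilibrium beyond stationarity is needed.

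For (a), set $\beta=0$ in \cref{m:eq:ratio}. This gives $(1+\alpha)I=(1+\alpha)O$, and since $1+\alpha>0$ we get $I=O$. I would also note that this is the sum of the two individual balance conditions $I_i=O_i$, which hold separately because at $\beta=0$ each first-order condition \cref{m:eq:master,m:eq:foc2} reduces to $I_i=O_i$.

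For (b), set $\alpha=0$ and $k\equiv1$. Then the inner arms are $[c_1,m]$ and $[m,c_2]$, so $I=V(c_2)-V(c_1)$. The outer arms are $[0,c_1]$ and $[c_2,1]$, so $O=V(c_1)+1-V(c_2)=1-I$. Substituting into \cref{m:eq:ratio} gives $(1+\beta)I=(1-\beta)(1-I)$, which solves to $I=\tfrac{1-\beta}2$. This is the mass-gap law of \Cref{n:prop:C4}.

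For (c), I would first check that $I+O>0$ at an interior untied profile. Since $k>0$ on $(0,1)$ and $v>0$, the inner arm alone gives $I>0$. Rearranging \cref{m:eq:ratio} as $I/O=\tfrac{1+\alpha-\beta}{1+\alpha+\beta}$ then yields $I/(I+O)=\tfrac{1+\alpha-\beta}{2(1+\alpha)}$. This expression is decreasing in $\beta$, equals $\tfrac12$ at $\beta=0$, and tends to $\tfrac{\alpha}{2(1+\alpha)}$ as $\beta\uparrow1$, which is strictly positive when $\alpha>0$.

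For (d), divide \cref{m:eq:asym} by $1+\alpha-\beta>0$, which is positive because $\beta<1$. This gives $I_1-I_2=\tfrac{1+\alpha+\beta}{1+\alpha-\beta}(O_1-O_2)$, and the gain factor is at least $1$, strictly so when $\beta>0$. At symmetric profiles the reflection used in \Cref{m:lem:foc2} maps $I_1$ to $I_2$ and $O_1$ to $O_2$, so both sides vanish. The only step needing care is (b), where the voter masses on the arms have to be identified correctly; the other parts are algebra.
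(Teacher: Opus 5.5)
Your proposal matches the paper's approach: the paper gives no separate proof and reads each item directly off the ratio and asymmetry laws of \Cref{m:prop:ratio}, identifying the arm masses in (b) exactly as you do. One point needs tightening, the last clause of (d): the reflection argument requires the electorate $v$ itself to be symmetric about $\tfrac12$, not just $c_1+c_2=1$, because for an asymmetric $v$ a stationary profile with $c_1+c_2=1$ can have $O_1\neq O_2$ (for instance $V(\tfrac14)=0.3$, $V(\tfrac12)=0.6$, $V(\tfrac34)=0.8$ at $\alpha=\beta=0$ under uniform costs), and then $I_1\neq I_2$ as well.
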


\begin{remark}[Verification]\label{m:rem:ratio-num}\stat{numerical observation}
The suite verifies \cref{m:eq:ratio} and the inner-share constant of
\Cref{m:cor:ratio}(c) at polished stationary profiles of thirteen games. The
games span $\alpha\in\{0,0.5,1,2\}$ and $\beta\in\{0,\dots,0.5\}$, with
uniform, cosine, $\mathrm{Beta}(2,2)$, $\mathrm{Beta}(8,8)$, and asymmetric
$\mathrm{Beta}(2,5)$ electorates, and with uniform and quadratic ($K(t)=t^2$)
costs. The cases include a dead-zone equilibrium --- $\mathrm{Beta}(8,8)$,
$\alpha=2$, $\beta=0.2$, where $I/O=7/8$ exactly --- and asymmetric
equilibria. Relative errors are $10^{-8}$--$10^{-12}$.
\end{remark}

\subsection{The joint single-peakedness threshold}

For uniform costs, \Cref{n:thm:A3} gave the $\beta=0$ umbrella
$\rho_v\le F(\alpha)=\min\{f(\alpha),g(\alpha)\}$ with
$f(\alpha)=\frac{2\lambda}{1+\alpha}$ (concavity off the dead zone) and
$g(\alpha)=\frac{\lambda}{\alpha}$ (monotone decrease inside it). The joint
threshold deforms the two branches in \emph{opposite} directions.

\begin{theorem}[Joint single-peakedness, uniform costs]\label{m:thm:D1}
Let $\kappa\sim U(0,1)$, $\alpha>0$, $\beta\in[0,1)$, and let $v$ satisfy
\Cref{ass:reg} with
\begin{equation}\label{m:eq:Fbeta}
\begin{gathered}
\rho_v\;\le\;F_\beta(\alpha):=\min\{f_\beta(\alpha),\,g_\beta(\alpha)\},\\
f_\beta(\alpha)=\frac{2\lambda(1+\alpha)+\beta}{(1+\alpha)^2},
\qquad
g_\beta(\alpha)=\frac{(1+\alpha)\lambda}{\alpha(1+\alpha+\beta)+\beta\lambda}.
\end{gathered}
\end{equation}
Then for every opponent position $c_2\in(0,1]$ the own-side map
$c_1\mapsto U_1(c_1,c_2)$ on $[0,c_2)$ is concave on the no-dead-zone piece
$\{c_1:\zout{1}\le0\}=[0,\frac{c_2}{1+\alpha}]$ and strictly decreasing on the
dead-zone piece $(\frac{c_2}{1+\alpha},c_2)$; hence it is single-peaked, with
its peak in the no-dead-zone piece. The mirror statement holds for candidate
$2$ on $(c_1,1]$.

Moreover $f_\beta$ is strictly \emph{increasing} and $g_\beta$ strictly
\emph{decreasing} in $\beta$, with $f_0=f$ and $g_0=g$.
\end{theorem}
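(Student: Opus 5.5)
The plan is to rerun the two-piece argument of \Cref{n:thm:A3}, this time carrying the opponent term of the master formula \cref{m:eq:master} (equivalently \cref{n:eq:master}) with $k\equiv1$. For $c_1<c_2$ interior that formula reads
\[
\frac{\partial U_1}{\partial c_1}=(1+\alpha)\Big[V(\zin{1})+V(\zout{1}\vee0)-2V(c_1)\Big]+\beta\Big[V(\zout{2}\wedge1)-V(\zin{2})\Big].
\]
By \Cref{n:lem:master} this derivative is continuous in $c_1$, including across the junction $\zout{1}=0$. At $c_1=0^+$ it equals $(1+\alpha)V(\zin{1})+\beta(\cdots)>0$. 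Single-peakedness, with the peak in the no-dead-zone piece, will then follow exactly as before once two facts are established: concavity on $[0,\tfrac{c_2}{1+\alpha}]$, and a strictly negative derivative on $(\tfrac{c_2}{1+\alpha},c_2)$. The mirror statement for candidate $2$ follows by the reflection of \Cref{m:lem:foc2}.

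\emph{No-dead-zone piece.} On this piece $V(\zout{1}\vee0)=0$. I differentiate once more, using the velocities $\partial_{c_1}\zin{1}=\tfrac{1+\alpha}\lambda$, $\partial_{c_1}\zin{2}=\tfrac1\lambda$ and $\partial_{c_1}\zout{2}=-\tfrac1\alpha$. This gives
\[
\frac{\partial^2U_1}{\partial c_1^2}=(1+\alpha)\Big[\tfrac{1+\alpha}\lambda v(\zin{1})-2v(c_1)\Big]-\tfrac\beta\lambda v(\zin{2})-\tfrac\beta\alpha v(\zout{2})\mathbf 1\{\zout{2}<1\}.
\]
I discard the last term, which is $\le0$. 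In the remaining terms I bound $v(\zin{1})\le S$, $v(c_1)\ge s$ and $v(\zin{2})\ge s$. The second derivative is then at most
\[
\tfrac{(1+\alpha)^2}{\lambda}S-\Big(2(1+\alpha)+\tfrac\beta\lambda\Big)s,
\]
which is $\le0$ exactly when $\rho_v\le f_\beta(\alpha)$. This is where the increase of $f_\beta$ in $\beta$ comes from: the opponent's inner arm, which is guaranteed to be present, contributes a negative curvature term.

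\emph{Dead-zone piece.} On this piece $\zout{1}>0$. I change variables $t=B_i(x)$ along each arm, as in \cref{n:eq:A3dz}. The own bracket becomes $(1+\alpha)\int_0^\Delta\big[\tfrac1\lambda v(\cdot)-\tfrac1\alpha v(\cdot)\big]dt\le(1+\alpha)\Delta\big(\tfrac S\lambda-\tfrac s\alpha\big)$.

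The opponent's tent mass is $\int_0^\Delta\big[\tfrac1\lambda v(\cdot)+\tfrac1\alpha v(\cdot)\mathbf 1\{\text{untruncated}\}\big]dt$. Its upper bound is the only step where the truncation of candidate $2$'s outer arm at $1$ has to be handled. I bound it by the full, untruncated arm, which gives at most $\beta\Delta S\big(\tfrac1\lambda+\tfrac1\alpha\big)$. Adding the two bounds, the derivative is at most
\[
\Delta\Big[S\Big(\tfrac{1+\alpha+\beta}\lambda+\tfrac\beta\alpha\Big)-s\,\tfrac{1+\alpha}\alpha\Big],
\]
which is $\le0$ precisely when $\rho_v\le g_\beta(\alpha)$.

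I expect the main obstacle to be strictness when $\rho_v$ equals the bound exactly. Equality in the chain forces $v\equiv S$ on $(c_1,\zin{1})$ and $v\equiv s$ on $(\zout{1},c_1)$, and $g_\beta>1$ gives $S>s$. This contradicts continuity of $v$ at $c_1$, as in \Cref{n:thm:A3}. I would check that the opponent-term inequality cannot rescue the equality case; since it enters only as an upper bound, that is enough.

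Finally, the monotonicity claims are direct. $f_\beta$ is affine in $\beta$ with slope $(1+\alpha)^{-2}>0$. The denominator of $g_\beta$ has $\beta$-slope $\alpha+\lambda>0$, so $g_\beta$ decreases in $\beta$. Setting $\beta=0$ recovers $f_0=\tfrac{2\lambda}{1+\alpha}=f$ and $g_0=\tfrac\lambda\alpha=g$.
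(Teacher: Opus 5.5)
Your proposal is correct and follows the paper's proof essentially step for step. On the no-dead-zone piece you pair $v(x^{\mathrm{in}}_1)\le S$ against $v(c_1),v(x^{\mathrm{in}}_2)\ge s$ after dropping the opponent's outer-arm term. On the dead-zone piece you use the same interval-length bounds (opponent arm taken untruncated), the same continuity-at-$c_1$ strictness argument at $\rho_v=g_\beta$, and the same direct monotonicity check of $f_\beta$ and $g_\beta$.
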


\begin{proof}
Fix $c_2$ and write derivatives in $c_1$. With uniform costs
\cref{m:eq:master} reads
\begin{equation}\label{m:eq:masterU}
U_1'=(1+\alpha)\Big[\big(V(\zin{1})-V(c_1)\big)-\big(V(c_1)-V(\zout{1}\vee0)\big)\Big]
+\beta\Big[V(\zout{2}\wedge1)-V(\zin{2})\Big].
\end{equation}

\emph{No-dead-zone piece ($\zout{1}\le0$).} Here $V(\zout{1}\vee0)=0$ is constant.
Differentiating \cref{m:eq:masterU}, using $(\zin{1})'=\frac{1+\alpha}{\lambda}$,
$(\zin{2})'=\frac1\lambda$, $(\zout{2})'=-\frac1\alpha$, and treating the kink of
$V(\zout{2}\wedge1)$ at $\zout{2}=1$ one-sidedly,
\begin{multline*}
U_1''=(1+\alpha)\Big[\tfrac{1+\alpha}{\lambda}v(\zin{1})-2v(c_1)\Big]\\
-\beta\Big[\tfrac1\lambda v(\zin{2})+\tfrac1\alpha v(\zout{2})\mathbf 1_{\{\zout{2}<1\}}\Big]
\quad\text{a.e.}
\end{multline*}
Dropping the nonnegative $\zout{2}$-term and pairing the two $\frac1\lambda$-terms,
\begin{align*}
U_1''&\;\le\;\frac{(1+\alpha)^2v(\zin{1})-\beta v(\zin{2})}{\lambda}-2(1+\alpha)v(c_1)\\
&\;\le\;\frac{(1+\alpha)^2S-\beta s}{\lambda}-2(1+\alpha)s\;\le\;0
\end{align*}
whenever $(1+\alpha)^2\rho_v\le 2(1+\alpha)\lambda+\beta$, i.e.\
$\rho_v\le f_\beta(\alpha)$. Since $U_1'$ is continuous in $c_1$ (the
integrands and limits in \cref{m:eq:masterU} are continuous), a continuous
function with a.e.\ nonpositive derivative of its derivative is concave.

\emph{Dead-zone piece ($\zout{1}>0$).} From \cref{m:eq:masterU}, using
$V(\zin{1})-V(c_1)\le S\frac{\Delta}{\lambda}$,
$V(c_1)-V(\zout{1})\ge s\frac{\Delta}{\alpha}$, and
$V(\zout{2}\wedge1)-V(\zin{2})\le S\Delta(\frac1\lambda+\frac1\alpha)$,
\begin{equation}\label{m:eq:dz}
U_1'\;\le\;(1+\alpha)\Delta\Big(\frac S\lambda-\frac s\alpha\Big)
+\beta S\Delta\Big(\frac1\lambda+\frac1\alpha\Big),
\end{equation}
which is negative iff
$(1+\alpha)\lambda> \rho_v\big[\alpha(1+\alpha+\beta)+\beta\lambda\big]$,
i.e.\ $\rho_v<g_\beta(\alpha)$. At $\rho_v=g_\beta(\alpha)$ equality in
\cref{m:eq:dz} would force $v\equiv S$ on $(c_1,\zin{1})$ and $v\equiv s$ on
$(\zout{1},c_1)$ simultaneously, contradicting continuity of $v$ at $c_1$ unless
$S=s$, in which case $\rho_v=1<g_\beta$. Hence $U_1'<0$ strictly on the
dead-zone piece whenever $\rho_v\le g_\beta(\alpha)$.

A function that is concave on $[0,\frac{c_2}{1+\alpha}]$, strictly decreasing
on $(\frac{c_2}{1+\alpha},c_2)$, and continuous, is single-peaked with its
maximum in the first piece. The mirror statement follows from
\Cref{m:lem:foc2} (equivalently, reflection). Monotonicity of $f_\beta$ in
$\beta$ is immediate; for $g_\beta$, the denominator is strictly increasing in
$\beta$ since $\alpha+\lambda>0$. Finally $f_0=\frac{2\lambda}{1+\alpha}$ and
$g_0=\frac{\lambda}{\alpha}$.
\end{proof}

\begin{remark}[Competition rescues concavity but taxes the dead zone]
\label{m:rem:D1shape}
The joint threshold $F_\beta(\alpha)$ is neither a $\beta$-dependent
rescaling of the $\beta=0$ threshold $F(\alpha)$ nor moved by $\beta$ in a
single direction. On the no-dead-zone piece the
competitiveness term $-\beta u_2$ contributes the \emph{stabilizing} second
derivative $-\beta[\frac{v(\zin{2})}{\lambda}+\frac{v(\zout{2})}{\alpha}\mathbf 1]$
(hurting the opponent is easiest exactly where the deviator's own tent is
thinnest), so $f_\beta$ \emph{grows} with $\beta$: competition partly rescues
the umbrella that alienation shrinks. Inside the dead zone the same term is
destabilizing --- moving toward the opponent hurts her even when it no longer
helps oneself --- so $g_\beta$ \emph{shrinks}. Consequently for small $\alpha$
(where $f$ binds) the umbrella widens with $\beta$, while for large $\alpha$
(where $g$ binds) it narrows further; the crossover alienation level solves
$f_\beta=g_\beta$ and moves left as $\beta$ grows.

At $\alpha=0$ the exact computation of \Cref{n:prop:C4} gives the threshold
$\frac{4}{1-\beta}$, strictly better than
$\lim_{\alpha\downarrow0}f_\beta=4+\beta$: there the inner tent edges $\zin{1}$
and $\zin{2}$ coincide at the midpoint, so the stabilizing term cancels against
the destabilizing one \emph{pointwise} rather than through the worst-case pair
$(S,s)$. For $\alpha>0$ the two edges are distinct and the worst case is
achievable; see the sharpness evidence below.
\end{remark}

\begin{remark}[Sharpness]\label{m:rem:D1sharp}\stat{numerical observation}
A numerical search over continuous
piecewise-linear ``mesa'' densities of ratio $\rho$ with the high plateau
covering $\zin{1}$ and verifies, at $(\alpha,\beta)\in\{(1,.5),(.5,.8),(2,.3)\}$,
that $U_1''$ changes sign as $\rho$ crosses $f_\beta(\alpha)$ ($\pm7\%$).
\Cref{m:sec:D3} shows $g_\beta$ is essentially attained as well
(\Cref{m:rem:D3num}). The constant $f_\beta$ is therefore the exact worst-case
concavity threshold within this proof scheme, though densities realizing the
worst case are non-generic.
\end{remark}

\begin{remark}[General $k$: the obstruction]
\label{m:rem:D1k}\stat{open}
For non-increasing $k\in C^1$ the threshold
$\rho_v\rho_k\le2\lambda/(1+\alpha)$ of \Cref{n:prop:A4} does not
deform as cleanly. Differentiating the $\beta$-term of \cref{m:eq:master} in
$c_1$ produces, besides stabilizing boundary terms proportional to
$-\beta k(0^+)$, the bulk term
$-\beta\int_{\zin{2}}^{\zout{2}\wedge1}k'(B_2)\,v$, which is
\emph{nonnegative} (destabilizing) and is controlled pointwise by $|k'|$ on
the opponent's tent, not by the ratio $\rho_k$: no bound in terms of the
pair $(\rho_v,\rho_k)$ alone can absorb it, because $|k'|$ can be made
arbitrarily large on a set where $K$ barely moves. A clean joint threshold for general
non-increasing $k$ therefore requires an additional hypothesis (e.g.\ a bound
on $\sup|k'|/\inf k$); we record the failure rather than a lemma we cannot
make tight. For uniform costs $k'\equiv0$ and \Cref{m:thm:D1} is complete.
\end{remark}

\subsection{Stationary profiles and a continuation theorem}

\begin{proposition}[Symmetric stationary profiles exist for every $\beta<1$]
\label{m:prop:roots}
Let $\kappa\sim U(0,1)$, $\alpha>0$, $\beta\in[0,1)$, and let $v$ be symmetric
about $\frac12$ and satisfy \Cref{ass:reg}. Define, for
$c_1\in(0,\frac12)$ and $c_2=1-c_1$,
\begin{equation}\label{m:eq:hbeta}
h_\beta(c_1):=\frac{\partial U_1}{\partial c_1}\Big|_{(c_1,1-c_1)}
=(1+\alpha+\beta)V(\zin{1})+(1+\alpha-\beta)V(\zout{1}\vee0)-2(1+\alpha)V(c_1).
\end{equation}
Then $h_\beta$ is continuous, $h_\beta(0^+)=(1+\alpha+\beta)V(\frac1\lambda)>0$
and $h_\beta<0$ on a left neighborhood of $\frac12$; hence a symmetric
stationary profile exists. If moreover $\rho_v\le g_\beta(\alpha)$, every
symmetric stationary profile lies on the no-dead-zone branch
$c_1\in(0,\frac1\lambda)$.
\end{proposition}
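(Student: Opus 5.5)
The plan is to read $h_\beta$ off the uniform-cost master formula \cref{m:eq:masterU} of \Cref{m:thm:D1}, using the reflection symmetry of the profile and the electorate. Then an intermediate-value argument along the symmetric diagonal gives existence, and the dead-zone estimate \cref{m:eq:dz} gives the location claim.

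\emph{Step 1: the formula \cref{m:eq:hbeta}.} At $c_2=1-c_1$ we have $\Delta=1-2c_1$, and the tent edges are mirror images: $\zin{2}=1-\zin{1}$ and $\zout{2}=1-\zout{1}$. Symmetry of $v$ gives $V(1-y)=1-V(y)$. Hence $V(\zout{2}\wedge1)=1-V(\zout{1}\vee0)$ and $V(\zin{2})=1-V(\zin{1})$, so the $\beta$-bracket in \cref{m:eq:masterU} equals $V(\zin{1})-V(\zout{1}\vee0)$. Collecting terms gives \cref{m:eq:hbeta}. For the later steps I will use the equivalent form
\[
h_\beta(c_1)=(1+\alpha+\beta)\big(V(\zin{1})-V(c_1)\big)-(1+\alpha-\beta)\big(V(c_1)-V(\zout{1}\vee0)\big).
\]
Continuity of $h_\beta$ is immediate. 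Indeed, $\zin{1}=c_1+\tfrac{1-2c_1}\lambda$ and $\zout{1}=c_1-\tfrac{1-2c_1}\alpha$ are affine in $c_1$, $y\mapsto y\vee0$ is continuous, and $V$ is continuous. In particular there is no jump at the regime switch $c_1=\tfrac1\lambda$, where $\zout{1}=0$.

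\emph{Step 2: the endpoints.} As $c_1\downarrow0$ we get $\Delta\to1$, $\zin{1}\to\tfrac1\lambda$, $\zout{1}\to-\tfrac1\alpha<0$ and $V(c_1)\to0$. This yields $h_\beta(0^+)=(1+\alpha+\beta)V(\tfrac1\lambda)>0$, since $v>0$.

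Near $\tfrac12$ we have $c_1>\tfrac1\lambda$ (because $\lambda>2$), so we are on the dead-zone branch with $\zout{1}>0$. As $\Delta\downarrow0$, both $\zin{1}$ and $\zout{1}$ tend to $\tfrac12$. Continuity and positivity of $v$ at $\tfrac12$ then give
\[
V(\zin{1})-V(c_1)=v(\tfrac12)\tfrac\Delta\lambda+o(\Delta),\qquad V(c_1)-V(\zout{1})=v(\tfrac12)\tfrac\Delta\alpha+o(\Delta).
\]
Therefore
\[
h_\beta=v(\tfrac12)\,\Delta\Big[\tfrac{1+\alpha+\beta}\lambda-\tfrac{1+\alpha-\beta}\alpha\Big]+o(\Delta).
\]
The bracket has the sign of $\alpha(1+\alpha+\beta)-\lambda(1+\alpha-\beta)=-2(1+\alpha)(1-\beta)<0$, because $\beta<1$. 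So $h_\beta<0$ on a left neighbourhood of $\tfrac12$, and the intermediate value theorem produces a root, i.e.\ a symmetric stationary profile. I expect this expansion to be the only delicate step. It is where $\beta<1$ enters, and it requires checking that the $o(\Delta)$ terms are uniform, which follows from continuity of $v$ on a shrinking neighbourhood of $\tfrac12$.

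\emph{Step 3: no dead-zone roots.} Suppose $c_1\in[\tfrac1\lambda,\tfrac12)$. If $c_1>\tfrac1\lambda$, the profile $(c_1,1-c_1)$ has $\zout{1}>0$, i.e.\ it lies in the dead-zone piece of candidate~$1$'s own-side map against $c_2=1-c_1$. The dead-zone part of the proof of \Cref{m:thm:D1}, namely bound \cref{m:eq:dz} together with its strictness argument at $\rho_v=g_\beta(\alpha)$, then gives $U_1'<0$, i.e.\ $h_\beta(c_1)<0$, whenever $\rho_v\le g_\beta(\alpha)$.

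At the junction $c_1=\tfrac1\lambda$ I would either invoke continuity of $h_\beta$ together with the same bound taken one-sidedly, or evaluate $h_\beta(\tfrac1\lambda)$ directly and compare it with \cref{m:eq:dz}. Either way, every root lies in $(0,\tfrac1\lambda)$, as claimed.
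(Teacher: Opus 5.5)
Your proposal is correct and follows the paper's proof essentially step for step: the symmetry reduction of \cref{m:eq:masterU} to \cref{m:eq:hbeta}, the two endpoint limits with the same first-order expansion near $\tfrac12$ (your bracket equals the paper's and has the sign of $-2(1+\alpha)(1-\beta)$), the intermediate value theorem, and the dead-zone estimate \cref{m:eq:dz} for the location claim. At the junction $c_1=\tfrac1\lambda$, use your second option, as the paper does. There \cref{m:eq:dz} and its strictness argument apply verbatim, because $c_1-(\zout{1}\vee0)=\Delta/\alpha$ still holds when $\zout{1}=0$. A limit from the right would give only $h_\beta(\tfrac1\lambda)\le0$ in the equality case $\rho_v=g_\beta(\alpha)$, which does not exclude a root at $\tfrac1\lambda$.
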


\begin{proof}
For symmetric $v$ and symmetric profiles, $\zin{2}=1-\zin{1}$ and $\zout{2}=1-\zout{1}$, so
$V(\zout{2}\wedge1)-V(\zin{2})=V(\zin{1})-V(\zout{1}\vee0)$ and \cref{m:eq:masterU}
becomes \cref{m:eq:hbeta}. As $c_1\downarrow0$: $\Delta\to1$,
$\zin{1}\to\frac1\lambda$, $\zout{1}\to-\frac1\alpha<0$, $V(c_1)\to0$, giving the
stated limit. As $c_1\uparrow\frac12$: $\Delta=1-2c_1\downarrow0$, $\zout{1}>0$,
and expanding $V$ around $\frac12$ (where $v$ is continuous and positive),
\begin{align*}
h_\beta(c_1)&=v(\tfrac12)\,\Delta\Big[(1+\alpha)\Big(\tfrac1\lambda-\tfrac1\alpha\Big)
+\beta\Big(\tfrac1\lambda+\tfrac1\alpha\Big)\Big]+o(\Delta)\\
&=v(\tfrac12)\,\Delta\,\frac{2(1+\alpha)}{\alpha\lambda}\,(\beta-1)+o(\Delta)\;<\;0
\end{align*}
for $\beta<1$ and $\Delta$ small. The intermediate value theorem gives a
root. On $c_1\in[\frac1\lambda,\frac12)$ we have $\zout{1}\ge0$ and
$\zout{1}\vee0=\zout{1}$, and the dead-zone estimate \eqref{m:eq:dz} (with its strictness
argument) shows $h_\beta<0$ there whenever $\rho_v\le g_\beta(\alpha)$; roots
are therefore confined to $c_1\in(0,\frac1\lambda)$.
\end{proof}

\begin{theorem}[Continuation of the symmetric equilibrium into $\beta>0$]
\label{m:thm:D1exist}
Let $\kappa\sim U(0,1)$, $\alpha>0$, and let $v$ be symmetric about $\frac12$,
satisfy \Cref{ass:reg}, and obey the strict $\beta=0$ umbrella
$\rho_v<\min\{f(\alpha),g(\alpha)\}$. Assume the $\beta=0$ symmetric
stationarity equation $h_0(c_1)=0$ has a unique root $c_1^0$. Then there is
$\bar\beta>0$ such that for every $\beta\in[0,\bar\beta)$ the game has a
symmetric Nash equilibrium $(c_1^\beta,1-c_1^\beta)$, and
$c_1^\beta\to c_1^0$ as $\beta\downarrow0$.
\end{theorem}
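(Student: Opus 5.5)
The plan is to combine three ingredients already in place: continuity of the symmetric stationarity function $h_\beta$ of \cref{m:eq:hbeta} in $(c_1,\beta)$; the joint single-peakedness of \Cref{m:thm:D1}, whose thresholds move continuously with $\beta$; and an explicit cross-over check, which must be done by hand because the reflection lemma fails for $\beta>0$ (\Cref{n:rem:reflection-beta}).

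\emph{Step 1 (umbrella persists).} Since $f_\beta,g_\beta$ are continuous in $\beta$ with $f_0=f$, $g_0=g$, the strict inequality $\rho_v<\min\{f(\alpha),g(\alpha)\}$ gives $\bar\beta_1>0$ with $\rho_v\le F_\beta(\alpha)$ for all $\beta<\bar\beta_1$. By \Cref{m:thm:D1}, $U_1(\cdot,c_2)$ is then single-peaked on $[0,c_2)$ for every $c_2$, with peak off the dead zone. Hence any stationary point on the own side is the own-side global maximizer.

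\emph{Step 2 (a root near $c_1^0$).} By \Cref{m:prop:roots}, for $\beta<\bar\beta_1$ every symmetric stationary root lies in $(0,\tfrac1\lambda)$, and at least one exists. Now $h_\beta(c_1)=h_0(c_1)+\beta\big[V(\zin{1})-V(\zout{1}\vee0)\big]$, so $h_\beta\to h_0$ uniformly on $[0,\tfrac1\lambda]$. Since $c_1^0$ is the unique zero of $h_0$ there, $|h_0|$ is bounded below on compact sets avoiding $c_1^0$. It remains to control the endpoints. At $0^+$ we have $h_\beta(0^+)>0$ uniformly, and near $\tfrac1\lambda$ the function $h_0$ has a fixed sign by uniqueness. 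It follows that every root $c_1^\beta$ of $h_\beta$ tends to $c_1^0$ as $\beta\downarrow0$. Pick any such root.

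\emph{Step 3 (cross-over, the main obstacle).} At the profile $(c_1^\beta,1-c_1^\beta)$, own-side optimality holds by Steps 1 and 2, and the tie is dominated because $U_1=(1-\beta)u_1>0$ at a symmetric profile. It remains to rule out deviations $c_1''\in(c_2,1]$. I would argue by continuity in $\beta$.

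\begin{itemize}
\item At $\beta=0$ the reflection argument of \Cref{n:lem:reflection} shows cross-over payoffs are bounded by own-side payoffs against a closer opponent. I would sharpen this to a \emph{strict} gap: there is $\eta>0$ with $u_1(c_1'',c_2^0)\le u_1(c_1^0,c_2^0)-\eta$ uniformly in $c_1''$. Strictness follows from step (ii) there: the opponent at $c_1^0$ is strictly closer than at $c_2^0$, so for deviations not near the tie $B^+$ strictly increases on a set of positive mass. Deviations close to the tie yield payoffs near $0<u_1(c_1^0,c_2^0)$.
\item Payoffs are jointly continuous (\Cref{n:thm:B5}), and $|U_1^\beta-u_1|\le\beta$.
\end{itemize}

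Together these give, for $\beta$ small and $c_1^\beta$ close to $c_1^0$, that $\sup_{c_1''>c_2}U_1^\beta<U_1^\beta(c_1^\beta,c_2^\beta)$. Candidate $2$ is handled by the mirror image. Taking $\bar\beta$ as the smaller of $\bar\beta_1$ and the cross-over threshold completes the argument.

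The delicate point is the uniform strict gap $\eta$ near the tie and near $c_1''=1$. If the reflection inequality degenerates there, I would instead bound cross-over payoffs directly, as in \Cref{n:cor:C4}, using $\rho_v<\infty$.
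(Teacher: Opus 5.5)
Your proposal is correct and follows the paper's proof essentially step for step. The shared steps are: persistence of the umbrella $\rho_v\le F_\beta(\alpha)$ for small $\beta$, where the paper notes that $f_\beta$ only grows with $\beta$, so only $g_\beta$ needs $\beta$ small; convergence of the roots of $h_\beta$ to $c_1^0$ via $|h_\beta-h_0|\le\beta$ and uniqueness; a strict $\beta=0$ far-side gap from reflection plus strict monotonicity of $u_1$ in the opponent's position; and transfer of that gap by joint continuity and $|U_1^\beta-u_1|\le\beta$.

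The ``delicate point'' you flag does not arise, so the fallback to \Cref{n:cor:C4}-type bounds is unnecessary. The strict inequality holds at every point of the closed far-side interval $[c_2^0,1]$: at $c_1''=1$ the reflected deviator sits at $0<c_1^0$ and still has a nondegenerate tent, and the tie gives $0$. So the supremum is attained, and $\eta$ follows from continuity on a compact set, which is exactly how the paper argues.
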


\begin{proof}
\emph{Step 1 (roots, single-peakedness, for small $\beta$).}
Since $g_\beta$ is continuous and strictly decreasing with $g_0=g(\alpha)$,
the condition $\rho_v<g(\alpha)$ gives
$\beta_g:=\frac{(1+\alpha)(\lambda-\rho_v\alpha)}{\rho_v(\alpha+\lambda)}>0$
with $\rho_v<g_\beta(\alpha)$ for all $\beta<\beta_g$; and
$f_\beta\ge f_0>\rho_v$ for all $\beta$. Hence for $\beta<\beta_g$
\Cref{m:thm:D1} applies: the own side is single-peaked, and by
\Cref{m:prop:roots} a symmetric stationary profile
$(c_1^\beta,1-c_1^\beta)$ exists, with $c_1^\beta\in(0,\frac1\lambda)$. By
single-peakedness the stationary point is the \emph{global own-side maximum}
of $U_1(\cdot,1-c_1^\beta)$ on $[0,1-c_1^\beta)$, and by symmetry likewise
for candidate 2 on her own side. It remains to control far-side deviations.

\emph{Step 2 (the $\beta=0$ far-side gap).}
Let $c_2^0=1-c_1^0$ and
$P(0)=u_1(c_1^0,c_2^0)>0$. For $t\in[0,1-c_2^0]$, reflection and the strict
monotonicity of $u_1(a,\cdot)$ (its derivative $I_1+O_1$ is strictly
positive for $a<o$, since $k,v>0$ and $\Delta>0$) give
\[
u_1(c_2^0+t,\,c_2^0)=u_1(1-c_2^0-t,\,1-c_2^0)
\;<\;u_1(1-c_2^0-t,\,c_2^0)\;\le\;P(0)\qquad(t>0),
\]
using $1-c_2^0<c_2^0$ (as $c_1^0<\tfrac12$) for the strict middle inequality,
while the deviation $t=0$ (the tie) yields payoff $0<P(0)$. The far-side
supremum $\mathrm{far}(0):=\sup_{t\in[0,1-c_2^0]}u_1(c_2^0+t,c_2^0)$ is
attained by continuity on a compact interval; at any maximizer the displayed
strict inequality holds, so
$\delta_0:=P(0)-\mathrm{far}(0)>0$.

\emph{Step 3 (continuity in $\beta$).}
On the compact family of symmetric profiles, $h_\beta\to h_0$ uniformly as
$\beta\to0$ (indeed $|h_\beta-h_0|\le\beta$ pointwise, since
$0\le V(\zin{1})-V(\zout{1}\vee0)\le1$). Moreover all roots of $h_\beta$,
$\beta\le\hat\beta<\min(\beta_g,1)$, lie in a fixed compact
$[\varepsilon_0,\frac12-\varepsilon_0]$: near $0$, $h_\beta\ge h_0>0$ (the
$\beta$-term in \cref{m:eq:hbeta} is nonnegative); near $\frac12$, the
expansion in \Cref{m:prop:roots} is negative uniformly for
$\beta\le\hat\beta$. By uniqueness of $c_1^0$, every sequence of roots
$c_1^{\beta_n}$ with $\beta_n\to0$ converges to $c_1^0$; hence
$c_1^\beta\to c_1^0$ (for any measurable selection of roots).
Finally, uniformly over positions,
$|U_1^\beta-U_1^0|=\beta\,u_2\le\beta$, and $u_1$ is jointly continuous;
therefore $P(\beta)=(1-\beta)\,u_1(c_1^\beta,1-c_1^\beta)\to P(0)$ and
\[
\mathrm{far}(\beta):=\sup_{c'\in[1-c_1^\beta,\,1]}U_1^\beta(c',1-c_1^\beta)
\;\le\;\sup_{c'\in[1-c_1^\beta,\,1]}u_1(c',1-c_1^\beta)\;\longrightarrow\;\mathrm{far}(0),
\]
where the convergence of the right-hand side uses joint continuity of $u_1$
and $c_1^\beta\to c_1^0$ (a parametric maximum over a continuously moving
compact interval). Choose $\bar\beta\le\min(\beta_g,\hat\beta)$ so small that
$P(\beta)-\mathrm{far}(\beta)\ge\delta_0/2>0$ for all $\beta<\bar\beta$.
Then no far-side deviation (including the tie) is profitable, no own-side
deviation is profitable by Step 1, and by symmetry the same holds for
candidate 2: $(c_1^\beta,1-c_1^\beta)$ is a Nash equilibrium.
\end{proof}

\begin{remark}[The moderate-$\beta$ regime, numerically]\label{m:rem:jointnum}
\stat{numerical observation}
The continuation theorem is qualitative; a numerical continuation shows the
phenomenon is not confined to small $\beta$.
Verified global equilibria (maximal deviation gain $0$ at resolution
$10^{-7}$) include: cosine voters $\rho_v=3$ at $(\alpha,\beta)=(0.2,0.2)$
\emph{inside} the umbrella $F_\beta=3.67$; $\rho_v=1.86$ at $(0.5,0.4)$
inside $F_\beta=1.92$; the same electorate at $(0.5,0.8)$, \emph{outside}
the umbrella ($g_\beta=1.19$); and $\mathrm{Beta}(2,2)$ (unbounded
$\rho_v$) at $(0.5,0.5)$. The umbrella is sufficient, visibly not necessary.
\end{remark}

\subsection{Dead zones and competitiveness}\label{m:sec:D3}

\begin{proposition}[The dead-zone threshold moves with $\beta$]
\label{m:prop:D3}
Let $\kappa\sim U(0,1)$ and $\alpha>0$.
\begin{enumerate}
\item[(a)] \emph{(Necessity.)} If an interior profile $c_1<c_2$ satisfies
  candidate 1's first-order condition $U_1'=0$ and candidate 1 has a dead
  zone ($\zout{1}>0$), then $\rho_v\ge g_\beta(\alpha)$. No symmetry of $v$ or of
  the profile is assumed.
\item[(b)] \emph{(Sufficiency, symmetric $v$.)} If $v$ is symmetric and
  \begin{equation}\label{m:eq:D3suff}
  (1+\alpha+\beta)\,V\!\Big(\frac{2(1+\alpha)}{\lambda^2}\Big)
  \;>\;2(1+\alpha)\,V\!\Big(\frac1\lambda\Big),
  \end{equation}
  then a \emph{symmetric stationary profile with dead zones} exists.
\item[(c)] \emph{(Uniform voters are immune at every $\beta$.)} If
  $v\equiv1$ then no profile with $\zout{1}>0$ satisfies candidate 1's first-order
  condition, for any $\beta\in[0,1)$ (recovering \Cref{n:prop:C2}(b)).
\end{enumerate}
Since $g_\beta$ strictly decreases in $\beta$ and the left side of
\cref{m:eq:D3suff} strictly increases in $\beta$, competition makes
alienation dead zones strictly \emph{easier} to sustain at stationarity: the
$\beta=0$ threshold $\rho_v\ge\lambda/\alpha$ of \Cref{n:prop:C2}(a) drops
to $\rho_v\ge g_\beta(\alpha)$.
\end{proposition}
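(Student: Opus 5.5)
The plan is to work entirely with the uniform-cost first-order formula \cref{m:eq:masterU}, which was already used in the proof of \Cref{m:thm:D1}, and to treat the three parts separately.

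For (a), the starting point is the dead-zone estimate \cref{m:eq:dz}. When $\zout{1}>0$, the inner-arm mass of candidate~1 satisfies $V(\zin{1})-V(c_1)\le S\Delta/\lambda$, her outer-arm mass satisfies $V(c_1)-V(\zout{1})\ge s\Delta/\alpha$, and the opponent's tent mass satisfies $V(\zout{2}\wedge1)-V(\zin{2})\le S\Delta(\tfrac1\lambda+\tfrac1\alpha)$. Together these give
\[
U_1'\le\Delta\Big[(1+\alpha)\big(\tfrac S\lambda-\tfrac s\alpha\big)+\beta S\big(\tfrac1\lambda+\tfrac1\alpha\big)\Big].
\]
The bracket is negative exactly when $\rho_v<g_\beta(\alpha)$. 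At equality, the strictness argument of \Cref{m:thm:D1} applies: the bound could only be attained if $v\equiv S$ on $(c_1,\zin{1})$ and $v\equiv s$ on $(\zout{1},c_1)$, which contradicts continuity of $v$ at $c_1$ unless $S=s$, and $S=s$ gives $\rho_v=1<g_\beta$. So $U_1'<0$ whenever $\rho_v\le g_\beta(\alpha)$, which is the contrapositive of (a). Nothing in this argument uses symmetry of $v$ or of the profile, since the three bounds are pointwise.

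For (b), I would repeat the proof of \Cref{n:prop:C2}(c) with $h_\beta$ from \cref{m:eq:hbeta} in place of $h$. Along the symmetric family $c_2=1-c_1$, the dead-zone branch is $c_1\in(\tfrac1\lambda,\tfrac12)$. At the junction $c_1=\tfrac1\lambda$ we have $\zout{1}=0$ and $\zin{1}=\tfrac{2(1+\alpha)}{\lambda^2}$, so
\[
h_\beta(\tfrac1\lambda)=(1+\alpha+\beta)V\big(\tfrac{2(1+\alpha)}{\lambda^2}\big)-2(1+\alpha)V(\tfrac1\lambda),
\]
which is positive by \cref{m:eq:D3suff}. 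Near $\tfrac12$, the expansion in \Cref{m:prop:roots} gives $h_\beta<0$ because $\beta<1$. Since $h_\beta$ is continuous, the intermediate value theorem yields a root in the open dead-zone branch. At that root candidate~1's first-order condition holds, and by the symmetry $\zin{2}=1-\zin{1}$, $\zout{2}=1-\zout{1}$ so does candidate~2's.

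For (c), take $v\equiv1$ with a general $c_1<c_2$ and $\zout{1}>0$, and write $U_1'$ exactly. Candidate~1's inner mass is $\Delta/\lambda$ and her outer mass is $\Delta/\alpha$. The opponent's mass is at most $\Delta(\tfrac1\lambda+\tfrac1\alpha)$. This gives
\[
U_1'\le\Delta\Big[(1+\alpha)\big(\tfrac1\lambda-\tfrac1\alpha\big)+\beta\big(\tfrac1\lambda+\tfrac1\alpha\big)\Big]=\Delta\,\tfrac{2(1+\alpha)}{\alpha\lambda}(\beta-1)<0.
\]
This is in fact (a) with $\rho_v=1<g_\beta$, so (c) can be stated as a corollary of (a).

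For the closing sentence, I need the monotonicity of both sides in $\beta$. The function $g_\beta$ is strictly decreasing in $\beta$; this is already in \Cref{m:thm:D1}. The left side of \cref{m:eq:D3suff} is strictly increasing in $\beta$ because $V(\tfrac{2(1+\alpha)}{\lambda^2})>0$.

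The main obstacle is bookkeeping in (a) for asymmetric profiles. There the opponent may or may not have her own dead zone, i.e.\ $\zout{2}$ may lie above or below $1$. I will handle this by using only the one-sided bound $V(\zout{2}\wedge1)-V(\zin{2})\le S(\zout{2}-\zin{2})$, which holds in both cases.
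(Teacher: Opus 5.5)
Your proposal is correct and follows the paper's proof essentially step for step. Part (a) is the contrapositive of the dead-zone estimate \eqref{m:eq:dz}, with the same continuity-at-$c_1$ strictness argument; part (b) applies the intermediate value theorem to $h_\beta$ between the junction $c_1=\tfrac1\lambda$ and a left neighbourhood of $\tfrac12$; and part (c) sets $S=s=1$ in \eqref{m:eq:dz}. Your observation that (c) is simply (a) at $\rho_v=1<g_\beta$ is a harmless repackaging of that same computation, and it is valid because $g_\beta>1$ for every $\beta<1$.
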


\begin{proof}
(a) is the contrapositive of the dead-zone estimate in the proof of
\Cref{m:thm:D1}: if $\zout{1}>0$ and $\rho_v<g_\beta$ then \cref{m:eq:dz} gives
$U_1'<0$, and at $\rho_v=g_\beta$ the strictness argument applies; so
stationarity forces $\rho_v\ge g_\beta$.

(b) On the symmetric family, the dead-zone branch is
$c_1\in(\frac1\lambda,\frac12)$, on which $h_\beta$ of \cref{m:eq:hbeta} is
continuous. At the junction $c_1=\frac1\lambda$ we have $\zout{1}=0$,
$\Delta=\frac{\alpha}{\lambda}$ and
$\zin{1}=\frac1\lambda+\frac{\alpha}{\lambda^2}=\frac{2(1+\alpha)}{\lambda^2}$, so
$h_\beta(\frac1\lambda)=
(1+\alpha+\beta)V(\frac{2(1+\alpha)}{\lambda^2})-2(1+\alpha)V(\frac1\lambda)>0$
by \cref{m:eq:D3suff}, while $h_\beta<0$ near $\frac12$ by
\Cref{m:prop:roots}. The intermediate value theorem gives a root with
$c_1>\frac1\lambda$, i.e.\ $\zout{1}>0$; by symmetry both candidates then have
dead zones.

(c) With $v\equiv1$ and $\zout{1}>0$, \cref{m:eq:dz} with $S=s=1$ gives
$U_1'\le(1+\alpha)\Delta(\frac1\lambda-\frac1\alpha)
+\beta\Delta(\frac1\lambda+\frac1\alpha)
=\Delta\,\frac{2(1+\alpha)}{\alpha\lambda}(\beta-1)<0$ for $\beta<1$.
\end{proof}

\begin{remark}[The bound $g_\beta$ is essentially attained]\label{m:rem:D3num}
\stat{numerical observation}
A numerical check takes $\alpha=2$ and continuous
piecewise-linear densities with a low plateau on the deviator's outer arm
$(\zout{1},c_1)$ (ramp width $3\cdot10^{-3}$), and bisects for the minimal density
ratio at which candidate 1's first-order condition admits a dead-zone root.
The minima are $2.0152,\,1.5447,\,1.2524,\,1.0531$ at
$\beta=0,\,0.3,\,0.6,\,0.9$, against
$g_\beta(2)=\frac{2}{1+\beta}=2.0000,\,1.5385,\,1.2500,\,1.0526$: the proved
necessity bound is sharp up to the ramp width. (These witnesses are
asymmetric densities, as part (a) permits; symmetric witnesses are
constrained by the mirror arm and stay slightly further from the bound.)
\end{remark}

%%%%%%%%%%%%%%%%%%%%%%%%%%%%%%%%%%%%%%%%%%%%%%%%%%%%%%%%%%%%%%%%%%%%%%%%%%%%%%
\section{Existence at $\alpha=0$ for every $\beta<1$}\label{m:sec:E1}

\Cref{n:thm:C1} shows that hollow electorates destroy pure equilibria at
$\alpha=1$ for $\beta$ near $1$. The next theorem shows that at $\alpha=0$
they do not: for uniform costs and a symmetric electorate satisfying a local
ratio condition --- which \emph{every} convex (U-shaped) symmetric density
satisfies once mild, and the arcsine density satisfies for all $\beta$ ---
the quantile equilibrium of \Cref{n:prop:C4} exists for every
$\beta\in[0,1)$. This makes the median-selection statement
\Cref{n:thm:B5}(a) non-vacuous precisely for the electorates that make it
interesting.

\begin{theorem}[Quantile equilibria at $\alpha=0$, all $\beta$]
\label{m:thm:E1}
Let $\alpha=0$, $\kappa\sim U(0,1)$, and let $v$ be symmetric about $\frac12$ and
satisfy \Cref{ass:reg}, with $s_0:=\inf_{(0,1)}v>0$. Define the quantile
profile $c_2^*=V^{-1}\!\big(\frac{3-\beta}{4}\big)$, $c_1^*=1-c_2^*$, and
$\bar v:=\sup_{[c_1^*,c_2^*]}v$. If
\begin{align}
\sup_{c_1\in(0,c_2^*)}\ \frac{v\big(\frac{c_1+c_2^*}{2}\big)}{v(c_1)}
&\;\le\;\frac{4}{1-\beta},
\tag{E1-a}\label{m:eq:E1a}\\[2pt]
\frac{\bar v}{s_0}&\;\le\;\frac{4(1+\beta)}{(1-\beta)^2},
\tag{E1-b}\label{m:eq:E1b}
\end{align}
then $(c_1^*,c_2^*)$ is a Nash equilibrium of the $\beta$-game.
\end{theorem}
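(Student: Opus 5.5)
The argument follows the proof of \Cref{n:cor:C4} closely. The only change is that the global hypothesis $\rho_v\le\tfrac4{1-\beta}$ is replaced by the two local conditions \eqref{m:eq:E1a} and \eqref{m:eq:E1b}, one for each half of that proof.

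\emph{Stationarity.} Because $v$ is symmetric, we have $c_1^*=1-c_2^*$, $m=\tfrac12$ and $V(m)=\tfrac12$. The two conditions \cref{n:eq:C4foc} then hold exactly as in \Cref{n:cor:C4}: $2\cdot\tfrac{1+\beta}4=\tfrac12+\tfrac\beta2$. By \Cref{n:lem:interior}(b) and \cref{n:eq:master0} with $k\equiv1$, this makes $c_1^*$ a stationary point of $U_1(\cdot,c_2^*)$.

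\emph{Own-side optimality from \eqref{m:eq:E1a}.} Fix the opponent at $c_2^*$. As in \Cref{n:prop:C4}(b), on $(0,c_2^*)$ we have
\[
U_1'(c_1)=V(m)-2V(c_1)+\beta\bigl(1-V(m)\bigr),\qquad
U_1''(c_1)=\tfrac{1-\beta}2\,v(m)-2v(c_1),
\]
where $m=\tfrac{c_1+c_2^*}2$. Here only the single opponent position $c_2^*$ matters. So \eqref{m:eq:E1a}, which is exactly the pointwise inequality $\tfrac{1-\beta}2 v(m)\le 2v(c_1)$ along this one family, gives $U_1''\le0$ on all of $(0,c_2^*)$. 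Thus $U_1(\cdot,c_2^*)$ is concave on the own side and stationary at $c_1^*$, hence globally maximized there. The one point to check is that the uniform-costs formula applies up to $c_1\downarrow0$ and at the tie; both follow from continuity of $u_i$ (\Cref{n:thm:B5}). By the symmetry of $v$ and of the profile, the mirror argument covers candidate $2$. This step is routine.

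\emph{Cross-overs and the tie from \eqref{m:eq:E1b}.} This is the main step. I would redo the two bounds in the proof of \Cref{n:cor:C4}, keeping track of where the density is evaluated, so that each bound uses only what \eqref{m:eq:E1b} controls.
\begin{itemize}
\item[(i)] \emph{Gap bound.} The mass-gap law gives $\tfrac{1-\beta}2=\int_{c_1^*}^{c_2^*}v\le\bar v\,\Delta$, hence $\Delta\ge\tfrac{1-\beta}{2\bar v}$. This uses only $\sup v$ on $[c_1^*,c_2^*]$, which is why $\bar v$ rather than $\sup v$ appears in \eqref{m:eq:E1b}.
\item[(ii)] \emph{Equilibrium payoff.} Using $u_1\ge\Delta V(c_1^*)$ gives $U_1(c_1^*,c_2^*)=(1-\beta)u_1\ge\tfrac{(1-\beta^2)\Delta}{4}$, which is positive and so dominates the tie.
\item[(iii)] \emph{Cross-over payoff.} For a deviation to $c_2^*+t$, the deviator's share is at most $t\bigl(1-V(m')\bigr)$ with $V(m')\ge V(c_2^*)+s_0\tfrac t2$. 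The opponent's share is at least $tV(c_2^*)$. Together these give
\[
U_1(c_2^*+t,\,c_2^*)\le\tfrac{(1-\beta)^2}4\,t-\tfrac{s_0}2\,t^2\le\tfrac{(1-\beta)^4}{32 s_0}.
\]
This is the only place the global infimum $s_0$ is needed.
\end{itemize}
Comparing (ii), with (i) substituted, against (iii) gives
\[
\frac{(1+\beta)(1-\beta)^2}{8\bar v}\ \ge\ \frac{(1-\beta)^4}{32 s_0},
\]
which rearranges exactly to \eqref{m:eq:E1b}. Candidate $2$ is the mirror image, so neither candidate gains by crossing over.

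The main obstacle is getting the localization right: the first bound must see only $\bar v$ and the cross-over bound only $s_0$. Once these inequalities are written with that split, the argument is the one in \Cref{n:cor:C4}.
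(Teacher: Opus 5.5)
Your proposal is correct and follows the paper's proof essentially step for step. The steps match: stationarity comes from the quantile conditions. Concavity of $U_1(\cdot,c_2^*)$ on the own side follows from \eqref{m:eq:E1a} along the single opponent position $c_2^*$. Finally, the equilibrium-payoff bound, localized to $\bar v$ through the mass-gap law, is compared with the cross-over bound, localized to $s_0$, and the comparison reduces exactly to \eqref{m:eq:E1b}.
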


\begin{proof}
By symmetry of $v$ and of the profile it suffices to rule out deviations of
candidate 1; candidate 2's problem is the mirror image.

\emph{Stationarity.} At $\alpha=0$, \cref{m:eq:masterU} becomes, with
$m=\frac{c_1+c_2}{2}$,
\begin{equation}\label{m:eq:h0}
h(c_1):=U_1'(c_1,c_2^*)=V(m)+\beta\big(1-V(m)\big)-2V(c_1),
\end{equation}
and $V(c_1^*)=\frac{1+\beta}{4}$, $V(m^*)=V(\frac12)=\frac12$ give
$h(c_1^*)=\frac12+\frac\beta2-\frac{1+\beta}{2}=0$.

\emph{Own side.} $h'(c_1)=\frac{1-\beta}{2}v(m)-2v(c_1)\le0$ for all
$c_1\in(0,c_2^*)$ by \cref{m:eq:E1a}, so $h$ is non-increasing; with
$h(c_1^*)=0$ this gives $h\ge0$ on $(0,c_1^*]$ and $h\le0$ on
$[c_1^*,c_2^*)$: $U_1(\cdot,c_2^*)$ is single-peaked with global own-side
maximum at $c_1^*$. Its value is $P=(1-\beta)u^*$, where
$u^*=u_1(c_1^*,c_2^*)=u_2(c_1^*,c_2^*)$ by symmetry, and
\begin{equation}\label{m:eq:Pbound}
u^*\;\ge\;\Delta\,V(c_1^*)+\int_{c_1^*}^{1/2}(1-2x)\,v(x)\,dx
\;\ge\;\frac{(1+\beta)\Delta}{4},
\qquad\text{so}\qquad
P\;\ge\;\frac{(1-\beta^2)\Delta}{4},
\end{equation}
using $B_1=\Delta$ on $[0,c_1^*]$ and $B_1=1-2x$ on $[c_1^*,\frac12]$
(here $c_1^*+c_2^*=1$), and $K(t)=t$.

\emph{Far side.} Let $c'=c_2^*+t$, $t\in(0,1-c_2^*]$, and write
$m'=c_2^*+\frac t2$. The deviator's supporters lie in $(m',1]$ with benefit
at most $t$, so, using $v\ge s_0$ on $[c_2^*,m']$,
\[
u_1(c',c_2^*)\;\le\;t\big(1-V(m')\big)
\;\le\;t\Big(\frac{1+\beta}{4}-\frac{s_0t}{2}\Big).
\]
Every voter in $[0,c_2^*]$ obtains benefit exactly $t$ against the deviator
(at $\alpha=0$, $B_2=d_1-d_2=t$ there), so
$u_2(c',c_2^*)\ge t\,V(c_2^*)=t\,\frac{3-\beta}{4}$. Hence
\[
U_1(c',c_2^*)\;\le\;t\,\frac{1+\beta}{4}-\frac{s_0t^2}{2}
-\beta t\,\frac{3-\beta}{4}
\;=\;\frac{(1-\beta)^2}{4}\,t-\frac{s_0}{2}\,t^2
\;\le\;\frac{(1-\beta)^4}{32\,s_0},
\]
maximizing the concave quadratic over $t\ge0$. The tie $c'=c_2^*$ yields
payoff $0<P$.

\emph{Closing the gap.} The mass-gap law
$V(c_2^*)-V(c_1^*)=\frac{1-\beta}{2}$ and $v\le\bar v$ on $[c_1^*,c_2^*]$
give $\Delta\ge\frac{1-\beta}{2\bar v}$, so by \cref{m:eq:Pbound}
\[
P\;\ge\;\frac{(1-\beta^2)(1-\beta)}{8\,\bar v}
\;\ge\;\frac{(1-\beta)^4}{32\,s_0}
\quad\Longleftrightarrow\quad
\frac{\bar v}{s_0}\;\le\;\frac{4(1+\beta)}{(1-\beta)^2},
\]
which is \cref{m:eq:E1b}. Hence no far-side deviation is profitable either,
and $(c_1^*,c_2^*)$ is an equilibrium.
\end{proof}

\begin{corollary}[Convex symmetric electorates; the arcsine for all $\beta$]
\label{m:cor:E1convex}
Let $\alpha=0$, $\kappa\sim U(0,1)$, and let $v$ be symmetric, convex and satisfy
\Cref{ass:reg}. Set $\eta(\beta):=v(c_2^*)/v(\frac12)$. If
$\eta(\beta)\le\frac{4}{1-\beta}$, then the quantile profile is a Nash
equilibrium. In particular, for the arcsine electorate
$v=\mathrm{Beta}(\frac12,\frac12)$,
\[
c_2^*=\sin^2\!\Big(\frac{\pi(3-\beta)}{8}\Big),
\qquad
\eta(\beta)=\frac{1}{\sin\!\big(\frac{\pi(3-\beta)}{4}\big)}
\in(1,\sqrt2\,],
\]
so the condition holds and a pure equilibrium exists for \emph{every}
$\beta\in[0,1)$.
\end{corollary}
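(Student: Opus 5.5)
The plan is to derive \Cref{m:cor:E1convex} from \Cref{m:thm:E1} by checking its two hypotheses \eqref{m:eq:E1a} and \eqref{m:eq:E1b}. Both should follow from the single condition $\eta(\beta)\le\tfrac4{1-\beta}$, using only two consequences of convexity and symmetry of $v$ about $\tfrac12$:
\begin{itemize}
\item $v$ is non-increasing on $(0,\tfrac12]$ and non-decreasing on $[\tfrac12,1)$, so $s_0=\inf v=v(\tfrac12)>0$;
\item on any interval, $v$ is bounded by its values at the two endpoints.
\end{itemize}
The arcsine case then reduces to an explicit trigonometric computation.

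\emph{Step 1: condition \eqref{m:eq:E1b}.} By symmetry, $c_2^*>\tfrac12$ and $c_1^*=1-c_2^*$. Convexity gives $\bar v=\sup_{[c_1^*,c_2^*]}v=\max\{v(c_1^*),v(c_2^*)\}$, and symmetry makes the two values equal, so $\bar v=v(c_2^*)$. With $s_0=v(\tfrac12)$ this yields $\bar v/s_0=\eta(\beta)$. For $\beta\in[0,1)$ we have $\tfrac{1+\beta}{1-\beta}\ge1$, hence
\[
\eta(\beta)\le\frac4{1-\beta}\le\frac{4(1+\beta)}{(1-\beta)^2},
\]
which is \eqref{m:eq:E1b}.

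\emph{Step 2: condition \eqref{m:eq:E1a}.} Fix $c_1\in(0,c_2^*)$ and set $m=\tfrac{c_1+c_2^*}2$. Since $m$ lies between $c_1$ and $c_2^*$, convexity gives $v(m)\le\max\{v(c_1),v(c_2^*)\}$. There are two cases.
\begin{itemize}
\item If the maximum is $v(c_1)$, the ratio $v(m)/v(c_1)$ is at most $1$, and $1\le\tfrac4{1-\beta}$.
\item If the maximum is $v(c_2^*)$, then $v(m)/v(c_1)\le v(c_2^*)/v(\tfrac12)=\eta(\beta)\le\tfrac4{1-\beta}$, because $v(c_1)\ge\inf v=v(\tfrac12)$.
\end{itemize}
Taking the supremum over $c_1$ gives \eqref{m:eq:E1a}. \Cref{m:thm:E1} then shows that the quantile profile is a Nash equilibrium.

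\emph{Step 3: the arcsine electorate.} For $v=\mathrm{Beta}(\tfrac12,\tfrac12)$ we have $v(x)=\tfrac1{\pi\sqrt{x(1-x)}}$ and $V(x)=\tfrac2\pi\arcsin\sqrt x$. This density is symmetric, convex, continuous and positive on $(0,1)$, and $s_0=v(\tfrac12)=\tfrac2\pi>0$. Its unboundedness at the endpoints is permitted by \Cref{ass:reg}, and \Cref{m:thm:E1} uses only $s_0>0$.

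Solving $V(c_2^*)=\tfrac{3-\beta}4$ gives $c_2^*=\sin^2\theta$ with $\theta=\tfrac{\pi(3-\beta)}8$. Then
\[
c_2^*(1-c_2^*)=\sin^2\theta\cos^2\theta=\tfrac14\sin^2 2\theta,
\qquad\text{so}\qquad
v(c_2^*)=\frac{2}{\pi\sin 2\theta},
\]
and therefore $\eta(\beta)=1/\sin\tfrac{\pi(3-\beta)}4$. As $\beta$ ranges over $[0,1)$, the angle $\tfrac{\pi(3-\beta)}4$ lies in $(\tfrac\pi2,\tfrac{3\pi}4]$, where the sine lies in $[\tfrac{\sqrt2}2,1)$. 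Hence $\eta(\beta)\in(1,\sqrt2\,]$, and $\sqrt2<4\le\tfrac4{1-\beta}$, so the condition holds for every $\beta\in[0,1)$.

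\emph{Main obstacle.} The only delicate point is Step 2. The supremum in \eqref{m:eq:E1a} runs over all $c_1$, including $c_1$ near $0$ where $v$ may blow up, as it does for the arcsine. The endpoint bound from convexity handles this uniformly: a large $v(c_1)$ only makes the ratio smaller, so no separate treatment of the boundary is needed.
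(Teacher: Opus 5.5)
Your proposal is correct and matches the paper's proof essentially step for step. Both arguments verify \eqref{m:eq:E1a} and \eqref{m:eq:E1b} of \Cref{m:thm:E1} from convexity and symmetry, using the endpoint bound together with the minimum at $\tfrac12$, and then compute $c_2^*$ and $\eta(\beta)$ for the arcsine via $V(x)=\tfrac2\pi\arcsin\sqrt x$ and $c_2^*(1-c_2^*)=\tfrac14\sin^2\tfrac{\pi(3-\beta)}4$. Your two-case split in Step~2 is just an unpacking of the paper's bound $\max\{1,\eta(\beta)\}=\eta(\beta)$.
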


\begin{proof}
For convex symmetric $v$: $v$ is non-increasing on $[0,\frac12]$ and
non-decreasing on $[\frac12,1]$, so $s_0=v(\frac12)$ and, on
$[c_1^*,c_2^*]$, $v$ attains its maximum at the endpoints:
$\bar v=v(c_1^*)=v(c_2^*)$. For \cref{m:eq:E1a}: with
$m=\frac{c_1+c_2^*}{2}$ the midpoint of $[c_1,c_2^*]$, convexity gives
$v(m)\le\max\{v(c_1),v(c_2^*)\}$, so the ratio is at most
$\max\{1,\,v(c_2^*)/v(c_1)\}\le\max\{1,\eta(\beta)\}=\eta(\beta)$ (as
$v(c_1)\ge s_0$ and $\eta\ge1$), and $\eta\le\frac{4}{1-\beta}$ by
hypothesis. For \cref{m:eq:E1b}: $\bar v/s_0=\eta\le\frac{4}{1-\beta}
\le\frac{4(1+\beta)}{(1-\beta)^2}$. For the arcsine,
$V(x)=\frac2\pi\arcsin\sqrt x$, so $V(c_2^*)=\frac{3-\beta}{4}$ gives the
stated $c_2^*$, and
$v(c_2^*)/v(\frac12)=\frac{1/2}{\sqrt{c_2^*(1-c_2^*)}}
=\frac{1}{\sin(\frac{\pi(3-\beta)}{4})}$, using
$c_2^*(1-c_2^*)=\frac14\sin^2(\frac{\pi(3-\beta)}{4})$. As $\beta$ runs over
$[0,1)$ the argument runs over $(\frac\pi2,\frac{3\pi}4]$, so
$\eta\in(1,\sqrt2]$, which is below $\frac{4}{1-\beta}\ge4$ for every
$\beta$.
\end{proof}

\begin{remark}[Contrast with $\alpha>0$; scope; verification]
\label{m:rem:E1scope}
(i) The contrast with \Cref{n:thm:C1} is now a theorem: the same hollow
electorate $\mathrm{Beta}(\frac12,\frac12)$ that has \emph{no} pure
equilibrium at $\alpha=1$ for $\beta\ge0.9168$ (\Cref{m:thm:E2}) has a pure
equilibrium at $\alpha=0$ for \emph{every} $\beta<1$. Hollowness is lethal
only in combination with alienation. Consequently \Cref{n:thm:B5}(a) is a
genuine selection statement on this class: the quantile equilibria converge
to the median profile as $\beta\uparrow1$ (here explicitly:
$c_2^*\downarrow\frac12$).
(ii) For general $(v,k)$, the question of existence at $\alpha=0$ for every
$\beta<1$ contains, at $\beta=0$, the open
problem of pure-equilibrium existence for arbitrary $(v,k)$, and is not
resolved here; \Cref{m:conj:alpha0} records the conjecture. A numerical check verifies
the quantile equilibrium as a global best response for
$\mathrm{Beta}(\frac12,\frac12)$ and $\mathrm{Beta}(0.8,0.8)$ at
$\beta\in\{0,0.3,0.6,0.9,0.97\}$ (deviation gains $\le0$, first-order
residuals $\sim10^{-11}$), and finds verified equilibria for the
\emph{non-uniform} costs $K=t(2-t)/2$ and $K=t^2$ with the arcsine
electorate at $\beta$ up to $0.99$ (e.g.\ $K=t^2$: $(0.075,0.925)$ at
$\beta=0$ collapsing to $(0.492,0.508)$ at $\beta=0.99$).
\end{remark}

%%%%%%%%%%%%%%%%%%%%%%%%%%%%%%%%%%%%%%%%%%%%%%%%%%%%%%%%%%%%%%%%%%%%%%%%%%%%%%
\section{The certified non-existence interval}\label{m:sec:E2}

\Cref{n:cor:beta-interval} gave non-existence for $\beta>0.98615$
($\alpha=1$, $\kappa\sim U(0,1)$, $v=\mathrm{Beta}(\frac12,\frac12)$), while
numerically equilibria persist to $\beta=0.90$ and are gone by $\beta=0.91$
(\Cref{n:rem:C1num}). The following closes most of that gap.

\begin{theorem}[Certified interval]\label{m:thm:E2}
\stat{computer-assisted}
Let $\alpha=1$, $\kappa\sim U(0,1)$, $v=\mathrm{Beta}(\frac12,\frac12)$. The
$\beta$-game has no pure Nash equilibrium for any
\[
\beta\;\ge\;0.9168 .
\]
\end{theorem}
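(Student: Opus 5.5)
The plan is to upgrade the $\beta=1$ grid certificate of \Cref{n:prop:grid} into a certificate that works directly at $\beta<1$, instead of passing through the crude turnout bound of \Cref{n:cor:beta-interval}. Write $U^\beta(c,c')=u_{\mathrm{dev}}(c,c')-\beta\,u_{\mathrm{dev}}(c',c)$ for the payoff of a candidate at $c$ facing one at $c'$. The game is symmetric but not zero-sum for $\beta<1$, so we cannot use a security-value characterization. We use instead a direct deviation criterion. A profile $(c_1,c_2)$ fails to be an equilibrium as soon as some $c''$ satisfies $U^\beta(c'',c_2)>U^\beta(c_1,c_2)$, or the mirror inequality holds for candidate~$2$. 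So it suffices to certify that for every pair $(c_1,c_2)\in[0,1]^2$ some candidate has a strictly profitable deviation, with a positive margin. Because the margin is positive, a Lipschitz bound transfers the statement from a grid to the continuum.

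First I would extend \Cref{n:lem:lip} to $U^\beta$. Moving one candidate by $h$ moves $u_{\mathrm{dev}}$ of that candidate by at most $(1+\alpha)L_K|h|$ and the opponent's by at most $L_K|h|$. Hence $|U^\beta(c,c')-U^\beta(\widetilde c,c')|\le(1+\alpha+\beta)L_K|c-\widetilde c|$, and similarly in $c'$. With $\alpha=1$ and $L_K=1$ both constants are at most $3$. The constants are uniform in $\beta\in[0,1]$, and $U^\beta$ is affine and monotone (decreasing) in $\beta$. Monotonicity lets one handle an entire interval $[\beta_0,1]$ from finitely many $\beta$-values. For a deviation payoff minus an incumbent payoff, the dependence on $\beta$ is through $-\beta\,[u_{\mathrm{dev}}(c_2,c'')-u_{\mathrm{dev}}(c_2,c_1)]$, which is Lipschitz in $\beta$ with constant at most $1$. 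So a certificate at $\beta_j$ with slack $\varepsilon$ also covers $|\beta-\beta_j|<\varepsilon$.

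Next, the computation. Cover $[0,1]^2$, reduced by symmetry to $c_1\le c_2$, by cells $Q$. For each cell, record a deviator and a deviation point from a finite list, and a certified gap $\hat\varepsilon_Q>0$. The gap is the exact quadrature difference $U^\beta(c'',c_{-i})-U^\beta(c_i,c_{-i})$ at the cell centre, evaluated with the incomplete-beta closed form used in \Cref{n:thm:C1}. The Lipschitz slack is $3\cdot\mathrm{diam}_\infty(Q)$ (both coordinates move), plus the $\beta$-slack, plus a $10^{-10}$ floating-point guard. This is the adaptive-refinement scheme advertised in the introduction. A cell whose gap fails to exceed its slack is bisected and re-tested. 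As a heuristic for choosing the deviation, take the better of candidate $1$'s and candidate $2$'s approximate best responses computed on a fine grid at the cell centre. Run this over a $\beta$-mesh on $[0.9168,1]$ whose spacing is dictated by the $\beta$-Lipschitz step above, and the theorem follows.

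The main obstacle is the neighbourhood of near-equilibria. Near $\beta\approx0.908$ pure equilibria exist (\Cref{n:rem:C1num}), so just above the threshold the best available deviation gain at certain profiles is tiny. Near $(0.234,0.766)$ and its continuation, cells must shrink until $3\,\mathrm{diam}(Q)$ is below that gain, and the gain vanishes as $\beta$ decreases to the true threshold. This is why the certified bound stops at $0.9168$ rather than $0.909$. The second difficulty is the tie diagonal $c_1=c_2$, where all payoffs vanish and a deviation gain is of order $K(\varepsilon)$. Those cells I would dispatch analytically rather than numerically. I would adapt the tie argument of \Cref{n:lem:interior}(c) using the uniform $\inf v$-free bound available because the arcsine $V$ is explicit. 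Failing that, I would refine toward the diagonal only in a band of width $\delta$ where the explicit lower bound on the gain exceeds $3\delta$.
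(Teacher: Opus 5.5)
Your route is the paper's: exact incomplete-beta evaluation of payoffs, a lower bound on regret by fixed-attacker deviation gains, Lipschitz transfer from cell centres to cells with adaptive refinement, and a walk in $\beta$ via Lipschitz continuity in $\beta$. Your $\beta$-constant $1$ is valid, though cruder than the paper's $2u_{\sup}$ with $u_{\sup}=0.2839$. The genuine error is the spatial slack you state, which is too small, so the certificate as written is unsound. For a fixed attacker $c''$ of candidate $1$, the gain $D(c_1,c_2)=U^\beta(c'',c_2)-U^\beta(c_1,c_2)$ depends on $c_2$ through \emph{both} terms, not only through the incumbent's payoff. The deviator's own payoff $U^\beta(c'',c_2)$ also moves with the opponent's position. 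Since $U^\beta$ is $(3,3)$-Lipschitz, $D$ is $3$-Lipschitz in $c_1$ but $6$-Lipschitz in $c_2$. Over a square cell of side $h$, measured from its centre, the gain can therefore move by $3\cdot\tfrac h2+6\cdot\tfrac h2=4.5h$, not $3h$. Your `$3\,\mathrm{diam}_\infty(Q)$ (both coordinates move)' counts only the incumbent's term. The paper's bookkeeping is exactly this: $(3,6)$- and $(6,3)$-Lipschitz one-sided regrets, slack $4.5h$, refined to $4.5h/16$ and $4.5h/64$. With the corrected constant your scheme goes through. The constant matters quantitatively, because the certificate stops where the dying regret meets the finest slack.

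Two smaller points. First, the tie diagonal is not a difficulty, so no special handling is needed there. \Cref{n:lem:lip} holds without any restriction on the candidates crossing, so payoffs are Lipschitz across ties. The incumbent payoff vanishes at a tie, but the best deviation gain does not. By \Cref{n:prop:grid}(b) and $U_1=\tfrac{1-\beta}2T+\tfrac{1+\beta}2w$ it is at least $\tfrac{1+\beta}2\varepsilon'$, exactly as in the proof of \Cref{n:cor:beta-interval}. Diagonal cells are thus among the easiest to certify. The order-$K(\varepsilon)$ gain you mention comes only from small deviations, which the certificate never needs to use. Your proposed analytic fallback via \Cref{n:lem:interior}(c) is a uniform-voter argument, so it is both inapplicable and unnecessary. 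Second, monotonicity of $U^\beta$ in $\beta$ does not propagate certificates between mesh points. The gain is affine in $\beta$ with slope $-\bigl[u_{\mathrm{dev}}(c_2,c'')-u_{\mathrm{dev}}(c_2,c_1)\bigr]$, whose sign is unknown. Only your Lipschitz-in-$\beta$ step does that work.
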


\noindent A pure equilibrium is found numerically at $\beta=0.90$
(\Cref{n:rem:C1num}); that lower bound is a numerical observation, not part of
the certificate, and places any existence threshold $\beta_c$ in
$(0.90,\,0.9168]$.

\begin{proof}
The proof is a finite, exactly evaluated certificate, reproduced by the
verification suite as the three \texttt{E2} claims.
Fix $\beta$ and define the regret
$R(c_1,c_2)=\max_i\big[\sup_{c'}U_i(c'\!,c_{-i})-U_i(c_1,c_2)\big]$; a pure
equilibrium exists iff $\min_{[0,1]^2}R=0$, so
$\min R>0$ certifies non-existence. All payoffs
$u(a,o)=\int K(B^+)dV$ are evaluated exactly through the regularized
incomplete beta function. From the proof of \Cref{n:lem:lip}, $u$ is
$2$-Lipschitz in the own and $1$-Lipschitz in the opponent position (for
$\alpha=1$, $L_K=1$); hence $U_1(a,o)=u(a,o)-\beta u(o,a)$ is
$(3,3)$-Lipschitz, the best-response value $s(o)=\sup_aU_1(a,o)$ is
$3$-Lipschitz, and the two one-sided regrets are $(3,6)$- and
$(6,3)$-Lipschitz, so the
regret varies by at most $4.5$ times the side of a square cell, measured
from its center.

On the $N\times N$ grid, $N=2401$, $h=\frac1{2400}$, the matrix
$U_1=U-\beta U^{\!\top}$ and the column maxima give a \emph{conservative}
grid regret (the grid best-response value underestimates the true supremum, so
grid regret underestimates true regret at grid profiles). Cells whose grid regret
exceeds $\tau+4.5h$ certify $R\ge\tau$ on the whole cell. The remaining
(flagged) cells are covered by a refined lattice of step $h/16$ (step
$h/64$ for the residual worst cells), on which the regret is bounded below
by the exact \emph{fixed-attacker} margins
$\max\{U_1(a_j^{\mathrm{col}},c_2)-U_1(c_1,c_2),\,U_1(a_i^{\mathrm{row}},c_1)-U_2(c_1,c_2)\}$
with $a_j^{\mathrm{col}}$, $a_i^{\mathrm{row}}$ the grid best responses in the
cell's column and row;
these bounds have the same Lipschitz structure, so the refined slack is
$4.5h/16$ resp.\ $4.5h/64$. This yields a verified margin
$\varepsilon'(\beta)>0$ with $R\ge\varepsilon'(\beta)$ on all of $[0,1]^2$.

Finally $|R_\beta-R_{\beta'}|\le2u_{\sup}|\beta-\beta'|$ with
$u_{\sup}=0.2839$ an upper bound on $\sup u$ (grid maximum plus Lipschitz
correction $1.5h$), so each certified $\beta$ covers the interval
$(\beta-\varepsilon'(\beta)/(2u_{\sup}),\,\beta]$. A 65-step walk descending
from $\beta=0.9862$ with step $0.85\,\varepsilon'/(2u_{\sup})$ produced
overlapping certified intervals covering $[0.91671,\,0.9862]$; margins
ranged from $6.1\cdot10^{-3}$ at $\beta=0.9862$ down to $4.6\cdot10^{-5}$ at
$\beta=0.9170$. Together with \Cref{n:cor:beta-interval}
($\beta>0.98615$) and \Cref{n:thm:C1} ($\beta=1$), non-existence holds for
all $\beta\ge0.9168>0.91671$.
\end{proof}

\begin{remark}[Where the certificate stops]\label{m:rem:E2stop}
\stat{numerical observation}
The exact grid regret at $N=2401$ falls from $6.2\cdot10^{-3}$ at
$\beta=0.9862$ through $1.9\cdot10^{-4}$ at $\beta=0.92$ to
$6.9\cdot10^{-6}$ at $\beta=0.91$ and (numerically) $0$ at $\beta=0.90$:
the true threshold sits near $0.908$ (a pure equilibrium is still found at
$\beta=0.908$ and none at $\beta=0.909$), and the certificate stops where the
dying regret meets the finest refinement slack $4.5h/64\approx2.9\cdot10^{-5}$
--- less than $0.01$ above it. The same machinery applies verbatim to
$\mathrm{Beta}(0.8,0.8)$ (where \Cref{n:cor:beta-interval} gives
$0.99521$); we have not run that walk.
\end{remark}

%%%%%%%%%%%%%%%%%%%%%%%%%%%%%%%%%%%%%%%%%%%%%%%%%%%%%%%%%%%%%%%%%%%%%%%%%%%%%%
\section{Independent verification of the Sections~3--5 results}\label{m:sec:F}

The numerical content of the seven results listed in \Cref{m:tab:F}---drawn
from \Cref{sec:base,sec:alien,sec:comp}, the scale-invariance proposition
among them---is recomputed from the model's definitions by an independent
implementation: 14 checks, all passing.

\begin{table}[t]
\centering
\caption{Re-verification of results from \Cref{sec:base,sec:alien,sec:comp} by an independent implementation of the model. ``exact'' means
closed forms (uniform-voter payoffs, incomplete-beta integrals) reproduced to
the stated precision.}\label{m:tab:F}
\small
\setlength{\tabcolsep}{4pt}
\begin{tabular}{@{}l R{7.1cm} l@{}}
\toprule
Result & Verified statement & Outcome \\
\midrule
\Cref{prop:scale-gen} & $zK$: equilibria fixed, turnout $\times z$
  ($\alpha\in\{0,1\}$, $z\in\{1,\frac12,\frac1{10}\}$) & exact \\
\Cref{thm:alien-unique,thm:alien-mono} & $\Delta^*(\alpha)$ solves
  $\int_{b}^{\Delta}r=\log\frac\lambda2$; unique; increasing in $\alpha$;
  doubly-uniform value $\frac{2+\alpha}{4+\alpha}$ & $\le2{\cdot}10^{-5}$ \\
\Cref{thm:rhr-alpha} & $K=t^2\succeq_{\mathrm{rhr}}K=t$ polarizes at
  $\alpha\in\{0,1,2\}$ & exact \\
\Cref{prop:turnout} & $T'$ formula ($10^{-10}$); peak
  $\bar\Delta=\frac{\lambda^2}{\alpha^2+6\alpha+4}$,
  $\bar\Delta(1)=\frac9{11}$; $\Delta^*<\bar\Delta$;
  $T^*=\frac{2(3+\alpha)}{(4+\alpha)^2}$ & exact \\
\Cref{prop:mass} & isomorphism $u_{(V_\sigma,K)}=u_{(V_0,K_\sigma)}$
  pointwise ($5{\cdot}10^{-9}$); power-law costs exactly proportional;
  exponential cost super-proportional & exact \\
\Cref{thm:quartic} & $w_1/\delta^4\to C(\alpha)v''(\frac12)$ with
  $C(1)=\frac{13}{27}$; $\mathrm{Beta}(2,2)$ and arcsine match the
  prediction at $\delta=0.02$ (values below); uniform: $0$ to machine
  precision & confirmed \\
\Cref{prop:nonexistence} & certified in stronger form by
  \Cref{n:thm:C1,m:thm:E2} & certified \\
\bottomrule
\end{tabular}
\end{table}

\paragraph{Quartic check values.}
At $\delta=0.02$, $\alpha=1$: $\mathrm{Beta}(2,2)$ gives
$w_1=-9.2444\cdot10^{-7}$ against the prediction
$C(1)v''(\tfrac12)\delta^4=-9.2444\cdot10^{-7}$; the arcsine gives
$+1.9666\cdot10^{-7}$ against $+1.9617\cdot10^{-7}$ (the residual is the
$o(\delta^4)$ term and shrinks at $\delta=0.01,\,0.005$).

\paragraph{Security values.}
Under the exact quadrature,
$\max_c\min_{c'}w(c,c')=-8.2244\cdot10^{-3}$
(attained at $c=0.8292$; grid $n=1201$) for the arcsine electorate and, for
$\mathrm{Beta}(0.8,0.8)$, $-3.0238\cdot10^{-3}$ ($n=2401$, from \Cref{n:thm:C1}); the certified Lipschitz bounds of
\Cref{n:thm:C1} ($\le-6.97\cdot10^{-3}$, $\le-2.40\cdot10^{-3}$) are
unchanged. For $\mathrm{Beta}(2,2)$ the security value attains $0$ at the
median and only there; for uniform voters it is $0$ (to $10^{-16}$) on the
equilibrium band, as it must be.

\paragraph{The security-value curves.}
The three curves $c\mapsto\min_{c'}w(c,c')$ of \Cref{fig:security}
($\alpha=1$, $\kappa\sim U(0,1)$; $\mathrm{Beta}(2,2)$, uniform and arcsine voters)
are computed from the exact $1201\times1201$ payoff matrices and plotted at
every fifth grid point (\texttt{figdata.py} in the verification suite).

\begin{remark}[Why the security value must be computed exactly]\label{m:rem:artifact}
\stat{numerical observation}
Naive quadrature of the arcsine density's endpoint singularity produces
security values that are wrong by orders of magnitude, in two distinct
regimes.
A midpoint rule with $10^3$ cells misweights the mass near the boundary by
$\sim10\%$ (e.g.\ $w(0.999,0.5)$: exact $-0.0284$, midpoint $-0.0314$) ---
visible but modest. Any rule that samples $v$ \emph{at} $x\in\{0,1\}$
(trapezoid, Simpson, or a grid containing the endpoints), where the density
is unbounded, is arbitrarily wrong: the same $w(0.999,0.5)$ evaluates to
$+79.2$ under a $10^3$-cell trapezoid rule. Security values of order
$-0.387$ arise this way: some near-boundary attacker $c'$ acquires spurious
mass and appears to beat every position by $O(0.1$--$1)$. The curves of \Cref{fig:security}
use the exact incomplete-beta integrals and are free of both effects.
\end{remark}

%%%%%%%%%%%%%%%%%%%%%%%%%%%%%%%%%%%%%%%%%%%%%%%%%%%%%%%%%%%%%%%%%%%%%%%%%%%%%%
\section{Conjecture and open items}\label{m:sec:G}

\begin{conjecture}[Existence at $\alpha=0$]\label{m:conj:alpha0}
Under \Cref{ass:reg} with $\alpha=0$, the $\beta$-game has a pure Nash
equilibrium for every $\beta\in[0,1)$ and every $(v,k)$.
\end{conjecture}

\emph{Evidence for.} \Cref{m:thm:E1} and \Cref{m:cor:E1convex} (uniform
costs, symmetric $v$ under \crefrange{m:eq:E1a}{m:eq:E1b}, including all
hollow convex electorates for all $\beta$); \Cref{n:prop:C4}'s corollary
($\rho_v\le\frac4{1-\beta}$); the margin game $\beta=1$ has the median
equilibrium for arbitrary $(v,k)$ (\Cref{thm:margin_median}), so the
conjectured family has the right limit; verified equilibria for non-uniform
costs and hollow $v$ at $\beta$ up to $0.99$ (\Cref{m:rem:E1scope});
and the ratio law fixes the candidate profile's quantiles in a way that
degenerates gracefully as $\beta\uparrow1$.
\emph{Evidence against, and why it stays a conjecture.} At $\beta=0$ the
statement contains the still-open existence problem for arbitrary $(v,k)$; the concavity route fails for
$\rho_v=\infty$ with general $k$ (the local condition \eqref{m:eq:E1a} has
no analogue once $k$ varies, cf.\ \Cref{m:rem:D1k}); and the $\alpha>0$
counterexamples (\Cref{n:thm:C1,m:thm:E2}) show the effect that
must fail at $\alpha=0$ is genuinely present nearby in parameter space.

\paragraph{Two technical provisos.} The finiteness proviso of
\Cref{n:lem:master} and the $C^1$ approximation step in \Cref{n:prop:A4}
remain open exactly as stated; neither affects any result in
the main text.

\bibliographystyle{splncs04}
\bibliography{references}

@article{myerson1981optimal,
  title   = {Optimal Auction Design},
  author  = {Myerson, Roger B.},
  journal = {Mathematics of Operations Research},
  volume  = {6},
  number  = {1},
  pages   = {58--73},
  year    = {1981},
  doi     = {10.1287/moor.6.1.58}
}

@inproceedings{meir2020sybil,
  title={Sybil-resilient social choice with partial participation},
  author={Meir, Reshef and Shahaf, Gal and Shapiro, Ehud and Talmon, Nimrod},
  year={2022},
  booktitle = {EUMAS'22}
}

@inproceedings{desmedt2010equilibria,
  title={Equilibria of plurality voting with abstentions},
  author={Desmedt, Yvo and Elkind, Edith},
  booktitle={Proceedings of the 11th ACM conference on Electronic commerce},
  pages={347--356},
  year={2010}
}

@article{farina2015congressional,
  title={Congressional polarization: terminal constitutional dysfunction},
  author={Farina, Cynthia R},
  journal={Colum. L. Rev.},
  volume={115},
  pages={1689},
  year={2015},
  publisher={HeinOnline}
}

@inproceedings{meir2025tyranny,
  title={Tyranny of the minority in social choice: a call to arms},
  author={Meir, Reshef},
  booktitle={Proceedings of the 24th International Conference on Autonomous Agents and Multiagent Systems},
  pages={2865--2869},
  year={2025}
}

@article{roemer1994theory,
  title={A theory of policy differentiation in single issue electoral politics},
  author={Roemer, John E},
  journal={Social Choice and Welfare},
  volume={11},
  number={4},
  pages={355--380},
  year={1994},
  publisher={Springer}
}

@article{van2023rationalizable,
  title={Rationalizable behavior in the {Hotelling--Downs} model of spatial competition},
  author={van Sloun, Joep},
  journal={Theory and Decision},
  volume={95},
  number={2},
  pages={309--335},
  year={2023},
  publisher={Springer}
}

@article{kamada2014voter,
  title={Voter preferences, polarization, and electoral policies},
  author={Kamada, Yuichiro and Kojima, Fuhito},
  journal={American Economic Journal: Microeconomics},
  volume={6},
  number={4},
  pages={203--236},
  year={2014},
}

@article{abramowitz2008polarization,
  title={Is polarization a myth?},
  author={Abramowitz, Alan I and Saunders, Kyle L},
  journal={The Journal of politics},
  volume={70},
  number={2},
  pages={542--555},
  year={2008},
  publisher={Cambridge University Press New York, USA}
}

@article{riker1968theory,
  title={A theory of the {Calculus of Voting}},
  author={Riker, William H and Ordeshook, Peter C},
  journal={American political science review},
  volume={62},
  number={1},
  pages={25--42},
  year={1968},
  publisher={Cambridge University Press}
}

@article{LIU20163049,
title = {Dynamic procurement management by reverse auctions with fixed setup costs and sales levers},
journal = {Applied Mathematical Modelling},
volume = {40},
number = {4},
pages = {3049-3061},
year = {2016},
issn = {0307-904X},
doi = {https://doi.org/10.1016/j.apm.2015.10.008},
url = {https://www.sciencedirect.com/science/article/pii/S0307904X15006411},
author = {Shuren Liu and Yimin Zhu and Qiying Hu}
}

@inproceedings{meir2025condorcet,
  title={On {Condorcet's} jury theorem with abstention},
  author={Meir, Reshef and Ghalme, Ganesh},
  booktitle={AAAI'25},
  year={2025}
}

@inproceedings{cohensius2017proxy,
  title={Proxy voting for better outcomes},
  author={Cohensius, Gal and Mannor, Shie and Meir, Reshef and Meirom, Eli and Orda, Ariel},
  booktitle={Proceedings of the 16th Conference on Autonomous Agents and MultiAgent Systems},
  pages={858--866},
  year={2017}
}

@article{hotelling1929,
  title={Stability in competition},
  author={Hotelling, Harold},
  journal={The Economic Journal},
  volume={39},
  number={153},
  pages={41--57},
  year={1929}
}

@inproceedings{elkind2015equilibria,
  title={Equilibria of plurality voting: Lazy and truth-biased voters},
  author={Elkind, Edith and Markakis, Evangelos and Obraztsova, Svetlana and Skowron, Piotr},
  booktitle={Algorithmic Game Theory: 8th International Symposium, SAGT 2015, Saarbr{\"u}cken, Germany, September 28-30, 2015. Proceedings 8},
  pages={110--122},
  year={2015},
  organization={Springer}
}

@article{coughlin2015probabilistic,
  title={Probabilistic voting in models of electoral competition},
  author={Coughlin, Peter J},
  journal={Handbook of social choice and voting},
  pages={218--234},
  year={2015},
  publisher={Edward Elgar Publishing}
}

@article{prior2013media,
  title={Media and political polarization},
  author={Prior, Markus},
  journal={Annual review of political science},
  volume={16},
  number={1},
  pages={101--127},
  year={2013},
  publisher={Annual Reviews}
}

@article{kubin2021role,
  title={The role of (social) media in political polarization: a systematic review},
  author={Kubin, Emily and Von Sikorski, Christian},
  journal={Annals of the International Communication Association},
  volume={45},
  number={3},
  pages={188--206},
  year={2021},
  publisher={Oxford University Press}
}

@inproceedings{feldman2016variations,
  title={Variations on the {Hotelling--Downs} model},
  author={Feldman, Michal and Fiat, Amos and Obraztsova, Svetlana},
  booktitle={Proceedings of the AAAI Conference on Artificial Intelligence},
  volume={30},
  year={2016}
}

@misc{pew2017partisan,
  title={The partisan divide on political values grows even wider},
  author={{Pew Research Center}},
  howpublished={Pew Research Center report},
  year={2017}
}

@article{shapiro2017reality,
  title={Reality-aware social choice},
  author={Shapiro, Ehud and Talmon, Nimrod},
  journal={arXiv preprint arXiv:1710.10117},
  year={2017}
}

@article{hooghe2017tipping,
  title={The tipping point between stability and decline: Trends in voter turnout, 1950--1980--2012},
  author={Hooghe, Marc and Kern, Anna},
  journal={European Political Science},
  volume={16},
  pages={535--552},
  year={2017},
  publisher={Springer}
}

@book{franklin2004voter,
  title={Voter turnout and the dynamics of electoral competition in established democracies since 1945},
  author={Franklin, Mark N},
  year={2004},
  publisher={Cambridge University Press}
}

@article{reiljan2020fear,
  title={`{F}ear and loathing across party lines' (also) in {Europe}: Affective polarisation in {European} party systems},
  author={Reiljan, Andres},
  journal={European journal of political research},
  volume={59},
  number={2},
  pages={376--396},
  year={2020},
  publisher={Wiley Online Library}
}

@article{black1948rationale,
  title={On the rationale of group decision-making},
  author={Black, Duncan},
  journal={Journal of political economy},
  volume={56},
  number={1},
  pages={23--34},
  year={1948},
  publisher={The University of Chicago Press}
}

@article{downs1957economic,
  title={An economic theory of political action in a democracy},
  author={Downs, Anthony},
  journal={Journal of political economy},
  volume={65},
  number={2},
  pages={135--150},
  year={1957},
  publisher={The University of Chicago Press}
}

@article{ansolabehere2001candidate,
  title={Candidate positioning in {US House} elections},
  author={Ansolabehere, Stephen and Snyder Jr, James M and Stewart III, Charles},
  journal={American Journal of Political Science},
  pages={136--159},
  year={2001},
  publisher={JSTOR}
}

@article{lee2004voters,
  title={Do voters affect or elect policies? {E}vidence from the {US House}},
  author={Lee, David S and Moretti, Enrico and Butler, Matthew J},
  journal={The Quarterly Journal of Economics},
  volume={119},
  number={3},
  pages={807--859},
  year={2004},
  publisher={MIT Press}
}

@article{levitt1996senators,
  title={How do senators vote? Disentangling the role of voter preferences, party affiliation, and senator ideology},
  author={Levitt, Steven D},
  journal={The American Economic Review},
  pages={425--441},
  year={1996},
  publisher={JSTOR}
}

@article{bafumi2010leapfrog,
  title={Leapfrog representation and extremism: A study of {American} voters and their members in {Congress}},
  author={Bafumi, Joseph and Herron, Michael C},
  journal={American Political Science Review},
  volume={104},
  number={3},
  pages={519--542},
  year={2010},
  publisher={Cambridge University Press}
}

@article{aragones2002mixed,
  title={Mixed equilibrium in a {Downsian} model with a favored candidate},
  author={Aragones, Enriqueta and Palfrey, Thomas R},
  journal={Journal of Economic Theory},
  volume={103},
  number={1},
  pages={131--161},
  year={2002},
  publisher={Elsevier}
}

@article{dreyer2019does,
  title={Does voter polarisation induce party extremism? The moderating role of abstention},
  author={Dreyer, Philipp and Bauer, Johann},
  journal={West European Politics},
  volume={42},
  number={4},
  pages={824--847},
  year={2019},
  publisher={Taylor \& Francis}
}

@article{meir2017iterative,
  title={Iterative voting},
  author={Meir, Reshef},
  journal={Trends in computational social choice},
  volume={4},
  pages={69--86},
  year={2017},
  publisher={AI Access}
}

@inproceedings{dong2025selecting,
  title={Selecting interlacing committees},
  author={Dong, Chris and Bullinger, Martin and Was, Tomasz and Birnbaum, Larry and Elkind, Edith},
  booktitle={Proceedings of the 24th International Conference on Autonomous Agents and Multiagent Systems (AAMAS)},
  year={2025}
}

@article{zingher2018high,
  title={From on high: the effect of elite polarization on mass attitudes and behaviors, 1972--2012},
  author={Zingher, Joshua N and Flynn, Michael E},
  journal={British Journal of Political Science},
  volume={48},
  number={1},
  pages={23--45},
  year={2018},
  publisher={Cambridge University Press}
}

@article{hinich1969abstentions,
  title={Abstentions and equilibrium in the electoral process},
  author={Hinich, Melvin J. and Ordeshook, Peter C.},
  journal={Public Choice},
  volume={7},
  number={1},
  pages={81--106},
  year={1969},
  publisher={Springer}
}

@misc{clevelandprospect26,
  title={Candidate positioning with prospect-theoretic voters},
  author={Cleveland, Colin and de Keijzer, Bart and Polukarov, Maria},
  year={2026},
  note={Workshop paper}
}

@article{glicksberg1952further,
  title={A further generalization of the {Kakutani} fixed point theorem, with application to {Nash} equilibrium points},
  author={Glicksberg, Irving L.},
  journal={Proceedings of the American Mathematical Society},
  volume={3},
  number={1},
  pages={170--174},
  year={1952}
}

@article{bagnoli2005logconcave,
  title={Log-concave probability and its applications},
  author={Bagnoli, Mark and Bergstrom, Ted},
  journal={Economic Theory},
  volume={26},
  number={2},
  pages={445--469},
  year={2005}
}

@book{shaked2007stochastic,
  title={Stochastic Orders},
  author={Shaked, Moshe and Shanthikumar, J. George},
  publisher={Springer},
  address={New York},
  year={2007}
}

@article{adams2003turnout,
  author  = {Adams, James and Merrill, III, Samuel},
  title   = {Voter Turnout and Candidate Strategies in American Elections},
  journal = {The Journal of Politics},
  volume  = {65},
  number  = {1},
  pages   = {161--189},
  year    = {2003},
  doi     = {10.1111/1468-2508.t01-1-00008}
}

@article{adams2006consequences,
  author  = {Adams, James and Dow, Jay and Merrill, III, Samuel},
  title   = {The political consequences of alienation-based and indifference-based voter abstention: Applications to Presidential Elections},
  journal = {Political Behavior},
  volume  = {28},
  number  = {1},
  pages   = {65--86},
  year    = {2006},
  doi     = {10.1007/s11109-005-9002-1}
}

@article{anderson1992alienation,
  author  = {Anderson, Simon P. and Glomm, Gerhard},
  title   = {Alienation, indifference and the choice of ideological position},
  journal = {Social Choice and Welfare},
  volume  = {9},
  number  = {1},
  pages   = {17--31},
  year    = {1992},
  doi     = {10.1007/BF00177667}
}

@article{aranson1974election,
  author  = {Aranson, Peter H. and Hinich, Melvin J. and Ordeshook, Peter C.},
  title   = {Election Goals and Strategies: Equivalent and Nonequivalent Candidate Objectives},
  journal = {American Political Science Review},
  volume  = {68},
  number  = {1},
  pages   = {135--152},
  year    = {1974},
  doi     = {10.2307/1959747}
}

@book{brennan1993democracy,
  author    = {Brennan, Geoffrey and Lomasky, Loren},
  title     = {Democracy and Decision: The Pure Theory of Electoral Preference},
  publisher = {Cambridge University Press},
  address   = {Cambridge},
  year      = {1993},
  isbn      = {9780521330404},
  doi       = {10.1017/CBO9781139173544}
}

@article{brody1973indifference,
  author  = {Brody, Richard A. and Page, Benjamin I.},
  title   = {Indifference, alientation and rational decisions: The effects of candidate evaluations on turnout and the vote},
  journal = {Public Choice},
  volume  = {15},
  number  = {1},
  pages   = {1--17},
  year    = {1973},
  doi     = {10.1007/BF01718840}
}

@article{cox1986electoral,
  author  = {Cox, Gary W. and McCubbins, Mathew D.},
  title   = {Electoral Politics as a Redistributive Game},
  journal = {The Journal of Politics},
  volume  = {48},
  number  = {2},
  pages   = {370--389},
  year    = {1986},
  doi     = {10.2307/2131098}
}

@incollection{cox2010swing,
  author    = {Cox, Gary W.},
  title     = {Swing voters, core voters, and distributive politics},
  booktitle = {Political Representation},
  editor    = {Shapiro, Ian and Stokes, Susan C. and Wood, Elisabeth Jean and Kirshner, Alexander S.},
  chapter   = {13},
  pages     = {342--357},
  publisher = {Cambridge University Press},
  year      = {2010},
  doi       = {10.1017/CBO9780511813146.015}
}

@incollection{duggan2006candidate,
  author    = {Duggan, John},
  title     = {Candidate Objectives and Electoral Equilibrium},
  booktitle = {The Oxford Handbook of Political Economy},
  editor    = {Weingast, Barry R. and Wittman, Donald A.},
  publisher = {Oxford University Press},
  pages     = {64--83},
  year      = {2006},
  doi       = {10.1093/oxfordhb/9780199548477.003.0004}
}

@article{feddersen2004paradox,
  author  = {Feddersen, Timothy J.},
  title   = {Rational Choice Theory and the Paradox of Not Voting},
  journal = {Journal of Economic Perspectives},
  volume  = {18},
  number  = {1},
  pages   = {99--112},
  year    = {2004},
  doi     = {10.1257/089533004773563458}
}

@article{fowler2013electoral,
  author  = {Fowler, Anthony},
  title   = {Electoral and Policy Consequences of Voter Turnout: {E}vidence from Compulsory Voting in {A}ustralia},
  journal = {Quarterly Journal of Political Science},
  volume  = {8},
  number  = {2},
  pages   = {159--182},
  year    = {2013},
  doi     = {10.1561/100.00012055}
}

@article{glaeser2005strategic,
  author  = {Glaeser, Edward L. and Ponzetto, Giacomo A. M. and Shapiro, Jesse M.},
  title   = {Strategic Extremism: Why Republicans and Democrats Divide on Religious Values},
  journal = {The Quarterly Journal of Economics},
  volume  = {120},
  number  = {4},
  pages   = {1283--1330},
  year    = {2005},
  doi     = {10.1162/003355305775097533}
}

@article{hinich1970plurality,
  author  = {Hinich, Melvin J. and Ordeshook, Peter C.},
  title   = {Plurality Maximization vs Vote Maximization: A Spatial Analysis with Variable Participation},
  journal = {American Political Science Review},
  volume  = {64},
  number  = {3},
  pages   = {772--791},
  year    = {1970},
  doi     = {10.2307/1953462}
}

@article{hinich1972nonvoting,
  author  = {Hinich, Melvin J. and Ledyard, John O. and Ordeshook, Peter C.},
  title   = {Nonvoting and the existence of equilibrium under majority rule},
  journal = {Journal of Economic Theory},
  volume  = {4},
  number  = {2},
  pages   = {144--153},
  year    = {1972},
  doi     = {10.1016/0022-0531(72)90145-7}
}

@article{hinich1977artifact,
  author  = {Hinich, Melvin J.},
  title   = {Equilibrium in spatial voting: The median voter result is an artifact},
  journal = {Journal of Economic Theory},
  volume  = {16},
  number  = {2},
  pages   = {208--219},
  year    = {1977},
  doi     = {10.1016/0022-0531(77)90005-9}
}

@article{ledyard1984pure,
  author  = {Ledyard, John O.},
  title   = {The pure theory of large two-candidate elections},
  journal = {Public Choice},
  volume  = {44},
  number  = {1},
  pages   = {7--41},
  year    = {1984},
  doi     = {10.1007/BF00124816}
}

@book{leighley2014who,
  author    = {Leighley, Jan E. and Nagler, Jonathan},
  title     = {Who Votes Now? {D}emographics, Issues, Inequality, and Turnout in the United States},
  publisher = {Princeton University Press},
  year      = {2014},
  isbn      = {9780691159355}
}

@article{lindbeck1987balanced,
  author  = {Lindbeck, Assar and Weibull, J{\"o}rgen W.},
  title   = {Balanced-budget redistribution as the outcome of political competition},
  journal = {Public Choice},
  volume  = {52},
  number  = {3},
  pages   = {273--297},
  year    = {1987},
  doi     = {10.1007/BF00116710}
}

@article{palfrey1985participation,
  author  = {Palfrey, Thomas R. and Rosenthal, Howard},
  title   = {Voter Participation and Strategic Uncertainty},
  journal = {American Political Science Review},
  volume  = {79},
  number  = {1},
  pages   = {62--78},
  year    = {1985},
  doi     = {10.2307/1956119}
}

@article{patty2002equivalence,
  author  = {Patty, John W.},
  title   = {Equivalence of Objectives in Two Candidate Elections},
  journal = {Public Choice},
  volume  = {112},
  number  = {1-2},
  pages   = {151--166},
  year    = {2002},
  doi     = {10.1023/A:1015606205754}
}

@article{rogowski2014electoral,
  author  = {Rogowski, Jon C.},
  title   = {Electoral Choice, Ideological Conflict, and Political Participation},
  journal = {American Journal of Political Science},
  volume  = {58},
  number  = {2},
  pages   = {479--494},
  year    = {2014},
  doi     = {10.1111/ajps.12059}
}

@article{jones2022polarization,
  author  = {Jones, Matthew I. and Sirianni, Antonio D. and Fu, Feng},
  title   = {Polarization, abstention, and the median voter theorem},
  journal = {Humanities and Social Sciences Communications},
  volume  = {9},
  number  = {1},
  pages   = {43},
  year    = {2022},
  doi     = {10.1057/s41599-022-01056-0}
}

@article{oprea2024moving,
  author  = {Oprea, Alexandra and Martin, Lucy and Brennan, Geoffrey H.},
  title   = {Moving toward the Median: Compulsory Voting and Political Polarization},
  journal = {American Political Science Review},
  volume  = {118},
  number  = {4},
  pages   = {1951--1965},
  year    = {2024},
  doi     = {10.1017/S0003055423001399}
}

@article{callander2007turnout,
  author  = {Callander, Steven and Wilson, Catherine H.},
  title   = {Turnout, Polarization, and Duverger's Law},
  journal = {The Journal of Politics},
  volume  = {69},
  number  = {4},
  pages   = {1047--1056},
  year    = {2007},
  doi     = {10.1111/j.1468-2508.2007.00606.x}
}

@article{dixit1996determinants,
  author  = {Dixit, Avinash and Londregan, John},
  title   = {The Determinants of Success of Special Interests in Redistributive Politics},
  journal = {The Journal of Politics},
  volume  = {58},
  number  = {4},
  pages   = {1132--1155},
  year    = {1996},
  doi     = {10.2307/2960152}
}

@article{golden2013distributive,
  author  = {Golden, Miriam and Min, Brian},
  title   = {Distributive Politics Around the World},
  journal = {Annual Review of Political Science},
  volume  = {16},
  pages   = {73--99},
  year    = {2013},
  doi     = {10.1146/annurev-polisci-052209-121553}
}

@incollection{adams2005abstention,
  author    = {Adams, James and Merrill, III, Samuel and Grofman, Bernard},
  title     = {Candidate Strategies with Voter Abstention in {U.S.} Presidential
               Elections: 1980, 1984, 1988, 1996, 2000},
  booktitle = {A Unified Theory of Party Competition},
  publisher = {Cambridge University Press},
  year      = {2005},
  chapter   = {8},
  pages     = {132--151},
  doi       = {10.1017/CBO9780511614453.009}
}

@inproceedings{shen2016hotelling,
  title={{Hotelling--Downs} model with limited attraction},
  author={Shen, Weiran and Wang, Zihe},
  booktitle={Proceedings of the 16th Conference on Autonomous Agents and MultiAgent Systems},
  pages={660--668},
  year={2017}
}

@article{alyussef2026turnout,
  author  = {Al Yussef, Alan},
  title   = {Turnout and the distortion of representation: the case of electoral polarization},
  journal = {Public Choice},
  year    = {2026},
  doi     = {10.1007/s11127-026-01429-y}
}

@article{llavador2006platforms,
  author  = {Llavador, Humberto},
  title   = {Electoral Platforms, Implemented Policies, and Abstention},
  journal = {Social Choice and Welfare},
  volume  = {27},
  number  = {1},
  pages   = {55--81},
  year    = {2006},
  doi     = {10.1007/s00355-006-0111-5}
}

@article{herrera2008platforms,
  author  = {Herrera, Helios and Levine, David K. and Martinelli, C{\'e}sar},
  title   = {Policy Platforms, Campaign Spending and Voter Participation},
  journal = {Journal of Public Economics},
  volume  = {92},
  number  = {3--4},
  pages   = {501--513},
  year    = {2008},
  doi     = {10.1016/j.jpubeco.2007.11.001}
}

@article{palfrey1983strategic,
  author  = {Palfrey, Thomas R. and Rosenthal, Howard},
  title   = {A Strategic Calculus of Voting},
  journal = {Public Choice},
  volume  = {41},
  number  = {1},
  pages   = {7--53},
  year    = {1983},
  doi     = {10.1007/BF00124048}
}

@article{borgers2004costly,
  author  = {B{\"o}rgers, Tilman},
  title   = {Costly Voting},
  journal = {American Economic Review},
  volume  = {94},
  number  = {1},
  pages   = {57--66},
  year    = {2004},
  doi     = {10.1257/000282804322970706}
}

@article{krasa2009mandatory,
  author  = {Krasa, Stefan and Polborn, Mattias K.},
  title   = {Is Mandatory Voting Better than Voluntary Voting?},
  journal = {Games and Economic Behavior},
  volume  = {66},
  number  = {1},
  pages   = {275--291},
  year    = {2009},
  doi     = {10.1016/j.geb.2008.05.004}
}

@article{krishna2012voluntary,
  author  = {Krishna, Vijay and Morgan, John},
  title   = {Voluntary Voting: Costs and Benefits},
  journal = {Journal of Economic Theory},
  volume  = {147},
  number  = {6},
  pages   = {2083--2123},
  year    = {2012},
  doi     = {10.1016/j.jet.2012.09.006}
}

@article{herrera2014turnout,
  author  = {Herrera, Helios and Morelli, Massimo and Palfrey, Thomas},
  title   = {Turnout and Power Sharing},
  journal = {The Economic Journal},
  volume  = {124},
  number  = {574},
  pages   = {F131--F162},
  year    = {2014},
  doi     = {10.1111/ecoj.12116}
}

@inproceedings{cohen2019tolerance,
  title={Hotelling games with random tolerance intervals},
  author={Cohen, Avi and Peleg, David},
  booktitle={Web and Internet Economics (WINE 2019)},
  series={Lecture Notes in Computer Science},
  volume={11920},
  pages={114--128},
  publisher={Springer},
  year={2019},
  doi={10.1007/978-3-030-35389-6_9}
}

@article{osborne1995spatial,
  author  = {Osborne, Martin J.},
  title   = {Spatial Models of Political Competition under Plurality Rule: A Survey of Some Explanations of the Number of Candidates and the Positions They Take},
  journal = {Canadian Journal of Economics},
  volume  = {28},
  number  = {2},
  pages   = {261--301},
  year    = {1995},
  doi     = {10.2307/136033}
}

@article{grofman2004downs,
  author  = {Grofman, Bernard},
  title   = {Downs and Two-Party Convergence},
  journal = {Annual Review of Political Science},
  volume  = {7},
  pages   = {25--46},
  year    = {2004},
  doi     = {10.1146/annurev.polisci.7.012003.104711}
}

@article{wittman1983candidate,
  author  = {Wittman, Donald},
  title   = {Candidate Motivation: A Synthesis of Alternative Theories},
  journal = {American Political Science Review},
  volume  = {77},
  number  = {1},
  pages   = {142--157},
  year    = {1983},
  doi     = {10.2307/1956016}
}

@article{calvert1985robustness,
  author  = {Calvert, Randall L.},
  title   = {Robustness of the Multidimensional Voting Model: Candidate Motivations, Uncertainty, and Convergence},
  journal = {American Journal of Political Science},
  volume  = {29},
  number  = {1},
  pages   = {69--95},
  year    = {1985},
  doi     = {10.2307/2111212}
}

@book{mccarty2006polarized,
  title={Polarized {America}: The dance of ideology and unequal riches},
  author={McCarty, Nolan and Poole, Keith T. and Rosenthal, Howard},
  year={2006},
  publisher={MIT Press}
}

@article{boxell2024cross,
  author  = {Boxell, Levi and Gentzkow, Matthew and Shapiro, Jesse M.},
  title   = {Cross-Country Trends in Affective Polarization},
  journal = {The Review of Economics and Statistics},
  volume  = {106},
  number  = {2},
  pages   = {557--565},
  year    = {2024},
  doi     = {10.1162/rest_a_01160}
}

@article{hetherington2001resurgent,
  author  = {Hetherington, Marc J.},
  title   = {Resurgent Mass Partisanship: The Role of Elite Polarization},
  journal = {American Political Science Review},
  volume  = {95},
  number  = {3},
  pages   = {619--631},
  year    = {2001},
  doi     = {10.1017/S0003055401003045}
}

@article{fiorina2008political,
  author  = {Fiorina, Morris P. and Abrams, Samuel J.},
  title   = {Political Polarization in the {American} Public},
  journal = {Annual Review of Political Science},
  volume  = {11},
  pages   = {563--588},
  year    = {2008},
  doi     = {10.1146/annurev.polisci.11.053106.153836}
}

\end{document}